\definecolor{myurlcolor}{rgb}{0,0,0.7}
\newcommand{\red}{\textcolor{red}}
\newcommand{\blue}{\textcolor{blue}}
\newcommand{\tinyspace}{\mspace{1mu}}
\newcommand{\op}[1]{\operatorname{#1}}
\newcommand{\abs}[1]{\left\lvert\tinyspace #1 \tinyspace\right\rvert}
\newcommand{\norm}[1]{\left\lVert\tinyspace #1 \tinyspace\right\rVert}
\newcommand{\wtil}[1]{\widetilde{#1}}
\renewcommand{\det}{\operatorname{det}}
\renewcommand{\t}{{\scriptscriptstyle\mathsf{T}}}
\newcommand{\setft}[1]{\mathrm{#1}}
\newcommand{\density}[1]{\setft{D}\left(#1\right)}
\renewcommand{\vec}{\op{vec}}
\newcommand{\im}{\op{im}}
\newcommand{\rank}{\op{rank}}
\newcommand{\spn}{\op{span}}
\newcommand{\sign}{\op{sign}}
\def\g{\mathfrak{g}}
\def\liet{\mathfrak{t}}
\def\Ad{\mathrm{Ad}}
\def\ad{\mathrm{ad}}
\def\vol{\mathrm{vol}}
\def \dif {\mathrm{d}}
\def \diag {\mathrm{diag}}
\def \vol {\mathrm{vol}}
\def \re {\mathrm{Re}}
\def \im {\mathrm{Im}}
\def\complex{\mathbb{C}}
\def\real{\mathbb{R}}
\def\natural{\mathbb{N}}
\def\integer{\mathbb{Z}}
\def\I{\mathbb{1}}
\newenvironment{mylist}[1]{\begin{list}{}{
    \setlength{\leftmargin}{#1}
    \setlength{\rightmargin}{0mm}
    \setlength{\labelsep}{2mm}
    \setlength{\labelwidth}{8mm}
    \setlength{\itemsep}{0mm}}}
    {\end{list}}
\def\ot{\otimes}
\newcommand{\inner}[2]{\langle #1 , #2\rangle}
\newcommand{\iinner}[2]{\langle #1 | #2\rangle}
\newcommand{\Inner}[2]{\left\langle #1 , #2\right\rangle}
\newcommand{\Innerm}[3]{\left\langle #1 \left| #2 \right| #3 \right\rangle}
\newcommand{\defeq}{\stackrel{\smash{\textnormal{\tiny def}}}{=}}
\newcommand{\pa}[1]{(#1)}
\newcommand{\Pa}[1]{\left(#1\right)}
\newcommand{\Br}[1]{\left[#1\right]}
\newcommand{\set}[1]{\{#1\}}
\newcommand{\Set}[1]{\left\{#1\right\}}
\newcommand{\ket}[1]{|#1\rangle}
\DeclareMathOperator{\trace}{Tr}
\newcommand{\ptr}[2]{\trace_{#1}\pa{#2}}
\newcommand{\Ptr}[2]{\trace_{#1}\Pa{#2}}
\newcommand{\Tr}[1]{\Ptr{}{#1}}
\newcommand{\fontmapset}{\mathbf} 
\newcommand{\Mset}[2]{\fontmapset{#1}\Pa{#2}}
\newcommand{\Her}[1]{\Mset{H}{#1}}
\def\cB{\mathcal{B}}\def\cC{\mathcal{C}}\def\cD{\mathcal{D}}
\def\cG{\mathcal{G}}\def\cI{\mathcal{I}}
\def\cO{\mathcal{O}}
\def\cP{\mathcal{P}}
\def\cU{\mathcal{U}}\def\cX{\mathcal{X}}
\def\bS{\mathbf{S}}
\def\rD{\mathrm{D}}
\def\rH{\mathrm{H}}
\def\rO{\mathrm{O}}
\def\rS{\mathrm{S}}\def\rT{\mathrm{T}}
\def\rU{\mathrm{U}}
\def\sH{\mathscr{H}}\def\sI{\mathscr{I}}
\def\sL{\mathscr{L}}
\def\sS{\mathscr{S}}
\def\X{\textsf{X}}
\newtheorem{thrm}{Theorem}[section]
\newtheorem{lem}[thrm]{Lemma}
\newtheorem{prop}[thrm]{Proposition}
\newtheorem{cor}[thrm]{Corollary}
\theoremstyle{definition}
\newtheorem{definition}[thrm]{Definition}
\newtheorem{remark}[thrm]{Remark}
\newtheorem{exam}[thrm]{Example}
\numberwithin{equation}{section}
\newcounter{questionnumber}
\newcommand{\jpa}{J. Phys. A~}
\newcommand{\pra}{Phys. Rev. A~}
\begin{document}

\title{\textbf{Volumes of Orthogonal Groups and Unitary Groups}}

\author{\blue{Lin Zhang}\footnote{E-mail: godyalin@163.com; linyz@zju.edu.cn}\\
  {\it\small Institute of Mathematics, Hangzhou Dianzi University, Hangzhou 310018, PR~China}}

\date{}
\maketitle \mbox{}\hrule\mbox{}
\begin{abstract}
The matrix integral has many applications in diverse fields. This
review article begins by presenting detailed key background
knowledge about matrix integral. Then the volumes of orthogonal
groups and unitary groups are computed, respectively. As a
unification, we present Mcdonald's volume formula for a compact Lie
group. With this volume formula, one can easily derives the volumes
of orthogonal groups and unitary groups. Applications are also
presented as well. Specifically, The volume of the set of mixed
quantum states is computed by using the volume of unitary group. The
volume of a metric ball in unitary group is also computed as well.

There are no new results in this article, but only detailed and
elementary proofs of existing results. The purpose of the article is
pedagogical, and to collect in one place many, if not all, of the
quantum information applications of the volumes of orthogonal and
unitary groups.

\end{abstract}
\mbox{}\hrule\mbox{}
\newpage
\tableofcontents
\newpage
\section{Introduction}\label{sect:intro}

Volumes of orthogonal groups and unitary groups are very useful in
physics and mathematics \cite{Boya1,Boya2}. In 1949, Ponting and
Potter had already calculated the volume of orthogonal and unitary
group \cite{Pont49}. A complete treatment for group manifolds is
presented by Marinov \cite{Marinov}, who extracted the volumes of
groups by studying curved path integrals \cite{Terentyev}. There is
a general closed formula for any compact Lie group in terms of the
root lattice \cite{Mcdonald}. Clearly the methods used previously
are not followed easily. \.{Z}yczkowski in his paper \cite{Karol03}
gives a re-derivation on the volume of unitary group, but it is
still not accessible. The main goal of this paper is to compute the
volumes of orthogonal groups and unitary groups in a systematic and
elementary way. The obtained formulae is more applicable.

As an application, by re-deriving the volumes of orthogonal groups
and unitary groups, the authors in \cite{Karol03} computes the
volume of the convex $(n^2-1)$-dimensional set
$\density{\complex^n}$ of the density matrices of size $n$ with
respect to the Hilbert-Schmidt measure. Recently, the authors in
\cite{Wei} give the integral representation of the exact volume of a
metric ball in unitary group; and present diverse applications of
volume estimates of metric balls in manifolds in information and
coding theory.

Before proceeding, we recall some notions in group theory. Assume
that a group $\cG$ acts on the underlying vector space $\cX$ via
$g\ket{x}$ for all $g\in \cG$ and $x\in\cX$. Let $\ket{x}$ be any
nonzero vector in $\cX$. The subset $\cG\ket{x}:=\set{g\ket{x}:g\in
\cG}$ is called the $\cG$-\emph{orbit} of $\ket{x}\in\cX$. Denote
$$
x^\cG:=\set{g\in \cG: g\ket{x}=\ket{x}}.
$$
We have the following fact:
$$
\cG\ket{x}\sim \cG/x^\cG.
$$
If $x^\cG=\set{e}$, where $e$ is a unit element of $\cG$, then the
action of $\cG$ on $\ket{x}$ is called \emph{free}. In this case,
$$
\cG\ket{x}\sim \cG/\set{e}\Longleftrightarrow \cG\sim \cG\ket{x}.
$$
Now we review a fast track of the volume of a unitary group. Let us
consider the unitary groups $\cG=\cU(n+1),n=1,2,\ldots$ To establish
their structure, we look at spaces in which the group acts
transitively, and identify the isotropy subgroup.  The unitary group
$\cU(n+1)$ acts naturally in the complex vector space
$\cX=\complex^{n+1}$ through the vector or "defining"
representation; the image of any nonzero vector
$\ket{x}=\ket{\psi}\in\complex^{n+1}$ is contained in the maximal
sphere $\mathbb{S}^{2n+1}$ of radius $\norm{\psi}$ since
$\norm{\psi}=\norm{U\psi}$ for $U\in\cU(n+1)$, and in fact it is
easy to see that it sweeps the whole sphere when $U$ runs though
$\cU(n+1)$, i.e. the group $\cU(n+1)$ acts transitively in this
sphere. The isotropy group of the vector $\ket{n+1}$ is easily seen
to be the unitary group with an entry less, that is
$x^\cG=\cU(n)\oplus 1$. Indeed, the following map is \emph{onto}:
$$
\varphi:\cU(n+1)\longrightarrow\cU(n+1)\ket{n+1} =\mathbb{S}^{2n+1}.
$$
If we identity $\cU(n)$ with $\cU(n)\oplus 1$, as a subgroup of
$\cU(n+1)$, then
$$
\cU(n)\ket{n+1}=\ket{n+1},
$$
implying that $\ker\varphi=\cU(n)$, and thus
$$
\cU(n+1)/\ker\varphi\sim\cU(n+1)\ket{n+1}=\mathbb{S}^{2n+1}=\Pa{\cU(n+1)/\ker\varphi}
\ket{n+1}.
$$
Therefore we have the equivalence relation
\begin{eqnarray}
\cU(n+1)/\cU(n)=\mathbb{S}^{2n+1}.
\end{eqnarray}
This indicates that
\begin{eqnarray}
\vol\Pa{\cU(n+1)}=\vol\Pa{\mathbb{S}^{2n+1}}\cdot\vol\Pa{\cU(n)},~~n=1,2,\ldots
\end{eqnarray}
That is,
\begin{eqnarray}
\vol\Pa{\cU(n)}=\vol\Pa{\mathbb{S}^1}\times\vol\Pa{\mathbb{S}^3}\times\cdots\times
\vol\Pa{\mathbb{S}^{2n-1}}.
\end{eqnarray}
We can see this in \cite{Fujii}. The volume of the sphere of unit
radius embedded in $\real^n(n\geqslant1)$, is calculated from the
Gaussian integral (see also Appendix I for the details):
\begin{eqnarray*}
\sqrt{\pi}=\int^{+\infty}_{-\infty}e^{-t^2}\dif t.
\end{eqnarray*}
Now
\begin{eqnarray*}
\Pa{\sqrt{\pi}}^n&=&\Pa{\int^{+\infty}_{-\infty}e^{-t^2}\dif t}^n =
\int_{\real^n} e^{-\norm{v}^2}\dif v\\
&=&\int^{+\infty}_0 \int_{\mathbb{S}^{n-1}(r)}e^{-r^2}\dif\sigma\dif
r=\int^{+\infty}_0 \sigma_{n-1}(r)e^{-r^2}\dif r,
\end{eqnarray*}
where $\sigma_{n-1}(r)=\int_{\mathbb{S}^{n-1}(r)}\dif\sigma$ is the
volume of sphere $\mathbb{S}^{n-1}(r)$ of radius $r$. Since
$$
\sigma_{n-1}(r) = \sigma_{n-1}(1)\times r^{n-1},
$$
it follows that
\begin{eqnarray*}
\Pa{\sqrt{\pi}}^n = \sigma_{n-1}(1)\times\int^{+\infty}_0
r^{n-1}e^{-r^2}\dif r,
\end{eqnarray*}
implying that
\begin{framed}
\begin{eqnarray}
\vol\Pa{\mathbb{S}^{n-1}} : = \frac{2\pi^{\frac n2}}{\Gamma\Pa{\frac
n2}}.
\end{eqnarray}
\end{framed}
Finally, we get the volume formula of a unitary group:
\begin{framed}
\begin{eqnarray}
\vol\Pa{\cU(n)} : = \prod^{n}_{k=1}\frac{2\pi^k}{\Gamma(k)} =\frac{
2^n\pi^{\frac{n(n+1)}2}}{1!2!\cdots (n-1)!}.
\end{eqnarray}
\end{framed}

\section{Volumes of orthogonal groups}\label{sect:orthogonal-group}

\subsection{Preliminary}\label{sect:Jacobian}

The following standard notations will be used \cite{Mathai}. Scalars
will be denoted by lower-case letters, vectors and matrices by
capital letters. As far as possible variable matrices will be
denoted by $X,Y,\ldots$ and constant matrices by $A,B,\ldots$.

Let $A=[a_{ij}]$ be a $n\times n$ matrix, then
$\Tr{A}=\sum^n_{j=1}a_{jj}$ is the trace of $A$, and $\det(A)$ is
the determinant of $A$, and $^\t$ over a vector or a matrix will
denote its transpose. Let $X=[x_{ij}]$ be a $m\times n$ matrix of
independent real entries $x_{ij}$'s. We denote the matrix of
differentials by $\dif{X}$, i.e.
$$
\dif{X}:=[\dif{x_{ij}}] = \Br{\begin{array}{cccc}
                               \dif{x_{11}} & \dif{x_{12}} & \cdots & \dif{x_{1n}} \\
                               \dif{x_{21}} & \dif{x_{22}} & \cdots & \dif{x_{2n}} \\
                               \vdots & \vdots & \ddots & \vdots \\
                               \dif{x_{m1}} & \dif{x_{m2}} & \cdots &
                               \dif{x_{mn}}
                             \end{array}
}.
$$
Then $[\dif{X}]$ stands for the product of the $m\times n$
differential elements
\begin{eqnarray}
[\dif{X}] := \prod^m_{i=1}\prod^n_{j=1}\dif{x_{ij}}
\end{eqnarray}
and when $X$ is a real square symmetric matrix, that is, $m=n,
X=X^\t$, then $[\dif{X}]$ is the product of the $n(n+1)/2$
differential elements, that is,
\begin{eqnarray}
[\dif{X}] := \prod^n_{j=1}\prod^n_{i=j} \dif{x_{ij}} =
\prod_{i\geqslant j}\dif{x_{ij}}.
\end{eqnarray}
Throughout this paper, stated otherwise, we will make use of
conventions that the signs will be ignored in the product
$[\dif{X}]$ of differentials of independent entries. Our notation
will be the following: Let $X=[x_{ij}]$ be a $m\times n$ matrix of
independent real entries. Then
\begin{eqnarray}
\fbox{$[\dif{X}] = \wedge^m_{i=1}\wedge^n_{j=1}\dif{x_{ij}}$}
\end{eqnarray}
when $[\dif{X}]$ appears with integrals or Jacobians of
transformations;
\begin{eqnarray}
\fbox{$[\dif{X}] = \prod^m_{i=1}\prod^n_{j=1}\dif{x_{ij}}$}
\end{eqnarray}
when $[\dif{X}]$ appears with integrals involving \emph{density
functions} where \blue{the functions are nonnegative and the
absolute value of the Jacobian is automatically taken}.

As Edelman said \cite{Edelman} before, many researchers in Linear
Algebra have little known the fact that the familiar matrix
factorizations, which can be viewed as changes of variables, have
simple Jacobians. These Jacobians are used extensively in
applications of random matrices in multivariate statistics and
physics.

It is assumed that the reader is familiar with the calculation of
Jacobians when a vector of scalar variables is transformed to a
vector of scalar variables. The result is stated here for the sake
of completeness. Let the vector of scalar variables $X$ be
transformed to $Y$, where
$$
X=\Br{\begin{array}{c}
        x_1 \\
        \vdots \\
        x_n
      \end{array}
}~~\text{and}~~Y=\Br{\begin{array}{c}
        y_1 \\
        \vdots \\
        y_n
      \end{array}
},
$$
by a one-to-one transformation. Let the matrix of partial
derivatives be denoted by
\begin{eqnarray*}
\frac{\partial Y}{\partial X} = \Br{\frac{\partial y_i}{\partial
x_j}}.
\end{eqnarray*}
The the determinant of the matrix $\Br{\frac{\partial y_i}{\partial
x_j}}$ is known as the \emph{Jacobian} of the transformation $X$
going to $Y$ or $Y$ as a function of $X$ it is written as
\begin{eqnarray*}
J(Y:X) = \det\Pa{\Br{\frac{\partial y_i}{\partial
x_j}}}~~\text{or}~~[\dif Y]=J(Y:X)[\dif X],~~J\neq0
\end{eqnarray*}
and
\begin{eqnarray*}
J(Y:X) = \frac1{J(X:Y)}~~\text{or}~~1=J(Y:X)J(X:Y).
\end{eqnarray*}

\blue{Note that when transforming $X$ to $Y$ the variables can be
taken in any order because a permutation brings only a change of
sign in the determinant and the magnitude remains the same, that is,
$\abs{J}$ remains the same where $\abs{J}$ denotes the absolute
value of $J$}. When evaluating integrals involving functions of
matrix arguments one often needs only the absolute value of $J$.
Hence in all the statements of this notes the notation $[\dif
Y]=J[\dif X]$ means that the relation is written ignoring the sign.

\begin{prop}\label{prop:vector-transformation}
Let $X,Y\in\real^n$ be of independent real variables and
$A\in\real^{n\times n}$ be a nonsingular matrix of constants. If
$Y=AX$, then
\begin{eqnarray}\label{eq:1.2-1}
\fbox{$[\dif Y] = \det(A)[\dif X]$.}
\end{eqnarray}
\end{prop}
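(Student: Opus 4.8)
The plan is to reduce the claim to the general vector Jacobian formula $[\dif Y] = J(Y:X)[\dif X]$ recalled immediately above, by evaluating the matrix of partial derivatives explicitly. Writing $Y = AX$ in coordinates as $y_i = \sum_{j=1}^n a_{ij}x_j$, one reads off $\partial y_i/\partial x_j = a_{ij}$, so that $\Br{\partial y_i/\partial x_j} = A$ and therefore $J(Y:X) = \det(A)$. Substituting into $[\dif Y] = J(Y:X)[\dif X]$ gives the assertion, under the standing convention that signs are dropped, so that the right-hand side is understood as $\abs{\det(A)}[\dif X]$. If one is content to quote the vector Jacobian formula, this is essentially the whole proof.

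For a genuinely self-contained argument in this special case, I would instead factor the nonsingular matrix $A$ into elementary matrices, $A = E_1\cdots E_k$, and verify the three elementary cases directly from the definition of $[\dif X]$ as an exterior product: scaling one coordinate by $c\neq 0$ multiplies $[\dif X]$ by $c$, and the corresponding elementary matrix has determinant $c$; transposing two coordinates changes $[\dif X]$ only by a sign, and the determinant is $-1$; adding a multiple of one coordinate to another leaves the wedge product unchanged (since $\dif y_i = \dif x_i + \lambda\,\dif x_j$ for a single index $i$, with all other differentials fixed, and a repeated factor kills the extra term), and the determinant is $1$. Composing these transformations and using multiplicativity of the determinant, $\det(A) = \prod_\ell \det(E_\ell)$, then yields $[\dif Y] = \det(A)[\dif X]$.

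The main obstacle here is purely bookkeeping rather than mathematical depth: $[\dif X]$ is an exterior product, hence defined only up to sign once coordinates are permuted, so one must decide once and for all that $[\dif Y] = J[\dif X]$ is an identity of magnitudes — equivalently, that $\det(A)$ really means $\abs{\det(A)}$ in this statement. This is exactly the convention fixed earlier in the section, so I would add a single sentence pointing to it and regard the proof as complete; a reader who prefers to track signed differential forms simply carries $\abs{\,\cdot\,}$ throughout.
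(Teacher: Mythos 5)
Your first paragraph is exactly the paper's own proof: compute $\partial y_i/\partial x_j = a_{ij}$, identify the Jacobian matrix with $A$, and conclude $J(Y:X)=\det(A)$. The additional elementary-matrix decomposition and the remark on the sign convention are correct but not needed; the argument is complete and matches the paper's approach.
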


\begin{proof}
The result follows from the definition itself. Note that when
$Y=AX,A=[a_{ij}]$ one has
\begin{eqnarray*}
y_i = a_{i1}x_1+\cdots +a_{in}x_n, i=1,\ldots,n
\end{eqnarray*}
where $x_j$'s and $y_j$'s denote the components of the vectors $X$
and $Y$, respectively. Thus the partial derivative of $y_i$ with
respect to $x_j$ is $a_{ij}$, and then the determinant of the
Jacobian matrix is $\det(A)$.
\end{proof}

In order to see the results in the more complicated cases we need
the concept of a \emph{tensor product}.

\begin{definition}[Tensor product]\label{def:tensor-product}
Let $A=[a_{ij}]\in\real^{p\times q}$ and
$B=[b_{ij}]\in\real^{m\times n}$. Then the \emph{tensor product},
denoted by $\ot$, is a $pm\times qn$ matrix in $\real^{pm\times
qn}$, formed as follows:
\begin{eqnarray}
B\ot A = \Br{\begin{array}{cccc}
               a_{11}B & a_{12}B & \cdots & a_{1q}B \\
               a_{21}B & a_{22}B & \cdots & a_{2q}B \\
               \vdots & \vdots & \ddots & \vdots \\
               a_{p1}B & a_{p2}B & \cdots & a_{pq}B
             \end{array}
}
\end{eqnarray}
and
\begin{eqnarray}
A\ot B = \Br{\begin{array}{cccc}
               b_{11}A & b_{12}A & \cdots & b_{1n}A \\
               b_{21}A & b_{22}A & \cdots & b_{2n}A \\
               \vdots & \vdots & \ddots & \vdots \\
               b_{m1}A & b_{m2}A & \cdots & b_{mn}A
             \end{array}
}.
\end{eqnarray}
\end{definition}

\begin{definition}[Vector-matrix correspondence]\label{def:vector-matrix}
Let $X=[x_{ij}]\in\real^{m\times n}$ matrix. Let the $j$-th column
of $X$ be denoted by $X_j$. That is, $X=[X_1,\ldots,X_n]$, where
$$
X_j = \Br{\begin{array}{c}
            x_{1j} \\
            \vdots \\
            x_{mj}
          \end{array}
}.
$$
Consider an $mn$-dimensional vector in $\real^{mn}$, formed by
appending $X_1,\ldots,X_n$ and forming a long string. This vector
will be denoted by $\vec(X)$. That is,
\begin{eqnarray}
\vec(X) = \Br{\begin{array}{c}
            X_1 \\
            \vdots \\
            X_n
          \end{array}
}.
\end{eqnarray}
\end{definition}
From the above definition, we see that the vec mapping is a
one-to-one and onto correspondence from $\real^{m\times n}$ to
$\real^n\ot\real^m$. We also see that $\vec(AXB)=\Pa{A\ot B^\t}
\vec(X)$ if the product $AXB$ exists.

\begin{prop}\label{prop:matrix-transformation}
Let $X,Y\in\real^{m\times n}$ be of independent real variables and
$A\in\real^{m\times m}$ and $B\in\real^{n\times n}$ nonsingular
matrices of constants. If $Y=AXB$, then
\begin{eqnarray}
\fbox{$[\dif Y]=\det(A)^n\det(B)^m[\dif X].$}
\end{eqnarray}
\end{prop}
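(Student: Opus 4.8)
The plan is to reduce the matrix transformation $Y=AXB$ to the vector case already settled in Proposition~\ref{prop:vector-transformation} by passing through the $\vec$ map. First I would invoke the identity recorded just after Definition~\ref{def:vector-matrix}, namely $\vec(AXB)=(A\ot B^\t)\vec(X)$, so that $\vec(Y)=(A\ot B^\t)\vec(X)$; since $A$ and $B$ are nonsingular, the $mn\times mn$ constant matrix $A\ot B^\t$ is nonsingular as well (its determinant, computed below, is nonzero). Because $\vec$ merely rearranges the $mn$ independent entries of a matrix into one long column, and our sign conventions make the ordering of differentials immaterial, we have $[\dif Y]=[\dif\vec(Y)]$ and $[\dif X]=[\dif\vec(X)]$. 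Applying Proposition~\ref{prop:vector-transformation} to $\vec(Y)=(A\ot B^\t)\vec(X)$ then gives
\begin{eqnarray*}
[\dif Y]=\det(A\ot B^\t)\,[\dif X].
\end{eqnarray*}

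It remains to evaluate $\det(A\ot B^\t)$, for which I would establish the general fact that, for $P\in\real^{m\times m}$ and $Q\in\real^{n\times n}$, one has $\det(P\ot Q)=\det(P)^n\det(Q)^m$. The key is the factorization $P\ot Q=(P\ot I_n)(I_m\ot Q)$, which follows from the mixed-product rule $(A\ot B)(C\ot D)=(AC)\ot(BD)$; this rule is a short check directly from Definition~\ref{def:tensor-product}. By that definition, $P\ot I_n$ is block diagonal with $n$ diagonal blocks each equal to $P$, whence $\det(P\ot I_n)=\det(P)^n$; and $I_m\ot Q$ becomes, after applying one permutation to its rows and the same permutation to its columns, block diagonal with $m$ diagonal blocks each equal to $Q$, whence $\det(I_m\ot Q)=\det(Q)^m$. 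Multiplying the two gives $\det(P\ot Q)=\det(P)^n\det(Q)^m$. Taking $P=A$ and $Q=B^\t$ yields $\det(A\ot B^\t)=\det(A)^n\det(B^\t)^m=\det(A)^n\det(B)^m$, and substituting into the displayed equation completes the proof. (Alternatively, one could obtain the determinant-of-tensor-product formula from the fact that the eigenvalues of $P\ot Q$ are the products of an eigenvalue of $P$ with one of $Q$, but the factorization argument is more elementary and matches the spirit of these notes.)

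The block-determinant evaluations and the verification of the mixed-product rule are routine; I expect the only point that genuinely needs attention is the bookkeeping forced by the paper's convention $A\ot B=[b_{ij}A]$, which swaps the roles of the two factors relative to the more common Kronecker convention, and therefore determines which of the exponents $n,m$ attaches to $\det(A)$ and which to $\det(B)$. One should also state explicitly that $A\ot B^\t$ is invertible before invoking Proposition~\ref{prop:vector-transformation}, which is immediate once its determinant has been computed.
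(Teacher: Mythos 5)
Your proposal is correct and follows essentially the same route as the paper: vectorize $Y=AXB$ to $\vec(Y)=(A\ot B^\t)\vec(X)$, invoke Proposition~\ref{prop:vector-transformation}, and evaluate $\det(A\ot B^\t)$ through the factorization $(A\ot\I_n)(\I_m\ot B^\t)$, which is exactly the chain of equalities in the paper's proof. You simply supply the routine verifications (the mixed-product rule, the block-determinant computations, and the care with the paper's swapped tensor convention) that the paper leaves implicit.
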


\begin{proof}
Since $Y=AXB$, it follows that $\vec(Y)=(A\ot B^\t)\vec(X)$. Then by
using Proposition~\ref{prop:vector-transformation}, we have
\begin{eqnarray*}
J(Y:X)&=&\det\Pa{\frac{\partial Y}{\partial X}}:=
\det\Pa{\frac{\partial (\vec(Y))}{\partial (\vec(X))}} \\
&=& \det(A\ot B^\t) = \det(A\ot\I_n)\det(\I_m\ot B^\t)\\
&=& \det(A)^n\det(B)^m,
\end{eqnarray*}
implying that
$$
[\dif Y]=J(Y:X)[\dif X] = \det(A)^n\det(B)^m[\dif X].
$$
This completes the proof.
\end{proof}

\begin{remark}
Another approach to the proof that $[\dif Z]=\det(A)^n[\dif X]$,
where $Z=AX$, is described as follows: we partition $Z$ and $X$,
respectively, as: $Z=[Z_1,\ldots,Z_n], X=[X_1,\ldots,X_n]$. Now
$Z=AX$ can be rewritten as $Z_j=AX_j$ for all $j$. So
\begin{eqnarray}
\frac{\partial Z}{\partial X} = \frac{\partial (Z_1,\ldots,
Z_n)}{\partial (X_1,\ldots, X_n)} = \Br{\begin{array}{cccc}
                                          A &  &  &  \\
                                           & A &  &  \\
                                           &  & \ddots &  \\
                                           &  &  & A
                                        \end{array}
},
\end{eqnarray}
implying that
$$
[\dif Z]=\det\Pa{\Br{\begin{array}{cccc}
                                          A &  &  &  \\
                                           & A &  &  \\
                                           &  & \ddots &  \\
                                           &  &  & A
                                        \end{array}
}}[\dif X] = \det(A)^n[\dif X].
$$
\end{remark}

\begin{prop}\label{prop:LR-lower-triangular}
Let $X,A, B\in\real^{n\times n}$ be lower triangular matrices where
$A=[a_{ij}]$ and $B=[b_{ij}]$ are constant matrices with
$a_{jj}>0,b_{jj}>0, j=1,\ldots,n$ and $X$ is a matrix of independent
real variables. Then
\begin{eqnarray}
Y=X+X^\t&\Longrightarrow& [\dif Y]=2^n[\dif X],\\
Y=AX&\Longrightarrow&
[\dif Y]=\Pa{\prod^n_{j=1}a^j_{jj}}[\dif X],\\
Y=XB&\Longrightarrow& [\dif Y]=\Pa{\prod^n_{j=1}b^{n-j+1}_{jj}}[\dif
X].
\end{eqnarray}
Thus
\begin{eqnarray}
Y=AXB&\Longrightarrow& [\dif
Y]=\Pa{\prod^n_{j=1}a^j_{jj}b^{n-j+1}_{jj}}[\dif X].
\end{eqnarray}
\end{prop}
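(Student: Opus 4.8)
The plan is to establish the four implications in the order listed, obtaining the fourth one by composing the second and third. Throughout, the key structural fact (used repeatedly) is that a product of lower triangular matrices is lower triangular, so that in each case the transformed matrix $Y$ is lower triangular and carries exactly the same $n(n+1)/2$ free variables $\set{y_{ij}:i\geqslant j}$ as $X$.

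First, for $Y=X+X^\t$: since $X$ is lower triangular we have $x_{ij}=0$ whenever $i<j$, so the free variables of $X$ are $\set{x_{ij}:i\geqslant j}$ and $Y$ is symmetric with free variables $\set{y_{ij}:i\geqslant j}$. From $y_{ij}=x_{ij}+x_{ji}$ one gets $y_{ii}=2x_{ii}$ and, for $i>j$, $y_{ij}=x_{ij}$ because $x_{ji}=0$. Hence, in the natural ordering of the $n(n+1)/2$ variables, the matrix $\partial Y/\partial X$ is diagonal with exactly $n$ entries equal to $2$ and the remaining $n(n-1)/2$ entries equal to $1$, so $[\dif Y]=2^n[\dif X]$.

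Next, for $Y=AX$ with $A,X$ lower triangular: writing $y_{ij}=\sum_k a_{ik}x_{kj}$ and using $a_{ik}=0$ for $k>i$ together with $x_{kj}=0$ for $k<j$ gives $y_{ij}=\sum_{k=j}^{i}a_{ik}x_{kj}$, an expression involving only the variables $x_{jj},x_{j+1,j},\dots,x_{nj}$ in the $j$-th column of $X$. I would therefore order the variables column by column; then $\partial Y/\partial X$ becomes block diagonal, its $j$-th block being the $(n-j+1)\times(n-j+1)$ lower triangular matrix $[a_{ik}]_{i,k=j}^{n}$ with diagonal $(a_{jj},a_{j+1,j+1},\dots,a_{nn})$. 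Consequently $J(Y:X)=\prod_{j=1}^{n}\prod_{k=j}^{n}a_{kk}$, and since the factor $a_{kk}$ occurs precisely in the blocks $j=1,\dots,k$, this equals $\prod_{k=1}^{n}a_{kk}^{k}$, giving $[\dif Y]=\Pa{\prod_{j=1}^{n}a_{jj}^{j}}[\dif X]$. For $Y=XB$ I would argue symmetrically, now grouping the variables row by row: $y_{ij}=\sum_{k=j}^{i}x_{ik}b_{kj}$ depends only on the $i$-th row $x_{i1},\dots,x_{ii}$ of $X$, so $\partial Y/\partial X$ is block diagonal with $i$-th block $[b_{kj}]_{j,k=1}^{i}$, a triangular matrix with diagonal $(b_{11},\dots,b_{ii})$ and determinant $\prod_{j=1}^{i}b_{jj}$; multiplying over $i=1,\dots,n$, the factor $b_{jj}$ appears for each $i\geqslant j$, i.e.\ $n-j+1$ times, which yields $[\dif Y]=\Pa{\prod_{j=1}^{n}b_{jj}^{n-j+1}}[\dif X]$. (Alternatively one may pass to transposes, $Y^\t=B^\t X^\t$ with upper triangular matrices, and reduce to the previous case with rows and columns interchanged.) Finally, for $Y=AXB$, set $Z=AX$; then $Z$ is lower triangular, the second implication gives $[\dif Z]=\Pa{\prod_{j=1}^{n}a_{jj}^{j}}[\dif X]$, the third applied to $Y=ZB$ gives $[\dif Y]=\Pa{\prod_{j=1}^{n}b_{jj}^{n-j+1}}[\dif Z]$, and chaining the two Jacobians yields the claim.

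The computations are all elementary; the only point requiring care is the bookkeeping once the Jacobian has been brought to block-triangular form — namely counting how many blocks each diagonal entry $a_{kk}$ (resp.\ $b_{jj}$) contributes a factor to — together with the observation, used twice, that the transformed matrix $Y$ (and the intermediate $Z$) is again lower triangular so that it genuinely has the same free variables as $X$. This is where I expect the main (mild) obstacle to lie.
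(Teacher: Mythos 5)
Your proof is correct and follows essentially the same route as the paper: order the free variables of the lower triangular matrices column by column for $Y=AX$ (respectively row by row for $Y=XB$), observe that the Jacobian matrix is block diagonal with triangular blocks given by trailing (respectively leading) principal submatrices of $A$ (respectively $B$), count the multiplicities of the diagonal entries, and compose the two transformations for $Y=AXB$. The treatment of $Y=X+X^{\t}$ likewise matches the paper's direct computation.
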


\begin{proof}
$Y=X+X^\t$ implies that
\begin{eqnarray*}
\Br{\begin{array}{cccc}
      x_{11} & 0&\cdots & 0 \\
      x_{21} & x_{22}&\cdots & 0 \\
      \vdots & \vdots&\ddots & \vdots \\
      x_{n1} & x_{n2}&\cdots & x_{nn}
    \end{array}
}+\Br{\begin{array}{cccc}
      x_{11} & x_{21}&\cdots & x_{n1} \\
      0 & x_{22}&\cdots & x_{2n} \\
      \vdots & \vdots&\ddots & \vdots \\
      0 & 0&\cdots & x_{nn}
    \end{array}
} = \Br{\begin{array}{cccc}
      2x_{11} & x_{21} & \cdots & x_{n1} \\
      x_{21} & 2x_{22} & \cdots & x_{n2} \\
      \vdots & \vdots & \ddots & \vdots \\
      x_{n1} & x_{n2} & \cdots & 2x_{nn}
    \end{array}}.
\end{eqnarray*}
When taking the partial derivatives the $n$ diagonal elements give
$2$ each and others unities and hence $[\dif Y]=2^n[\dif X]$. If
$Y=AX$, then the matrices of the configurations of the partial
derivatives, by taking the elements in the orders
$(y_{11},y_{21}\ldots, y_{n1}); (y_{22},\ldots,y_{n2}); \ldots;
y_{nn}$ and $(x_{11},x_{21}\ldots, x_{n1}); (x_{22},\ldots,x_{n2});
\ldots; x_{nn}$ are the following:
\begin{eqnarray*}
\frac{\partial (y_{11},y_{21}\ldots,y_{n1})}{\partial
(x_{11},x_{21}\ldots,x_{n1})}=A,~~\frac{\partial(y_{22},\ldots,y_{n2})}{\partial(x_{22},\ldots,x_{n2})}
= A[\hat 1|\hat 1],\ldots,\\
\frac{\partial y_{nn}}{\partial x_{nn}} = A[\hat
1\cdots\widehat{n-1}|\hat 1\cdots\widehat{n-1}]=a_{nn},
\end{eqnarray*}
where $A[\hat i_1\cdots\hat i_\mu|\hat j_1\cdots \hat j_\nu]$ means
that the obtained submatrix from deleting both the
$i_1,\ldots,i_\mu$-th rows and the $j_1,\ldots,j_\nu$-th columns of
$A$. Thus
\begin{eqnarray*}
\frac{\partial Y}{\partial X} &=& \Br{\begin{array}{cccc}
                                  \frac{\partial
(y_{11},y_{21},\ldots,y_{n1})}{\partial(x_{11},x_{21},\ldots,x_{n1})} & 0 & \cdots & 0 \\
                                   0 & \frac{\partial(y_{22},\ldots,y_{n2})}{\partial(x_{22},\ldots,x_{n2})} & \cdots & 0 \\
                                   \vdots & \vdots & \ddots & \vdots \\
                                   0 & 0 & \cdots & \frac{\partial y_{nn}}{\partial x_{nn}}
                                \end{array}}\\
                                &=& \Br{\begin{array}{cccc}
                                  A & 0 & \cdots & 0 \\
                                   0 & A[\hat 1|\hat 1] & \cdots & 0 \\
                                   \vdots & \vdots & \ddots & \vdots \\
                                   0 & 0 & \cdots & A[\hat
1\cdots\widehat{n-1}|\hat 1\cdots\widehat{n-1}]
                                \end{array}}.
\end{eqnarray*}
We can also take another approach to this proof. In fact, we
partition $X,Y$ by columns, respectively, $Y=[Y_1,\ldots,Y_n]$ and
$X=[X_1,\ldots, X_n]$. Then $Y=AX$ is equivalent to $Y_j=AX_j,
j=1,\ldots,n$. Since $Y,X,A$ are lower triangular, it follows that
\begin{eqnarray*}
\Br{\begin{array}{c}
      y_{11} \\
      y_{21} \\
      \vdots \\
      y_{n1}
    \end{array}
} = A\Br{\begin{array}{c}
      x_{11} \\
      x_{21} \\
      \vdots \\
      x_{n1}
    \end{array}
},\Br{\begin{array}{c}
      y_{22} \\
      \vdots \\
      y_{n2}
    \end{array}
} = A[\hat 1|\hat 1]\Br{\begin{array}{c}
      x_{22} \\
      \vdots \\
      x_{n2}
    \end{array}
},\ldots,y_{nn} = A[\hat 1\cdots\widehat{n-1}|\hat 1\cdots\widehat
{n-1}]x_{nn}=a_{nn}x_{nn}.
\end{eqnarray*}
Now
\begin{eqnarray*}
[\dif Y]&=&\prod^n_{j=1}[\dif Y_j] = \det(A)\det(A[\hat 1|\hat
1])\cdots \det(A[\hat 1\cdots\widehat{n-1}|\hat 1\cdots\widehat
{n-1}])\prod^n_{j=1}[\dif X_j]\\
&=& \Pa{\prod^n_{j=1}a_{jj}^j}[\dif X].
\end{eqnarray*}
Next if $Y=XB$, that is,
\begin{eqnarray*}
Y=XB &=&\Br{\begin{array}{cccc}
      x_{11} & 0&\cdots & 0 \\
      x_{21} & x_{22}&\cdots & 0 \\
      \vdots & \vdots&\ddots & \vdots \\
      x_{n1} & x_{n2}&\cdots & x_{nn}
    \end{array}
}\Br{\begin{array}{cccc}
      b_{11} & 0&\cdots & 0 \\
      b_{21} & b_{22}&\cdots & 0 \\
      \vdots & \vdots&\ddots & \vdots \\
      b_{n1} & b_{n2}&\cdots & b_{nn}
    \end{array}
}\notag\\
&=&\Br{\begin{array}{cccc}
      x_{11}b_{11} & 0&\cdots & 0 \\
      x_{21}b_{11}+x_{22}b_{21} & x_{22}b_{22}&\cdots & 0 \\
      \vdots & \vdots&\ddots & \vdots \\
      \sum^n_{j=1}x_{nj}b_{j1} & \sum^n_{j=1}x_{nj}b_{j2}&\cdots & x_{nn}b_{nn}
    \end{array}
}
\end{eqnarray*}
The matrices of the configurations of the partial derivatives, by
taking the elements in the orders
$y_{11};(y_{21},y_{22});\ldots;(y_{n1},\ldots, y_{nn})$ and
$x_{11};(x_{21},x_{22});\ldots;(x_{n1},\ldots, x_{nn})$ are the
following:
\begin{eqnarray*}
\frac{\partial y_{11}}{\partial
x_{11}}=b_{11},~~\frac{\partial(y_{21},y_{22})}{\partial(x_{21},x_{22})}
= \Br{\begin{array}{cc}
        b_{11} & b_{21} \\
        0 & b_{22}
      \end{array}
}=\Br{\begin{array}{cc}
        b_{11} & 0 \\
        b_{21} & b_{22}
      \end{array}
}^\t,\\
\frac{\partial
(y_{31},y_{32},y_{33})}{\partial(x_{31},x_{32},x_{33})} =
\Br{\begin{array}{ccc}
                                                                b_{11} & b_{21} & b_{31} \\
                                                                0 & b_{22} & b_{32} \\
                                                                0 & 0 &
                                                                b_{33}
                                                              \end{array}
}= \Br{\begin{array}{ccc}
                                                                b_{11} & 0 & 0 \\
                                                                b_{21} & b_{22} & 0 \\
                                                                b_{31} & b_{32} &
                                                                b_{33}
                                                              \end{array}
}^\t,\ldots,\\
\frac{\partial
(y_{n1},y_{n2},\ldots,y_{nn})}{\partial(x_{n1},x_{n2},\ldots,x_{nn})}
= \Br{\begin{array}{cccc}
        b_{11} & 0 & \cdots & 0 \\
        b_{21} & b_{22} & \cdots & 0 \\
        \vdots & \vdots & \ddots & \vdots \\
        b_{n1} & b_{n2} & \cdots & b_{nn}
      \end{array}
}^\t.
\end{eqnarray*}
Thus
\begin{eqnarray*}
\frac{\partial Y}{\partial X} = \Br{\begin{array}{cccc}
                                  \frac{\partial y_{11}}{\partial x_{11}} & 0 & \cdots & 0 \\
                                   0 & \frac{\partial(y_{21},y_{22})}{\partial(x_{21},x_{22})} & \cdots & 0 \\
                                   \vdots & \vdots & \ddots & \vdots \\
                                   0 & 0 & \cdots & \frac{\partial
(y_{n1},y_{n2},\ldots,y_{nn})}{\partial(x_{n1},x_{n2},\ldots,x_{nn})}
                                \end{array}}
\end{eqnarray*}
Denote by $B[i_1\ldots i_\mu|j_1\ldots j_\nu]$ the sub-matrix formed
by the $i_1,\ldots,i_\mu$-th rows and $j_1,\ldots,j_\nu$-th columns
of $B$. Hence
\begin{eqnarray*}
\frac{\partial Y}{\partial X} = \Br{\begin{array}{cccc}
                                 B[1|1] & 0 & \cdots & 0 \\
                                   0 & B[12|12] & \cdots & 0 \\
                                   \vdots & \vdots & \ddots & \vdots \\
                                   0 & 0 & \cdots & B[1\cdots n|1\cdots n]
                                \end{array}}^\t.
\end{eqnarray*}
The whole configuration is a upper triangular matrix with $b_{11}$
appearing $n$ times and $b_{22}$ appearing $n-1$ times and so on in
the diagonal. Also we give another approach to derive the Jacobian
for $Y=XB$. Indeed, we partition $Y,X$ by rows, respectively,
$$
Y=\Br{\begin{array}{c}
        Y_1 \\
        Y_2 \\
        \vdots \\
        Y_n
      \end{array}
},~~X=\Br{\begin{array}{c}
        X_1 \\
        X_2 \\
        \vdots \\
        X_n
      \end{array}
},
$$
where $Y_j,X_j$ are row-vectors. So $Y=XB$ is equivalent to
$Y_j=X_jB, j=1,\ldots,n$. Moreover
\begin{eqnarray*}
&&y_{11}=x_{11}B[1|1]=x_{11}a_{11}, [y_{21}, y_{22}]=[x_{21},
x_{22}]B[12|12], \ldots, \\
&&~[y_{n1},\ldots,y_{nn}] = [y_{n1},\ldots,y_{nn}]B[1\ldots
n|1\ldots n].
\end{eqnarray*}
Therefore
\begin{eqnarray*}
[\dif Y]&=&\prod^n_{j=1}[\dif Y_j] =
\det(B[1|1])\det(B[12|12])\cdots\det(B[1\cdots n|1\cdots
n])\prod^n_{j=1}[\dif X_j]\\
&=& \Pa{\prod^n_{j=1}b^{n+j-1}_{jj}}[\dif X].
\end{eqnarray*}
We are done.
\end{proof}

\begin{prop}\label{prop:lower-triangular-to-transpose}
Let $X$ be a lower triangular matrix of independent real variables
and $A=[a_{ij}]$ and $B=[b_{ij}]$ be lower triangular matrices of
constants with $a_{jj}>0, b_{ij}>0,j=1,\ldots,n$. Then
\begin{eqnarray}
Y=AX+X^\t A^\t &\Longrightarrow&
[\dif Y]=2^n\Pa{\prod^n_{j=1}a^j_{jj}}[\dif X],\\
Y= XB + B^\t X^\t&\Longrightarrow& [\dif
Y]=2^n\Pa{\prod^n_{j=1}b^{n-j+1}_{jj}}[\dif X].
\end{eqnarray}
\end{prop}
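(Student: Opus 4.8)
The plan is to reduce each of the two transformations to a composition of maps already treated in Proposition~\ref{prop:LR-lower-triangular}, and then use multiplicativity of the Jacobian, $[\dif Y] = J(Y:Z)\,J(Z:X)\,[\dif X]$.

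\textbf{First transformation.} For $Y = AX + X^\t A^\t$, introduce the auxiliary matrix $Z := AX$. Because $A$ and $X$ are lower triangular, so is $Z$; and since $A$ is nonsingular (its diagonal entries are positive), the linear map $X\mapsto AX$ is a bijection of the $n(n+1)/2$-dimensional space of $n\times n$ lower triangular matrices onto itself, so the entries of $Z$ on and below the diagonal form an admissible set of independent variables, with diagonal $(a_{11}x_{11},\dots,a_{nn}x_{nn})$. By the second implication of Proposition~\ref{prop:LR-lower-triangular}, $[\dif Z] = \Pa{\prod^n_{j=1}a^j_{jj}}[\dif X]$. Since moreover $Z^\t = X^\t A^\t$, we have $Y = Z + Z^\t$, so the first implication of Proposition~\ref{prop:LR-lower-triangular} gives $[\dif Y] = 2^n[\dif Z]$. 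Combining, $[\dif Y] = 2^n\Pa{\prod^n_{j=1}a^j_{jj}}[\dif X]$.

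\textbf{Second transformation.} For $Y = XB + B^\t X^\t$, set $Z := XB$, which is again lower triangular, and, as $B$ is nonsingular, its on-and-below-diagonal entries form an admissible set of independent variables. By the third implication of Proposition~\ref{prop:LR-lower-triangular}, $[\dif Z] = \Pa{\prod^n_{j=1}b^{n-j+1}_{jj}}[\dif X]$. Since $Z^\t = B^\t X^\t$, we have $Y = Z + Z^\t$, hence $[\dif Y] = 2^n[\dif Z] = 2^n\Pa{\prod^n_{j=1}b^{n-j+1}_{jj}}[\dif X]$.

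\textbf{Main point of care.} The only thing to be checked carefully is the legitimacy of each composition: that the intermediate variable $Z$ is genuinely a lower-triangular matrix whose $n(n+1)/2$ free entries constitute an admissible coordinate system, so that the formulas of Proposition~\ref{prop:LR-lower-triangular} apply verbatim and the composite Jacobian is the product of the two Jacobians; this is exactly what the nonsingularity of $A$ (resp.\ $B$) guarantees. As throughout this section, all signs are suppressed, consistently with the conventions fixed earlier. Beyond this bookkeeping, the proof presents no real obstacle.
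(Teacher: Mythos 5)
Your proof is correct and follows essentially the same route as the paper: set $Z=AX$ (resp. $Z=XB$), apply the first part of Proposition~\ref{prop:LR-lower-triangular} to $Y=Z+Z^{\t}$ and the second (resp. third) part to the map $X\mapsto Z$, then multiply the Jacobians. Your added remark on the admissibility of $Z$ as an intermediate lower-triangular coordinate system is a reasonable bit of extra care that the paper leaves implicit.
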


\begin{proof}
Let $Z=AX$. Then $Y=Z+Z^\t$. Thus $[\dif Y]=2^n [dZ]$. Since $Z=AX$,
it follows from Proposition~\ref{prop:LR-lower-triangular} that
$$
[\dif Z] = \Pa{\prod^n_{j=1}a^j_{jj}}[\dif X],
$$
implying the result. The proof of the second identity goes
similarly.
\end{proof}

\begin{prop}\label{prop:symmetric-case}
Let $X$ and $Y$ be $n\times n$ symmetric matrices of independent
real variables and $A\in\real^{n\times n}$ nonsingular matrix of
constants. If $Y=AXA^\t$, then
\begin{eqnarray}
\fbox{$[\dif Y] = \det(A)^{n+1}[\dif X]$.}
\end{eqnarray}
\end{prop}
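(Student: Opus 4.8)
The plan is to treat the assignment $X\mapsto AXA^\t$ as a single linear map and to exploit how its determinant behaves under multiplication of $A$. Write $\Phi_A$ for the map $X\mapsto AXA^\t$. Since $AXA^\t$ is symmetric exactly when $X$ is, and $A$ is nonsingular, $\Phi_A$ is a linear automorphism of the $n(n+1)/2$-dimensional real vector space of $n\times n$ symmetric matrices, so $[\dif Y]=\abs{\det\Phi_A}\,[\dif X]$ (signs being ignored, as is our convention throughout). Put $J(A):=\det\Phi_A$. The decisive structural observation is multiplicativity: since $(AB)X(AB)^\t=A\Pa{BXB^\t}A^\t$, we have $\Phi_{AB}=\Phi_A\circ\Phi_B$, hence $J(AB)=J(A)J(B)$. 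As $A\mapsto\det(A)^{n+1}$ is also multiplicative, it suffices to verify $J(E)=\det(E)^{n+1}$ as $E$ ranges over a generating set of $\mathrm{GL}_n(\real)$; the general case then follows since a product $A=E_1\cdots E_k$ gives $J(A)=\prod_\ell J(E_\ell)=\prod_\ell\det(E_\ell)^{n+1}=\det(A)^{n+1}$.

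Second, I would reduce to elementary matrices: by Gaussian elimination every nonsingular $A\in\real^{n\times n}$ is a product of elementary matrices, each of which is either a transvection $\I+cE_{ij}$ with $i\neq j$ (here $E_{ij}$ is the matrix unit with a single $1$ in entry $(i,j)$) or a diagonal matrix; a permutation matrix, should one prefer to use row swaps, is itself such a product up to sign. So once the transvection and diagonal cases are settled, multiplicativity finishes the proof.

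Next I would compute $J$ on the two types. For a diagonal $D=\diag(d_1,\dots,d_n)$ the map $\Phi_D$ sends $x_{ij}$ to $d_id_j\,x_{ij}$, so in the coordinates $\set{x_{ij}:i\geqslant j}$ it is diagonal, with determinant $\prod_i d_i^2\cdot\prod_{i>j}d_id_j=\Pa{\prod_i d_i}^{n+1}=\det(D)^{n+1}$, since each $d_k$ occurs to total power $2+(n-1)=n+1$. For a transvection $S=\I+cE_{ij}$ with $i\neq j$ we have $\det(S)=1$, and the claim is that $\Phi_S$ is unipotent, giving $J(S)=1$. Indeed $cE_{ij}$ is nilpotent, so $S=\exp(cE_{ij})$, and the one-parameter group $t\mapsto\Phi_{\exp(tcE_{ij})}$ has infinitesimal generator $\mathcal{N}\colon X\mapsto (cE_{ij})X+X(cE_{ij})^\t$, whence $\Phi_S=\exp(\mathcal{N})$. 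Left multiplication by $cE_{ij}$ and right multiplication by $(cE_{ij})^\t$ are commuting nilpotent operators on the space of symmetric matrices, so $\mathcal{N}$ is nilpotent, $\Phi_S=\exp(\mathcal{N})$ is unipotent, and $\det\Phi_S=1$. (Alternatively one may order the coordinates $x_{ij}$ so that the matrix of $\Phi_S$ is triangular with $1$'s down the diagonal, but that is more laborious.) Feeding these two evaluations into multiplicativity yields $J(A)=\det(A)^{n+1}$ for all nonsingular $A$, i.e. $[\dif Y]=\det(A)^{n+1}[\dif X]$.

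I expect the transvection case—confirming that every eigenvalue of $\Phi_S$ equals $1$—to be the only step requiring an idea rather than bookkeeping; everything else is routine linear algebra, including the harmless tracking of signs. As a consistency check on the exponent, the scalar case $A=c\I$ gives $\Phi_A=c^2\cdot\mathrm{id}$ on a space of dimension $n(n+1)/2$, so $J(c\I)=c^{n(n+1)}=\det(c\I)^{n+1}$, in agreement with the general formula.
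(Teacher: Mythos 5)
Your proof is correct and follows the same overall strategy as the paper's: factor $A$ into elementary matrices and exploit multiplicativity of the Jacobian of $X\mapsto AXA^{\t}$, so that the formula need only be checked on a generating set of $\mathrm{GL}_n(\real)$. Where you differ is in how the generators are verified, and your version is tighter. The paper checks the row-scaling matrix and the transvection by writing out the full $6\times 6$ matrix of partial derivatives for $n=3$ and then asserting that ``for a $n\times n$ matrix it will be $\alpha^{n+1}$'' (resp.\ that the transvection gives $1^{n+1}$) without a general argument; your diagonal computation, which counts that each $d_k$ occurs to total power $2+(n-1)=n+1$ over the coordinates $\set{x_{ij}:i\geqslant j}$, and your observation that $\Phi_{\I+cE_{ij}}=\exp(\mathcal{N})$ with $\mathcal{N}$ nilpotent, hence unipotent of determinant one, settle both cases uniformly in $n$. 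One small wording point: left multiplication by $cE_{ij}$ and right multiplication by $cE_{ij}^{\t}$ do not individually preserve the symmetric subspace (only their sum $\mathcal{N}$ does), so the clean statement is that $\mathcal{N}$ is nilpotent on all of $\real^{n\times n}$, being a sum of commuting nilpotents there, and that its restriction to the invariant subspace of symmetric matrices is therefore nilpotent as well; the conclusion $\det\Phi_S=1$ is unaffected.
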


\begin{proof}
Since both $X$ and $Y$ are symmetric matrices and $A$ is nonsingular
we can split $A$ and $A^\t$ as products of elementary matrices and
write in the form
\begin{eqnarray*}
Y=\cdots E_2E_1XE^\t_1E^\t_2\cdots
\end{eqnarray*}
where $E_j,j=1,2,\ldots$ are elementary matrices. Write $Y=AXA^\t$
as a sequence of transformations of the type
\begin{eqnarray*}
Y_1=E_1XE^\t_1, Y_2=E_2Y_1E^\t_2,\ldots,\Longrightarrow\\
~[\dif Y_1] = J(Y_1:X)[\dif X],[\dif Y_2] = J(Y_2:Y_1)[\dif
Y_1],\ldots
\end{eqnarray*}
Now successive substitutions give the final result as long as the
Jacobians of the type $J(Y_k:Y_{k-1})$ are computed. Note that the
elementary matrices are formed by multiplying any row (or column) of
an identity matrix with a scalar, adding a row (column) to another
row (column) and combinations of these operations. Hence we need to
consider only these two basic elementary matrices. Let us consider a
$3\times 3$ case and compute the Jacobians. Let $E_1$ be the
elementary matrix obtained by multiplying the first row by $\alpha$
and $E_2$ by adding the first row to the second row of an identity
matrix. That is,
\begin{eqnarray*}
E_1=\Br{\begin{array}{ccc}
          \alpha & 0 & 0 \\
          0 & 1 & 0 \\
          0 & 0 & 1
        \end{array}
}, E_2=\Br{\begin{array}{ccc}
             1 & 0 & 0 \\
             1 & 1 & 0 \\
             0 & 0 & 1
           \end{array}
}
\end{eqnarray*}
and
\begin{eqnarray*}
E_1XE^\t_1&=&\Br{\begin{array}{ccc}
             \alpha^2x_{11} & \alpha x_{12} & \alpha x_{13} \\
             \alpha x_{21} & x_{22} & x_{23} \\
             \alpha x_{31} & x_{32} & x_{33}
           \end{array}
},\\
E_2 Y_1E^\t_2 &=& \Br{\begin{array}{ccc}
                      u_{11} & u_{11}+u_{12} & u_{13} \\
                      u_{11}+u_{21} & u_{11}+u_{21}+u_{12}+u_{22} & u_{13}+u_{23} \\
                      u_{31} & u_{31}+u_{32} & u_{33}
                    \end{array}
},
\end{eqnarray*}
where $Y_1=E_1XE^\t_1$ and $Y_2=E_2 Y_1E^\t_2$ and the elements of
$Y_1$ are denoted by $u_{ij}$'s for convenience. The matrix of
partial derivatives in the transformation $Y_1$ written as a
function of $X$ is then
\begin{eqnarray*}
\frac{\partial Y_1}{\partial X} = \Br{\begin{array}{cccccc}
                                        \alpha^2 & 0 & 0 & 0 & 0 & 0 \\
                                        0 & \alpha & 0 & 0 & 0 & 0 \\
                                        0 & 0 & 1 & 0 & 0 & 0 \\
                                        0 & 0 & 0 & \alpha & 0 & 0 \\
                                        0 & 0 & 0 & 0 & 1 & 0 \\
                                        0 & 0 & 0 & 0 & 0 & 1
                                      \end{array}
}.
\end{eqnarray*}
This is obtained by taking the $x_{ij}$'s in the order
$x_{11};(x_{21},x_{22}); (x_{31},x_{32},x_{33})$ and the $u_{ij}$'s
also in the same order. Thus the Jacobian is given by
\begin{eqnarray*}
J(Y_1:X)=\alpha^4 = \alpha^{3+1}=\det(E_1)^{3+1}.
\end{eqnarray*}
Or by definition it follows directly that
$$
[\dif Y_1] = \dif(\alpha^2 x_{11})\dif(\alpha x_{12})\dif (\alpha
x_{13})\dif x_{22}\dif x_{23}\dif x_{33} = \alpha^4 [\dif X].
$$
For a $n\times n$ matrix it will be $\alpha^{n+1}$. Let the elements
of $Y_2$ be denoted by $v_{ij}$'s. Then again taking the variables
in the order as in the case of $Y_1$ written as a function of $X$
the matrix of partial derivatives in this transformation is the
following:
\begin{eqnarray*}
\frac{\partial Y_2}{\partial Y_1} = \Br{\begin{array}{cccccc}
                                        1 & 0 & 0 & 0 & 0 & 0 \\
                                        1 & 1 & 0 & 0 & 0 & 0 \\
                                        1 & 2 & 1 & 0 & 0 & 0 \\
                                        0 & 0 & 0 & 1 & 0 & 0 \\
                                        0 & 0 & 0 & 1 & 1 & 0 \\
                                        0 & 0 & 0 & 0 & 0 & 1
                                      \end{array}
}.
\end{eqnarray*}
The determinant of this matrix is $1=1^{3+1}=\det(E_2)^{3+1}$. In
general such a transformation gives the Jacobian, in absolute value,
as $1=1^{n+1}$. Thus the Jacobian is given by
\begin{eqnarray*}
J(Y:X) = \det(\cdots E_2E_1)^{n+1}=\det(A)^{n+1}.
\end{eqnarray*}
We are done.
\end{proof}

\begin{exam}
Let $X\in\real^{n\times n}$ be a real symmetric positive definite
matrix having a matrix-variate gamma distribution with parameters
$(\alpha,B=B^\t>0)$. We show that
\begin{framed}
\begin{eqnarray}
\det(B)^{-\alpha} = \frac1{\Gamma_n(\alpha)}\int_{X>0}[\dif X]
\det(X)^{\alpha-\frac{n+1}2}e^{-\Tr{BX}},~~~\re(\alpha)>\frac{n-1}2.
\end{eqnarray}
\end{framed}
Indeed, since $B$ is symmetric positive definite there exists a
nonsingular matrix $C$ such that $B=CC^\t$. Note that
\begin{eqnarray*}
\Tr{BX}=\Tr{C^\t XC}.
\end{eqnarray*}
Let
\begin{eqnarray*}
U=C^\t XC\Longrightarrow [\dif U]=\det(C)^{n+1}[\dif X]
\end{eqnarray*}
from Proposition~\ref{prop:symmetric-case} and
$\det(X)=\det(B)^{-1}\det(U)$. The integral on the right reduces to
the following:
\begin{eqnarray*}
\int_{X>0}[\dif X] \det(X)^{\alpha-\frac{n+1}2}e^{-\Tr{BX}} =
\det(B)^{-\alpha}\int_{U>0}[\dif
U]\det(U)^{\alpha-\frac{n+1}2}e^{-\Tr{U}}.
\end{eqnarray*}
But
\begin{eqnarray*}
\int_{U>0}[\dif
U]\det(U)^{\alpha-\frac{n+1}2}e^{-\Tr{U}}=\Gamma_n(\alpha)
\end{eqnarray*}
for $\re(\alpha)>\frac{n-1}2$. The result is obtained.
\end{exam}

\begin{prop}\label{prop:skew-symmetric-case}
Let $X,Y\in\real^{n\times n}$ skew symmetric matrices of independent
real variables and $A\in\real^{n\times n}$ nonsingular matrix of
constants. If $Y=AXA^\t$, then
\begin{eqnarray}
\fbox{$[\dif Y] = \det(A)^{n-1}[\dif X]$.}
\end{eqnarray}
\end{prop}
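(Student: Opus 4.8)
The plan is to mimic exactly the proof of Proposition~\ref{prop:symmetric-case}, replacing symmetric matrices by skew-symmetric ones and tracking how the exponent changes. First I would note that a skew-symmetric $n\times n$ matrix $X$ has $n(n-1)/2$ independent entries, namely those strictly below the diagonal (the diagonal entries are zero), so $[\dif X]$ is a product of $n(n-1)/2$ differentials. Since $A$ is nonsingular, I write $A$ as a product of elementary matrices $A=\cdots E_2E_1$, so that $Y=AXA^\t$ factors as a sequence of transformations $Y_1=E_1XE_1^\t$, $Y_2=E_2Y_1E_2^\t,\ldots$, each of which keeps skew symmetry; then $J(Y:X)=\prod_k J(Y_k:Y_{k-1})$, and it suffices to compute the Jacobian for a single elementary matrix $E$ and show it equals $\det(E)^{n-1}$.

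There are two types of elementary matrices to handle. For $E$ obtained by multiplying one row of the identity (say row $i$) by a scalar $\alpha$, the map $X\mapsto EXE^\t$ scales the $i$-th row and $i$-th column of $X$ by $\alpha$; the entry $x_{ij}$ with $i\neq j$ (and $j\neq i$) gets multiplied by $\alpha$ for each of the $n-1$ off-diagonal positions in row $i$ (equivalently, among the independent sub-diagonal entries, each entry in row $i$ or column $i$ picks up one factor of $\alpha$, and there are exactly $n-1$ such entries since the $(i,i)$ entry is identically zero). Hence $J=\alpha^{n-1}=\det(E)^{n-1}$ — this is precisely where the exponent $n+1$ in the symmetric case drops to $n-1$: the diagonal entry $x_{ii}$, which contributed the extra factor $\alpha^2$ there, is simply absent here. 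For $E$ obtained by adding row $p$ to row $q$, the transformation $X\mapsto EXE^\t$ is a unipotent (triangular-with-unit-diagonal) change of variables on the sub-diagonal entries, so its Jacobian is $1=\det(E)^{n-1}$, exactly as in the symmetric case. I would illustrate both computations on a small case (say $n=3$ or $n=4$) to make the bookkeeping transparent, just as the paper does for Proposition~\ref{prop:symmetric-case}.

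Combining, $J(Y:X)=\det(\cdots E_2E_1)^{n-1}=\det(A)^{n-1}$, and therefore $[\dif Y]=\det(A)^{n-1}[\dif X]$ (signs ignored, per the stated convention). Alternatively, one can avoid the elementary-matrix decomposition entirely: using $\col{AXA^\t}=(A\ot A)\col{X}$ one sees that $X\mapsto AXA^\t$ acts on $\real^{n\times n}=\real^n\ot\real^n$ as $A\ot A$, and the skew-symmetric subspace $\wedge^2\real^n$ is an invariant subspace on which this map acts as the second exterior power $\wedge^2 A$, whose determinant is $\det(A)^{n-1}$ (a standard identity, $\det(\wedge^2 A)=\det(A)^{n-1}$); the Jacobian is the determinant of the restriction, hence $\det(A)^{n-1}$. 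The main obstacle is purely organizational: making the per-entry factor count for the scaling elementary matrix rigorous and convincing without drowning in indices — the clean way is to observe that the $n(n-1)/2$ independent entries split into those involving index $i$ ($n-1$ of them, each scaled by $\alpha$) and those not ($\binom{n-1}{2}$ of them, unchanged), giving total factor $\alpha^{n-1}$.
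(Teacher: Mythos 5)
Your proposal is correct. The paper actually states this proposition without proof, offering only the remark that a skew symmetric $X$ has just $\frac{n(n-1)}2$ independent entries, so there is nothing to compare against directly; your argument supplies exactly the missing details in the style of the paper's proof of Proposition~\ref{prop:symmetric-case}. The key count is right: for a scaling elementary matrix $E$ (row $i$ multiplied by $\alpha$), the independent sub-diagonal entries $x_{pq}$, $p>q$, that involve the index $i$ number $(i-1)+(n-i)=n-1$, each picking up exactly one factor of $\alpha$ because the position $(i,i)$ — which contributed the extra $\alpha^2$ in the symmetric case — is identically zero; and the row-addition elementary matrix induces a unipotent map on the independent entries, so its Jacobian is $1$. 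Multiplying over the elementary factors gives $\det(A)^{n-1}$. Your alternative via $\vec(AXA^\t)=(A\ot A)\vec(X)$ and the invariance of the skew-symmetric subspace, on which the map acts as $\wedge^2 A$ with $\det(\wedge^2 A)=\det(A)^{\binom{n-1}{1}}=\det(A)^{n-1}$, is a cleaner route that avoids the elementary-matrix bookkeeping altogether and makes transparent why the exponent is $n-1$ here versus $n+1$ for the symmetric (i.e.\ $\Sym^2$) case.
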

Note that when $X$ is skew symmetric the diagonal elements are zeros
and hence there are only $\frac{n(n-1)}2$ independent variables in
$X$.

\begin{prop}\label{prop:lower-upper-triangular}
Let $X,A,B\in\real^{n\times n}$ be lower triangular matrices where
$A=[a_{ij}]$ and $B=[b_{ij}]$ are nonsingular constant matrices with
positive diagonal elements, respectively, and $X$ is a matrix of
independent real variables. Then
\begin{eqnarray}
Y = A^\t X+X^\t A&\Longrightarrow&
[\dif Y]=2^n\Pa{\prod^n_{j=1}a^j_{jj}}[\dif X],\\
Y = XB^\t+ B X^\t &\Longrightarrow& [\dif
Y]=2^n\Pa{\prod^n_{j=1}b^{n-j+1}_{jj}}[\dif X].
\end{eqnarray}
\end{prop}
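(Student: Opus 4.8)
The plan is to avoid a fresh partial-derivative computation and instead reduce each identity, by one invertible change of variables, to the two-sided symmetric congruence covered by Proposition~\ref{prop:symmetric-case} together with the one-sided triangular Jacobians of Proposition~\ref{prop:LR-lower-triangular}. The point is that $A^\t X$ (respectively $XB^\t$) is not triangular, but after absorbing $A$ (respectively $B$) into $X$ the expression becomes $A^\t(V+V^\t)A$ (respectively $B(W+W^\t)B^\t$), a congruence of a symmetric matrix.

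For the first identity I would set $X = VA$, i.e.\ $V = X A^{-1}$. Since $A$ is lower triangular and nonsingular, $A^{-1}$ is lower triangular, so $V$ is lower triangular and $X \leftrightarrow V$ is a bijection of the space of $n\times n$ lower triangular matrices onto itself; hence the third relation of Proposition~\ref{prop:LR-lower-triangular} (with the constant matrix $A$ on the right) gives $[\dif X] = \Pa{\prod_{j=1}^n a_{jj}^{n-j+1}}[\dif V]$. Substituting, $Y = A^\t(VA) + (VA)^\t A = A^\t(V+V^\t)A$. Writing $S = V + V^\t$ (symmetric), the first relation of Proposition~\ref{prop:LR-lower-triangular} gives $[\dif S] = 2^n[\dif V]$, and $Y = A^\t S (A^\t)^\t$ with $S,Y$ symmetric gives $[\dif Y] = \det(A^\t)^{n+1}[\dif S] = \det(A)^{n+1}[\dif S]$ by Proposition~\ref{prop:symmetric-case}. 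Chaining these and using $\det(A) = \prod_{j} a_{jj}$,
\begin{eqnarray*}
[\dif Y] &=& 2^n\det(A)^{n+1}\Pa{\prod_{j=1}^n a_{jj}^{n-j+1}}^{-1}[\dif X]\\
&=& 2^n\Pa{\prod_{j=1}^n a_{jj}^{(n+1)-(n-j+1)}}[\dif X] = 2^n\Pa{\prod_{j=1}^n a_{jj}^{j}}[\dif X].
\end{eqnarray*}

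For the second identity the same device is applied from the left: put $X = BW$, i.e.\ $W = B^{-1}X$, again lower triangular, with $[\dif X] = \Pa{\prod_{j=1}^n b_{jj}^{j}}[\dif W]$ by the second relation of Proposition~\ref{prop:LR-lower-triangular}. Then $Y = (BW)B^\t + B(BW)^\t = B(W+W^\t)B^\t$, so with $S = W + W^\t$ we get $[\dif S] = 2^n[\dif W]$ and $[\dif Y] = \det(B)^{n+1}[\dif S]$ from Proposition~\ref{prop:symmetric-case}; combining, $[\dif Y] = 2^n\det(B)^{n+1}\Pa{\prod_{j=1}^n b_{jj}^{j}}^{-1}[\dif X] = 2^n\Pa{\prod_{j=1}^n b_{jj}^{n-j+1}}[\dif X]$.

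I do not expect a genuine obstacle: the whole proof is a bookkeeping of three elementary Jacobians. The only points needing care are verifying that $V = XA^{-1}$ and $W = B^{-1}X$ really remain lower triangular (so that Proposition~\ref{prop:LR-lower-triangular} is applicable), noting that the positivity of the diagonal entries of $A$ and $B$ is exactly what lets us drop the signs in those Jacobians, and the trivial exponent arithmetic $(n+1)-(n-j+1)=j$. A longer alternative would be to repeat, mutatis mutandis, the direct configuration-of-partial-derivatives argument from the proof of Proposition~\ref{prop:lower-triangular-to-transpose}, but the congruence substitution makes that unnecessary.
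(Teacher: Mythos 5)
Your proof is correct and is essentially the paper's own argument: both reduce $Y$ to the symmetric congruence $A^\t S A$ (resp.\ $B S B^\t$) of $S=XA^{-1}+(XA^{-1})^\t$ (resp.\ $S=B^{-1}X+(B^{-1}X)^\t$) and chain Proposition~\ref{prop:symmetric-case} with the triangular Jacobians, the only cosmetic difference being that you cite the two constituent parts of Proposition~\ref{prop:LR-lower-triangular} separately where the paper invokes Proposition~\ref{prop:lower-triangular-to-transpose}, which is itself just their combination.
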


\begin{proof}
Consider $Y = A^\t X+X^\t A$. Premultiply by $(A^\t)^{-1}$ and
postmultiply by $A^{-1}$ to get the following:
\begin{eqnarray*}
Y=A^\t X+X^\t A\Longrightarrow
(A^\t)^{-1}YA^{-1}=(A^\t)^{-1}X^\t+XA^{-1}.
\end{eqnarray*}
Let
\begin{eqnarray*}
Z = XA^{-1}+\Pa{XA^{-1}}^\t \Longrightarrow [\dif
Z]=2^n\Pa{\prod^n_{j=1}a^{-(n-j+1)}_{jj}}[\dif X]
\end{eqnarray*}
by Proposition~\ref{prop:lower-triangular-to-transpose} and
\begin{eqnarray*}
Z = \Pa{A^{-1}}^\t YA^{-1}\Longrightarrow [\dif
Z]=\det(A)^{-(n+1)}[\dif Y]
\end{eqnarray*}
by Proposition~\ref{prop:symmetric-case}. Now writing $[\dif Y]$ in
terms of $[\dif X]$ one has
\begin{eqnarray*}
[\dif Y] =
\Pa{\prod^n_{j=1}a^{-(n-j+1)}_{jj}}2^n\Pa{\prod^n_{j=1}a^{n+1}_{jj}}[\dif
X] =2^n\Pa{\prod^n_{j=1}a^j_{jj}}[\dif X]
\end{eqnarray*}
since $\det(A)=\prod^n_{j=1}a_{jj}$ because $A$ is lower triangular.
Thus the first result follows. The second is proved as follows.
Clearly,
$$
Y=BX^\t+XB^\t\Longrightarrow
B^{-1}Y(B^\t)^{-1}=B^{-1}X+\Pa{B^{-1}X}^\t:=Z.
$$
Thus $[\dif Z]=\det(B)^{-(n+1)}[\dif Y]$ and $[\dif
Z]=2^n\Pa{\prod^n_{j=1}a^{-j}_{jj}}[\dif X]$. Therefore expressing
$[\dif Y]$ in terms of $[\dif X]$ gives the second result.
\end{proof}

\begin{prop}\label{prop:TT'}
Let $X\in\real^{n\times n}$ be a symmetric positive definite matrix
of independent real variables and $T=[t_{ij}]$ a real
\red{lower-triangular} matrix with $t_{jj}>0,j=1,\ldots,n$, and
$t_{ij},i\geqslant j$ independent. Then
\begin{eqnarray}
X = T^\t T &\Longrightarrow& [\dif X] =
2^n\Pa{\prod^n_{j=1}t^j_{jj}}[\dif T],\\
X = TT^\t &\Longrightarrow& [\dif X] =
2^n\Pa{\prod^n_{j=1}t^{n-j+1}_{jj}}[\dif T].
\end{eqnarray}
\end{prop}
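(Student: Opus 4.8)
The plan is to reduce the Cholesky-type change of variables $X = T^\t T$ to the elementary triangular Jacobians already established, by differentiating the defining relation and exploiting the triangular structure to get a block-triangular matrix of partial derivatives. First I would treat $X = T^\t T$ and write $\dif X = (\dif T)^\t T + T^\t (\dif T)$. Since $T$ is lower triangular, $T^\t$ is upper triangular, so the increment $\dif X$ is a symmetric matrix built from the entries $\dif t_{ij}$ ($i \geqslant j$) through the fixed linear map $H \mapsto H^\t T + T^\t H$ on lower-triangular $H$. The key observation is that this map, when the independent variables on both sides are listed in a suitable order (say column by column, $(x_{11},x_{21},\dots,x_{n1}); (x_{22},\dots,x_{n2});\dots;x_{nn}$ for $X$, and similarly for $T$), has a block-triangular matrix of partial derivatives whose diagonal blocks I can read off explicitly.

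The main computation is to identify those diagonal blocks. I would proceed one column at a time: the first column of $X$ is $x_{i1} = \sum_k t_{ki} t_{k1} = t_{11} t_{i1}$ for $i \geqslant 1$ (using that $t_{k1}=0$ for... wait, $t_{ki}$ with $k \geqslant i$), so in fact $x_{11} = t_{11}^2$ and $x_{i1} = t_{11}t_{i1} + \sum_{k=2}^{i} t_{ki}t_{k1}$; the point is that $\partial x_{i1}/\partial t_{j1}$ for $j \geqslant i$ picks up exactly $t_{11}$ when $j=i$ (for $i\geq 2$), and $\partial x_{11}/\partial t_{11} = 2t_{11}$, while the dependence on later-column $t$'s contributes only to off-diagonal blocks of the big matrix. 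More cleanly, I would invoke Proposition~\ref{prop:lower-triangular-to-transpose} directly: freezing all columns of $T$ except the $j$-th and differentiating, the relevant diagonal block is the Jacobian of $T^\t_{[j:]} H_{[j:]} + H^\t_{[j:]} T_{[j:]}$ type expression on the trailing $(n-j+1)\times(n-j+1)$ principal submatrix, which is a lower-triangular-times-transpose map whose Jacobian is $2\cdot t_{jj}^{?}$ times the square of the determinant of a smaller triangular matrix — this recursion is exactly what produces the factor $2^n \prod_j t_{jj}^{j}$. Alternatively, and perhaps most efficiently, I would set $A = T$ in Proposition~\ref{prop:lower-upper-triangular} or use Proposition~\ref{prop:TT'}'s structure inductively: peel off the first variable ($x_{11}=t_{11}^2$ gives a factor $2t_{11}$), then the map on the remaining variables, after a triangular substitution absorbing the first column, is again of the same form on size $n-1$, yielding $2^n \prod_{j=1}^n t_{jj}^{j}$ by induction.

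For the second identity $X = TT^\t$, I would not repeat the argument but instead apply the permutation symmetry: let $P$ be the "reversal" permutation matrix (anti-diagonal ones), and observe that $T \mapsto PTP$ sends lower-triangular to upper-triangular, so conjugating and transposing converts $X = TT^\t$ into a relation of the form $\wtil X = \wtil T^\t \wtil T$ with $\wtil T$ lower triangular and $\wtil t_{jj} = t_{n-j+1,n-j+1}$. Since $P$ is orthogonal with $\det P = \pm 1$, Proposition~\ref{prop:symmetric-case} shows the change of variables $X \mapsto \wtil X$ has unit Jacobian (in absolute value, which is all we track), and likewise $T \mapsto \wtil T$; applying the first identity to $\wtil X = \wtil T^\t \wtil T$ and relabeling $j \mapsto n-j+1$ in $\prod_j \wtil t_{jj}^{j} = \prod_j t_{n-j+1,n-j+1}^{j}$ gives $\prod_j t_{jj}^{n-j+1}$, as claimed.

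The hard part will be making the block-triangularity of $\partial X / \partial T$ genuinely rigorous rather than hand-waving the ordering: one must check carefully that with the chosen ordering of variables, the partial derivative $\partial x_{i\ell}/\partial t_{jm}$ vanishes whenever the block index $m$ exceeds $\ell$ (so the matrix is block lower-triangular) — this follows because $x_{i\ell} = \sum_{k} t_{ki}t_{k\ell}$ involves only $t_{k\ell}$ with $k \geqslant \ell$ and $t_{ki}$ with $k \geqslant i \geqslant \ell$, hence no columns of $T$ indexed beyond $\ell$ — and then to compute each diagonal block as the Jacobian of a map of the type covered by Proposition~\ref{prop:lower-triangular-to-transpose}. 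Once that bookkeeping is in place, the determinant of $\partial X/\partial T$ is the product of the diagonal-block determinants, and the telescoping product of the smaller triangular determinants collapses to $2^n \prod_{j=1}^n t_{jj}^{j}$.
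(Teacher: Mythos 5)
Your proposal is correct and follows essentially the paper's own route: differentiate $X=T^\t T$ (resp.\ $X=TT^\t$) to get $\dif X=(\dif T)^\t T+T^\t\,\dif T$ (resp.\ $\dif X=(\dif T)\,T^\t+T\,(\dif T)^\t$), treat this as a linear transformation in the differentials with $T$ held constant, and read the Jacobian off from Proposition~\ref{prop:lower-upper-triangular} with $A=T$ (resp.\ $B=T$) — your reversal-permutation trick for deducing the second identity from the first is a harmless variant of simply invoking the second half of that proposition. One small caution: in your exploratory entry-by-entry computation you momentarily confuse $T^\t T$ with $TT^\t$ (for $X=T^\t T$ one has $x_{11}=\sum_{k}t_{k1}^2$ and $x_{i1}=\sum_{k\geqslant i}t_{ki}t_{k1}$, not $x_{11}=t_{11}^2$), but this slip plays no role in the clean argument you ultimately rely on.
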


\begin{proof}
By considering the matrix of differentials one has
\begin{eqnarray*}
X=TT^\t\Longrightarrow\dif X=\dif T\cdot T^\t + T\cdot\dif T^\t.
\end{eqnarray*}
Now treat this as a linear transformation in the differentials, that
is, $\dif X$ and $\dif T$ as variables and $T$ a constant. This
completes the proof.
\end{proof}

\begin{exam}
Let $X\in\real^{n\times n}$ symmetric positive definite matrix and
$\re(\alpha)>\frac{n-1}2$. Show that
\begin{framed}
\begin{eqnarray}
\Gamma_n(\alpha) &:=& \int_{X>0} [\dif X]
\det(X)^{\alpha-\frac{n+1}2}e^{-\Tr{X}}\notag \\
&=&
\pi^{\frac{n(n-1)}4}\Gamma(\alpha)\Gamma\Pa{\alpha-\frac12}\cdots\Gamma\Pa{\alpha-\frac{n-1}2}.
\end{eqnarray}
\end{framed}
Let $T$ be a real lower triangular matrix with positive diagonal
elements. Then the unique representation (see
Theorem~\ref{th:unique-factorization})
\begin{eqnarray*}
X = TT^\t\Longrightarrow[\dif X] =
2^n\Pa{\prod^n_{j=1}t^{n-j+1}_{jj}}[\dif T].
\end{eqnarray*}
Note that
\begin{eqnarray*}
\Tr{X} &=& \Tr{TT^\t} =
t^2_{11}+(t^2_{21}+t^2_{22})+\cdots+\Pa{t^2_{n1}+\cdots+t^2_{nn}},\\
\det(X) &=& \det(TT^\t) = \prod^n_{j=1}t^2_{jj}.
\end{eqnarray*}
When $X>0$, we have $TT^\t>0$, but $t_{jj}>0,j=1,\ldots,n$ which
means that $-\infty<t_{ij}<\infty, i>j, 0<t_{jj}<\infty,
j=1,\ldots,n$. The integral splits into $n$ integrals on $t_{jj}$'s
and $\frac{n(n-1)}2$ integrals on $t_{ij}$'s, $i>j$. That is,
\begin{eqnarray*}
\Gamma_n(\alpha)=\Pa{\prod^n_{j=1}2\int^\infty_0
\Pa{t^2_{jj}}^{\alpha-\frac{n+1}2}t^{n-j+1}_{jj}e^{-t^2_{jj}}\dif
t_{jj}} \times\Pa{\prod_{i>j}\int^\infty_{-\infty}e^{-t^2_{ij}}\dif
t_{ij}}.
\end{eqnarray*}
But
\begin{eqnarray*}
2\int^\infty_0 \pa{t^2_{jj}}^{\alpha-\frac j2}e^{-t^2_{jj}}\dif
t_{jj} = \Gamma\Pa{\alpha-\frac{j-1}2},
\end{eqnarray*}
for $\re(\alpha)>\frac{j-1}2, j=1,\ldots,n$ and
\begin{eqnarray*}
\int^\infty_{-\infty}e^{-t^2_{jj}}\dif t_{jj} =\sqrt{\pi}.
\end{eqnarray*}
Multiplying them together the result follows. Note that
\begin{eqnarray*}
\re(\alpha)>\frac{j-1}2,
j=1,\ldots,n\Longrightarrow\re(\alpha)>\frac{n-1}2.
\end{eqnarray*}
\end{exam}

The next result is extremely useful and uses the fact (see
Theorem~\ref{th:unique-factorization}) that any positive definite
$n\times n$ matrix $X$ has a unique decomposition as $X=T^\t T$,
where $T$ is an upper-triangular $n\times n$ matrix with positive
diagonal elements.
\begin{prop}
If $X$ is an $n\times n$ positive definite matrix and $X=T^\t T$,
where $T$ is \red{upper-triangular} with positive diagonal elements,
then
$$
[\dif X] = 2^n\prod^n_{j=1}t^{n-j+1}_{jj}[\dif T].
$$
\end{prop}

\begin{proof}
Since $X=T^\t T$, now express each of the elements of $X$ on and
above the diagonal in terms of each of the elements of $T$ and take
differentials. Remember that we are going to take the exterior
product of these differentials and that products of repeated
differentials are zero; hence there is no need to keep track of
differentials in the elements of $T$ which have previously occurred.
We get
\begin{eqnarray*}
x_{11} = t^2_{11}, &\quad&\dif x_{11}=2t_{11}\dif t_{11},\\
x_{12} = t_{11}t_{12},&\quad&\dif x_{12}=t_{11}\dif t_{12}+\cdots\\
&\vdots &\\
x_{1n} =t_{11}t_{1n},&\quad&\dif x_{1n} =t_{11}\dif t_{1n}+\cdots\\
x_{22}=t^2_{12}+t^2_{22},&\quad&\dif x_{22} =2t_{22}\dif
t_{22}+\cdots\\
&\vdots &\\
x_{2n} = t_{12}t_{1n}+t_{22}t_{2n},&\quad&\dif x_{2n}=t_{22}\dif
t_{2n}+\cdots\\
&\vdots &\\
x_{nn}=t^2_{1n}+\cdots+t^2_{nn},&\quad&\dif x_{nn}=2t_{nn}\dif
t_{nn}+\cdots
\end{eqnarray*}
Hence taking exterior products gives
$$
[\dif X] = \bigwedge^n_{i\leqslant j}\dif x_{ij} = 2^n
t^n_{11}t^{n-1}_{22}\cdots t_{nn}\bigwedge^n_{i\leqslant j}\dif
t_{ij} = 2^n\prod^n_{j=1}t^{n-j+1}_{jj}[\dif T],
$$
as desired.
\end{proof}

\subsection{The computation of volumes}

\begin{definition}[Stiefel manifold]
Let $A$ be a $n\times m(n\geqslant m)$ matrix with real entries such
that $A^\t A=\I_m$, that is, the $m$ columns of $A$ are
\emph{orthonormal vectors}. The set of all such matrices $A$ is
known as the \emph{Stiefel manifold}, denoted by $\cO(m,n)$. That
is, for all $n\times m$ matrices $A$,
\begin{eqnarray}\label{eq:2.1}
\fbox{$\cO(m,n) = \Set{A\in\real^{n\times m}: A^\t A=\I_m},$}
\end{eqnarray}
where $\real^{n\times m}$ denotes the set of all $n\times m$ real
matrices.

\end{definition}
The equation $A^\t A=\I_m$ imposes $m(m+1)/2$ conditions on the
elements of $A$. Thus the number of independent entries in $A$ is
$mn-m(m+1)/2$.

If $m=n$, then $A$ is an orthogonal matrix. The set of such
orthogonal matrices form a group. This group is known as the
\emph{orthogonal group} of $m\times m$ matrices.

\begin{definition}[Orthogonal group]
Let $B$ be a $n\times n$ matrix with real elements such that $B^\t
B=\I_n$. The set of all $B$ is called an \emph{orthogonal group},
denoted by $\cO(n)$. That is,
\begin{eqnarray}\label{eq:2.2}
\fbox{$\cO(n)=\Set{B\in\real^{n\times n}: B^\t B=\I_n}.$}
\end{eqnarray}
\end{definition}
Clearly $\cO(n,n)=\cO(n)$. Note that $B^\t B=\I_n$ imposes
$n(n+1)/2$ conditions and hence the number of independent entries in
$B$ is only $n^2-n(n+1)/2=n(n-1)/2$.

\begin{definition}[A symmetric or a skew symmetric matrix]
Let $A\in\real^{n\times n}$. If $A= A^\t$, then $A$ is said to be
\emph{symmetric} and if $A^\t=- A$, then it is \emph{skew
symmetric}.
\end{definition}

\begin{prop}\label{prop:Haar-measure}
Let $V\in\cO(n)$ with independent entries and the diagonal entries
or the entries in the first row of $V$ all positive. Denote
$\dif{G}=V^\t \dif{V}$ where $V=[v_1,\ldots, v_n]$. Then
\begin{eqnarray}
[\dif{G}] &=& \prod^{n-1}_{i=1}\prod^n_{j=i+1}\inner{v_i}{\dif{v_j}}\label{eq:ortho-1}\\
&=& 2^{n(n-1)/2}\det(\I_n+X)^{-(n-1)}[\dif{X}],\label{eq:ortho-2}
\end{eqnarray}
where $X$ is a skew symmetric matrix such that the first row entries
of $(\I_n+X)^{-1}$, except the first entry, are negative.
\end{prop}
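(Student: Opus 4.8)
The plan is to prove the two displayed identities separately. The first, \eqref{eq:ortho-1}, will come purely from the skew-symmetry of $V^\t\dif V$; the second, \eqref{eq:ortho-2}, will come from a Cayley-transform change of variable combined with Proposition~\ref{prop:skew-symmetric-case} on the Jacobian of the map $X\mapsto AXA^\t$ on skew-symmetric matrices.

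First I would differentiate the constant relation $V^\t V=\I_n$, obtaining $(\dif V)^\t V+V^\t\dif V=0$, so that $\dif G=V^\t\dif V$ is a matrix of one-forms that is skew-symmetric. Its $(i,j)$ entry is $v_i^\t\dif v_j=\inner{v_i}{\dif v_j}$, its diagonal entries vanish, and its strictly lower entries are (up to sign) determined by the strictly upper ones; hence the $n(n-1)/2$ independent coordinate one-forms are the $\inner{v_i}{\dif v_j}$ with $i<j$, and taking their exterior product (discarding signs, per the standing convention of the paper) gives $[\dif G]=\prod_{i=1}^{n-1}\prod_{j=i+1}^{n}\inner{v_i}{\dif v_j}$, which is \eqref{eq:ortho-1}.

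For \eqref{eq:ortho-2} I would parametrize $V$ by the Cayley transform $V=(\I_n-X)(\I_n+X)^{-1}$, with $X$ ranging over real skew-symmetric matrices (the precise orientation of the transform, i.e.\ $X$ versus $-X$, being fixed by the sign normalizations in the statement — see below). Setting $P=\I_n+X$ one has $V=2P^{-1}-\I_n$, hence $\dif V=-2P^{-1}(\dif X)P^{-1}$, while $P^\t=\I_n-X$ gives $V^\t=2P^{-\t}-\I_n=(\I_n-X)^{-1}(\I_n+X)$; multiplying and cancelling the factor $(\I_n+X)(\I_n+X)^{-1}=\I_n$ produces
\[
\dif G=V^\t\dif V=-2\,(\I_n-X)^{-1}(\dif X)(\I_n+X)^{-1}.
\]
Now put $C=(\I_n+X)^{-1}$; since $(\I_n-X)^\t=\I_n+X$ we have $(\I_n-X)^{-1}=C^\t$, so $\dif G=-2\,C^\t(\dif X)C=-2\,A(\dif X)A^\t$ with $A=C^\t$ and $\det A=\det(\I_n+X)^{-1}$. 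Both $\dif G$ and $\dif X$ are skew-symmetric (each with $n(n-1)/2$ independent entries), so Proposition~\ref{prop:skew-symmetric-case} applied to the linear map $\dif X\mapsto A(\dif X)A^\t$ contributes the Jacobian factor $\det(A)^{n-1}=\det(\I_n+X)^{-(n-1)}$, while the scalar $-2$ acts coordinatewise and contributes the factor $2^{n(n-1)/2}$ (in absolute value). Combining, $[\dif G]=2^{n(n-1)/2}\det(\I_n+X)^{-(n-1)}[\dif X]$.

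The step I expect to be the main obstacle is not the differentiation above but the bookkeeping that makes this change of variable legitimate: one must verify that the Cayley transform is a bijection from the skew-symmetric matrices onto the subset of $\cO(n)$ on which $\det(\I_n+V)\neq0$ — which omits only a proper real-algebraic, hence Haar-null, subset — and that the normalization ``first row of $V$ all positive'' corresponds precisely, under this bijection, to ``the first-row entries of $(\I_n+X)^{-1}$, other than the first, are negative'', so that the two sides of \eqref{eq:ortho-2} are genuinely integrated over matching domains. I would settle this by writing the first row of $V$ and the matrix $(\I_n+X)^{-1}$ explicitly in terms of $X$ (using $V=2(\I_n+X)^{-1}-\I_n$) and checking the sign dictionary entrywise; this also pins down which orientation of the Cayley transform must be used.
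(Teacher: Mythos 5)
Your proposal is correct and follows essentially the same route as the paper: part \eqref{eq:ortho-1} from the skew-symmetry of $V^\t\dif V$, and part \eqref{eq:ortho-2} via the Cayley transform $V=2(\I_n+X)^{-1}-\I_n$, whose differential gives $\dif G=-2(\I_n-X)^{-1}\dif X(\I_n+X)^{-1}$ (the paper writes the same thing as $-\tfrac12(\I_n+V^\t)\dif X(\I_n+V)$) and then the skew-symmetric congruence Jacobian $\det(A)^{n-1}$. Your extra care about the bijectivity of the Cayley parametrization and the matching of sign normalizations is a point the paper's proof silently skips, but it does not change the argument.
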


\begin{proof}
Let the columns of $V$ be denoted by $v_1,\ldots,v_n$. Since the
columns are orthonormal, we have $\inner{v_i}{v_j}=\delta_{ij}$.
Then
$$
\inner{v_i}{\dif{v_j}}+\inner{\dif{v_i}}{v_j}=0,
$$
implying $\inner{v_j}{\dif{v_j}}=0$ since $\inner{v_j}{\dif{v_j}}$
is a real scalar. We also have
$$
\inner{v_i}{\dif{v_j}} = - \inner{v_j}{\dif{v_i}}~\text{for}~i\neq
j.
$$
Then $V^\t\dif{V}$ is a skew symmetric matrix. That is,
\begin{eqnarray*}
\dif{G}&=&V^\t\dif{V} = \Br{\begin{array}{c}
                            v^\t_1 \\
                            v^\t_2 \\
                            \vdots \\
                            v^\t_n
                          \end{array}
}\Br{\dif{v_1},\dif{v_2},\cdots,\dif{v_n}}\\
&=&\Br{\begin{array}{cccc}
         \inner{v_1}{\dif{v_1}} & \inner{v_1}{\dif{v_2}} & \cdots & \inner{v_1}{\dif{v_n}} \\
         \inner{v_2}{\dif{v_1}} & \inner{v_2}{\dif{v_2}} & \cdots & \inner{v_2}{\dif{v_n}} \\
         \vdots & \vdots & \ddots & \vdots \\
         \inner{v_n}{\dif{v_1}} & \inner{v_n}{\dif{v_2}} & \cdots & \inner{v_n}{\dif{v_n}}
       \end{array}
}.
\end{eqnarray*}
This indicates that
\begin{eqnarray*}
\dif{G} = \Br{\begin{array}{cccc}
         0 & \inner{v_1}{\dif{v_2}} & \cdots & \inner{v_1}{\dif{v_n}} \\
         -\inner{v_1}{\dif{v_2}} & 0 & \cdots & \inner{v_2}{\dif{v_n}} \\
         \vdots & \vdots & \ddots & \vdots \\
         -\inner{v_1}{\dif{v_n}} & -\inner{v_2}{\dif{v_n}} & \cdots &
         0
       \end{array}
}
\end{eqnarray*}
Then there are only $n(n-1)/2$ independent entries in $G$. Then
$[\dif{G}]$ is the wedge product of the entries upper the leading
diagonal in the matrix $V^\t\dif{V}$:
$$
[\dif{G}] = \wedge^{n-1}_{i=1}\wedge^n_{j=i+1}\inner{v_i}{\dif{v_j}}
$$
This establishes \eqref{eq:ortho-1}. For establishing
\eqref{eq:ortho-2}, take a skew symmetric matrix $X$, then
$V=2(\I_n+X)^{-1}-\I_n$ is orthonormal such that $VV^\t=\I_n$.
Further the matrix of differentials in $V$ is given by $\dif{V} =
-2(\I_n+X)^{-1}\cdot\dif{X}\cdot(\I_n+X)^{-1}$, i.e.
\begin{eqnarray*}
\dif{V} = -\frac12(\I_n+V)\cdot\dif{X}\cdot(\I_n+V).
\end{eqnarray*}
Thus
\begin{eqnarray*}
\dif{G} = V^\t\dif{V} = -\frac12(\I_n+V^\t)\cdot\dif{X}\cdot(\I_n+V)
\end{eqnarray*}
and the wedge product is obtained
\begin{eqnarray*}
[\dif{G}] = \det\Pa{\frac{\I_n+V^\t}{\sqrt{2}}}^{n-1}[\dif{X}] =
\det\Pa{\sqrt{2}(\I_n+X)^{-1}}^{n-1}[\dif{X}].
\end{eqnarray*}
Therefore the desired identity \eqref{eq:ortho-2} is proved.
\end{proof}

\begin{prop}\label{prop:spectral-decom}
Let $X$ be a $n\times n$ symmetric matrix of independent real
entries and with distinct and nonzero eigenvalues
$\lambda_1>\cdots>\lambda_n$ and let
$D=\diag(\lambda_1,\ldots,\lambda_n)$. Let $V\in\cO(n)$ be a unique
such that $X=VDV^\t$. Then
\begin{eqnarray}\label{eq:real-vol-element}
\fbox{$[\dif{X}] =
\Pa{\prod^{n-1}_{i=1}\prod^n_{j=i+1}\abs{\lambda_i-\lambda_j}}[\dif{D}][\dif{G}],$}
\end{eqnarray}
where $\dif{G}=V^\t\dif{V}$.
\end{prop}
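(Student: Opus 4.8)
The plan is to differentiate the spectral factorization $X=VDV^\t$ and transport the resulting relation among differentials into the eigenframe by conjugating with $V$, treating $V$ and $D$ as constants at a fixed point exactly as in the proof of Proposition~\ref{prop:TT'}. First I would write
$$
\dif{X} = (\dif{V})DV^\t + V(\dif{D})V^\t + VD(\dif{V}^\t),
$$
and then left-multiply by $V^\t$ and right-multiply by $V$ to get
$$
V^\t(\dif{X})V = (V^\t\dif{V})D + \dif{D} + D(\dif{V}^\t V).
$$
Setting $\dif{G}=V^\t\dif{V}$, which is skew symmetric as recorded in the proof of Proposition~\ref{prop:Haar-measure}, and using $\dif{V}^\t V=(V^\t\dif{V})^\t=-\dif{G}$, this becomes $V^\t(\dif{X})V=\dif{D}+\Pa{\dif{G}\,D-D\,\dif{G}}$. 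Denote this symmetric matrix by $Y$. Entrywise, $Y_{ii}=\dif{\lambda_i}$ since $\Pa{\dif{G}\,D-D\,\dif{G}}_{ii}=(\dif{G})_{ii}\lambda_i-\lambda_i(\dif{G})_{ii}=0$, while for $i\neq j$ one has $Y_{ij}=\lambda_j(\dif{G})_{ij}-\lambda_i(\dif{G})_{ij}=(\lambda_j-\lambda_i)(\dif{G})_{ij}$.

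Next I would invoke Proposition~\ref{prop:symmetric-case}: the map $\dif{X}\mapsto Y=V^\t(\dif{X})V$ is a congruence on $n\times n$ symmetric matrices with $A=V^\t$, hence $[\dif{Y}]=\abs{\det V}^{\,n+1}[\dif{X}]=[\dif{X}]$ because $\det V=\pm1$ and signs are ignored. On the other hand, $[\dif{Y}]$ is the wedge product of the $n(n+1)/2$ independent entries of the symmetric matrix $Y$, namely the diagonal $Y_{ii}=\dif{\lambda_i}$ and the strict upper triangle $Y_{ij}=(\lambda_j-\lambda_i)(\dif{G})_{ij}$. Pulling the scalar factors $(\lambda_j-\lambda_i)$ out of the wedge and using $\wedge_{i=1}^n\dif{\lambda_i}=[\dif{D}]$ together with $\wedge_{1\leqslant i<j\leqslant n}(\dif{G})_{ij}=[\dif{G}]$ from \eqref{eq:ortho-1}, this gives
$$
[\dif{Y}] = \Pa{\prod^{n-1}_{i=1}\prod^n_{j=i+1}\abs{\lambda_i-\lambda_j}}[\dif{D}][\dif{G}].
$$
Equating the two expressions for $[\dif{Y}]$ yields \eqref{eq:real-vol-element}.

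The main obstacle is not the linear algebra above but justifying the very first step: that the relation $\dif{X}=(\dif V)DV^\t+\cdots$ may be treated as a genuine linear change of coordinates, i.e.\ that the spectral map $(D,V)\mapsto VDV^\t$ is a local diffeomorphism (up to the chosen sign normalization making $V$ unique) from a neighborhood in $\real^n\times\cO(n)$ onto a neighborhood in the space of symmetric matrices. This is exactly why the hypothesis of \emph{distinct, nonzero} eigenvalues is imposed, and why the dimension count must be checked: $X$ has $n(n+1)/2$ independent entries, $D$ has $n$, and $\dif{G}$ has $n(n-1)/2$, which add up. A secondary point to be careful about is that the off-diagonal entries of $Y$ qualify as coordinates only because $[\dif{G}]$ has already been shown to be a well-defined volume form on $\cO(n)$ in Proposition~\ref{prop:Haar-measure}; with that in hand, and with the sign convention for $[\dif{\cdot}]$ stated in Section~\ref{sect:Jacobian}, the wedge-product bookkeeping in the penultimate display is routine.
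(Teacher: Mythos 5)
Your proposal is correct and follows essentially the same route as the paper's proof: differentiate $X=VDV^\t$, conjugate by $V$ to obtain $\dif{D}+[\dif{G},D]$, read off the diagonal and off-diagonal entries, and take the wedge product. Your explicit appeal to Proposition~\ref{prop:symmetric-case} for $[\dif{Y}]=[\dif{X}]$ and your remarks on the dimension count and the role of distinct eigenvalues only make explicit what the paper leaves implicit.
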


\begin{proof}
Take the differentials in $X=VDV^\t$ to get
$$
\dif{X} = \dif{V}\cdot D\cdot V^\t + V\cdot \dif{D}\cdot V^\t +
V\cdot D\cdot \dif{V}^\t,
$$
implying that
$$
V^\t\cdot \dif{X}\cdot V = V^\t\dif{V}\cdot D + \dif{D} + D\cdot
\dif{V}^\t V.
$$
Let $\dif{Y}=V^\t\cdot \dif{X}\cdot V$ for fixed $V$. Since
$\dif{G}=V^\t\dif{V}$, it follows that $[\dif{X}]=[\dif{Y}]$ and
\begin{eqnarray*}
\dif{Y} &=& \dif{G}\cdot D + \dif{D} + D\cdot\dif{G}^\t \\
&=& \dif{D} + \dif{G}\cdot D  - D\cdot\dif{G}\\
&=& \dif{D} + [\dif{G}, D]
\end{eqnarray*}
where we used the fact that $\dif{G}^\t = -\dif{G}$ and the concept
of commutator, defined by $[M,N]:=MN-NM$. Clearly the commutator of
$M$ and $N$ is skew symmetric. Now
\begin{eqnarray*}
\dif{y_{jj}} = \dif{\lambda_j}~~\text{and}~~\dif{y_{ij}} =
(\lambda_j-\lambda_i)\dif{g_{ij}},~~i<j=1,\ldots,n.
\end{eqnarray*}
Then
\begin{eqnarray*}
[\dif{Y}] &=&
\Pa{\prod^n_{j=1}\dif{y_{jj}}}\Pa{\prod^n_{i<j}\dif{y_{ij}}} =
\Pa{\prod^n_{j=1}\dif{\lambda_j}}\Pa{\prod_{i<j}\abs{\lambda_i-\lambda_j}\dif{g_{ij}}}\\
&=&\Pa{\prod_{i<j}\abs{\lambda_i-\lambda_j}}[\dif{D}][\dif{G}],
\end{eqnarray*}
where $[\dif{D}]=\prod^n_{j=1}\dif{\lambda_j}$ and
$[\dif{G}]=\prod_{i<j}\dif{g_{ij}}$, and the desired conclusion is
obtained.
\end{proof}

Clearly $X=VDV^\t$, where
$D=\diag(\lambda_1,\ldots,\lambda_n)~(\lambda_1>\cdots>\lambda_n),
VV^\t=\I_n$, is not a one-to-one transformation from $X$ to $(D,V)$
since $X$ determines $2^n$ matrices $[\pm v_1,\ldots,\pm v_n]$,
where $v_1,\ldots,v_n$ are the columns of $V$, such that $X=VDV^\t$.

This transformation can be shown to be unique if one entry from each
row and column are of a specified sign, for example, the diagonal
entries are positive. Once this is done we are integrating with
respect to $\dif{G}$, where $\dif{G}=V^\t\dif{V}$ over the full
orthogonal group $\cO(n)$, the result must be divided by $2^n$ to
get the result for a unique transformation $X=VDV^\t$.

\begin{remark}
Now we can try to compute the following integral based on
Eq.~\eqref{eq:real-vol-element}:
\begin{eqnarray*}
\int_{X>0:\Tr{X}=1} [\dif{X}] &=&
\int_{\lambda_1>\cdots>\lambda_n>0}
\delta\Pa{\sum^n_{j=1}\lambda_j-1}\prod_{i<j}\abs{\lambda_i-\lambda_j}\prod^n_{j=1}\dif\lambda_j\times
\int_{\cO_1(n)}[\dif{G}]\\
&=& \frac1{n!} \int_0^\infty
\delta\Pa{\sum^n_{j=1}\lambda_j-1}\prod_{i<j}\abs{\lambda_i-\lambda_j}\prod^n_{j=1}\dif\lambda_j\times
\frac1{2^n}\int_{\cO(n)}[\dif{G}]\\
&=& \frac1{2^n n!}\frac1{C^{(1,1)}_n}\vol\Pa{\cO(n)},
\end{eqnarray*}
that is,
\begin{eqnarray*}
\vol\Pa{\density{\real^n}}:=\int_{X>0:\Tr{X}=1} [\dif{X}]
=\frac{\pi^{\frac{n(n-1)}4}\Gamma\Pa{\frac{n+1}2}}{\Gamma\Pa{\frac{n(n+1)}4}\Gamma\Pa{\frac12}}\prod^n_{j=1}\Gamma\Pa{\frac
j2}.
\end{eqnarray*}
See below for the notation $C^{(1,1)}_n$ and $\vol\Pa{\cO(n)}$.
\end{remark}

\begin{prop}\label{prop:rectangular-matrix}
Let $X$ be a $p\times n (p\leqslant n)$ matrix of rank $p$ and let
$X=TU^\t_1$, where $T$ is a $p\times p$ lower triangular matrix with
distinct nonzero diagonal entries and $U_1$ is a unique $n\times p$
semi-orthogonal matrix, $U^\t_1 U_1=\I_p$, all are of independent
real entries. Let $U_2$ be an $n\times (n-p)$ semi-orthogonal matrix
such that $U_1$ augmented with $U_2$ is a full orthogonal matrix.
That is, $U=[U_1~~U_2], U^\t U=\I_n, U^\t_2 U_2=\I_{n-p},
U^\t_1U_2=0$. Let $u_j$ be the $j$-th column of $U$ and $\dif{u_j}$
its differential. Then
\begin{eqnarray}
\fbox{$[\dif{X}] =
\Pa{\prod^p_{j=1}\abs{t_{jj}}^{n-j}}[\dif{T}][\dif{U_1}],$}
\end{eqnarray}
where
$$
[\dif{U_1}]=\wedge^p_{j=1}\wedge^n_{i=j+1}\inner{u_i}{\dif{u_j}}.
$$
\end{prop}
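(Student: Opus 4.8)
\emph{Proof strategy.} The plan is to mirror the proofs of Propositions~\ref{prop:spectral-decom} and~\ref{prop:TT'}: differentiate $X=TU_1^\t$, treat the resulting matrix identity as a linear transformation among the free differentials (with $T$ and the augmented orthogonal matrix $U=[U_1~~U_2]$ held fixed), and evaluate the Jacobian after ordering the variables so that the Jacobian matrix becomes block triangular.

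First I would differentiate, obtaining $\dif{X}=(\dif{T})U_1^\t+T(\dif{U_1})^\t$, and then multiply on the right by $U$. Setting $\dif{Y}:=\dif{X}\cdot U$, Proposition~\ref{prop:matrix-transformation} gives $[\dif{Y}]=[\dif{X}]$ since $\abs{\det U}=1$. Using $U_1^\t U_1=\I_p$ and $U_1^\t U_2=0$, and differentiating $U_1^\t U_1=\I_p$ to get $(\dif{U_1})^\t U_1=-S$ with $S:=U_1^\t\dif{U_1}$ skew symmetric, one sees that the first $p$ columns of $\dif{Y}$ make up the $p\times p$ block $(\dif{T})-TS$ and the last $n-p$ columns make up the $p\times(n-p)$ block $TW$, where $W:=(\dif{U_1})^\t U_2$. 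Here $S_{ij}=\inner{u_i}{\dif{u_j}}$ for $1\leqslant i,j\leqslant p$ and $W_{jk}=\inner{u_{p+k}}{\dif{u_j}}$. The crucial bookkeeping observation is that the factors of $[\dif{U_1}]=\wedge^p_{j=1}\wedge^n_{i=j+1}\inner{u_i}{\dif{u_j}}$ are precisely the strictly lower entries $\set{S_{ij}:1\leqslant j<i\leqslant p}$ — which by skew symmetry determine all of $S$ — together with all $p(n-p)$ entries of $W$; hence, up to sign, $[\dif{U_1}]$ equals the wedge of the strictly lower part of $S$ with all of $W$.

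Next I would exploit that $(\dif{T})-TS$ depends only on $\dif{T}$ and the strictly lower part of $S$, whereas $TW$ depends only on $W$, so the Jacobian of $(\dif{T},S,W)\mapsto\dif{Y}$ is block diagonal and factors into two pieces. For the right block, $W\mapsto TW$ acts column by column, so by Proposition~\ref{prop:matrix-transformation} (equivalently the column-partition remark following it) its contribution is $\det(T)^{n-p}=\prod^p_{j=1}t^{n-p}_{jj}$. For the left block I would write $(\dif{T})-TS=T\Pa{T^{-1}\dif{T}-S}$ and put $P:=T^{-1}\dif{T}$, which is lower triangular because $T$ is. The assignment $(P,S)\mapsto N:=P-S$, with $S$ parametrized by its free strictly lower entries, has Jacobian $\pm1$: matching each target pair $(N_{ij},N_{ji})$, $i>j$, with the variable pair $(P_{ij},S_{ij})$ gives $2\times2$ unipotent blocks, and the diagonal gives $N_{jj}=P_{jj}$. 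Then the left-multiplication $N\mapsto TN$ contributes $\det(T)^p$ (again column by column), while Proposition~\ref{prop:LR-lower-triangular} applied to $\dif{T}=TP$ gives $[\dif{T}]=\Pa{\prod^p_{j=1}t^j_{jj}}[\dif{P}]$. Combining, the left block contributes $\det(T)^p\prod^p_{j=1}t^{-j}_{jj}=\prod^p_{j=1}t^{p-j}_{jj}$.

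Multiplying the two pieces gives $\prod^p_{j=1}t^{p-j}_{jj}\cdot\prod^p_{j=1}t^{n-p}_{jj}=\prod^p_{j=1}t^{n-j}_{jj}$, and since signs are ignored throughout, this yields $[\dif{X}]=\Pa{\prod^p_{j=1}\abs{t_{jj}}^{n-j}}[\dif{T}][\dif{U_1}]$, as claimed. I expect the main obstacle to be the left-block computation: one must keep track of the interplay between the lower-triangular $\dif{T}$ and the skew-symmetric $S$ precisely enough that the exponent of $t_{jj}$ comes out as $p-j$ (and with the correct sign of the exponent), and one must be careful that the identification of $[\dif{U_1}]$ with the wedge of the strictly lower part of $S$ and all of $W$ is exactly right — in particular that, as $j$ runs over $1,\dots,p$, the pairs $(i,j)$ with $j<i\leqslant p$ exhaust precisely the strictly lower triangle of $S$ while those with $i>p$ exhaust precisely the entries of $W$.
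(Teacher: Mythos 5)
Your proof is correct and follows the same overall decomposition as the paper: differentiate $X=TU_1^\t$, right-multiply by $U=[U_1~~U_2]$ to split $\dif X\cdot U$ into the $p\times p$ block $\dif T - TS$ and the $p\times(n-p)$ block $TW$, harvest $\det(T)^{n-p}=\prod_j\abs{t_{jj}}^{n-p}$ from the second block, and identify the wedge of the strictly lower part of $S$ together with all of $W$ as $[\dif U_1]$. Where you genuinely depart from the paper is in the only delicate step, the Jacobian of the left block: the paper expands $\dif T+T\,\dif Y$ explicitly for $p=2$ and $p=3$ and then asserts the pattern $\prod_j\abs{t_{jj}}^{p-j}$ ``by straight multiplication'' for general $p$, whereas you factor the block as $T\Pa{T^{-1}\dif T - S}$, observe that $(P,S)\mapsto P-S$ is unimodular (unipotent $2\times2$ blocks pairing $(N_{ij},N_{ji})$ with $(P_{ij},S_{ij})$ for $i>j$, identity on the diagonal), pick up $\det(T)^p$ from the left multiplication, and cancel the $\prod_j t_{jj}^{j}$ from $[\dif T]=\Pa{\prod_j t_{jj}^{j}}[\dif P]$ via Proposition~\ref{prop:LR-lower-triangular}. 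This buys a clean closed-form derivation of the exponent $p-j$ valid for all $p$ at once, at the cost of one extra change of variables; the paper's route is more concrete but leaves the general-$p$ induction implicit. Both sign conventions and the block-diagonal factorization of the full Jacobian are handled correctly.
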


\begin{proof}
Note that
$$
U^\t U  = \Br{\begin{array}{c}
                U^\t_1 \\
                U^\t_2
              \end{array}
}[U_1~~U_2] = \Br{\begin{array}{cc}
                U^\t_2 U_2 & U^\t_1 U_2 \\
                U^\t_2 U_1 & U^\t_2 U_2
              \end{array}} = \Br{\begin{array}{cc}
                                   \I_p & 0 \\
                                   0 & \I_{n-p}
                                 \end{array}
              }.
$$
Take the differentials in $X=TU^\t_1$ to get
$$
\dif{X} = \dif{T}\cdot U^\t_1+T\cdot\dif{U^\t_1}.
$$
Then
\begin{eqnarray*}
\dif{X}\cdot U &=& \dif{T}\cdot U^\t_1 U+T\cdot\dif{U^\t_1}U\\
&=& \dif{T}\cdot U^\t_1[U_1,~~U_2]+T\cdot\dif{U^\t_1}[U_1,~~U_2]\\
&=& \Br{\dif{T}+T\cdot\dif{U^\t_1}\cdot
U_1,~~T\cdot\dif{U^\t_1}\cdot U_2}
\end{eqnarray*}
since $U^\t_1 U_1=\I_p, U^\t_1 U_2=0$. Make the substitutions
$$
\dif{W}=\dif{X}\cdot U, \dif{Y}=\dif{U^\t_1}\cdot U_1,
\dif{S}=\dif{U^\t_1}\cdot U_2, \dif{H}=T\cdot \dif{S}.
$$
Now we have
$$
\dif{W} = [\dif{T}+T\cdot \dif{Y},~~\dif{H}].
$$
Thus
$$
\dif T+T\cdot \dif{Y} =\Br{\begin{array}{cccc}
                       \dif t_{11} & 0 & \cdots & 0 \\
                       \dif t_{21} & \dif t_{22} & \cdots & \vdots \\
                       \vdots & \vdots & \ddots & 0 \\
                       \dif t_{p1} & \dif t_{p2} & \cdots & \dif t_{pp}
                     \end{array}
} + \Br{\begin{array}{cccc}
                       t_{11} & 0 & \cdots & 0 \\
                       t_{21} & t_{22} & \cdots & \vdots \\
                       \vdots & \vdots & \ddots & 0 \\
                       t_{p1} & t_{p2} & \cdots & t_{pp}
                     \end{array}
}\Br{\begin{array}{cccc}
       0 & \dif{y_{12}} & \cdots & \dif{y_{1p}} \\
       -\dif{y_{12}} & 0 & \cdots & \dif{y_{2p}} \\
       \vdots & \vdots & \ddots & \vdots \\
       -\dif{y_{1p}} & -\dif{y_{2p}} & \cdots & 0
     \end{array}
}.
$$
Let us consider, for example, the case where $p=2,3$ in order for
computing the wedge product of $\dif T + T\cdot \dif Y$. Now for
$p=2$, we have
\begin{eqnarray*}
\dif T + T\cdot \dif Y &=& \Br{\begin{array}{cc}
                               \dif t_{11} & 0 \\
                               \dif t_{21} & \dif t_{22}
                             \end{array}
}+\Br{\begin{array}{cc}
                                t_{11} & 0 \\
                                t_{21} & t_{22}
                             \end{array}
}\Br{\begin{array}{cc}
                                0 & \dif y_{12} \\
                                -\dif y_{12} & 0
                             \end{array}
} \\
&=& \Br{\begin{array}{cc}
          \dif t_{11} & t_{11}\dif y_{12} \\
          \dif t_{21}-t_{22}\dif y_{12} & \dif t_{22}+t_{21}\dif
          y_{12}
        \end{array}
}
\end{eqnarray*}
Thus the wedge product of $\dif T+T\cdot\dif Y$ is:
\begin{eqnarray*}
[\dif T+T\cdot\dif Y] &=& \dif t_{11}\wedge (t_{11}\dif
y_{12})\wedge (\dif t_{21}-t_{22}\dif y_{12})\wedge (\dif
t_{22}+t_{21}\dif y_{12})\\
&=& t_{11}\dif t_{11}\wedge \dif y_{12}\wedge \dif t_{21}\wedge \dif
t_{22} = t_{11}[\dif T][\dif Y]\\
&=& \Pa{\prod^2_{j=1}\abs{t_{jj}}^{2-j}}[\dif T][\dif Y].
\end{eqnarray*}
For $p=3$, we have
\begin{eqnarray*}
\dif T + T\cdot \dif Y &=& \Br{\begin{array}{ccc}
                               \dif t_{11} & 0 &0\\
                               \dif t_{21} & \dif t_{22}&0\\
                               \dif t_{31} & \dif t_{32}&\dif t_{33}
                             \end{array}
}+\Br{\begin{array}{ccc}
                                t_{11} & 0&0 \\
                                t_{21} & t_{22}&0\\
                                t_{31} & t_{32}&t_{33}
                             \end{array}
}\Br{\begin{array}{ccc}
                                0 & \dif y_{12}&\dif y_{13} \\
                                -\dif y_{12} & 0 &\dif y_{23}\\
                                -\dif y_{13} & -\dif y_{23}&0
                             \end{array}
} \\
&=& \Br{\begin{array}{ccc}
          \dif t_{11} & t_{11}\dif y_{12}& t_{11}\dif y_{13}\\
          \dif t_{21}-t_{22}\dif y_{12} & \dif t_{22} + t_{21}\dif y_{12}& t_{21}\dif y_{13}+t_{22}\dif
          y_{23}\\
           \dif t_{31}-t_{32}\dif y_{12} - t_{33}\dif y_{13}   & \dif t_{32}+t_{31}\dif y_{12}-t_{33}\dif y_{23} & \dif t_{33} + t_{31}\dif y_{13}+t_{32}\dif y_{23}
        \end{array}
},
\end{eqnarray*}
implying the wedge product of $\dif T+T\cdot \dif Y$ is:
$$
t^2_{11}t_{22}[\dif T][\dif Y] =
\Pa{\prod^3_{j=1}\abs{t_{jj}}^{3-j}}[\dif T][\dif Y].
$$
For the general $p$, by straight multiplication, and remembering
that the variables are only
$$
\dif{y_{12}},\ldots,\dif{y_{1p}},\dif{y_{23}},\ldots,\dif{y_{2p}},\ldots,
\dif{y_{p-1p}}.
$$
Thus the wedge product of $\dif{T}+T\cdot\dif{Y}$
gives
$$
\Pa{\prod^p_{j=1}\abs{t_{jj}}^{p-j}}[\dif{Y}][\dif{T}],
$$
ignoring the sign, and
$$
[\dif{Y}] =
\wedge^{p-1}_{j=1}\wedge^p_{i=j+1}\inner{u_i}{\dif{u_j}}.
$$
Now consider $\dif{H}=T\cdot\dif{S}$. Since $\dif{S}$ is a $p\times
(n-p)$ matrix, we have
$$
[\dif{H}] = \det(T)^{n-p}[\dif{S}] =
\Pa{\prod^p_{j=1}\abs{t_{jj}}^{n-p}}[\dif{S}].
$$
The wedge product in $\dif{S}$ is the following:
$$
[\dif{S}]= \wedge^p_{j=1}\wedge^n_{i=p+j}\inner{u_i}{\dif{u_j}}.
$$
Hence from the above equations,
\begin{eqnarray*}
[\dif{X}]&=&[\dif{W}] = \wedge[\dif T+T\dif Y]\wedge[\dif H]\\
&=&\Pa{\prod^p_{j=1}\abs{t_{jj}}^{p-j}}[\dif{Y}][\dif{T}]
\Pa{\prod^p_{j=1}\abs{t_{jj}}^{n-p}}[\dif{S}]\\
&=&\Pa{\prod^p_{j=1}\abs{t_{jj}}^{n-j}}[\dif{Y}][\dif{T}][\dif{S}].
\end{eqnarray*}
Now
$$
[\dif Y][\dif S] =
\wedge^{p-1}_{j=1}\wedge^p_{i=j+1}\inner{u_i}{\dif{u_j}}\wedge^p_{j=1}\wedge^n_{i=p+j}\inner{u_i}{\dif{u_j}}.
$$
Substituting back one has
$$
[\dif{X}]=\Pa{\prod^p_{j=1}\abs{t_{jj}}^{n-j}}[\dif{T}]\wedge^p_{j=1}\wedge^n_{i=j+1}\inner{u_i}{\dif{u_j}}
$$
which establishes the result.
\end{proof}
If the triangular matrix $T$ is restricted to the one with positive
diagonal entries, that is, $t_{jj}>0,j=1,\ldots,p$, then while
integrating over $T$ using
Proposition~\ref{prop:rectangular-matrix}, the result must be
multiplied by $2^p$. Without the factor $2^p$, the $t_{jj}$'s must
be integrated over $-\infty<t_{jj}<\infty,j=1,\ldots,p$. If the
expression to be integrated contains both $T$ and $U$, then restrict
$t_{jj}>0,j=1,\ldots,p$ and integrate $U$ over the full Stiefel
manifold. If the rows of $U$ are $u_1,\ldots,u_p$, then $\pm
u_1,\ldots,\pm u_p$ give $2^p$ choices. Similarly
$t_{jj}>0,t_{jj}<0$ give $2^p$ choices. But there are not $2^{2p}$
choices in $X=TU$. There are only $2^p$ choices. Hence either
integrate out the $t_{jj}$'s over $-\infty<t_{jj}<\infty$ and a
unique $U$ or over $0<t_{jj}<\infty$ and the $U$ over the full
Stiefel manifold. For uniqueness of matrix factorization, see
Theorem~\ref{th:unique-for-rectangular}.

\begin{prop}\label{prop:x1t1u1}
Let $X_1$ be an $n\times p(n\geqslant p)$ matrix of rank $p$ of
independent real entries and let $X_1=U_1T_1$ where $T_1$ is a real
$p\times p$ upper triangular matrix with distinct nonzero diagonal
entries and $U_1$ is a unique real $n\times p$ semi-orthogonal
matrix, that is, $U^\t_1 U_1=\I_p$. Let $U=[U_1~~U_2]$ such that
$U^\t U=\I_n, U^\t_2 U_2=\I_{n-p}, U^\t_1 U_2=0$. Let $u_j$ be the
$j$-th column of $U$ and $\dif{u_j}$ its differential. Then
\begin{eqnarray}
\fbox{$[\dif{X_1}] =
\Pa{\prod^p_{j=1}\abs{t_{jj}}^{n-j}}[\dif{T_1}][\dif{U_1}],$}
\end{eqnarray}
where
$$
[\dif{U_1}] = \prod^p_{j=1}\prod^n_{i=j+1}\inner{u_i}{\dif{u_j}}.
$$
\end{prop}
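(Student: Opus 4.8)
The plan is to obtain this proposition as an immediate transpose of Proposition~\ref{prop:rectangular-matrix}. Taking transposes in $X_1 = U_1 T_1$ gives
$$X_1^\t = T_1^\t\, U_1^\t,$$
and this is precisely the factorization analysed in Proposition~\ref{prop:rectangular-matrix}: with $X := X_1^\t$ a $p\times n$ matrix of rank $p$, with $T := T_1^\t$ which is $p\times p$ \emph{lower} triangular because $T_1$ is upper triangular and whose diagonal entries are again $t_{11},\dots,t_{pp}$ (hence distinct and nonzero), and with the same semi-orthogonal $U_1$, completed to $U=[U_1~~U_2]$ exactly as in the hypotheses. All hypotheses of Proposition~\ref{prop:rectangular-matrix} are therefore satisfied, and it yields
$$[\dif{X_1^\t}] = \Pa{\prod^p_{j=1}\abs{t_{jj}}^{n-j}}[\dif{T_1^\t}][\dif{U_1}],\qquad [\dif{U_1}] = \wedge^p_{j=1}\wedge^n_{i=j+1}\inner{u_i}{\dif{u_j}}.$$

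The remaining step is to translate this back in terms of $X_1$ and $T_1$. Here I would invoke the sign-ignoring convention in force throughout the paper: transposing a matrix only permutes the list of its independent differential entries, so $[\dif{X_1^\t}] = [\dif{X_1}]$ and $[\dif{T_1^\t}] = [\dif{T_1}]$; likewise the wedge product defining $[\dif{U_1}]$ may be written as the plain product $\prod^p_{j=1}\prod^n_{i=j+1}\inner{u_i}{\dif{u_j}}$, since the two differ only by a sign. Substituting these identities into the displayed formula gives exactly the boxed statement.

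I do not expect a genuine obstacle here; the only care needed is bookkeeping — verifying that ``upper triangular'' transposes to ``lower triangular'' with the diagonal unchanged, that the rank and size conditions transfer verbatim, and that the orientation conventions ($\wedge$ versus $\prod$, and the absolute values on the $t_{jj}$) are applied consistently. For completeness one could instead redo the direct argument of Proposition~\ref{prop:rectangular-matrix}: differentiate $X_1 = U_1 T_1$ to get $\dif{X_1} = \dif{U_1}\cdot T_1 + U_1\cdot\dif{T_1}$, left-multiply by $U^\t$ to peel off the lower block $U_2^\t\dif{U_1}\cdot T_1$ (whose wedge product contributes $\Pa{\prod_j\abs{t_{jj}}}^{n-p}[\dif{S}]$ with $S = U_2^\t\dif{U_1}$), and then compute the wedge product of the top block $\dif{T_1} + (U_1^\t\dif{U_1})\,T_1$. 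That last computation — the analogue of the wedge product of $\dif{T}+T\dif{Y}$, but with the skew factor multiplying on the opposite side and $T_1$ upper triangular — is the only genuinely computational point, and one expects it to yield $\Pa{\prod^p_{j=1}\abs{t_{jj}}^{p-j}}[\dif{T_1}][\dif{Y}]$, so that the two blocks recombine to the stated exponent $n-j$. Either way the conclusion follows.
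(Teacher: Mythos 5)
Your transposition argument is correct and is exactly what the paper intends: Proposition~\ref{prop:x1t1u1} is stated there without proof, immediately after Proposition~\ref{prop:rectangular-matrix}, precisely because $X_1^\t = T_1^\t U_1^\t$ reduces it to that result, with $[\dif{X_1^\t}]=[\dif{X_1}]$ and $[\dif{T_1^\t}]=[\dif{T_1}]$ under the sign-ignoring convention. Your sketched direct computation (including the exponents $p-j$ and $n-p$ recombining to $n-j$) also checks out, so either route is fine.
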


\begin{prop}
If $X_1,T_1$ and $U_1$ are as defined in
Proposition~\ref{prop:x1t1u1}, then the surface area of the full
Stiefel manifold $\cO(p,n)$ or the total integral of the wedge
product $\wedge^p_{j=1}\wedge^n_{i=j+1}\inner{u_i}{\dif{u_j}}$ over
$\cO(p,n)$ is given by
\begin{framed}
\begin{eqnarray}
\int_{\cO(p,n)}\wedge^p_{j=1}\wedge^n_{i=j+1}\inner{u_i}{\dif{u_j}}
= \frac{2^p\pi^{\frac{pn}2}}{\Gamma_p\Pa{\frac n2}},
\end{eqnarray}
\end{framed}
\noindent where
$$
\Gamma_p(\alpha)=\pi^{\frac{p(p-1)}4}\Gamma(\alpha)\Gamma\Pa{\alpha-\frac12}\cdots\Pa{\alpha-\frac{p-1}2}
$$
for $\re(\alpha)>\frac{p-1}2$.
\end{prop}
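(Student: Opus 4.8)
The plan is to compute the Gaussian matrix integral $\int_{\real^{n\times p}} e^{-\Tr{X_1^\t X_1}}\,[\dif X_1]$ in two ways and compare. Treating $X_1$ as an array of $np$ independent real variables, $\Tr{X_1^\t X_1}=\sum_{i,j}x_{ij}^2$, so the integral factors into $np$ copies of $\int_{-\infty}^{\infty}e^{-t^2}\,\dif t=\sqrt\pi$, giving $\pi^{np/2}$. The second evaluation will use the decomposition $X_1=U_1T_1$ of Proposition~\ref{prop:x1t1u1}: I would restrict $T_1$ to be upper triangular with $t_{jj}>0$ (and $t_{ij}$, $i<j$, arbitrary real), which makes the decomposition unique, and --- as spelled out in the remark following Proposition~\ref{prop:rectangular-matrix} --- lets $U_1$ range over the \emph{entire} Stiefel manifold $\cO(p,n)$ with no compensating power of $2$. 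Since $U_1^\t U_1=\I_p$ one has $\Tr{X_1^\t X_1}=\Tr{T_1^\t T_1}=\sum_{j=1}^{p}t_{jj}^2+\sum_{i<j}t_{ij}^2$, and Proposition~\ref{prop:x1t1u1} gives $[\dif X_1]=\Pa{\prod_{j=1}^{p}t_{jj}^{\,n-j}}[\dif T_1]\,[\dif U_1]$ with $[\dif U_1]=\wedge_{j=1}^{p}\wedge_{i=j+1}^{n}\inner{u_i}{\dif u_j}$.

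With this substitution the integral separates completely:
\begin{eqnarray*}
\pi^{\frac{np}{2}} = \Pa{\int_{\cO(p,n)}[\dif U_1]}\times\prod_{j=1}^{p}\Pa{\int_0^{\infty}t_{jj}^{\,n-j}e^{-t_{jj}^2}\,\dif t_{jj}}\times\prod_{i<j}\Pa{\int_{-\infty}^{\infty}e^{-t_{ij}^2}\,\dif t_{ij}}.
\end{eqnarray*}
The scalar integrals are routine: the substitution $u=t^2$ yields $\int_0^{\infty}t^{\,n-j}e^{-t^2}\,\dif t=\tfrac12\Gamma\Pa{\tfrac{n-j+1}{2}}$, and there are $\tfrac{p(p-1)}{2}$ strictly-upper entries each contributing $\sqrt\pi$. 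Hence the last two products equal $2^{-p}\pi^{\frac{p(p-1)}{4}}\prod_{j=1}^{p}\Gamma\Pa{\tfrac{n-j+1}{2}}$, which is exactly $2^{-p}\Gamma_p\Pa{\tfrac n2}$ by the definition of $\Gamma_p$ recalled in the statement (put $\alpha=n/2$, so that $\Gamma(\alpha)\Gamma(\alpha-\tfrac12)\cdots\Gamma(\alpha-\tfrac{p-1}{2})=\Gamma\Pa{\tfrac n2}\Gamma\Pa{\tfrac{n-1}{2}}\cdots\Gamma\Pa{\tfrac{n-p+1}{2}}$). Solving for the remaining factor gives $\int_{\cO(p,n)}[\dif U_1]=2^p\pi^{np/2}/\Gamma_p(n/2)$, which is the asserted formula.

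I expect the only genuinely delicate point to be the bookkeeping around uniqueness and signs in the change of variables: one must be sure that restricting to $t_{jj}>0$ makes $X_1\mapsto(T_1,U_1)$ a bijection (off a null set) between the rank-$p$ matrices in $\real^{n\times p}$ and $\{T_1\ \text{upper triangular},\,t_{jj}>0\}\times\cO(p,n)$, so that no stray factor of $2^p$ enters on either side. This is precisely what the discussion after Proposition~\ref{prop:rectangular-matrix} records, so in the end it reduces to citing that remark; everything downstream is the elementary Gamma-integral computation above.
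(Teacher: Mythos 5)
Your proposal is correct and follows essentially the same route as the paper: evaluate the Gaussian integral $\int e^{-\Tr{X_1^\t X_1}}[\dif X_1]=\pi^{np/2}$ once directly and once via the decomposition $X_1=U_1T_1$ of Proposition~\ref{prop:x1t1u1}, with $t_{jj}>0$ and $U_1$ ranging over the full Stiefel manifold, then factor out the $\Gamma$-integrals over $T_1$ to identify $2^{-p}\Gamma_p(n/2)$. The only difference is that you make the uniqueness/sign bookkeeping explicit by citing the remark after Proposition~\ref{prop:rectangular-matrix}, which the paper leaves implicit.
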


\begin{proof}
Note that since $X_1$ is $n\times p$, the sum of squares of the $np$
variables in $X_1$ is given by
$$
\Tr{X^\t_1X_1} = \sum_{i=1}^n\sum^p_{j=1}x^2_{ij}.
$$
Then
\begin{eqnarray*}
\int_{X_1}[\dif{X_1}]e^{-\Tr{X^\t_1X_1}} &=&
\int^\infty_{-\infty}\cdots \int^\infty_{-\infty}
e^{-\sum_{i=1}^n\sum^p_{j=1}x^2_{ij}}\prod_{i=1}^n\prod^p_{j=1}\dif{x_{ij}} \\
&=&\prod_{i=1}^n\prod^p_{j=1}\int^\infty_{-\infty}e^{-x^2_{ij}}\dif{x_{ij}} \\
&=& \pi^{\frac{np}2}
\end{eqnarray*}
by direct evaluation of the exponential integrals. Make the
transformation as in Proposition~\ref{prop:x1t1u1}:
$$
X_1=U_1T_1\Longrightarrow X^\t_1X_1=T^\t_1T_1,
$$
where $T_1$ is a real $p\times p$ upper triangular matrix with
distinct nonzero diagonal entries and $U_1$ is a unique real
$n\times p$ semi-orthogonal matrix---$U^\t_1U_1=\I_p$, implying
$$
\Tr{X^\t_1X_1}=\Tr{T^\t_1T_1} = \sum_{i\leqslant j}t^2_{ij}.
$$
Note that $\dif{X_1}$ is available from
Proposition~\ref{prop:x1t1u1}. Now
\begin{eqnarray*}
\int_{X_1}[\dif{X_1}]e^{-\Tr{X^\t_1X_1}} =\Pa{
\int_{T_1}\prod^p_{j=1}\abs{t_{jj}}^{n-j}e^{-\sum_{i\leqslant
j}t^2_{ij}}[\dif{T_1}]}\Pa{\int_{\cO(p,n)}\wedge^p_{j=1}\wedge^n_{i=j+1}\inner{u_i}{\dif{u_j}}}.
\end{eqnarray*}
But for $0<t_{jj}<\infty, -\infty<t_{ij}<\infty(i<j)$ and $U_1$
unrestricted,
\begin{eqnarray*}
\int_{T_1}\prod^p_{j=1}\abs{t_{jj}}^{n-j}e^{-\sum_{i\leqslant
j}t^2_{ij}}[\dif{T_1}] = 2^{-p}\Gamma_p\Pa{\frac n2}
\end{eqnarray*}
observing that for $j=1,\ldots,p$, the $p$ integrals
\begin{eqnarray*}
\int_0^\infty\abs{t_{jj}}^{n-j}e^{-t^2_{jj}}\dif{t_{jj}} =
2^{-1}\Gamma\Pa{\frac n2 - \frac{j-1}2}, n>j-1,
\end{eqnarray*}
and each of the $p(p-1)/2$ integrals
$$
\int^\infty_{-\infty} e^{-t^2_{ij}}\dif{t_{ij}}=\sqrt{\pi}, i<j.
$$
Thus the result that follows.
\end{proof}

\begin{thrm}\label{th:vol-of-unitary-group}
Let $X$ be a full-ranked and $n\times n$ matrix of independent real
entries and let $X=UT$, where $T$ is a real $n\times n$ upper
triangular matrix with distinct nonzero diagonal entries and $U$ is
a unique real orthogonal matrix. Let $u_j$ be the $j$-th column of
$U$ and $\dif{u_j}$ its differential. Then the volume content of the
full orthogonal group $\cO(n)$ is given by
\begin{framed}
\begin{eqnarray}
\vol(\cO(n)) = \int_{\cO(n)} \wedge [U^\t\dif{U}] =
\frac{2^n\pi^{\frac{n^2}2}}{\Gamma_n\Pa{\frac n2}} =
\frac{2^n\pi^{\frac{n(n+1)}4}}{\prod^n_{k=1}\Gamma\Pa{\frac k2}}.
\end{eqnarray}
\end{framed}
\end{thrm}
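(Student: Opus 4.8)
The plan is to carry out, in the square case $p=n$, exactly the Gaussian-integral argument used to compute the surface area of the Stiefel manifold $\cO(p,n)$ in the preceding proposition; the point is that $\cO(n,n)=\cO(n)$, and then to simplify the resulting $\Gamma_n$-factor into the product form. First I would evaluate $\int_{X}[\dif{X}]\,e^{-\Tr{X^\t X}}$ over all $n\times n$ real matrices directly: since $\Tr{X^\t X}=\sum_{i,j}x_{ij}^2$, it factors into $n^2$ copies of $\int_{-\infty}^{\infty}e^{-x_{ij}^2}\dif x_{ij}=\sqrt{\pi}$, hence equals $\pi^{n^2/2}$.

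Next I would apply the change of variables $X=UT$ from Proposition~\ref{prop:x1t1u1} with $p=n$. In that case the block $U_2$ is empty, so $U=U_1$ is a full orthogonal matrix, and the volume element reads $[\dif{X}]=\bigl(\prod_{j=1}^n\abs{t_{jj}}^{\,n-j}\bigr)[\dif{T}]\wedge_{j=1}^{n-1}\wedge_{i=j+1}^{n}\inner{u_i}{\dif u_j}$, where the wedge over the $u$'s is precisely $\wedge[U^\t\dif{U}]$ of Proposition~\ref{prop:Haar-measure}. Since $X^\t X=T^\t T$ gives $\Tr{X^\t X}=\sum_{i\leqslant j}t_{ij}^2$, the integral separates as
\[
\pi^{n^2/2}=\Pa{\int_{T}\prod_{j=1}^n\abs{t_{jj}}^{\,n-j}e^{-\sum_{i\leqslant j}t_{ij}^2}[\dif{T}]}\Pa{\int_{\cO(n)}\wedge[U^\t\dif{U}]},
\]
where, by the discussion following Proposition~\ref{prop:rectangular-matrix}, integrating $U$ over the full orthogonal group forces $0<t_{jj}<\infty$ (and $-\infty<t_{ij}<\infty$ for $i<j$).

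Then I would evaluate the $T$-integral by splitting off-diagonal from diagonal factors. Each of the $n(n-1)/2$ off-diagonal integrals $\int_{-\infty}^{\infty}e^{-t_{ij}^2}\dif t_{ij}$ contributes $\sqrt{\pi}$, for a total of $\pi^{n(n-1)/4}$; each diagonal integral gives $\int_0^\infty t_{jj}^{\,n-j}e^{-t_{jj}^2}\dif t_{jj}=\tfrac12\Gamma\bigl(\tfrac{n-j+1}{2}\bigr)$. Reindexing $k=n-j+1$ turns the diagonal product into $2^{-n}\prod_{k=1}^n\Gamma(k/2)$, so the $T$-integral equals $2^{-n}\pi^{n(n-1)/4}\prod_{k=1}^n\Gamma(k/2)=2^{-n}\Gamma_n(n/2)$ by the definition of $\Gamma_n$ with $\alpha=n/2$. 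Solving for the remaining factor yields $\vol(\cO(n))=2^n\pi^{n^2/2}/\Gamma_n(n/2)$, and then substituting $\Gamma_n(n/2)=\pi^{n(n-1)/4}\prod_{k=1}^n\Gamma(k/2)$ together with the exponent identity $n^2/2-n(n-1)/4=n(n+1)/4$ produces the claimed closed form.

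The only genuinely delicate point is the bookkeeping of the factor $2^n$ and the range of the diagonal variables: the QR-type factorization $X=UT$ is $2^n$-to-one unless the signs of the diagonal of $T$ are pinned down, so one must be consistent — here we restrict $t_{jj}>0$ precisely because $U$ is allowed to range over all of $\cO(n)$ rather than over a sign-reduced fundamental domain. Everything else is the routine Gaussian/Gamma computation already rehearsed for the rectangular case.
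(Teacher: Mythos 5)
Your proof is correct and is essentially the paper's own route: the theorem is exactly the $p=n$ specialization of the preceding Stiefel-manifold proposition, obtained by the same Gaussian integral $\int[\dif X]e^{-\Tr{X^\t X}}=\pi^{n^2/2}$, the QR-type substitution $X=UT$ with $t_{jj}>0$, and the separation into the $T$-integral $2^{-n}\Gamma_n(n/2)$ times $\int_{\cO(n)}\wedge[U^\t\dif U]$. Your handling of the $2^n$-to-one ambiguity and the final simplification $\Gamma_n(n/2)=\pi^{n(n-1)/4}\prod_{k=1}^n\Gamma(k/2)$ are both right.
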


\begin{prop}
Let $X$ be a $p\times n (p\leqslant n)$ matrix of rank $p$ and let
$X=TU^\t_1$, where $T$ is a $p\times p$ lower triangular matrix with
distinct positive diagonal entries $t_{jj}>0,j=1,\ldots, p$ and
$U_1$ is a unique $n\times p$ semi-orthogonal matrix, $U^\t_1
U_1=\I_p$, all are of independent real entries. Let $A=XX^\t=TT^\t$.
Then
\begin{eqnarray*}
[\dif X]=2^{-p}\det(A)^{\frac n2-\frac{p+1}2}[\dif
A]\wedge^p_{j=1}\wedge^n_{i=j+1}\inner{u_i}{\dif u_j}.
\end{eqnarray*}
\end{prop}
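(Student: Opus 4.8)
The plan is to obtain the identity by composing two Jacobian computations already established, namely Proposition~\ref{prop:rectangular-matrix} and the lower-Cholesky formula in Proposition~\ref{prop:TT'}, and then converting the resulting power of the diagonal entries of $T$ into a power of $\det(A)$.

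First I would invoke Proposition~\ref{prop:rectangular-matrix} for the very decomposition $X=TU^\t_1$ appearing in the statement ($T$ lower triangular $p\times p$, $U_1$ the $n\times p$ semi-orthogonal factor augmented by $U_2$). Since here the diagonal entries $t_{jj}$ are assumed positive, this gives
\begin{equation*}
[\dif X] = \Pa{\prod^p_{j=1} t_{jj}^{\,n-j}}\,[\dif T]\wedge^p_{j=1}\wedge^n_{i=j+1}\inner{u_i}{\dif u_j}.
\end{equation*}
Next, because $t_{jj}>0$ the assignment $T\mapsto A=TT^\t$ is the unique lower Cholesky factorization, hence a legitimate one-to-one change of variables, and the second identity of Proposition~\ref{prop:TT'}, applied with $n$ replaced by $p$, yields $[\dif A]=2^p\Pa{\prod^p_{j=1}t_{jj}^{\,p-j+1}}[\dif T]$, equivalently
\begin{equation*}
[\dif T] = 2^{-p}\Pa{\prod^p_{j=1}t_{jj}^{-(p-j+1)}}[\dif A].
\end{equation*}

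Substituting this into the expression for $[\dif X]$ and merging the two products over $j$, the exponent of each $t_{jj}$ becomes $(n-j)-(p-j+1)=n-p-1$, which is independent of $j$, so
\begin{equation*}
[\dif X] = 2^{-p}\Pa{\prod^p_{j=1}t_{jj}}^{\,n-p-1}[\dif A]\wedge^p_{j=1}\wedge^n_{i=j+1}\inner{u_i}{\dif u_j}.
\end{equation*}
Finally I would use $\det(A)=\det(TT^\t)=\prod^p_{j=1}t_{jj}^2$; since $t_{jj}>0$ this gives $\prod^p_{j=1}t_{jj}=\det(A)^{1/2}$, whence $\Pa{\prod^p_{j=1}t_{jj}}^{n-p-1}=\det(A)^{\frac{n-p-1}{2}}=\det(A)^{\frac n2-\frac{p+1}2}$, which is precisely the claimed formula.

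There is essentially no serious obstacle: the two ingredients are already in hand and the computation is pure bookkeeping. The only points needing a little care are (i) invoking positivity of the $t_{jj}$ to justify treating $A=TT^\t$ as a genuine (one-to-one) change of variables and to drop all absolute values cleanly, and (ii) observing that the wedge-product factor $\wedge^p_{j=1}\wedge^n_{i=j+1}\inner{u_i}{\dif u_j}$ carrying the $U_1$ degrees of freedom is untouched by the substitution, which only rewrites $[\dif T]$ in terms of $[\dif A]$. An alternative self-contained route would be to differentiate $A=XX^\t$ directly and compute the wedge product from scratch, but going through Propositions~\ref{prop:rectangular-matrix} and \ref{prop:TT'} avoids redoing the $U_1$-part of the argument and is cleaner.
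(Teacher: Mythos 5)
Your proposal is correct and follows essentially the same route as the paper's own proof: apply Proposition~\ref{prop:rectangular-matrix} to $X=TU^\t_1$, invert the Cholesky Jacobian $[\dif A]=2^p\Pa{\prod^p_{j=1}t_{jj}^{p-j+1}}[\dif T]$ from Proposition~\ref{prop:TT'}, and use $\det(A)=\prod^p_{j=1}t_{jj}^2$ to convert the uniform exponent $n-p-1$ into $\det(A)^{\frac n2-\frac{p+1}2}$. Nothing further is needed.
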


\begin{proof}
Since $A = TT^\t$, it follows that
$$
[\dif A]=2^p\Pa{\prod^p_{j=1}t^{p+1-j}_{jj}}[\dif T],
$$
i.e.
$$
[\dif T]=2^{-p}\Pa{\prod^p_{j=1}t^{-p-1+j}_{jj}}[\dif A].
$$
But
\begin{eqnarray*}
[\dif{X}] =
\Pa{\prod^p_{j=1}\abs{t_{jj}}^{n-j}}[\dif{T}][\dif{U_1}],
\end{eqnarray*}
where
$$
[\dif{U_1}] = \prod^p_{j=1}\prod^n_{i=j+1}\inner{u_i}{\dif{u_j}}.
$$
Note that $\det(A)=\det(T)^2=\prod^p_{j=1}t^2_{jj}$. The desired
conclusion is obtained.
\end{proof}

\begin{prop}\label{prop:polar-decom-1}
Let $X$ be a $m\times n$ matrix of rank $m(m\leqslant n)$,
$T=[t_{jk}]$ a $m\times m$ lower triangular matrix with $t_{jj}>0,
j=1,\ldots,m$ and $L$ a $n\times m$ matrix satisfying $L^\t L=\I_m$,
where the matrices are of independent real entries. Then show that,
if $X=TL^\t$, then
\begin{eqnarray}
\fbox{$[\dif X] = \Pa{\prod^m_{j=1}t^{n-j}_{jj}}[\dif T][\dif\hat
L]$}
\end{eqnarray}
where
$$
\dif\hat L = \prod^m_{j=1}\prod^n_{i=j+1}\inner{l_i}{\dif l_j},
$$
$l_j$ is the $j$-th column of $\hat L=[L~~L_1]\in\cO(n)$; $\dif l_i$
the differential of the $i$-th column of $L$.
\end{prop}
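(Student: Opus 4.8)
The statement is, up to relabeling, Proposition~\ref{prop:rectangular-matrix}: under the dictionary $p\leftrightarrow m$, $U_1\leftrightarrow L$, $U_2\leftrightarrow L_1$, $U\leftrightarrow\hat L$ (and $t_{jj}>0$, so $\abs{t_{jj}}=t_{jj}$) the hypotheses and the conclusion coincide, so one may simply invoke that proposition. For completeness the plan is to re-run the same argument. First complete $L$ to a full orthogonal matrix $\hat L=[L~~L_1]\in\cO(n)$, so that $L^\t L=\I_m$, $L^\t_1 L_1=\I_{n-m}$, $L^\t L_1=0$. Differentiating $X=TL^\t$ gives $\dif X=\dif T\cdot L^\t+T\cdot\dif L^\t$, and right-multiplying by $\hat L$ yields, using $L^\t L=\I_m$ and $L^\t L_1=0$,
\begin{eqnarray*}
\dif X\cdot\hat L=\Br{\,\dif T+T\cdot\dif L^\t L,\ \ T\cdot\dif L^\t L_1\,}.
\end{eqnarray*}

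Next I would set $\dif W=\dif X\cdot\hat L$, $\dif Y=\dif L^\t L$, $\dif S=\dif L^\t L_1$, $\dif H=T\cdot\dif S$, so that $\dif W=\Br{\,\dif T+T\cdot\dif Y,\ \dif H\,}$. Here $\dif Y$ is $m\times m$ skew symmetric (its diagonal vanishes since $\inner{l_j}{\dif l_j}=0$, and $\inner{l_i}{\dif l_j}=-\inner{l_j}{\dif l_i}$), $\dif S$ is $m\times(n-m)$ with entries $\inner{l_i}{\dif l_j}$ for $j\leqslant m<i$, and $[\dif X]=[\dif W]$ because postmultiplication by the orthogonal $\hat L$ has Jacobian of absolute value $1$. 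The block form of $\dif W$ makes the Jacobian block-diagonal, so $[\dif W]=[\dif T+T\cdot\dif Y]\wedge[\dif H]$. The factor $[\dif H]$ is immediate from Proposition~\ref{prop:matrix-transformation} (with $A=T$, $B=\I_{n-m}$): since $\dif S$ has $n-m$ columns, $[\dif H]=\det(T)^{n-m}[\dif S]=\Pa{\prod^m_{j=1}t^{n-m}_{jj}}[\dif S]$, using that $T$ is lower triangular.

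The main work — and the step I expect to be the real obstacle — is the wedge product of $\dif T+T\cdot\dif Y$ over the $m(m+1)/2$ variables $\dif t_{ij}$ ($i\geqslant j$) and the $m(m-1)/2$ variables $\dif y_{jk}$ ($j<k$). Writing out the entries of $\dif T+T\cdot\dif Y$ explicitly (exactly as done for $p=2,3$ in the proof of Proposition~\ref{prop:rectangular-matrix}) and ordering the positions so that the first block of rows runs over the strictly-upper entries and the second over the lower-and-diagonal ones, the Jacobian matrix becomes triangular: at the strictly-upper position $(j,k)$ the new variable is $\dif y_{jk}$ with leading coefficient $t_{jj}$, while at each lower-or-diagonal position the new variable is the corresponding $\dif t_{ij}$ with coefficient $1$. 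Since for fixed $j$ the index $k$ ranges over $\set{j+1,\ldots,m}$, the coefficient $t_{jj}$ occurs exactly $m-j$ times, so that, ignoring signs,
\begin{eqnarray*}
[\dif T+T\cdot\dif Y]=\Pa{\prod^m_{j=1}\abs{t_{jj}}^{m-j}}[\dif T][\dif Y].
\end{eqnarray*}

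Combining the two pieces and using $t_{jj}>0$ gives
\begin{eqnarray*}
[\dif X]=\Pa{\prod^m_{j=1}t^{m-j}_{jj}}\Pa{\prod^m_{j=1}t^{n-m}_{jj}}[\dif T][\dif Y][\dif S]=\Pa{\prod^m_{j=1}t^{n-j}_{jj}}[\dif T][\dif Y][\dif S].
\end{eqnarray*}
It remains to identify $[\dif Y][\dif S]$ with $\dif\hat L$: the independent entries of the skew-symmetric $\dif Y$ are $\inner{l_i}{\dif l_j}$ with $1\leqslant j<i\leqslant m$, and the entries of $\dif S$ are $\inner{l_i}{\dif l_j}$ with $1\leqslant j\leqslant m<i\leqslant n$; these two families are disjoint and their union is exactly $\set{\inner{l_i}{\dif l_j}:1\leqslant j\leqslant m,\ j+1\leqslant i\leqslant n}$, so
\begin{eqnarray*}
[\dif Y][\dif S]=\prod^m_{j=1}\prod^n_{i=j+1}\inner{l_i}{\dif l_j}=\dif\hat L,
\end{eqnarray*}
which completes the proof.
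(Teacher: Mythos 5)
Your proposal is correct and takes essentially the same route as the paper: the paper states this proposition without proof precisely because it is Proposition~\ref{prop:rectangular-matrix} under the relabeling $p\leftrightarrow m$, $U_1\leftrightarrow L$, $U\leftrightarrow\hat L$ (with $\abs{t_{jj}}=t_{jj}$ since $t_{jj}>0$), and your re-derivation — right-multiplying $\dif X=\dif T\cdot L^\t+T\cdot\dif L^\t$ by $\hat L$, splitting into the blocks $\dif T+T\cdot\dif Y$ and $\dif H=T\cdot\dif S$, and counting the powers of $t_{jj}$ — reproduces that proof faithfully, including the correct identification of $[\dif Y][\dif S]$ with $\dif\hat L$.
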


\begin{prop}[Polar decomposition]\label{prop:polar-decom-2}
Let $X$ be a $m\times n(m\leqslant n)$ matrix, $S$ a $m\times m$
symmetric positive definite matrix and $L$ a $n\times m$ matrix with
$L^\t L=\I_m$, all are of independent real entries. Then show that,
if $X=\sqrt{S}L^\t$, then
\begin{eqnarray}
[\dif X] = \Pa{\frac12}^m \Pa{\det(S)}^{\frac{n-m-1}2}[\dif S][\dif
\hat L]
\end{eqnarray}
where $\dif \hat L$ is defined in
Proposition~\ref{prop:polar-decom-1}.
\end{prop}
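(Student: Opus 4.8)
The plan is to derive Proposition~\ref{prop:polar-decom-2} from the triangular polar decomposition of Proposition~\ref{prop:polar-decom-1} by feeding in the Cholesky Jacobian of Proposition~\ref{prop:TT'}. First I would take the $m\times n$ matrix $X$ and write $X=TL^\t$ as in Proposition~\ref{prop:polar-decom-1}, with $T$ lower triangular and $t_{jj}>0$; since $S:=XX^\t=TT^\t$ is symmetric positive definite, $X$ automatically has full row rank $m$, so that proposition applies and $T$ is just the Cholesky factor of $S$. Proposition~\ref{prop:polar-decom-1} then gives
\begin{eqnarray*}
[\dif X]=\Pa{\prod^m_{j=1}t^{\,n-j}_{jj}}[\dif T][\dif\hat L],
\end{eqnarray*}
while Proposition~\ref{prop:TT'} (second identity, with $m$ in place of $n$) applied to $S=TT^\t$ gives $[\dif S]=2^m\Pa{\prod^m_{j=1}t^{\,m-j+1}_{jj}}[\dif T]$, hence $[\dif T]=2^{-m}\Pa{\prod^m_{j=1}t^{-(m-j+1)}_{jj}}[\dif S]$.

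Next I would substitute the second relation into the first and collect the exponents of the diagonal entries: for each $j$ the exponent is $(n-j)-(m-j+1)=n-m-1$, independent of $j$, so
\begin{eqnarray*}
[\dif X]=2^{-m}\Pa{\prod^m_{j=1}t^{\,n-m-1}_{jj}}[\dif S][\dif\hat L]=2^{-m}\det(T)^{\,n-m-1}[\dif S][\dif\hat L].
\end{eqnarray*}
Since $\det(S)=\det(TT^\t)=\det(T)^2$ and $\det(T)=\prod^m_{j=1}t_{jj}>0$, we have $\det(T)=\det(S)^{1/2}$, so $\det(T)^{\,n-m-1}=\det(S)^{(n-m-1)/2}$, and the claimed identity $[\dif X]=\Pa{\tfrac12}^m\det(S)^{\frac{n-m-1}2}[\dif S][\dif\hat L]$ follows.

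The delicate point — really the only obstacle — is that the semi-orthogonal factor $L$, hence the surface element $[\dif\hat L]$, is not literally the same in $X=TL^\t$ and in the polar factorization $X=\sqrt S\,L^\t$. To reconcile them I would use the triangular factorization $\sqrt S=TW^\t$ of the positive definite matrix $\sqrt S$, with $W\in\cO(m)$; then $X=\sqrt S\,L^\t=T(LW)^\t$, so the factor occurring in Proposition~\ref{prop:polar-decom-1} is $LW$, differing from the polar factor $L$ only by an orthogonal matrix $W=W(S)$ depending on $S$ alone. In the wedge product $[\dif S][\dif\hat L]$ the block $[\dif S]$ already uses up every differential $\dif s_{ij}$, so any $\dif S$-term produced by varying $W(S)$ is annihilated; thus $W$ may be treated as constant, and replacing $\hat L$ by $\hat L(W\oplus\I_{n-m})$ conjugates the skew-symmetric matrix $\hat L^\t\dif\hat L$ by the fixed orthogonal matrix $W\oplus\I_{n-m}\in\cO(n)$, changing the relevant wedge product only by a factor of modulus $1$ (a power of $\det(W)=\pm1$, in the spirit of Proposition~\ref{prop:skew-symmetric-case}); equivalently, a fixed orthogonal change of the semi-orthogonal factor is measure-preserving on the Stiefel manifold. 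Since signs are suppressed throughout, $[\dif\hat L]$ is unchanged, the substitution above is legitimate, and the proof is complete.
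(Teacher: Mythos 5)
Your argument is correct and is essentially the proof the paper gives: apply Proposition~\ref{prop:polar-decom-1} to $X=TL^\t$, convert $[\dif T]$ to $[\dif S]$ via the Cholesky Jacobian $[\dif S]=2^m\bigl(\prod_j t_{jj}^{m+1-j}\bigr)[\dif T]$, and observe that the exponents collapse to $n-m-1=2\cdot\frac{n-m-1}{2}$ with $\det(S)=\prod_j t_{jj}^2$. Your closing paragraph supplies a justification the paper omits entirely (it silently identifies the semi-orthogonal factor of $X=TL^\t$ with that of $X=\sqrt{S}\,L^\t$); your resolution --- the two factors differ by $W(S)\in\cO(m)$, whose $S$-dependence is annihilated after wedging with $[\dif S]$ and whose constant part acts by a measure-preserving right translation on the Stiefel manifold --- is sound.
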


\begin{proof}
Now if $X=\sqrt{S}L^\t$ and $L^\t L=\I_m$, then $XX^\t=S$. By
Proposition~\ref{prop:polar-decom-1}, we have $X = TL^\t$, where
$T=[t_{jk}]$ is a $m\times m$ lower triangular matrix with
$t_{jj}>0, j=1,\ldots,m$ and $L$ a $n\times m$ matrix satisfying
$L^\t L=\I_m$. Denote $\hat L=[L~~L_1]\in\cO(n)$. Hence
$$
[\dif X] = \Pa{\prod^m_{j=1}t^{n-j}_{jj}}[\dif T][\dif \hat L].
$$
It also holds that $S=TT^\t$ implies
$$
[\dif S] = 2^m\Pa{\prod^m_{j=1}t^{m+1-j}_{jj}}[\dif T].
$$
Both expressions indicate that
$$
[\dif X] = 2^{-m}\Pa{\prod^m_{j=1}t^{n-m-1}_{jj}}[\dif S][\dif \hat
L].
$$
Since $\det(S)=\det(TT^\t)=\prod^m_{j=1}t^2_{jj}$, it follows that
\begin{eqnarray*}
[\dif X] = \Pa{\frac12}^m \Pa{\det(S)}^{\frac{n-m-1}2}[\dif S][\dif
\hat L].
\end{eqnarray*}
We are done.
\end{proof}

\begin{prop}\label{prop:polar-decom-3}
With the same notations as in Proposition~\ref{prop:polar-decom-2},
it holds that
\begin{framed}
\begin{eqnarray}
\int_{L^\t L=\I_m} [\dif\hat L] =
\frac{2^m\pi^{\frac{mn}2}}{\Gamma_m\Pa{\frac n2}}
\end{eqnarray}
\end{framed}
\noindent and for $m=n$
\begin{eqnarray}
\int_{\cO(n)} [\dif\hat V] =
\frac{2^n\pi^{\frac{n^2}2}}{\Gamma_n\Pa{\frac n2}}.
\end{eqnarray}
Define the normalized orthogonal measures as
\begin{eqnarray}
\dif\mu(V) := \Pa{\frac{\Gamma_n\Pa{\frac
n2}}{2^n\pi^{\frac{n^2}2}}}[\dif\hat V] =
\Pa{\frac{\Gamma_n\Pa{\frac n2}}{2^n\pi^{\frac{n^2}2}}}
\prod_{i>j}\inner{v_i}{\dif v_j}
\end{eqnarray}
or
\begin{eqnarray}
\dif\mu(V) := \Pa{\frac{\Gamma_n\Pa{\frac
n2}}{2^n\pi^{\frac{n^2}2}}}[\dif G],
\end{eqnarray}
where $\dif G = V^\t \dif V$ for $V=[v_1,\ldots, v_n]\in \cO(n)$. It
holds that $[\dif\hat V]=\wedge(V^\t\dif V)$ is invariant under
simultaneous translations $V\to UVW$, where $U,W\in\cO(n)$. That is,
$\dif\mu(V)$ is an invariant measure under both left and right
translations, i.e. Haar measure over $\cO(n)$.
\end{prop}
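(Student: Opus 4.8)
The first two displayed identities require essentially no new work. Writing $\hat L=[L~~L_1]\in\cO(n)$ as in Proposition~\ref{prop:polar-decom-1}, the form $\dif\hat L=\prod^m_{j=1}\prod^n_{i=j+1}\inner{l_i}{\dif l_j}$ is precisely the wedge product $\wedge^p_{j=1}\wedge^n_{i=j+1}\inner{u_i}{\dif u_j}$ treated in the earlier surface-area computation for the Stiefel manifold $\cO(p,n)$, with $p$ there specialized to $m$; and $\Set{L\in\real^{n\times m}:L^\t L=\I_m}=\cO(m,n)$ is that same Stiefel manifold. Hence $\int_{L^\t L=\I_m}[\dif\hat L]=\frac{2^m\pi^{mn/2}}{\Gamma_m(n/2)}$ is a restatement of that proposition. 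Taking $m=n$ gives $\cO(n,n)=\cO(n)$ and recovers $\int_{\cO(n)}[\dif\hat V]=\frac{2^n\pi^{n^2/2}}{\Gamma_n(n/2)}$, already obtained in Theorem~\ref{th:vol-of-unitary-group}. The definition of $\dif\mu(V)$ is just the rescaling of $[\dif\hat V]$ by this finite positive total mass, so $\dif\mu$ is a Borel probability measure on $\cO(n)$; there is nothing further to verify there.

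For the invariance claim I would treat left and right translations separately and then compose. Fix $U,W\in\cO(n)$ and consider $V\mapsto UVW$. \emph{Left part:} replacing $V$ by $UV$ gives $\dif(UV)=U\,\dif V$, hence $(UV)^\t\dif(UV)=V^\t U^\t U\,\dif V=V^\t\dif V$; the matrix of one-forms $G:=V^\t\dif V$ is literally unchanged, so its wedge product $[\dif\hat V]=\wedge(V^\t\dif V)$ is too. \emph{Right part:} replacing $V$ by $VW$ gives $\dif(VW)=\dif V\cdot W$, hence $(VW)^\t\dif(VW)=W^\t\bigl(V^\t\dif V\bigr)W=W^\t G W$. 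By Proposition~\ref{prop:Haar-measure} the matrix $G$ is skew symmetric, and its $n(n-1)/2$ above-diagonal entries $g_{ij}$ ($i<j$) are exactly the independent quantities whose wedge product defines $[\dif\hat V]$. The substitution $G\mapsto W^\t G W$ is the congruence transformation of skew-symmetric matrices treated in Proposition~\ref{prop:skew-symmetric-case}, whose Jacobian on those $n(n-1)/2$ entries equals $\det(W^\t)^{n-1}=\det(W)^{n-1}=\pm1$ because $\det(W)=\pm1$ for $W\in\cO(n)$. Since all wedge products in this paper are taken ignoring the sign (the convention of Section~\ref{sect:Jacobian}), $\wedge(W^\t G W)=\wedge(G)$, i.e. $[\dif\hat V]$ is right-invariant. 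Composing the two, $[\dif\hat V]$ is invariant under $V\mapsto UVW$; since $\dif\mu$ differs from $[\dif\hat V]$ only by a constant factor, it is invariant under left and right translations, and being finite it is the bi-invariant Haar probability measure on $\cO(n)$.

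The only point demanding care is the step where the matrix of one-forms $G=V^\t\dif V$ is fed into the finite-dimensional Jacobian formula of Proposition~\ref{prop:skew-symmetric-case}: one must note that the $g_{ij}$, $i<j$, constitute a (local) coframe on the manifold $\cO(n)$, and that passing from the chart attached to $V$ to the one attached to $VW$ replaces this coframe by the linear combinations $(W^\t G W)_{ij}$, which is precisely the change of variables to which that proposition applies. Granting this identification, the rest is the one-line manipulations above, and no genuine obstacle arises.
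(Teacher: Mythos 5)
Your proposal is correct, and the invariance argument is essentially the paper's own computation $(UVW)^\t\dif(UVW)=W^\t\cdot\dif G\cdot W$ — except that you supply the justification the paper leaves implicit, namely that the congruence $G\mapsto W^\t G W$ of the skew symmetric matrix of one-forms has Jacobian $\det(W)^{n-1}=\pm1$ by Proposition~\ref{prop:skew-symmetric-case}, which is harmless under the sign-ignoring convention. Where you genuinely diverge is in the first two integral formulas. The paper does \emph{not} treat them as restatements: it re-derives them from scratch by evaluating the Gaussian integral $\int_X[\dif X]e^{-\Tr{XX^\t}}=\pi^{mn/2}$ through the polar decomposition $X=\sqrt{S}L^\t$ of Proposition~\ref{prop:polar-decom-2}, so that the integral factors into $2^{-m}\int_{S>0}\det(S)^{\frac n2-\frac{m+1}2}e^{-\Tr{S}}[\dif S]=2^{-m}\Gamma_m\Pa{\frac n2}$ times $\int_{L^\t L=\I_m}[\dif\hat L]$, and then solves for the latter. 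You instead observe that $[\dif\hat L]=\prod^m_{j=1}\prod^n_{i=j+1}\inner{l_i}{\dif l_j}$ is literally the Stiefel wedge product whose total integral over $\cO(m,n)$ was already computed (via the QR-type decomposition $X_1=U_1T_1$) earlier in Section 2.2, and simply cite that result. Both are valid; your route is shorter and avoids duplicating work, while the paper's route has the side benefit of exhibiting the polar-decomposition factorization of Lebesgue measure (the $[\dif S]$ marginal being the matrix-variate gamma density), which is the real point of grouping Propositions~\ref{prop:polar-decom-1}--\ref{prop:polar-decom-3} together. No gap either way.
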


\begin{proof}
We know that
$$
\int_X [\dif X]e^{-\Tr{XX^\t}} = \pi^{\frac{mn}2}.
$$
From Proposition~\ref{prop:polar-decom-2}, via the transformation
$X=\sqrt{S}L^\t$, we see that
\begin{eqnarray*}
\int_X [\dif X]e^{-\Tr{XX^\t}} =
\Pa{\frac12}^m\int_{S>0}\det(S)^{\frac{n-m-1}2}e^{-\Tr{S}}[\dif
S]\times \int_{L^\t L=\I_m}[\dif \hat L].
\end{eqnarray*}
We also see from the definition of $\Gamma_p(\alpha)$ that
$$
\Gamma_m\Pa{\frac n2} = \int_{S>0}\det(S)^{\frac
n2-\frac{m+1}2}e^{-\Tr{S}}[\dif S].
$$
Then
\begin{eqnarray*}
\int_{L^\t L=\I_m}[\dif \hat L] =
\frac{2^m\pi^{\frac{mn}2}}{\Gamma_m\Pa{\frac n2}}.
\end{eqnarray*}
For $m=n$, the result follows easily.

For fixed $U,W\in\cO(n)$, we have
$$
(UVW)^\t\dif(UVW) = (W^\t V^\t U^\t)\Pa{U\cdot\dif V\cdot
W}=W^\t\cdot \dif G\cdot W,
$$
implying that
$$
\wedge\Br{(UVW)^\t\dif(UVW)} = [\dif G].
$$
That is, $\dif\mu(V)=\dif\mu(UVW)$ for all $U,V,W\in\cO(n)$,
$\dif\mu(V)$ is an invariant measure under both left and right
translations over $\cU(n)$.
\end{proof}

\section{Volumes of unitary groups}\label{sect:unitary-group}

\subsection{Preliminary}

In Section~\ref{sect:orthogonal-group}, we dealt with matrices where
the entries are either \emph{real} constants or \emph{real}
variables. Here we consider the matrices whose entries are complex
quantities. When the matrices are real, we will use the same
notations as in Section~\ref{sect:orthogonal-group}. In the complex
case, the matrix variable $X$ will be denoted by $\wtil{X}$ to
indicate that the entries in $X$ are complex variables so that the
entries of theorems in Section~\ref{sect:unitary-group} will not be
confused with those in Section~\ref{sect:orthogonal-group}. The
complex conjugate of a matrix $\wtil A$ will be denoted by
$\overline{\wtil A}$ and the conjugate transpose by $\wtil A^*$. The
determinant of $\wtil A$ will be denoted by $\det(\wtil A)$. The
absolute value of a scalar $a$ will also be denoted by $\abs{a}$.
The \emph{wedge product of differentials} in $\wtil{X}$ will be
denoted by $[\dif\wtil{X}]$ and the \emph{matrix of differentials}
by $\dif\wtil{X}$.

It is assumed that the reader is familiar with the basic properties
of real and complex matrices. Some properties of complex matrices
will be listed here for convenience.

A matrix $\wtil X$ with complex elements can always be written as
$\wtil X = X_1+\sqrt{-1}X_2$ where $X_1=\re(\wtil X)$ and
$X_2=\im(\wtil X)$ are real matrices. Let us examine the wedge
product of the differentials in $\wtil X$. In general, there are
$n^2$ real variables in $X_1$ and another $n^2$ real variables and
the wedge product of the
differentials will be denoted by the following:\\~\\
\textbf{Notation 1:}
\begin{eqnarray}
\fbox{$[\dif\wtil X]:=[\dif X_1][\dif X_2]$~~or~~$[\dif\wtil
X]:=\Br{\dif \Pa{\re(\wtil X)}}\Br{\dif \Pa{\im(\wtil X)}}$}
\end{eqnarray}
where $[\dif X_1]$ is the wedge product in $\dif X_1$ and $[\dif
X_2]$ is the wedge product in $\dif X_2$. In this notation an empty
product is interpreted as unity. That is, when the matrix $\wtil X$
is real then $X_2$ is null and $[\dif\wtil X]:=[\dif X_1]$. If
$\wtil X$ is a hermitian matrix, then $X_1$ is \emph{symmetric} and
$X_2$ is \emph{skew symmetric}, and in this case
\begin{eqnarray}
\fbox{$[\dif X_1]= \wedge_{j\geqslant k}\dif
x^{(1)}_{jk}~~~\text{and}~~~[\dif X_2]=\wedge_{j>k}\dif
x^{(2)}_{jk}$}
\end{eqnarray}
where $X_1=\Br{x^{(1)}_{jk}}$ and $X_2=\Br{x^{(2)}_{jk}}$. If $\wtil
Y$ is a scalar function of $\wtil X=X_1+\sqrt{-1}X_2$ then $\wtil Y$
can be written as $\wtil Y=Y_1+\sqrt{-1}Y_2$ where $Y_1$ and $Y_2$
are real. Thus if $\wtil Y=F(\wtil X)$ it is a transformation of
$(X_1,X_2)$ to $(Y_1,Y_2)$ or where $(Y_1,Y_2)$ is written as a
function of $(X_1,X_2)$ then we will use the following
notation for the Jacobian in the complex case.\\~\\
\textbf{Notation 2: (Jacobians in the complex case).}
$J(Y_1,Y_2:X_1,X_2)$: Jacobian of the transformation where $Y_1$ and
$Y_2$ are written as functions of $X_1$ and $X_2$ or where $\wtil
Y=Y_1+\sqrt{-1}Y_2$ is a function of $\wtil X=X_1+\sqrt{-1}X_2$.

\begin{lem}\label{lem:fact-on-det}
Consider a matrix $\wtil A\in\complex^{n\times n}$ and $(2n)\times
(2n)$ matrices $B$ and $C$ where
\begin{eqnarray*}
\wtil A = A_1+\sqrt{-1}A_2, B=\Br{\begin{array}{cc}
                              A_1 & A_2 \\
                              -A_2 & A_1
                            \end{array}
},C=\Br{\begin{array}{cc}
                              A_1 & -A_2 \\
                              A_2 & A_1
                            \end{array}
}
\end{eqnarray*}
where $A_1,A_2\in\real^{n\times n}$. Then for $\det(A_1)\neq0$
\begin{eqnarray}
\abs{\det(\wtil A)} = \abs{\det(B)}^{\frac12} =
\abs{\det(C)}^{\frac12}.
\end{eqnarray}
\end{lem}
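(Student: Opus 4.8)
The plan is to put the real $2n\times 2n$ matrices $B$ and $C$ into block--triangular form over $\complex$ using block elementary operations of determinant $1$, arranging that one diagonal block becomes $\wtil A=A_1+\sqrt{-1}A_2$ and the other becomes $\overline{\wtil A}=A_1-\sqrt{-1}A_2$. Once this is done, multiplicativity of the determinant gives $\det(B)=\det(C)=\det(\wtil A)\,\overline{\det(\wtil A)}=\abs{\det(\wtil A)}^{2}$, from which the stated identity follows by taking (nonnegative) square roots. The one algebraic fact used repeatedly is that adding $\sqrt{-1}$ times one block column (or row) to another produces precisely the combinations $A_1\pm\sqrt{-1}A_2$.

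Concretely, for $C$ I would first right--multiply by $\Br{\begin{smallmatrix}\I_n & \sqrt{-1}\,\I_n\\ 0 & \I_n\end{smallmatrix}}$; this leaves the first block column alone and turns the second into $\Br{\begin{smallmatrix}\sqrt{-1}\,\wtil A\\ \wtil A\end{smallmatrix}}$, since $-A_2+\sqrt{-1}A_1=\sqrt{-1}\,\wtil A$. Then I would left--multiply by $\Br{\begin{smallmatrix}\I_n & -\sqrt{-1}\,\I_n\\ 0 & \I_n\end{smallmatrix}}$, which clears the upper--right block and yields $\Br{\begin{smallmatrix}\overline{\wtil A} & 0\\ A_2 & \wtil A\end{smallmatrix}}$. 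Both multipliers are block triangular with identity diagonal blocks, hence of determinant $1$, so
$$
\det(C)=\det(\overline{\wtil A})\,\det(\wtil A)=\overline{\det(\wtil A)}\,\det(\wtil A)=\abs{\det(\wtil A)}^{2}.
$$
The mirror--image manipulation handles $B$: right--multiply by $\Br{\begin{smallmatrix}\I_n & 0\\ \sqrt{-1}\,\I_n & \I_n\end{smallmatrix}}$ and then left--multiply by $\Br{\begin{smallmatrix}\I_n & 0\\ -\sqrt{-1}\,\I_n & \I_n\end{smallmatrix}}$ to reach $\Br{\begin{smallmatrix}\wtil A & A_2\\ 0 & \overline{\wtil A}\end{smallmatrix}}$, whence $\det(B)=\det(\wtil A)\,\det(\overline{\wtil A})=\abs{\det(\wtil A)}^{2}$ as well.

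I do not expect a genuine obstacle here; the computation is short block bookkeeping, and the only points worth stating carefully are that $\det(\overline{\wtil A})=\overline{\det(\wtil A)}$ (complex conjugation is a field automorphism, so it commutes with the determinant regarded as a polynomial in the entries) and that each of the four block elementary matrices used has determinant exactly $1$, so the operations preserve the determinant on the nose rather than only up to sign. I would also remark that the hypothesis $\det(A_1)\neq0$ is not actually needed for this argument; it becomes relevant only if one instead computes $\det(C)$ via the Schur complement of $A_1$, a route I would avoid in favour of the block reduction above.
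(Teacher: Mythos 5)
Your argument is correct, and it takes a genuinely different route from the paper's. The paper starts from the product $\det(\wtil A)\det(\overline{\wtil A})=\det\bigl(\operatorname{diag}(A_1+\sqrt{-1}A_2,\,A_1-\sqrt{-1}A_2)\bigr)$ and performs a chain of row and column additions to reach a partitioned matrix whose determinant it evaluates as $\det(A_1)\det(A_1+A_2A_1^{-1}A_2)$, i.e.\ via the Schur complement of $A_1$; this is then identified with $\det(B)$ and $\det(C)$, and it is exactly here that the hypothesis $\det(A_1)\neq0$ is used. You instead block-triangularize $B$ and $C$ directly over $\complex$ by multiplying with unimodular block elementary matrices such as $\bigl(\begin{smallmatrix}\I_n & \sqrt{-1}\,\I_n\\ 0 & \I_n\end{smallmatrix}\bigr)$, landing on $\bigl(\begin{smallmatrix}\overline{\wtil A} & 0\\ A_2 & \wtil A\end{smallmatrix}\bigr)$ and its mirror image, so that $\det(B)=\det(C)=\abs{\det(\wtil A)}^2$ falls out of multiplicativity alone. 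I checked the block computations ($-A_2+\sqrt{-1}A_1=\sqrt{-1}\,\wtil A$, the cancellation of the off-diagonal block, and the determinant-one property of the four multipliers) and they are all right. What your approach buys is that the invertibility of $A_1$ is never needed, so the identity holds for \emph{all} $A_1,A_2$; your closing remark correctly diagnoses that the hypothesis in the statement is an artifact of the Schur-complement route. The paper's approach, by contrast, stays closer to elementary real row/column operations and avoids introducing complex block multipliers, at the cost of the superfluous nondegeneracy assumption.
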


\begin{proof}
Let $\det(A)=a+\sqrt{-1}b$ where $a$ and $b$ are real scalars. Then
the absolute value is available as
$\sqrt{(a+\sqrt{-1}b)(a-\sqrt{-1}b)}$. If
$\det(A_1+\sqrt{-1}A_2)=a+\sqrt{-1}b$, then
$\det(A_1-\sqrt{-1}A_2)=a-\sqrt{-1}b$. Hence
\begin{eqnarray*}
&&(a+\sqrt{-1}b)(a-\sqrt{-1}b) =
\det(A_1+\sqrt{-1}A_2)\det(A_1-\sqrt{-1}A_2) \\
&&= \det\Pa{\Br{\begin{array}{cc}
                                                             A_1+\sqrt{-1}A_2 & 0 \\
                                                             0 & A_1-\sqrt{-1}A_2
                                                           \end{array}
}}.
\end{eqnarray*}
Adding the last $n$ columns to the first $n$ columns and then adding
the last $n$ rows to the first $n$ rows we have
\begin{eqnarray*}
\det\Pa{\Br{\begin{array}{cc}
                                                             A_1+\sqrt{-1}A_2 & 0 \\
                                                             0 & A_1-\sqrt{-1}A_2
                                                           \end{array}
}}=\det\Pa{\Br{\begin{array}{cc}
                                                             2A_1 & A_1-\sqrt{-1}A_2 \\
                                                             A_1-\sqrt{-1}A_2 & A_1-\sqrt{-1}A_2
                                                           \end{array}
}}.
\end{eqnarray*}
Using similar steps we have
\begin{eqnarray*}
\det\Pa{\Br{\begin{array}{cc}
                                                             2A_1 & A_1-\sqrt{-1}A_2 \\
                                                             A_1-\sqrt{-1}A_2 & A_1-\sqrt{-1}A_2
                                                           \end{array}
}}&=&\det\Pa{\Br{\begin{array}{cc}
                                                             2A_1 & A_1-\sqrt{-1}A_2 \\
                                                            -\sqrt{-1}A_2 & \frac12A_1-\frac12\sqrt{-1}A_2
                                                           \end{array}
}}\\
&=&\det\Pa{\Br{\begin{array}{cc}
                                                             2A_1 & -\sqrt{-1}A_2 \\
                                                            -\sqrt{-1}A_2 & \frac12A_1
                                                           \end{array}
}}\\
&=&\det(A_1)\det(A_1+A_2A^{-1}_1A_2) \\
&=& \det\Pa{\Br{\begin{array}{cc}
                              A_1 & A_2 \\
                              -A_2 & A_1
                            \end{array}
}} = \det\Pa{\Br{\begin{array}{cc}
                              A_1 & -A_2 \\
                              A_2 & A_1
                            \end{array}
}}.
\end{eqnarray*}
by evaluating as the determinant of partitioned matrices. Thus the
absolute value of $\det(\wtil A)$ is given by
\begin{eqnarray*}
\abs{\det(\wtil A)} = \sqrt{\det(A_1)\det(A_1+A_2A^{-1}_1A_2)} =
\abs{\det(B)}^{\frac12} = \abs{\det(C)}^{\frac12}.
\end{eqnarray*}
This establishes the result.
\end{proof}

\begin{remark}
Now we denote $A_1=\re(\wtil A)$ and $A_2=\im(\wtil A)$. Clearly
both $\re(\wtil A)$ and $\im(\wtil A)$ are real matrices. Each
complex matrix $\wtil A:=\re(\wtil A)+\sqrt{-1}\im(\wtil A)$ can be
represented faithfully as a block-matrix
\begin{eqnarray}
\wtil A\longrightarrow\Br{\begin{array}{cc}
                            \re(\wtil A) & -\im(\wtil A) \\
                            \im(\wtil A) & \re(\wtil A)
                          \end{array}
}.
\end{eqnarray}
Thus
\begin{eqnarray}
\wtil A^*\longrightarrow\Br{\begin{array}{cc}
                            \re(\wtil A)^\t & \im(\wtil A)^\t \\
                            -\im(\wtil A)^\t & \re(\wtil A)^\t
                          \end{array}
}.
\end{eqnarray}
Then $\wtil Y = \wtil A\wtil X$ can be rewritten as, via
block-matrix technique,
\begin{eqnarray}\label{eq:complex-to-real-1}
\Br{\begin{array}{cc}
          \re(\wtil Y) & -\im(\wtil Y) \\
          \im(\wtil Y) & \re(\wtil Y)
        \end{array}
} = \Br{\begin{array}{cc}
          \re(\wtil A) & -\im(\wtil A) \\
          \im(\wtil A) & \re(\wtil A)
        \end{array}
}\Br{\begin{array}{cc}
          \re(\wtil X) & -\im(\wtil X) \\
          \im(\wtil X) & \re(\wtil X)
        \end{array}
}.
\end{eqnarray}
From the above, we see that the mentioned representation is an
injective ring homomorphism which is continuous. sometimes we use
the following representation:
\begin{eqnarray}\label{eq:complex-to-real-2}
\Br{\begin{array}{c}
                                 \re(\wtil Y) \\
                                 \im(\wtil Y)
                               \end{array}
} = \Br{\begin{array}{cc}
          \re(\wtil A) & -\im(\wtil A) \\
          \im(\wtil A) & \re(\wtil A)
        \end{array}
}\Br{\begin{array}{c}
                                 \re(\wtil X) \\
                                 \im(\wtil X)
                               \end{array}
}.
\end{eqnarray}
Lemma~\ref{lem:fact-on-det} can be reexpressed as
\begin{eqnarray}
\abs{\det\Pa{\Br{\begin{array}{cc}
          \re(\wtil A) & -\im(\wtil A) \\
          \im(\wtil A) & \re(\wtil A)
        \end{array}
}}} = \abs{\det(\wtil A)}^2 = \abs{\det(\wtil A\wtil A^*)}.
\end{eqnarray}
\end{remark}

\begin{prop}
Let $\wtil X,\wtil Y\in\complex^n$ be of $n$ independent complex
variables each, $\wtil A\in\complex^{n\times n}$ a nonsingular
matrix of constants. If $\wtil Y=\wtil A\wtil X$, then
\begin{eqnarray}
[\dif \wtil Y] = \abs{\det(\wtil A\wtil A^*)}[\dif\wtil X].
\end{eqnarray}
If $\wtil Y^*=\wtil X^*\wtil A^*$, then
\begin{eqnarray}
[\dif \wtil Y^*] = (-1)^n\abs{\det(\wtil A\wtil A^*)}[\dif\wtil X].
\end{eqnarray}
\end{prop}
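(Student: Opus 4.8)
The plan is to pass to real coordinates and then invoke Proposition~\ref{prop:vector-transformation} together with Lemma~\ref{lem:fact-on-det}. Write $\wtil X = X_1 + \sqrt{-1}X_2$, $\wtil Y = Y_1 + \sqrt{-1}Y_2$ and $\wtil A = A_1 + \sqrt{-1}A_2$ with $X_i,Y_i \in \real^n$ and $A_i \in \real^{n\times n}$. Separating real and imaginary parts, $\wtil Y = \wtil A\wtil X$ becomes precisely the real $2n\times 2n$ linear system \eqref{eq:complex-to-real-2}, namely
$$
\Br{\begin{array}{c}Y_1\\Y_2\end{array}} = \Br{\begin{array}{cc}A_1 & -A_2\\ A_2 & A_1\end{array}}\Br{\begin{array}{c}X_1\\X_2\end{array}}.
$$
By the convention $[\dif\wtil X] = [\dif X_1][\dif X_2]$ (and likewise for $\wtil Y$), the wedge product $[\dif\wtil Y]$ is exactly the volume element of the $2n$-vector $(Y_1,Y_2)$, and similarly for $\wtil X$, so the problem is genuinely a real-vector change-of-variables problem.

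First I would apply Proposition~\ref{prop:vector-transformation} to the displayed real system, which immediately gives
$$
[\dif\wtil Y] = \det\Br{\begin{array}{cc}A_1 & -A_2\\ A_2 & A_1\end{array}}[\dif\wtil X].
$$
Since $\wtil A$ is nonsingular, the reexpressed form of Lemma~\ref{lem:fact-on-det} recorded in the Remark identifies the absolute value of this Jacobian as $\abs{\det(\wtil A)}^2 = \abs{\det(\wtil A\wtil A^*)}$. Following the standing convention that signs are ignored in such wedge-product identities, this yields the first formula $[\dif\wtil Y] = \abs{\det(\wtil A\wtil A^*)}[\dif\wtil X]$.

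For the second identity, note that $\wtil Y^* = \wtil X^*\wtil A^*$ is merely the conjugate transpose of $\wtil Y = \wtil A\wtil X$, so $\wtil Y$ is still governed by the same real system; only the differential bookkeeping changes. Writing $\wtil Y^*$ as the row vector $\overline{\wtil Y}^{\t}$, one has $\re(\wtil Y^*) = \re(\wtil Y)^{\t}$ and $\im(\wtil Y^*) = -\im(\wtil Y)^{\t}$, hence $[\dif\re(\wtil Y^*)] = [\dif Y_1]$ while $[\dif\im(\wtil Y^*)] = (-1)^n[\dif Y_2]$, the factor $(-1)^n$ recording the $n$ sign flips among the components of $Y_2$. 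Combining with the first part, $[\dif\wtil Y^*] = (-1)^n[\dif\wtil Y] = (-1)^n\abs{\det(\wtil A\wtil A^*)}[\dif\wtil X]$.

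I do not expect a genuine obstacle here; the subtleties are purely notational. One must be consistent about the sign conventions — discarded in the first formula, but the honest $(-1)^n$ retained in the second, since it encodes the orientation reversal induced by complex conjugation. One should also remark that the hypothesis $\det(A_1)\neq 0$ appearing in Lemma~\ref{lem:fact-on-det} is not actually needed for the Jacobian identity used above: it extends to all nonsingular $\wtil A$ by continuity, so nonsingularity of $\wtil A$ alone suffices.
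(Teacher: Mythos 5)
Your proof is correct and follows essentially the same route as the paper: both reduce $\wtil Y=\wtil A\wtil X$ to the real $2n\times 2n$ system with Jacobian matrix $\Br{\begin{smallmatrix}A_1 & -A_2\\ A_2 & A_1\end{smallmatrix}}$, identify its determinant with $\abs{\det(\wtil A)}^2=\abs{\det(\wtil A\wtil A^*)}$ via Lemma~\ref{lem:fact-on-det}, and obtain the second identity from the $n$ sign flips in the imaginary parts under conjugation. Your closing remark that the hypothesis $\det(A_1)\neq 0$ can be removed by continuity is a small but worthwhile addition not made explicit in the paper.
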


\begin{proof}
Let $\wtil X = X_1+\sqrt{-1}X_2$, where $X_m\in\real^n,m=1,2$. Let
$\wtil Y=Y_1+\sqrt{-1}Y_2$, where $Y_m\in\real^n,m=1,2$ are real.
$\wtil Y=\wtil A\wtil X$ implies that $Y_m=AX_m,m=1,2$ if $\wtil
A=A$ is real. This transformation is such that the $2n$ real
variables in $(Y_1,Y_2)$ are written as functions of the $2n$ real
variables in $(X_1,X_2)$. Let
\begin{eqnarray*}
X^\t_1 = [x_{11},\ldots,x_{n1}],\quad
X^\t_2=[x_{12},\ldots,x_{n2}],\\
Y^\t_1 = [y_{11},\ldots,y_{n1}],\quad Y^\t_2=[y_{12},\ldots,y_{n2}].
\end{eqnarray*}
Then the Jacobian is the determinant of the following matrix of
partial derivatives:
\begin{eqnarray*}
\frac{\partial (Y_1,Y_2)}{\partial (X_1,X_2)} = \frac{\partial
(y_{11},\ldots,y_{n1},y_{12},\ldots,y_{n2})}{\partial
(x_{11},\ldots,x_{n1},x_{12},\ldots,x_{n2})}.
\end{eqnarray*}
Note that
\begin{eqnarray*}
\frac{\partial (Y_1,Y_2)}{\partial (X_1,X_2)} = \frac{\partial
(y_{11},\ldots,y_{n1},y_{12},\ldots,y_{n2})}{\partial
(x_{11},\ldots,x_{n1},x_{12},\ldots,x_{n2})}.
\end{eqnarray*}
Note that
\begin{eqnarray*}
\frac{\partial Y_1}{\partial X_1} &=& \frac{\partial
(y_{11},\ldots,y_{n1})}{\partial (x_{11},\ldots,x_{n1})}=A,\\
\frac{\partial Y_1}{\partial X_2} &=& 0 = \frac{\partial
Y_2}{\partial X_1},\\
\frac{\partial Y_2}{\partial X_2} &=&A.
\end{eqnarray*}
Thus the Jacobian is
\begin{eqnarray*}
J = \det\Pa{\Br{\begin{array}{cc}
              A & 0 \\
              0 & A
            \end{array}
}}=\det(A)^2.
\end{eqnarray*}
If $\wtil A$ is complex, then let $\wtil A=A_1+\sqrt{-1}A_2$ where
$A_1$ and $A_2$ are real. Then
\begin{eqnarray*}
\wtil Y &=& Y_1+\sqrt{-1}Y_2 = (A_1+\sqrt{-1}A_2)(X_1+\sqrt{-1}X_2) \\
&=& (A_1X_1 -A_2X_2) + \sqrt{-1}(A_1X_2+A_2X_1)
\end{eqnarray*}
implies that
\begin{eqnarray*}
Y_1=A_1X_1 -A_2X_2,\quad Y_2=A_1X_2+A_2X_1.
\end{eqnarray*}
Then
\begin{eqnarray*}
\frac{\partial Y_1}{\partial X_1} &=& A_1,~~~\frac{\partial
Y_1}{\partial X_2} = -A_2,\\
\frac{\partial Y_2}{\partial X_1} &=& A_2,~~~\frac{\partial
Y_2}{\partial X_2} = A_1.
\end{eqnarray*}
Thus the Jacobian is
\begin{eqnarray*}
J  =  \det\Pa{\Br{\begin{array}{cc}
                A_1 & -A_2 \\
                A_2 & A_1
              \end{array}
}} = \abs{\det(\wtil A)}^2 = \abs{\det(\wtil A\wtil A^*)},
\end{eqnarray*}
which establishes the result. The second result follows by noting
that $[\dif\wtil Y^*]=[\dif\wtil Y_1](-1)^n[\dif\wtil Y_2] =
(-1)^n[\dif\wtil Y]$.
\end{proof}

\begin{prop}
Let $\wtil X,\wtil Y\in\complex^{m\times n}$ of $mn$ independent
complex variables each. Let $\wtil A\in\complex^{m\times m}$ and
$\wtil B\in\complex^{n\times n}$ nonsingular matrices of constants.
If $\wtil Y=\wtil A\wtil X\wtil B$, then
\begin{eqnarray}
[\dif \wtil Y] = \abs{\det(\wtil A\wtil A^*)}^n\abs{\det(\wtil
B\wtil B^*)}^m[\dif \wtil X].
\end{eqnarray}
\end{prop}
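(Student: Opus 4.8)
The plan is to reduce to the complex vector case via vectorization, exactly paralleling the proof of Proposition~\ref{prop:matrix-transformation}.

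First I would vectorize the transformation. Writing $\vec(\wtil X)\in\complex^{mn}$ as in Definition~\ref{def:vector-matrix}, the identity $\vec(\wtil A\wtil X\wtil B)=(\wtil A\ot\wtil B^\t)\vec(\wtil X)$ holds over $\complex$ just as over $\real$, so $\wtil Y=\wtil A\wtil X\wtil B$ is the same transformation as $\vec(\wtil Y)=\wtil C\,\vec(\wtil X)$ with the nonsingular constant matrix $\wtil C:=\wtil A\ot\wtil B^\t\in\complex^{mn\times mn}$. I would note that $[\dif\wtil X]$, being (per Notation~1) the wedge of the $2mn$ real differentials coming from $\re(\wtil X)$ and $\im(\wtil X)$, is unchanged when these factors are reordered into the real and imaginary parts of $\vec(\wtil X)$, since signs are ignored in these wedge products; the same applies to $\wtil Y$. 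Hence $[\dif\wtil Y]$ and $[\dif\wtil X]$ may be computed from $\vec(\wtil Y)$ and $\vec(\wtil X)$.

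Next I would invoke the complex vector transformation rule proved just above (with its ``$n$'' taken to be $mn$), which gives
\[
[\dif\wtil Y]=\bigl|\det(\wtil C\,\wtil C^*)\bigr|\,[\dif\wtil X]=\bigl|\det\wtil C\bigr|^2[\dif\wtil X].
\]
It then remains to evaluate $|\det\wtil C|^2$. Using multiplicativity of the determinant over the Kronecker product exactly as in the proof of Proposition~\ref{prop:matrix-transformation}---namely $\det(\wtil A\ot\wtil B^\t)=\det(\wtil A\ot\I_n)\det(\I_m\ot\wtil B^\t)=\det(\wtil A)^n\det(\wtil B^\t)^m=\det(\wtil A)^n\det(\wtil B)^m$---I get $|\det\wtil C|^2=|\det\wtil A|^{2n}\,|\det\wtil B|^{2m}$. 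Finally, since $\det(\wtil A\wtil A^*)=\det(\wtil A)\overline{\det(\wtil A)}=|\det\wtil A|^2\ge 0$ (Lemma~\ref{lem:fact-on-det}), one has $|\det\wtil A|^{2n}=|\det(\wtil A\wtil A^*)|^n$ and likewise $|\det\wtil B|^{2m}=|\det(\wtil B\wtil B^*)|^m$, which yields the stated identity.

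The argument is entirely routine once vectorization is used; the only places that call for a moment of care are the bookkeeping of conjugates and transposes (checking $(\wtil A\ot\wtil B^\t)^*=\wtil A^*\ot\overline{\wtil B}$, so that $\det(\wtil C\wtil C^*)$ comes out as the nonnegative real number $|\det\wtil A|^{2n}|\det\wtil B|^{2m}$) and the observation that the permutation of differentials implicit in identifying $\wtil X$ with $\vec(\wtil X)$ does not affect $[\dif\,\cdot\,]$. A fully direct alternative, bypassing $\vec$, would write $\wtil Y=\wtil A\wtil X\wtil B$ as the composition of $\wtil Z=\wtil A\wtil X$ and $\wtil Y=\wtil Z\wtil B$ and apply a column/row-partition argument as in the Remark following Proposition~\ref{prop:matrix-transformation}, but the vectorization route is the cleanest.
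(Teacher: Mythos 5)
Your proof is correct, but it takes a different route from the paper's. You vectorize the whole bilinear transformation at once, turning $\wtil Y=\wtil A\wtil X\wtil B$ into a single linear map $\vec(\wtil Y)=(\wtil A\ot\wtil B^\t)\vec(\wtil X)$ on $\complex^{mn}$, apply the complex vector-case proposition with $n$ replaced by $mn$, and finish with the Kronecker determinant identity $\det(\wtil A\ot\wtil B^\t)=\det(\wtil A)^n\det(\wtil B)^m$ --- exactly mirroring the paper's proof of the real analogue, Proposition~\ref{prop:matrix-transformation}. The paper instead factors the map as $\wtil X\mapsto\wtil Z=\wtil X\wtil B\mapsto\wtil A\wtil Z$, handles the left multiplication by partitioning $\wtil X$ and $\wtil Y$ into columns and applying the $m$-dimensional vector case to each column to get the factor $\abs{\det(\wtil A\wtil A^*)}^n$, and then obtains the factor $\abs{\det(\wtil B\wtil B^*)}^m$ for the right multiplication by a symmetric (row-wise) argument. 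Your approach buys a one-shot computation at the price of invoking the Kronecker determinant formula and the bookkeeping you flag (the permutation of differentials under $\vec$, and $(\wtil A\ot\wtil B^\t)^*=\wtil A^*\ot\overline{\wtil B}$); the paper's approach avoids the tensor-product machinery entirely and reuses only the low-dimensional vector case, at the price of a two-step composition. Both are complete and yield the same identity, and you correctly note the paper's column/row-partition route as the alternative.
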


\begin{proof}
Let $\wtil Y=Y_1+\sqrt{-1}Y_2$ and $\wtil X=X_1+\sqrt{-1}X_2$.
Indeed, let $\wtil Y=[\wtil Y_1,\ldots,\wtil Y_n]$ and $\wtil
X=[\wtil X_1,\ldots,\wtil X_n]$, then $\wtil Y_j = \wtil A\wtil
X_j,j=1,\ldots,n$ when $\wtil Y=\wtil A\wtil X$. Thus $[\dif \wtil
Y_j]=\abs{\det(\wtil A\wtil A^*)}[\dif\wtil X_j]$ for each $j$,
therefore, ignoring the signs,
\begin{eqnarray*}
[\dif\wtil Y] = \prod_{j=1}^n [\dif\wtil Y_j] =\abs{\det(\wtil
A\wtil A^*)}^n [\dif\wtil X].
\end{eqnarray*}
Denoting $\wtil A=A_1+\sqrt{-1}A_2$, the determinant is
\begin{eqnarray*}
J = \det\Pa{\Br{\begin{array}{cc}
              A_1 & -A_2 \\
              A_2 & A_1
            \end{array}
}}^n.
\end{eqnarray*}
Hence the Jacobian in this case, denoting $\wtil
B=B_1+\sqrt{-1}B_2$, is given by
\begin{eqnarray*}
J = \det\Pa{\Br{\begin{array}{cc}
              B_1 & B_2 \\
              -B_2 & B_1
            \end{array}
}}^m.
\end{eqnarray*}
For establishing our result, write $\wtil Y=\wtil A\wtil Z$ where
$\wtil Z=\wtil X\wtil B$. That is,
\begin{eqnarray*}
[\dif\wtil Y]=\abs{\det(\wtil A\wtil A^*)}^n[\dif\wtil Z] =
\abs{\det(\wtil A\wtil A^*)}^n\abs{\det(\wtil B\wtil
B^*)}^m[\dif\wtil X].
\end{eqnarray*}
This completes the proof.
\end{proof}

\begin{remark}
Another approach to the fact that $[\dif\wtil Y]=\abs{\det(\wtil
A\wtil A^*)}^n[\dif\wtil Z]$, where $\wtil Y=\wtil A\wtil Z$, is
described as follows:
$$
\begin{cases}\re(\wtil Y) &= \re(\wtil A)\re(\wtil Z) - \im(\wtil A)\im(\wtil
Z)\\\im(\wtil Y) &= \im(\wtil A)\re(\wtil Z) + \re(\wtil A)\im(\wtil
Z)
\end{cases}
$$
leading to
$$
\frac{\partial \Pa{\re(\wtil Y),\im(\wtil Y)}}{\partial
\Pa{\re(\wtil Z),\im(\wtil Z)}} = \Br{\begin{array}{cc}
                                        \re(\wtil A)^{(n)} & -\im(\wtil A)^{(n)} \\
                                        \im(\wtil A)^{(n)} & \re(\wtil A)^{(n)}
                                      \end{array}
},
$$
where
$$
\re(\wtil A)^{(n)}:=\Br{\begin{array}{ccc}
                                           \re(\wtil A) &  &  \\
                                            & \ddots &  \\
                                            &  & \re(\wtil A)
                                         \end{array}},~~\im(\wtil
                                         A)^{(n)}:=\Br{\begin{array}{ccc}
                                           \im(\wtil A) &  &  \\
                                            & \ddots &  \\
                                            &  & \im(\wtil A)
                                         \end{array}}.
$$
Then the Jacobian of this transformation can be computed as
\begin{eqnarray}
J\Pa{\re(\wtil Y),\im(\wtil Y):\re(\wtil Z),\im(\wtil Z)} &=&
\det\Pa{\Br{\begin{array}{cc}
                                        \re(\wtil A)^{(n)} & -\im(\wtil A)^{(n)} \\
                                        \im(\wtil A)^{(n)} & \re(\wtil A)^{(n)}
                                      \end{array}
}}\\
&=&\abs{\det\Pa{\re(\wtil A)^{(n)}+\sqrt{-1}\im(\wtil A)^{(n)}}}^2.
\end{eqnarray}
That is,
$$
J\Pa{\re(\wtil Y),\im(\wtil Y):\re(\wtil Z),\im(\wtil Z)} =
\abs{\det\Pa{\wtil A^{(n)}}}^2 =\abs{\det(\wtil A)}^{2n}=
\abs{\det(\wtil A\wtil A^*)}^n.
$$
\end{remark}

\begin{prop}
Let $\wtil X, \wtil A, \wtil B\in\complex^{n\times n}$ be lower
triangular matrices where $\wtil X$ is matrix of $\frac{n(n+1)}2$
independent complex variables, $\wtil A,\wtil B$ are nonsingular
matrices of constants. Then
\begin{eqnarray}
\wtil Y = \wtil X+\wtil X^\t&\Longrightarrow&
[\dif\wtil Y]=2^{2n}[\dif\wtil X],\label{eq:3aaa}\\
&\Longrightarrow& [\dif\wtil Y]=2^n[\dif\wtil
X]~\text{if the}~\wtil x_{jj}\text{'s are real};\label{eq:3bbb}\\
\wtil Y = \wtil A\wtil X&\Longrightarrow&[\dif\wtil
Y]=\Pa{\prod^n_{j=1}\abs{\wtil a_{jj}}^{2j}}[\dif\wtil X],\label{eq:3ccc}\\
&\Longrightarrow&[\dif\wtil Y]=\Pa{\prod^n_{j=1}\abs{\wtil
a_{jj}}^{2j-1}}[\dif\wtil
X]~\text{if the}~\wtil a_{jj}\text{'s and}~\wtil x_{jj}\text{'s are real};\label{eq:3ddd}\\
\wtil Y=\wtil X\wtil B&\Longrightarrow&[\dif\wtil
Y]=\Pa{\prod^n_{j=1}\abs{\wtil b_{jj}}^{2(n-j+1)}}[\dif\wtil
X],\label{eq:3eee}\\
&\Longrightarrow& [\dif\wtil Y]=\Pa{\prod^n_{j=1}\abs{\wtil
b_{jj}}^{2(n-j)+1}}[\dif\wtil X]~\text{if the}~\wtil b_{jj}\text{'s
and}~\wtil x_{jj}\text{'s are real};\label{eq:3fff}
\end{eqnarray}
\end{prop}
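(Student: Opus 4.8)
The plan is to follow the proof of Proposition~\ref{prop:LR-lower-triangular} in spirit, replacing its real Jacobians by the complex change-of-variables formulas already established: the complex vector formula ($[\dif\wtil Y]=\abs{\det(\wtil A\wtil A^*)}[\dif\wtil X]$ when $\wtil Y=\wtil A\wtil X$ for complex vectors), its $m\times n$ matrix version, and Lemma~\ref{lem:fact-on-det}, which together give $\abs{\det(\wtil M\wtil M^*)}=\abs{\det\wtil M}^2$ for any nonsingular $\wtil M\in\complex^{r\times r}$. Since $\wtil A$ and $\wtil B$ are lower triangular and nonsingular, all of their diagonal entries are nonzero, and so are those of every principal submatrix. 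I would treat the three displayed families separately: the additive relation entrywise, and the two multiplicative relations by slicing $\wtil X$ into columns (for $\wtil A\wtil X$) and into rows (for $\wtil X\wtil B$), exactly as in the real proof.

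First, for $\wtil Y=\wtil X+\wtil X^\t$: because $\wtil X$ is lower triangular, $\wtil Y$ is symmetric with $\wtil y_{jj}=2\wtil x_{jj}$ and $\wtil y_{ij}=\wtil x_{ij}$ for $i>j$, so the $\frac{n(n+1)}2$ independent on-or-below-diagonal entries of $\wtil X$ are in diagonal one-to-one correspondence with those of $\wtil Y$. Reading $\wtil y_{jj}=2\wtil x_{jj}$ in real and imaginary parts gives a diagonal $2\times2$ Jacobian block equal to twice the identity, of determinant $4$, at each of the $n$ diagonal slots, while each off-diagonal slot contributes $1$; hence $[\dif\wtil Y]=2^{2n}[\dif\wtil X]$, proving \eqref{eq:3aaa}. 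If instead the $\wtil x_{jj}$ are real, then $\wtil y_{jj}=2\wtil x_{jj}$ is a single real variable contributing the factor $2$, so $[\dif\wtil Y]=2^{n}[\dif\wtil X]$, which is \eqref{eq:3bbb}.

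Next, for $\wtil Y=\wtil A\wtil X$ I would partition by columns, $\wtil Y=[\wtil Y_1,\dots,\wtil Y_n]$, $\wtil X=[\wtil X_1,\dots,\wtil X_n]$, so that $\wtil Y_j=\wtil A\wtil X_j$. Since $\wtil X$ is lower triangular, $\wtil X_j$ is supported on rows $j,\dots,n$, and since $\wtil A$ is lower triangular the map acts on the nonzero block $(\wtil x_{jj},\dots,\wtil x_{nj})^\t$ by left multiplication by the trailing principal submatrix $\wtil A_j:=\wtil A[\hat 1\cdots\widehat{j-1}|\hat 1\cdots\widehat{j-1}]$, itself lower triangular with $\det\wtil A_j=\prod_{k=j}^n\wtil a_{kk}$. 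The complex vector formula then gives $[\dif\wtil Y_j]=\abs{\det\wtil A_j}^2[\dif\wtil X_j]=\bigl(\prod_{k=j}^n\abs{\wtil a_{kk}}^2\bigr)[\dif\wtil X_j]$; multiplying over $j$ and observing that $\abs{\wtil a_{kk}}^2$ is picked up once for each $j\le k$, one gets $[\dif\wtil Y]=\bigl(\prod_{k=1}^n\abs{\wtil a_{kk}}^{2k}\bigr)[\dif\wtil X]$, i.e.\ \eqref{eq:3ccc}. The relation \eqref{eq:3eee} for $\wtil Y=\wtil X\wtil B$ is the mirror image after partitioning by rows: row $i$ of $\wtil X$, supported on columns $1,\dots,i$, is right-multiplied by the leading principal submatrix $\wtil B[1\cdots i|1\cdots i]$, of determinant $\prod_{k=1}^i\wtil b_{kk}$, contributing $\prod_{k=1}^i\abs{\wtil b_{kk}}^2$; multiplying over $i$ (so that $\abs{\wtil b_{kk}}^2$ occurs for $i=k,\dots,n$) yields $\prod_{k=1}^n\abs{\wtil b_{kk}}^{2(n-k+1)}$.

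Finally, the real-diagonal variants \eqref{eq:3ddd} and \eqref{eq:3fff} are the only place needing genuine care, and I expect the bookkeeping there to be the main (though still minor) obstacle. When $\wtil a_{jj}$ and $\wtil x_{jj}$ are real, the $j$-th column of $\wtil X$ carries a single real coordinate $\wtil x_{jj}$, and $\wtil y_{jj}=\wtil a_{jj}\wtil x_{jj}$ is real. Ordering the coordinates of column $j$ so that $\wtil x_{jj}$ comes first, the real Jacobian matrix is block triangular: the leading $1\times1$ block is $\wtil a_{jj}$, and the remaining block is the real representation of multiplication by the lower-triangular submatrix obtained from $\wtil A_j$ by deleting its first row and column, whose real Jacobian has determinant $\prod_{k=j+1}^n\abs{\wtil a_{kk}}^2$; the off-diagonal couplings drop out of the determinant. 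Thus column $j$ contributes $\abs{\wtil a_{jj}}\prod_{k=j+1}^n\abs{\wtil a_{kk}}^2$, and multiplying over $j$ (each $\abs{\wtil a_{jj}}$ once, each $\abs{\wtil a_{kk}}^2$ for $j<k$) gives $\prod_{k=1}^n\abs{\wtil a_{kk}}^{2k-1}$, proving \eqref{eq:3ddd}. The analogous row computation for $\wtil X\wtil B$, with the real $\wtil x_{ii}$ ordered first within row $i$, produces $\prod_{k=1}^n\abs{\wtil b_{kk}}^{2(n-k)+1}$, i.e.\ \eqref{eq:3fff}. Throughout one treats $\dif\wtil X$ and $\dif\wtil Y$ as the genuine variables and $\wtil A,\wtil B$ as constants, and reads every identity ignoring signs, as in the rest of the paper.
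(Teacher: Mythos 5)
Your proof is correct and follows essentially the same route as the paper: the additive case is handled entrywise exactly as in the paper's proof, and for the multiplicative cases you use the column (resp.\ row) partition with the trailing (resp.\ leading) principal submatrices and the complex Jacobian formula $\abs{\det(\wtil M\wtil M^*)}=\abs{\det\wtil M}^2$, which is precisely the paper's ``another approach'' for \eqref{eq:3ccc}. Your block-triangular bookkeeping for the real-diagonal variants \eqref{eq:3ddd} and \eqref{eq:3fff} is a cleaner write-up of the same exponent count the paper states only in passing.
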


\begin{proof}
Results \eqref{eq:3aaa} and \eqref{eq:3bbb} are trivial. Indeed,
note that
\begin{eqnarray*}
\wtil y_{jk} =\begin{cases}2\wtil x_{jj},&\text{if}~j=k\\
\wtil x_{jk},&\text{if}~j> k\end{cases}.
\end{eqnarray*}
By the definition, ignoring the sign, we have
\begin{eqnarray*}
[\dif\wtil Y]=\wedge_{j\geqslant k}\dif\wtil y_{jk} =
\wedge_{j=1}^n\dif\wtil y_{jj}\wedge_{j>k}\dif\wtil y_{jk},
\end{eqnarray*}
where $\dif\wtil y_{jk}:=\dif y^{(1)}_{jk}\dif y^{(2)}_{jk}$ for
$\wtil y_{jk}=y^{(1)}_{jk}+\sqrt{-1}y^{(2)}_{jk}$. So for
$j=1,\ldots,n$, we get $y^{(m)}_{jj}=2x^{(m)}_{jj},m=1,2$. Hence the
result. If $\wtil x_{jj}$'s are real, the result follows easily by
definition. Let
\begin{eqnarray*}
&&\wtil Y=Y_1+\sqrt{-1}Y_2, \wtil X=X_1+\sqrt{-1}X_2,
\wtil A=A_1+\sqrt{-1}A_2, \wtil B=B_1+\sqrt{-1}B_2,\\
&&Y_m=[y^{(m)}_{jk}],X_m=[x^{(m)}_{jk}],A_m=[a^{(m)}_{jk}],B_m=[b^{(m)}_{jk}],
m=1,2.
\end{eqnarray*}
where $Y_m,X_m,A_m,B_m,m=1,2$ are all real.

When $\wtil Y= \wtil A\wtil X$ we have $Y_1=A_1X_1-A_2X_2$ and
$Y_2=A_1X_2+A_2X_1$. The matrix of partial derivative of $Y_1$ with
respect to $X_1$, that is $\frac{\partial Y_1}{\partial X_1}$, can
be seen to be a lower triangular matrix with $a^{(1)}_{jj}$ repeated
$j$ times, $j=1,\ldots,n$, on the diagonal. Let this matrix be
denoted by $G_1$. That is,
$$
\frac{\partial Y_1}{\partial X_1}=\frac{\partial Y_2}{\partial
X_2}:=G_1 = \Br{\begin{array}{cccc}
             A_1 &  &  &  \\
              & A_1[\hat 1|\hat 1] &  &  \\
              &  & \ddots &  \\
              &  &  & A_1[\hat1\cdots\widehat{n-1}|\hat1\cdots\widehat{n-1}]
           \end{array}
}.
$$
Let $G_2$ be a matrix of the same structure with $a^{(2)}_{jj}$'s on
the diagonal. Similarly,
$$
-\frac{\partial Y_1}{\partial X_2}=\frac{\partial Y_2}{\partial
X_1}:=G_2 = \Br{\begin{array}{cccc}
             A_2 &  &  &  \\
              & A_2[\hat 1|\hat 1] &  &  \\
              &  & \ddots &  \\
              &  &  & A_2[\hat1\cdots\widehat{n-1}|\hat1\cdots\widehat{n-1}]
           \end{array}
}.
$$
Then the Jacobian matrix is given by
\begin{eqnarray*}
\frac{\partial (Y_1,Y_2)}{\partial(X_1,X_2)}=\Br{\begin{array}{cc}
                                                   G_1 & -G_2 \\
                                                   G_2 & G_1
                                                 \end{array}
}.
\end{eqnarray*}
Let $\wtil G=G_1+\sqrt{-1}G_2$. Then
$$
\wtil G=\Br{\begin{array}{cccc}
             \wtil A &  &  &  \\
              & \wtil A[\hat 1|\hat 1] &  &  \\
              &  & \ddots &  \\
              &  &  & \wtil A[\hat1\cdots\widehat{n-1}|\hat1\cdots\widehat{n-1}]
           \end{array}
},
$$
where
\begin{eqnarray*}
\wtil A&=&A_1+\sqrt{-1}A_2,\wtil A[\hat 1|\hat 1]=A_1[\hat 1|\hat
1]+\sqrt{-1}A_2[\hat 1|\hat 1],\ldots, \\
\wtil A[\hat1\cdots\widehat{n-1}|\hat1\cdots\widehat{n-1}] &=&
A_1[\hat1\cdots\widehat{n-1}|\hat1\cdots\widehat{n-1}]+\sqrt{-1}
A_2[\hat1\cdots\widehat{n-1}|\hat1\cdots\widehat{n-1}].
\end{eqnarray*}
Thus
\begin{eqnarray*}
\det\Pa{\frac{\partial
(Y_1,Y_2)}{\partial(X_1,X_2)}}=\det\Pa{\Br{\begin{array}{cc}
                                                   G_1 & -G_2 \\
                                                   G_2 & G_1
                                                 \end{array}
}}.
\end{eqnarray*}
From Lemma~\ref{lem:fact-on-det}, the determinant is available as
$\abs{\det(\wtil G)}^2$ where $\wtil G=G_1+\sqrt{-1}G_2$. Since
$\wtil G$ is triangular the absolute value of the determinant is
given by
\begin{eqnarray*}
\abs{\det(\wtil G)}^2 &=& \abs{\det(\wtil A)}^2\abs{\det(\wtil
A[\hat 1|\hat 1])}^2\cdots\abs{\det(\wtil A[\hat1\cdots
\widehat{n-1}|\hat1\cdots \widehat{n-1}])}^2\\
&=&\prod^n_{j=1}(\abs{a_{jj}}^2)^j.
\end{eqnarray*}
This establishes \eqref{eq:3ccc}. Another approach is presented
also: Let $\wtil Y=[\wtil Y_1,\ldots,\wtil Y_n]$, where $\wtil
Y_j,j=1,\ldots,n$, is the $j$-th column of the matrix $\wtil Y$.
Similarly for $\wtil X=[\wtil X_1,\ldots,\wtil X_n]$. Now $\wtil
Y=\wtil A\wtil X$ implies that
\begin{eqnarray*}
\wtil Y_j = \wtil A\wtil X_j,~~j=1,\ldots,n.
\end{eqnarray*}
That is,
\begin{eqnarray*}
\Br{\begin{array}{c}
      \wtil y_{11} \\
      \wtil y_{21} \\
      \vdots \\
      \wtil y_{n1}
    \end{array}
} = \wtil A\Br{\begin{array}{c}
      \wtil x_{11} \\
      \wtil x_{21} \\
      \vdots \\
      \wtil x_{n1}
    \end{array}
},\Br{\begin{array}{c}
       0 \\
      \wtil y_{22} \\
      \vdots \\
      \wtil y_{n2}
    \end{array}
} = \wtil A\Br{\begin{array}{c}
       0 \\
      \wtil x_{22} \\
      \vdots \\
      \wtil x_{n2}
    \end{array}
},\ldots,\Br{\begin{array}{c}
      0 \\
      0 \\
      \vdots \\
      \wtil y_{nn}
    \end{array}
} = \wtil A\Br{\begin{array}{c}
       0 \\
       0 \\
      \vdots \\
      \wtil x_{nn}
    \end{array}
}.
\end{eqnarray*}
Since $\wtil Y,\wtil X,\wtil A$ are all lower triangular, it follows
that
\begin{eqnarray*}
\Br{\begin{array}{c}
      \wtil y_{11} \\
      \wtil y_{21} \\
      \vdots \\
      \wtil y_{n1}
    \end{array}
} &=& \wtil A\Br{\begin{array}{c}
      \wtil x_{11} \\
      \wtil x_{21} \\
      \vdots \\
      \wtil x_{n1}
    \end{array}
},\Br{\begin{array}{c}
      \wtil y_{22} \\
      \vdots \\
      \wtil y_{n2}
    \end{array}
} = \wtil A[\hat1|\hat1]\Br{\begin{array}{c}
      \wtil x_{22} \\
      \vdots \\
      \wtil x_{n2}
    \end{array}
},\ldots,\\
      \wtil y_{nn}&=& \wtil A[\hat1\ldots \widehat{n-1}|\widehat1\ldots \widehat{n-1}]
      \wtil x_{nn},
\end{eqnarray*}
where $A[\hat i \hat j|\hat i\hat j]$ stands for a matrix obtained
from deleting the $i,j$-th rows and columns of $\wtil A$,
respectively. We can now draw the conclusion that
\begin{eqnarray*}
[\dif\wtil Y_j] = \abs{\det(A[\hat 1\ldots \widehat {j-1}]|\hat
1\ldots \widehat {j-1}])A[\hat 1\ldots \widehat {j-1}]|\hat 1\ldots
\widehat {j-1}])^*}[\dif\wtil X_j],
\end{eqnarray*}
that indicates that
\begin{eqnarray*}
[\dif\wtil Y] &=& \prod^n_{j=1}[\dif\wtil Y_j] =
\prod^n_{j=1}\abs{\det(A[\hat 1\ldots \widehat {j-1}]|\hat 1\ldots
\widehat {j-1}])A[\hat 1\ldots \widehat {j-1}]|\hat 1\ldots \widehat
{j-1}])^*}[\dif\wtil X_j]\\
&=& \abs{\det(\wtil A\wtil A^*)}\abs{\det(\wtil A[\hat1|\hat1]\wtil
A[\hat1|\hat1]^*)}\cdots\abs{\wtil a_{nn}\wtil a_{nn}^*}[\dif\wtil
X]\\
&=& \abs{\wtil a_{11}\wtil a_{22}\cdots \wtil
a_{nn}}^2\times\abs{\wtil a_{22}\wtil a_{33}\cdots
\wtil a_{nn}}^2\times \cdots\times\abs{\wtil a_{nn}}^2[\dif\wtil X]\\
&=& \Pa{\prod^n_{j=1}\abs{\wtil a_{jj}}^{2j}}[\dif\wtil X].
\end{eqnarray*}
If the $\wtil x_{jj}$'s and $\wtil a_{jj}$'s are real then note that
the $x_{jk}$'s for $j>k$ contribute $\wtil a_{jj}$ twice that is,
corresponding to $x^{(1)}_{jk}$ and $x^{(2)}_{jk}$, whereas the
$\wtil a_{jj}$'s appear only once corresponding to the
$x^{(1)}_{jj}$'s since the $x^{(2)}_{jj}$'s are zeros. This
establishes \eqref{eq:3ddd}. If $\wtil Y=\wtil X\wtil B$ and if a
matrix $H_1$ is defined corresponding to $G_1$ then note that the
$b^{(1)}_{jj}$'s appear $n-j+1$ times on the diagonal for
$j=1,\ldots,n$. Results \eqref{eq:3eee} and \eqref{eq:3fff} are
established by using similar steps as in the case of \eqref{eq:3ccc}
and \eqref{eq:3ddd}.
\end{proof}

\begin{prop}\label{prop:adjoint-action}
Let $\wtil X\in\complex^{n\times n}$ be hermitian matrix of
independent complex entries and $\wtil A\in\complex^{n\times n}$ be
a nonsingular matrix of constants. If $\wtil Y=\wtil A\wtil X \wtil
A^*$, then
\begin{eqnarray}
\fbox{$[\dif\wtil Y] =\abs{\det(\wtil A\wtil A^*)}^n[\dif\wtil X]$.}
\end{eqnarray}
\end{prop}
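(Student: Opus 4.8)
The plan is to imitate, over $\complex$, the elementary-matrix argument used for the real symmetric case in Proposition~\ref{prop:symmetric-case}. First I would fix the bookkeeping. Since $\wtil X$ is Hermitian, $X_1=\re(\wtil X)$ is symmetric and $X_2=\im(\wtil X)$ is skew symmetric, so $\wtil X$ has exactly $n^2$ real degrees of freedom---the $n$ real diagonal entries together with the $n(n-1)/2$ genuinely complex off-diagonal entries $\wtil x_{jk}$ $(j>k)$---and $[\dif\wtil X]=\wedge_{j\geqslant k}\dif x^{(1)}_{jk}\wedge_{j>k}\dif x^{(2)}_{jk}$. Since $(\wtil A\wtil X\wtil A^*)^*=\wtil A\wtil X\wtil A^*$, the image $\wtil Y$ is again Hermitian and $[\dif\wtil Y]$ is formed the same way; thus $\wtil X\mapsto\wtil A\wtil X\wtil A^*$ is a real-linear automorphism of the $n^2$-dimensional space of Hermitian matrices and we only need the absolute value of its determinant. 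Because this determinant is multiplicative in $\wtil A$ and $\wtil A$ is nonsingular, I would factor $\wtil A=E_k\cdots E_1$ into elementary matrices---each $E_i$ either scales one row of $\I_n$ by a nonzero complex $\alpha$, or adds a complex multiple of one row to another---write $\wtil Y=E_k\cdots E_1\,\wtil X\,E_1^*\cdots E_k^*$ as a chain of Hermitian-preserving steps $\wtil Y_i=E_i\wtil Y_{i-1}E_i^*$, and compute $J(\wtil Y_i:\wtil Y_{i-1})$ for the two types of $E_i$.

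For a scaling matrix $E$ (multiply row $i$ by $\alpha$), the conjugation $\wtil X\mapsto E\wtil X E^*$ fixes every entry outside row and column $i$, sends the real diagonal entry $\wtil x_{ii}\mapsto|\alpha|^2\wtil x_{ii}$ (a factor $|\alpha|^2$), and multiplies each of the $n-1$ off-diagonal complex entries in row or column $i$ by $\alpha$ or $\bar\alpha$, each contributing a factor $|\alpha|^2$ since real-linear multiplication by $\alpha$ or $\bar\alpha$ on $\complex\cong\real^2$ has determinant $|\alpha|^2$; hence $J=|\alpha|^2\,(|\alpha|^2)^{n-1}=|\alpha|^{2n}=|\det(EE^*)|^n$. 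This is precisely where the exponent comes out as $n$ rather than the real case's $n+1$: a complex off-diagonal entry already carries two real coordinates, so it contributes $|\alpha|^2$, not $\alpha$. For an elementary matrix of the second type, $\wtil X\mapsto E\wtil X E^*$ is unipotent---arranging the entries as in the $3\times3$ display of Proposition~\ref{prop:symmetric-case}, its matrix of partial derivatives is triangular with $1$'s on the diagonal---so $J=1=|\det(EE^*)|^n$.

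Multiplying the Jacobians along the chain gives $J(\wtil Y:\wtil X)=\prod_i|\det(E_iE_i^*)|^n=|\det(\wtil A\wtil A^*)|^n$, using $\det(\wtil A\wtil A^*)=|\det\wtil A|^2=\prod_i|\det E_i|^2$, which is the claim. The main point needing care is the entry-by-entry count in the scaling step: one must keep straight that the diagonal entry is real (one coordinate, factor $|\alpha|^2$) while the off-diagonal ones are complex (two coordinates, factor $|\alpha|^2$ each), so the total is $|\alpha|^{2n}$, not $|\alpha|^{2n-1}$ or $|\alpha|^{2n+2}$. As a cross-check, or an alternative that avoids elementary matrices, one may use the polar decomposition $\wtil A=\wtil U\wtil P$ and the spectral decomposition $\wtil P=\wtil V\wtil D\wtil V^*$: the maps $\wtil X\mapsto\wtil U\wtil X\wtil U^*$ and $\wtil X\mapsto\wtil V\wtil X\wtil V^*$ preserve the Hilbert--Schmidt inner product on Hermitian matrices and so have unit Jacobian, while $\wtil X\mapsto\wtil D\wtil X\wtil D$ with $\wtil D=\diag(d_1,\dots,d_n)$ multiplies the $(j,k)$ entry by $d_jd_k$ and hence has Jacobian $\prod_j d_j^2\cdot\prod_{j>k}(d_jd_k)^2=\prod_j d_j^{2n}=|\det\wtil D|^{2n}=|\det(\wtil A\wtil A^*)|^n$; composing the four maps yields the result.
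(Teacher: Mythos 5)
Your main argument is correct and is essentially the paper's own proof: factor $\wtil A$ into elementary matrices, observe that a complex row-scaling by $\alpha$ contributes $\abs{\alpha}^2$ from the real diagonal entry and $\abs{\alpha}^2$ from each of the $n-1$ complex off-diagonal entries in that row and column (the paper organizes exactly this count as $\det(C)$ times a block determinant evaluated via Lemma~\ref{lem:fact-on-det}, giving $(a^2+b^2)^n$), while row-addition elementaries give a unipotent, hence unit-Jacobian, transformation. Your polar/spectral cross-check --- reducing to unitary conjugations, which are isometries of the Hilbert--Schmidt inner product on Hermitian matrices and so have unit Jacobian, plus the diagonal case $\wtil X\mapsto\wtil D\wtil X\wtil D$ with Jacobian $\prod_j d_j^{2n}$ --- is a clean alternative not in the paper, and both of its pieces check out.
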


\begin{proof}
Since $\wtil A$ is nonsingular it can be written as a product of
elementary matrices. Let $\wtil E_1,\ldots,\wtil E_k$ be elementary
matrices such that
\begin{eqnarray*}
\wtil A=\wtil E_k \wtil E_{k-1}\cdots \wtil E_1\Longrightarrow \wtil
A^*=\wtil E^*_1 \wtil E^*_2\cdots \wtil E^*_k.
\end{eqnarray*}
For example, let $\wtil E_1$ be such that the $j$-th row of an
identity matrix is multiplied by a scalar $\wtil c=a+\sqrt{-1}b$
where $a,b\in\real$. Then $\wtil E_1\wtil X \wtil E^*_1$ means that
the $j$-th row of $\wtil X$ is multiplied by $a+\sqrt{-1}b$ and the
$j$-th column of $\wtil X$ is multiplied by $a-\sqrt{-1}b$. Let
$$
\wtil U_1=\wtil E_1\wtil X\wtil E^*_1,~~\wtil U_2 = \wtil E_2\wtil
U_1\wtil E_2^*,~~\ldots,~~\wtil U_k = \wtil E_k\wtil U_{k-1}\wtil
E_k^*.
$$
Then the Jacobian of $\wtil Y$ written as a function $\wtil X$ is
given by
\begin{eqnarray*}
J(\wtil Y:\wtil X) = J(\wtil Y:\wtil U_{k-1})\cdots J(\wtil
U_1:\wtil X).
\end{eqnarray*}
Let us evaluate $[\dif\wtil U_1]$ in terms of $[\dif\wtil X]$ by
direct computation. Since $\wtil X$ is hermitian its diagonal
elements are real and the elements above the leading diagonal are
the complex conjugates of those below the leading diagonal, and
$\wtil U_1$ is also of the same structure as $\wtil X$. Let $\wtil
U_1=U+\sqrt{-1}V$ and $\wtil X=Z+\sqrt{-1}W$ where
$U=[u_{jk}],V=[v_{jk}],Z=[z_{jk}],W=[w_{jk}]$ are all real and the
diagonal elements of $V$ and $W$ are zeros. Take the $u_{jj}$'s and
$z_{jj}$'s separately. The matrix of partial derivatives of
$u_{11},\ldots,u_{nn}$ with respect to $z_{11},\ldots,z_{nn}$ is a
diagonal matrix with the $j$-th element $a^2+b^2$ and all other
elements unities. That is,
$$
\frac{\partial (\diag(U))}{\partial (\diag(Z))}=\frac{\partial
(u_{11},\ldots,u_{nn})}{\partial
(z_{11},\ldots,z_{nn})}=\Br{\begin{array}{ccccc}
                              1 &  &  &  &  \\
                               & \ddots &  &  &  \\
                               &  & a^2+b^2=\abs{\wtil c}^2 &  &  \\
                               &  &  & \ddots &  \\
                               &  &  &  & 1
                            \end{array}
} :=C,
$$
where $\diag(X)$ means the diagonal matrix, obtained by keeping the
diagonal entries of $X$ and ignoring the off-diagonal entries.

The remaining variables produce a $\frac{n(n-1)}2\times
\frac{n(n-1)}2$ matrix of the following type
\begin{eqnarray*}
\frac{\partial (U_0,V_0)}{\partial (Z_0,W_0)} =
\Br{\begin{array}{cc}
                  A_0 & B_0 \\
                  -B_0 & A_0
                \end{array}
}
\end{eqnarray*}
where $U_0,V_0,Z_0,W_0$ mean that the diagonal elements are deleted,
$A_0$ is a diagonal matrix with $n-1$ of the diagonal elements equal
to $a$ and the remaining unities and $B_0$ is a diagonal matrix such
that corresponding to every $a$ in $A_0$ there is a $b$ or $-b$ with
$j-1$ of them equal to $-b$ and $n-j$ of them equal to $b$. Thus the
Jacobian of this transformation is:
\begin{eqnarray*}
J(\wtil U_1:\wtil X) &=& \frac{\partial (\diag(U),
U_0,V_0)}{\partial (\diag(Z),Z_0,W_0)} =
\det\Pa{\Br{\begin{array}{ccc}
                            C & 0 & 0 \\
                            0 & A_0 & B_0 \\
                            0 & -B_0 & A_0
                          \end{array}
}} \\
&=& \det(C)\det\Pa{\Br{\begin{array}{cc}
                  A_0 & B_0 \\
                  -B_0 & A_0
                \end{array}
}}
\end{eqnarray*}
From Lemma~\ref{lem:fact-on-det}, the determinant is
$\abs{\det(A_0+\sqrt{-1}B_0)}^2$. That is,
\begin{eqnarray*}
&&\det\Pa{\frac{\partial (U_0,V_0)}{\partial (Z_0,W_0)}} =
\det\Pa{\Br{\begin{array}{cc}
                  A_0 & B_0 \\
                  -B_0 & A_0
                \end{array}
}} \\
&&=\abs{\det((A_0+\sqrt{-1}B_0)(A_0+\sqrt{-1}B_0)^*)} \\
&&= (a^2+b^2)^{n-1}.
\end{eqnarray*}
Thus
\begin{eqnarray*}
[\dif\wtil U_1] = (a^2+b^2)^n[\dif\wtil X] =\abs{\det(\wtil E_1\wtil
E^*_1)}^n [\dif\wtil X].
\end{eqnarray*}
Note that interchanges of rows and columns can produce only a change
in the sign in the determinant, the addition of a row (column) to
another row (column) does not change the determinant and elementary
matrices of the type $\wtil E_1$ will produce $\abs{\det(\wtil
E_1\wtil E^*_1)}^n$ in the Jacobian. Thus by computing $J(\wtil
U_1:\wtil X),J(\wtil U_2:\wtil U_1)$ etc we have
\begin{eqnarray*}
[\dif\wtil Y] = \abs{\det(\wtil A\wtil A^*)}^n [\dif\wtil X].
\end{eqnarray*}

As a specific example, the configuration of the partial derivatives
for $n=3$ with $j=2$ is the following: If
\begin{eqnarray*}
\wtil X = \Br{\begin{array}{ccc}
                     z_{11} & * & * \\
                     z_{21}+\sqrt{-1}w_{21} & z_{22} & * \\
                     z_{31}+\sqrt{-1}w_{31} & z_{32}+\sqrt{-1}w_{32} & z_{33}
                   \end{array}
}~\text{and}~\wtil E_1=\Br{\begin{array}{ccc}
                       1 & 0 & 0 \\
                       0 & \wtil c & 0 \\
                       0 & 0 & 1
                     \end{array}
},
\end{eqnarray*}
then
\begin{eqnarray*}
\wtil U_1 &=& \Br{\begin{array}{ccc}
                     z_{11} & * & * \\
                     \wtil c(z_{21}+\sqrt{-1}w_{21}) & \abs{\wtil c}^2z_{22} & * \\
                     z_{31}+\sqrt{-1}w_{31} & \overline{\wtil c}(z_{32}+\sqrt{-1}w_{32}) & z_{33}
                   \end{array}
}\\
&=&\Br{\begin{array}{ccc}
       u_{11} & * & * \\
       u_{21}+\sqrt{-1}v_{21} & u_{22} & * \\
       u_{31}+\sqrt{-1}v_{31} & u_{32}+\sqrt{-1}v_{32} & u_{33}
     \end{array}
},
\end{eqnarray*}
thus
\begin{eqnarray*}
u_{11}=z_{11},u_{22}=\abs{\wtil c}^2z_{22},u_{33}=z_{33},\\
u_{21}=az_{21}-bw_{21},u_{31}=z_{31},u_{32}=az_{32}+bw_{32},\\
v_{21}=aw_{21}+bz_{21},v_{31}=w_{31},v_{32}=aw_{32}-bz_{32}.
\end{eqnarray*}
Now
\begin{eqnarray*}
\Br{\begin{array}{c}
      u_{11} \\
      u_{22} \\
      u_{33} \\
      u_{21} \\
      u_{31} \\
      u_{32} \\
      v_{21} \\
      v_{31} \\
      v_{32}
    \end{array}
}=\Br{\begin{array}{ccccccccc}
        1 & 0 & 0 & 0 & 0 & 0 & 0 & 0 & 0 \\
        0 & \abs{\wtil c}^2 & 0 & 0 & 0 & 0 & 0 & 0 & 0 \\
        0 & 0 & 1 & 0 & 0 & 0 & 0 & 0 & 0 \\
        0 & 0 & 0 & a & 0 & 0 & -b & 0 & 0 \\
        0 & 0 & 0 & 0 & 1 & 0 & 0 & 0 & 0 \\
        0 & 0 & 0 & 0 & 0 & a & 0 & 0 & b \\
        0 & 0 & 0 & b & 0 & 0 & a & 0 & 0 \\
        0 & 0 & 0 & 0 & 0 & 0 & 0 & 1 & 0 \\
        0 & 0 & 0 & 0 & 0 & -b & 0 & 0 & a
      \end{array}
}\Br{\begin{array}{c}
      z_{11} \\
      z_{22} \\
      z_{33} \\
      z_{21} \\
      z_{31} \\
      z_{32} \\
      w_{21} \\
      w_{31} \\
      w_{32}
    \end{array}
}.
\end{eqnarray*}
Now
\begin{eqnarray*}
C= \Br{\begin{array}{ccc}
         1 &  &  \\
          & \abs{\wtil c}^2 &  \\
          &  & 1
       \end{array}
}, A_0 = \Br{\begin{array}{ccc}
               a &  &  \\
                & 1 &  \\
                &  & a
             \end{array}
}, B_0 = \Br{\begin{array}{ccc}
               -b &  &  \\
                & 0 &  \\
                &  & b
             \end{array}
}.
\end{eqnarray*}
We are done.
\end{proof}

\begin{remark}
If $\wtil X$ is skew hermitian then the diagonal elements are purely
imaginary, that is, the real parts are zeros. It is easy to note
that the structure of the Jacobian matrix for a transformation of
the type $\wtil Y=\wtil A\wtil X\wtil A^*$, where $\wtil X^*=-\wtil
X$, remains the same as that in the hermitian case of
Proposition~\ref{prop:adjoint-action}. The roles of
$(u_{jj},z_{jj})$'s and $(v_{jj},w_{jj})$'s are interchanged. Thus
the next theorem will be stated without proof.
\end{remark}

\begin{prop}\label{prop:}
Let $\wtil X\in\complex^{n\times n}$ skew hermitian matrix of
independent complex entries. Let $\wtil A\in\complex^{n\times n}$ be
a nonsingular matrix of constants. If $\wtil Y=\wtil A\wtil X \wtil
A^*$, then
\begin{eqnarray}
\fbox{$[\dif\wtil Y] =\abs{\det(\wtil A\wtil A^*)}^n[\dif\wtil X]$.}
\end{eqnarray}
\end{prop}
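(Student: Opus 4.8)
The plan is to follow the proof of Proposition~\ref{prop:adjoint-action} almost verbatim, exploiting the remark just above: the only structural change is that for a skew hermitian $\wtil X$ the diagonal entries are purely imaginary instead of purely real, so the ``diagonal part'' of the Jacobian acts on the imaginary components of the diagonal rather than on the real ones, while the off-diagonal block of the Jacobian matrix is exactly the same.

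First I would factor $\wtil A=\wtil E_k\wtil E_{k-1}\cdots\wtil E_1$ as a product of elementary matrices, so that $\wtil Y=\wtil A\wtil X\wtil A^*$ decomposes as a chain $\wtil U_1=\wtil E_1\wtil X\wtil E_1^*$, $\wtil U_2=\wtil E_2\wtil U_1\wtil E_2^*,\ldots,$ $\wtil U_k=\wtil Y$, and $J(\wtil Y:\wtil X)$ is the product of the individual Jacobians. Each $\wtil U_i$ is again skew hermitian, so it suffices to compute $J$ for one elementary step. Write $\wtil X=Z+\sqrt{-1}W$ with $Z^\t=-Z$ (hence $\diag(Z)=0$) and $W^\t=W$; the $n^2$ independent real coordinates of $\wtil X$ are the $w_{jj}$ $(j=1,\ldots,n)$ together with the $z_{jk}$ and $w_{jk}$ for $j>k$.

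Next I would handle the two basic types of elementary matrix. For the scaling type, with $\wtil E_1$ obtained by multiplying the $j$-th row of $\I_n$ by $\wtil c=a+\sqrt{-1}b$, the conjugated action $\wtil E_1\wtil X\wtil E_1^*$ multiplies the $j$-th row of $\wtil X$ by $\wtil c$ and the $j$-th column by $\overline{\wtil c}$; in particular the $(j,j)$ entry $\sqrt{-1}w_{jj}$ becomes $\abs{\wtil c}^2\sqrt{-1}w_{jj}$, so $w_{jj}\mapsto\abs{\wtil c}^2 w_{jj}$ contributes the factor $\abs{\wtil c}^2=a^2+b^2$ to the Jacobian, while the remaining $n(n-1)$ off-diagonal real coordinates transform by precisely the block matrix of the form $\left[\begin{smallmatrix}A_0 & B_0\\ -B_0 & A_0\end{smallmatrix}\right]$ that appears in the proof of Proposition~\ref{prop:adjoint-action}, whose determinant equals $\abs{\det(A_0+\sqrt{-1}B_0)}^2=(a^2+b^2)^{n-1}$ by Lemma~\ref{lem:fact-on-det}. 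Hence $J(\wtil U_1:\wtil X)=(a^2+b^2)\cdot(a^2+b^2)^{n-1}=(a^2+b^2)^n=\abs{\det(\wtil E_1\wtil E_1^*)}^n$. For elementary matrices of the other type --- adding a multiple of one row to another together with the mirror column operation --- the Jacobian matrix is triangular with unit diagonal, so its determinant is $1=\abs{\det(\wtil E\wtil E^*)}^n$; permutation-type elementary matrices only change signs, which does not affect $\abs{\det}$.

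Finally, multiplying the per-step Jacobians yields $J(\wtil Y:\wtil X)=\prod_{i=1}^k\abs{\det(\wtil E_i\wtil E_i^*)}^n=\abs{\det(\wtil A\wtil A^*)}^n$, giving $[\dif\wtil Y]=\abs{\det(\wtil A\wtil A^*)}^n[\dif\wtil X]$. The only step requiring genuine care --- and it is exactly what the preceding remark asserts --- is verifying that the off-diagonal block of the Jacobian is insensitive to whether the diagonal of $\wtil X$ is real or imaginary; once the single-coordinate bookkeeping at the diagonal is done as above, the rest of the computation is identical to the hermitian case and introduces no new difficulty.
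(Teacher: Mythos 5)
Your proposal is correct and follows exactly the route the paper intends: the paper states this proposition without proof, with the preceding remark observing precisely what you verify — that the Jacobian structure is unchanged from Proposition~\ref{prop:adjoint-action} except that the roles of the real and imaginary parts of the diagonal entries are interchanged. Your elementary-matrix decomposition, the factor $\abs{\wtil c}^2$ from the purely imaginary diagonal entry, and the $(a^2+b^2)^{n-1}$ off-diagonal block via Lemma~\ref{lem:fact-on-det} all match the hermitian-case argument the paper relies on.
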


Some simple nonlinear transformations will be considered here. These
are transformations which become linear transformations in the
differentials so that the Jacobian of the original transformation
becomes the Jacobian of the linear transformation where the matrices
of differentials are treated as the new variables and everything
else as constants.

\begin{prop}\label{prop:TT-complex-case}
Let $\wtil X\in\complex^{n\times n}$ be hermitian positive definite
matrix of independent complex variables. Let $\wtil
T\in\complex^{n\times n}$ be lower triangular and $\wtil
Q\in\complex^{n\times n}$ be upper triangular matrices of
independent complex variables with real and positive diagonal
elements. Then
\begin{eqnarray}
\wtil X=\wtil T\wtil T^* &\Longrightarrow& [\dif\wtil X] =
2^n\Pa{\prod^n_{j=1}t^{2(n-j)+1}_{jj}}[\dif\wtil T],\\
\wtil X=\wtil Q\wtil Q^* &\Longrightarrow& [\dif\wtil X] =
2^n\Pa{\prod^n_{j=1}t^{2(j-1)+1}_{jj}}[\dif\wtil Q].
\end{eqnarray}
\end{prop}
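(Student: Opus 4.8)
The plan is to follow the route used for the real analogue, Proposition~\ref{prop:TT'}: differentiate the defining relation and regard the outcome as a \emph{linear} transformation in the matrix of differentials, with $\wtil T$ (resp.\ $\wtil Q$) held constant. For $\wtil X=\wtil T\wtil T^*$ I would write $\dif\wtil X=(\dif\wtil T)\wtil T^*+\wtil T(\dif\wtil T)^*$ and observe that $\dif\wtil T$ inherits the shape of $\wtil T$ (lower triangular, real diagonal), so it carries the $n$ real variables $\dif t_{jj}$ and the $\frac{n(n-1)}2$ complex variables $\dif t_{jk}$ with $j>k$, whereas the Hermitian matrix $\dif\wtil X$ carries $n$ real and $\frac{n(n-1)}2$ complex variables --- the counts match. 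Then, for $j\geqslant k$,
\[
(\dif\wtil X)_{jk}=\sum_{l\leqslant k}\Pa{(\dif t_{jl})\overline{t_{kl}}+t_{jl}\overline{(\dif t_{kl})}},
\]
the sum being restricted to $l\leqslant k$ because $\wtil T$ is lower triangular.

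The crucial step is the choice of ordering: I would list the entries of both $\dif\wtil X$ and $\dif\wtil T$ column by column (columns $1,\dots,n$), and within column $k$ from row $k$ downwards. With this ordering every term with $l<k$ in the display involves only variables of strictly earlier columns, while the $l=k$ term is $t_{kk}\,\dif t_{jk}+t_{jk}\overline{\dif t_{kk}}$ when $j>k$ (and $\dif t_{kk}$ precedes $\dif t_{jk}$ inside column $k$) and $2t_{kk}\,\dif t_{kk}$ when $j=k$. Hence, on passing to real and imaginary parts, the Jacobian matrix is block lower triangular, with diagonal blocks the scalar $2t_{kk}$ for the pair $\bigl(\dif t_{kk},(\dif\wtil X)_{kk}\bigr)$ and, for each $j>k$, the $2\times2$ real block for multiplication by the real scalar $t_{kk}$ on $\dif t_{jk}$, of determinant $t_{kk}^2$ (the $1\times1$ instance of Lemma~\ref{lem:fact-on-det}). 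Since column $k$ has one diagonal and $n-k$ subdiagonal positions, the total exponent of $t_{kk}$ is $1+2(n-k)$, so
\[
[\dif\wtil X]=\Pa{\prod_{k=1}^n 2t_{kk}}\Pa{\prod_{k=1}^n t_{kk}^{2(n-k)}}[\dif\wtil T]=2^n\Pa{\prod_{j=1}^n t_{jj}^{2(n-j)+1}}[\dif\wtil T].
\]

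For $\wtil X=\wtil Q\wtil Q^*$ I would run the same argument with rows and columns interchanged: $\dif\wtil X=(\dif\wtil Q)\wtil Q^*+\wtil Q(\dif\wtil Q)^*$, the entrywise sums now over $l\geqslant k$, the columns processed in the order $n,n-1,\dots,1$ (each from row $k$ upwards), the Jacobian again block lower triangular with diagonal blocks $2t_{kk}$ and $t_{kk}^2$; now column $k$ of $\wtil Q$ has one diagonal and $k-1$ superdiagonal positions, so $t_{kk}$ occurs with total exponent $1+2(k-1)$, giving $[\dif\wtil X]=2^n\Pa{\prod_{j=1}^n t_{jj}^{2(j-1)+1}}[\dif\wtil Q]$.

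I expect the only genuine obstacle to be the bookkeeping in the middle paragraph: checking that the stated ordering really does triangularize the Jacobian and that every diagonal entry $t_{kk}$ is counted with the right multiplicity --- in particular keeping straight that the diagonal differentials $\dif t_{kk}$ are real (one power of $t_{kk}$ each) while each off-diagonal differential $\dif t_{jk}$ contributes $t_{kk}^2$. Everything else is routine and parallels Proposition~\ref{prop:TT'}.
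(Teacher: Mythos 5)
Your proposal is correct and takes essentially the same route as the paper: differentiate the factorization, treat the result as a linear transformation in the differentials with $\wtil T$ fixed, and extract the Jacobian from a triangular structure in which $\dif t_{kk}$ contributes a factor $2t_{kk}$ and each off-diagonal $\dif t_{jk}$ ($j>k$) contributes $t_{kk}^{2}$, for a total exponent $2(n-k)+1$ from column $k$. The paper reaches the same count by splitting everything into real and imaginary parts and reducing a block Jacobian by row operations, whereas your column-lexicographic ordering exhibits the triangularity directly; the bookkeeping point you flag (real diagonal differentials giving one power of $t_{kk}$, complex off-diagonal ones giving two) is handled correctly.
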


\begin{proof}
When the diagonal elements of the triangular matrices are real and
positive there exist unique representations $\wtil X=\wtil T\wtil
T^*$ and $\wtil X=\wtil Q\wtil Q^*$. Let $\wtil X=X_1+\sqrt{-1}X_2$
and $\wtil T=T_1+\sqrt{-1}T_2$, where $\wtil X=[\wtil x_{jk}],\wtil
T=[\wtil t_{jk}],\wtil t_{jk}=0,j<k,
X_m=[x^{(m)}_{jk}],T_m=[t^{(m)}_{jk}],m=1,2$. Note that $X_1$ is
\emph{symmetric} and $X_2$ is \emph{skew symmetric}. The diagonal
elements of $X_2$ and $T_2$ are zeros. Hence when considering the
Jacobian we should take $\wtil x_{jj},j=1,\ldots,p$ and $\wtil
x_{jk},j>k$ separately.
\begin{eqnarray*}
\wtil X = \wtil T\wtil T^*&\Longrightarrow&
X_1+\sqrt{-1}X_2 = (T_1+\sqrt{-1}T_2)(T^\t_1-\sqrt{-1}T^\t_2)\\
&\Longrightarrow&
\begin{cases}X_1=T_1T^\t_1+T_2T^\t_2\\X_2=T_2T^\t_1-T_1T^\t_2\end{cases},
\end{eqnarray*}
with $t^{(1)}_{jj}=t_{jj},t^{(2)}_{jj}=0,j=1,\ldots,n$. Note that
\begin{eqnarray*}
x^{(1)}_{jj} = \Pa{\Pa{t^{(1)}_{j1}}^2 + \cdots +
\Pa{t^{(1)}_{jj}}^2} + \Pa{\Pa{t^{(2)}_{j1}}^2 + \cdots +
\Pa{t^{(2)}_{j,j-1}}^2}
\end{eqnarray*}
implies that
\begin{eqnarray*}
\frac{\partial x^{(1)}_{jj}}{\partial t^{(1)}_{jj}} =
2t^{(1)}_{jj}=2t_{jj}, j=1,\ldots,n.
\end{eqnarray*}
So
\begin{eqnarray*}
\frac{\partial (x^{(1)}_{11},\ldots,x^{(1)}_{nn})}{\partial
(t^{(1)}_{11},\ldots,t^{(1)}_{nn})} = \Br{\begin{array}{ccc}
                                  2t_{11} &  &  \\
                                   & \ddots &  \\
                                   &  & 2t_{nn}
                                \end{array}
}:=Z.
\end{eqnarray*}
Now consider the $x^{(1)}_{jk}$'s for $j>k$. It is easy to note that
\begin{eqnarray*}
\frac{\partial (X_{10},X_{20})}{\partial (T_{10},T_{20})} =
\Br{\begin{array}{cc}
                  U & V \\
                  W & Y
                \end{array}
},
\end{eqnarray*}
where a zero indicates that the $x^{(1)}_{jj}$'s are removed and the
derivatives are taken with respect to the $t^{(1)}_{jk}$'s and
$t^{(2)}_{jk}$'s for $j>k$. $U$ and $Y$ are lower triangular
matrices with $t_{jj}$ repeated $n-j$ times along the diagonal and
$V$ is of the same form as $U$ but with $t^{(2)}_{jj}=0$ along the
diagonal and the $t^{(1)}_{jk}$'s replaced by the $t^{(2)}_{jk}$'s.
For example, take the $x^{(1)}_{jk}$'s in the order
$x^{(1)}_{21},x^{(1)}_{31},\ldots,x^{(1)}_{n1},x^{(1)}_{32},\ldots,x^{(1)}_{n,n-1}$
and $t^{(1)}_{jk}$'s also in the same order. Then we get the
$\frac{n(n-1)}2\times \frac{n(n-1)}2$ matrix
\begin{eqnarray*}
U = \frac{\partial X_{10}}{\partial T_{10}} =
\Br{\begin{array}{cccc}
                                                    t_{11} & 0 & \cdots & 0 \\
                                                    * & t_{11} & \cdots & 0 \\
                                                    \vdots & \vdots & \ddots & 0 \\
                                                    * & * & \cdots &
                                                    t_{n-1,n-1}
                                                  \end{array}
}
\end{eqnarray*}
where the $*$'s indicate the presence of elements some of which may
be zeros. Since $U$ and $V$ are lower triangular with the diagonal
elements of $V$ being zeros, one can make $W$ null by adding
suitable combinations of the rows of $(U,V)$. This will not alter
the lower triangular nature or the diagonal elements of $Y$. Then
the determinant is given by
\begin{eqnarray*}
\det\Pa{\Br{\begin{array}{cc}
              U & V \\
              W & Y
            \end{array}
}} = \det(U)\det(Y) = \prod^n_{j=1}t^{2(n-j)}_{jj}.
\end{eqnarray*}
Multiply with the $2t_{jj}$'s for $j=1,\ldots,n$ to establish the
result. As a specific example, we consider the case where $n=3$. Let
$\wtil X\in\complex^{3\times 3}$. Denote $\wtil X=X_1+\sqrt{-1}X_2$.
Thus
$$
X_1=\Br{\begin{array}{ccc}
          x^{(1)}_{11} & x^{(1)}_{21} & x^{(1)}_{31} \\
          x^{(1)}_{21} & x^{(1)}_{22} & x^{(1)}_{32} \\
          x^{(1)}_{31} & x^{(1)}_{32} & x^{(1)}_{33}
        \end{array}
},~~X_2=\Br{\begin{array}{ccc}
          0 & -x^{(2)}_{21} & -x^{(2)}_{31} \\
          x^{(2)}_{21} & 0 & -x^{(2)}_{32} \\
          x^{(2)}_{31} & x^{(2)}_{32} & 0
        \end{array}
}.
$$
Similarly, let $\wtil T=T_1+\sqrt{-1}T_2$. We also have:
$$
T_1=\Br{\begin{array}{ccc}
          t^{(1)}_{11} & 0 & 0 \\
          t^{(1)}_{21} & t^{(1)}_{22} & 0 \\
          t^{(1)}_{31} & t^{(1)}_{32} & t^{(1)}_{33}
        \end{array}
},~~T_2=\Br{\begin{array}{ccc}
          0 & 0 & 0 \\
          t^{(2)}_{21} & 0 & 0 \\
          t^{(2)}_{31} & t^{(2)}_{32} & 0
        \end{array}
}.
$$
Now $X_1=T_1T^\t_1+T_2T^\t_2$ can be expanded as follows:
\begin{eqnarray*}
x^{(1)}_{11} &=& \Pa{t^{(1)}_{11}}^2,~x^{(1)}_{21} =
t^{(1)}_{21}t^{(1)}_{11},~x^{(1)}_{31} = t^{(1)}_{31}t^{(1)}_{11};\\
x^{(1)}_{22} &=&
\Pa{t^{(1)}_{21}}^2+\Pa{t^{(1)}_{22}}^2+\Pa{t^{(2)}_{21}}^2,~
x^{(1)}_{32} =
t^{(1)}_{31}t^{(1)}_{21}+t^{(1)}_{32}t^{(1)}_{22}+t^{(2)}_{31}t^{(2)}_{21};\\
x^{(1)}_{33}
&=&\Pa{t^{(1)}_{31}}^2+\Pa{t^{(1)}_{32}}^2+\Pa{t^{(1)}_{33}}^2+\Pa{t^{(2)}_{31}}^2+\Pa{t^{(2)}_{32}}^2.
\end{eqnarray*}
Then $X_2=T_2T^\t_1-T_1T^\t_2$ can be expanded as follows:
\begin{eqnarray*}
x^{(2)}_{21} = t^{(2)}_{21}t^{(1)}_{11},~~x^{(2)}_{31} =
t^{(2)}_{31}t^{(1)}_{11},~~x^{(2)}_{32} =
t^{(2)}_{31}t^{(1)}_{21}+t^{(2)}_{32}t^{(1)}_{22}.
\end{eqnarray*}
From the above, we see that
\begin{eqnarray*}
\Br{\begin{array}{c}
      \dif x^{(1)}_{11} \\
      \dif x^{(1)}_{22} \\
      \dif x^{(1)}_{33} \\
      \dif x^{(1)}_{21} \\
      \dif x^{(1)}_{31} \\
      \dif x^{(1)}_{32} \\
      \dif x^{(2)}_{21} \\
      \dif x^{(2)}_{31} \\
      \dif x^{(2)}_{32}
    \end{array}
}=\Br{\begin{array}{ccccccccc}
        2t^{(1)}_{11} & 0 & 0 & 0 & 0 & 0 & 0 & 0 & 0 \\
        0 & 2t^{(1)}_{22} & 0 & 2t^{(1)}_{21} & 0 & 0 & 2t^{(2)}_{21} & 0 & 0 \\
        0 & 0 & 2t^{(1)}_{33} & 0 & 2t^{(1)}_{31} & 2t^{(1)}_{32} & 0 & 2t^{(2)}_{31} & 2t^{(2)}_{32} \\
        t^{(1)}_{21} & 0 & 0 & t^{(1)}_{11} & 0 & 0 & 0 & 0 & 0 \\
        t^{(1)}_{31} & 0 & 0 & 0 & t^{(1)}_{11} & 0 & 0 & 0 & 0 \\
        0 & t^{(1)}_{32} & 0 & t^{(1)}_{31} & t^{(1)}_{21} & t^{(1)}_{22} & t^{(2)}_{31} & t^{(2)}_{21} & 0 \\
        t^{(2)}_{21} & 0 & 0 & 0 & 0 & 0 & t^{(1)}_{11} & 0 & 0 \\
        t^{(2)}_{31} & 0 & 0 & 0 & 0 & 0 & 0 & t^{(1)}_{11} & 0 \\
        0 & t^{(2)}_{32} & 0 & t^{(2)}_{31} & 0 & 0 & 0 & t^{(1)}_{21} & t^{(1)}_{22}
      \end{array}
}\Br{\begin{array}{c}
      \dif t^{(1)}_{11} \\
      \dif t^{(1)}_{22} \\
      \dif t^{(1)}_{33} \\
      \dif t^{(1)}_{21} \\
      \dif t^{(1)}_{31} \\
      \dif t^{(1)}_{32} \\
      \dif t^{(2)}_{21} \\
      \dif t^{(2)}_{31} \\
      \dif t^{(2)}_{32}
    \end{array}
}.
\end{eqnarray*}
In what follows, we compute the Jacobian of this transformation:
\begin{eqnarray*}
J(\wtil X:\wtil T) &=& \det\Pa{\Br{\begin{array}{ccccccccc}
        2t^{(1)}_{11} & 0 & 0 & 0 & 0 & 0 & 0 & 0 & 0 \\
        0 & 2t^{(1)}_{22} & 0 & 2t^{(1)}_{21} & 0 & 0 & 2t^{(2)}_{21} & 0 & 0 \\
        0 & 0 & 2t^{(1)}_{33} & 0 & 2t^{(1)}_{31} & 2t^{(1)}_{32} & 0 & 2t^{(2)}_{31} & 2t^{(2)}_{32} \\
        t^{(1)}_{21} & 0 & 0 & t^{(1)}_{11} & 0 & 0 & 0 & 0 & 0 \\
        t^{(1)}_{31} & 0 & 0 & 0 & t^{(1)}_{11} & 0 & 0 & 0 & 0 \\
        0 & t^{(1)}_{32} & 0 & t^{(1)}_{31} & t^{(1)}_{21} & t^{(1)}_{22} & t^{(2)}_{31} & t^{(2)}_{21} & 0 \\
        t^{(2)}_{21} & 0 & 0 & 0 & 0 & 0 & t^{(1)}_{11} & 0 & 0 \\
        t^{(2)}_{31} & 0 & 0 & 0 & 0 & 0 & 0 & t^{(1)}_{11} & 0 \\
        0 & t^{(2)}_{32} & 0 & t^{(2)}_{31} & 0 & 0 & 0 & t^{(1)}_{21} & t^{(1)}_{22}
      \end{array}
}}\\
&=&2t^{(1)}_{11}\det\Pa{\Br{\begin{array}{ccccccccc}
        1 & 0 & 0 & 0 & 0 & 0 & 0 & 0 & 0 \\
        0 & 2t^{(1)}_{22} & 0 & 2t^{(1)}_{21} & 0 & 0 & 2t^{(2)}_{21} & 0 & 0 \\
        0 & 0 & 2t^{(1)}_{33} & 0 & 2t^{(1)}_{31} & 2t^{(1)}_{32} & 0 & 2t^{(2)}_{31} & 2t^{(2)}_{32} \\
        t^{(1)}_{21} & 0 & 0 & t^{(1)}_{11} & 0 & 0 & 0 & 0 & 0 \\
        t^{(1)}_{31} & 0 & 0 & 0 & t^{(1)}_{11} & 0 & 0 & 0 & 0 \\
        0 & t^{(1)}_{32} & 0 & t^{(1)}_{31} & t^{(1)}_{21} & t^{(1)}_{22} & t^{(2)}_{31} & t^{(2)}_{21} & 0 \\
        t^{(2)}_{21} & 0 & 0 & 0 & 0 & 0 & t^{(1)}_{11} & 0 & 0 \\
        t^{(2)}_{31} & 0 & 0 & 0 & 0 & 0 & 0 & t^{(1)}_{11} & 0 \\
        0 & t^{(2)}_{32} & 0 & t^{(2)}_{31} & 0 & 0 & 0 & t^{(1)}_{21} & t^{(1)}_{22}
      \end{array}
}},
\end{eqnarray*}
by adding the corresponding multiples of the first row to the second
row through the last one, respectively, we get
\begin{eqnarray*}
J(\wtil X:\wtil T) =
2t^{(1)}_{11}\det\Pa{\Br{\begin{array}{ccccccccc}
        1 & 0 & 0 & 0 & 0 & 0 & 0 & 0 & 0 \\
        0 & 2t^{(1)}_{22} & 0 & 2t^{(1)}_{21} & 0 & 0 & 2t^{(2)}_{21} & 0 & 0 \\
        0 & 0 & 2t^{(1)}_{33} & 0 & 2t^{(1)}_{31} & 2t^{(1)}_{32} & 0 & 2t^{(2)}_{31} & 2t^{(2)}_{32} \\
        0 & 0 & 0 & t^{(1)}_{11} & 0 & 0 & 0 & 0 & 0 \\
        0 & 0 & 0 & 0 & t^{(1)}_{11} & 0 & 0 & 0 & 0 \\
        0 & t^{(1)}_{32} & 0 & t^{(1)}_{31} & t^{(1)}_{21} & t^{(1)}_{22} & t^{(2)}_{31} & t^{(2)}_{21} & 0 \\
        0 & 0 & 0 & 0 & 0 & 0 & t^{(1)}_{11} & 0 & 0 \\
        0 & 0 & 0 & 0 & 0 & 0 & 0 & t^{(1)}_{11} & 0 \\
        0 & t^{(2)}_{32} & 0 & t^{(2)}_{31} & 0 & 0 & 0 & t^{(1)}_{21} & t^{(1)}_{22}
      \end{array}
}},
\end{eqnarray*}
iteratively, finally we get the final result. We also take a simple
approach (i.e. by definition) with a tedious computation as follows:
\begin{eqnarray*}
\dif x^{(1)}_{11} &=& 2t_{11}\dif t_{11},~\dif x^{(1)}_{21} =
t^{(1)}_{21}\dif t_{11}+t_{11}\dif t^{(1)}_{21},~\dif x^{(1)}_{31} = t^{(1)}_{31}\dif t_{11}+t_{11}\dif t^{(1)}_{31};\\
\dif x^{(1)}_{22} &=& 2t_{22}\dif t_{22}+ 2t^{(1)}_{21}\dif
t^{(1)}_{21}+2t^{(2)}_{21}\dif t^{(2)}_{21},\\
\dif x^{(1)}_{32} &=&
t^{(1)}_{31}\dif t^{(1)}_{21}+ t^{(1)}_{21}\dif t^{(1)}_{31}+t^{(1)}_{32}\dif t_{22}+ t_{22}\dif t^{(1)}_{32}+t^{(2)}_{31}\dif t^{(2)}_{21}+t^{(2)}_{21}\dif t^{(2)}_{31};\\
\dif x^{(1)}_{33} &=& 2t^{(1)}_{31}\dif t^{(1)}_{31} +
2t^{(1)}_{32}\dif t^{(1)}_{32} + 2t_{33}\dif t_{33} +
2t^{(2)}_{31}\dif t^{(2)}_{31} + 2t^{(2)}_{32}\dif t^{(2)}_{32},
\end{eqnarray*}
and
\begin{eqnarray*}
\dif x^{(2)}_{21} &=& t^{(2)}_{21}\dif t_{11}+t_{11}\dif
t^{(2)}_{21},~~\dif x^{(2)}_{31} = t^{(2)}_{31}\dif
t_{11}+t_{11}\dif t^{(2)}_{31},\\
\dif x^{(2)}_{32} &=& t^{(2)}_{31}\dif t^{(1)}_{21}+
t^{(1)}_{21}\dif t^{(2)}_{31}+t_{22}\dif t^{(2)}_{32}+
t^{(2)}_{32}\dif t_{22}.
\end{eqnarray*}
Hence we can compute the Jacobian by definition as follows:
\begin{eqnarray*}
[\dif \wtil X] &=&\dif x^{(1)}_{11}\wedge\dif x^{(1)}_{21}\wedge\dif
x^{(1)}_{31}\wedge\dif x^{(2)}_{21}\wedge\dif x^{(2)}_{31}\wedge\dif
x^{(1)}_{22}\wedge\dif x^{(2)}_{32}\wedge\dif x^{(1)}_{32}\wedge\dif
x^{(1)}_{33}\\
&=& 2^3\Pa{\prod^3_{j=1}t^{2(n-j)+1}_{jj}}[\dif \wtil T].
\end{eqnarray*}

The proof in the case of $\wtil X=\wtil Q\wtil Q^*$ is similar but
in this case it can be seen that the triangular matrices
corresponding to $U$ and $Y$ will have $t_{jj}$ repeated $j-1$ times
along the diagonal for $j=1,\ldots,n$.
\end{proof}

\begin{exam}\label{exam:complex-gamma}
If $\wtil X^*=\wtil X\in\complex^{n\times n}$ is positive definite ,
and $\re(\alpha)>n-1$,
\begin{eqnarray}
\wtil\Gamma_n(\alpha) &:=& \int_{\wtil X>0}[\dif\wtil X]
\abs{\det\Pa{\wtil X}}^{\alpha-n}e^{-\Tr{\wtil X}} \notag\\
&=& \pi^{\frac{n(n-1)}2}\Gamma(\alpha)\Gamma(\alpha-1)\cdots
\Gamma(\alpha-n+1).
\end{eqnarray}
Indeed, let $\wtil T=[\wtil t_{jk}],\wtil t_{jk}=0,j<k$ be a lower
triangular matrix with real and positive diagonal elements
$t_{jj}>0,j=1,\ldots,n$ such that $\wtil X=\wtil T\wtil T^*$. Then
from Proposition~\ref{prop:TT-complex-case}
\begin{eqnarray*}
[\dif\wtil X] = 2^n\Pa{\prod^n_{j=1}t^{2(n-j)+1}_{jj}}[\dif\wtil T]
\end{eqnarray*}
Note that
\begin{eqnarray*}
\Tr{\wtil X} = \Tr{\wtil T\wtil T^*} = \sum^n_{j=1}t^2_{jj}+
\abs{\wtil t_{21}}^2+\cdots+\abs{\wtil t_{n1}}^2+\cdots+\abs{\wtil
t_{n,n-1}}^2
\end{eqnarray*}
and
\begin{eqnarray*}
\abs{\det(\wtil X)}^{\alpha-n}[\dif\wtil X] =
2^n\Pa{\prod^n_{j=1}t^{2\alpha-2j+1}_{jj}}[\dif\wtil T].
\end{eqnarray*}
The integral over $\wtil X$ splits into $n$ integrals over the
$t_{jj}$'s and $\frac{n(n-1)}2$ integrals over the $\wtil t_{jk}$'s,
$j>k$. Note that $0<t_{jj}<\infty, -\infty<t^{(1)}_{jk}<\infty,
-\infty<t^{(2)}_{jk}<\infty$, where $\wtil
t_{jk}=t^{(1)}_{jk}+\sqrt{-1}t^{(2)}_{jk}$. But
\begin{eqnarray*}
2\int^\infty_0 t^{2\alpha-2j+1}_{jj}e^{-t^2_{jj}}\dif t_{jj}
=\Gamma(\alpha-j+1),~~\re(\alpha)>j-1,
\end{eqnarray*}
for $j=1,\ldots,n$, so $\re(\alpha)>n-1$ and
\begin{eqnarray*}
\int_{\wtil t_{jk}} e^{-\abs{\wtil t_{jk}}^2}\dif\wtil t_{jk} =
\int^\infty_{-\infty}\int^\infty_{-\infty}
e^{-\Pa{\Pa{t^{(1)}_{jk}}^2+\Pa{t^{(2)}_{jk}}^2}}\dif
t^{(1)}_{jk}\dif t^{(2)}_{jk}=\pi.
\end{eqnarray*}
The desired result is obtained.
\end{exam}

\begin{definition}[$\wtil \Gamma_n(\alpha)$: complex
matrix-variate gamma]\label{def:tilde-gamma} It is defined as stated
in Example~\ref{exam:complex-gamma}. We will write with a
\emph{tilde} over $\Gamma$ to distinguish it from the matrix-variate
gamma in the real case.
\end{definition}

\begin{exam}\label{exam:complex-matrix-variate-density}
Show that
\begin{eqnarray*}
f(\wtil X) = \frac{\abs{\det(\wtil B)}^\alpha\abs{\det(\wtil
X)}^{\alpha-n}e^{-\Tr{\wtil B\wtil X}}}{\wtil \Gamma_n(\alpha)}
\end{eqnarray*}
for $\wtil B^*=\wtil B>0,\wtil X^*=\wtil X>0,\re(\alpha)>n-1$ and
$f(\wtil X)=0$ elsewhere, is a density function for $\wtil X$ where
$\wtil B$ is a constant matrix, with $\wtil \Gamma_n(\alpha)$ as
given in Definition~\ref{def:tilde-gamma}. Indeed, evidently
$f(\wtil X)\geqslant0$ for all $\wtil X$ and for all $\wtil X$ and
it remains to show that the total integral is unity. Since $\wtil B$
is hermitian positive definite there exists a nonsingular $\wtil C$
such that $\wtil B=\wtil C^*\wtil C$. Then
\begin{eqnarray*}
\Tr{\wtil B\wtil X} = \Tr{\wtil C\wtil X\wtil C^*}.
\end{eqnarray*}
Hence from Proposition~\ref{prop:TT-complex-case}
\begin{eqnarray*}
\wtil Y = \wtil C\wtil X\wtil C^*\Longrightarrow [\dif\wtil Y] =
\abs{\det\Pa{\wtil C\wtil C^*}}^n[\dif\wtil X]=\abs{\det(\wtil
B)}^n[\dif\wtil X],
\end{eqnarray*}
and
\begin{eqnarray*}
\wtil X = C^{-1}\wtil YC^{*-1}\Longrightarrow\abs{\det(\wtil
X)}=\abs{\det(CC^*)}^{-1}\abs{\det(\wtil Y)}
\end{eqnarray*}
Then
\begin{eqnarray*}
\int_{\wtil X>0}f(\wtil X)[\dif\wtil X] = \int_{\wtil Y>0}[\dif\wtil
Y]\frac{\abs{\det(\wtil Y)}^{\alpha-n}e^{-\Tr{\wtil Y}}}{\wtil
\Gamma_n(\alpha)}=1.
\end{eqnarray*}
But from Example~\ref{exam:complex-gamma}, the right side is unity
for $\re(\alpha)>n-1$. This density $f(\wtil X)$ is known as the
\emph{complex matrix-variate density with the parameters $\alpha$
and $\wtil B$}.
\end{exam}

\subsection{The computation of volumes}

\begin{definition}[Semiuniatry and unitary matrices]
A $p\times n$ matrix $\wtil U$ is said to be \emph{semiunitary} if
$\wtil U\wtil U^*=\I_p$ for $p<n$ or $\wtil U^*\wtil U=\I_n$ for
$p>n$. When $n=p$ and $\wtil U \wtil U^*=\I_n$, then $\wtil U$ is
called a \emph{unitary matrix}. The set of all $n\times n$ unitary
matrices is denoted by $\cU(n)$. That is,
\begin{eqnarray}
\fbox{$\cU(n):=\Set{\wtil U\in\complex^{n\times n}: \wtil U\wtil
U^*=\I_n}$,}
\end{eqnarray}
where $\complex^{n\times n}$ denotes the set of all $n\times n$
complex matrices.
\end{definition}

\begin{definition}[A hermitian or a skew hermitian matrix]
Let $\wtil A\in\complex^{n\times n}$. If $\wtil A=\wtil A^*$, then
$\wtil A$ is said to be \emph{hermitian} and if $\wtil A^*=-\wtil
A$, then it is \emph{skew hermitian}.
\end{definition}

When dealing with unitary matrices a basic property to be noted is
the following:
\begin{eqnarray*}
\wtil U\wtil U^*=\I\Longrightarrow \wtil U^* \dif\wtil U = -
\dif\wtil U^*\wtil U.
\end{eqnarray*}
But $\Pa{\wtil U^* \dif\wtil U}^* = \dif\wtil U^*\wtil U$, which
means that $\wtil U^* \dif\wtil U$ is a skew hermitian matrix. The
wedge product of $\wtil U^* \dif\wtil U$, namely, $\wedge\Pa{\wtil
U^* \dif\wtil U}$ enters into the picture when evaluating the
Jacobians involving unitary transformations. Hence this will be
denoted by $\dif\wtil G$ for convenience. Starting from $\wtil
U^*\wtil U=\I_n$ one has $\dif\wtil U\cdot \wtil U^*$.

Assume that $\wtil U=[\wtil u_{ij}]\in\cU(n)$ where $\wtil
u_{ij}\in\complex$. Let $\wtil u_{jj}=\abs{\wtil
u_{jj}}e^{\sqrt{-1}\theta_j}$ by Euler's formula, where
$\theta_j\in[-\pi,\pi]$. Then
\begin{eqnarray}
\wtil U=\Br{\begin{array}{cccc}
               \abs{\wtil u_{11}} & \wtil u_{12}e^{-\sqrt{-1}\theta_2} & \cdots & \wtil u_{1n}e^{-\sqrt{-1}\theta_n} \\
               \wtil u_{21}e^{-\sqrt{-1}\theta_1} & \abs{\wtil u_{22}} & \cdots & \wtil u_{2n}e^{-\sqrt{-1}\theta_n} \\
               \vdots  & \vdots  & \ddots  & \vdots  \\
               \wtil u_{n1}e^{-\sqrt{-1}\theta_1} & \wtil u_{n2}e^{-\sqrt{-1}\theta_2} & \cdots & \abs{\wtil
               u_{nn}}
             \end{array}
}\Br{\begin{array}{cccc}
       e^{\sqrt{-1}\theta_1}& 0 & \cdots & 0 \\
       0 & e^{\sqrt{-1}\theta_2} & \cdots & 0 \\
       \vdots & \vdots & \ddots & \vdots \\
       0 & 0 & \cdots & e^{\sqrt{-1}\theta_n}
     \end{array}
}.
\end{eqnarray}
This indicates that any $\wtil U\in\cU(n)$ can be factorized into a
product of a unitary matrix with diagonal entries being nonnegative
and a diagonal unitary matrix. It is easily seen that such
factorization of a given unitary matrix is unique. In fact, we have
a correspondence which is one-to-one:
\begin{center}
\fbox{$\cU(n)\sim \Pa{\cU(n)/\cU(1)^{\times n}}\times \cU(1)^{\times
n}$.}
\end{center}
\textbf{Notation.} When $\wtil U$ is a $n\times n$ unitary matrix of
independent complex entries, $\wtil U^*$ its conjugate transpose and
$\dif\wtil U$ the matrix of differentials then the wedge product in
$\dif\wtil G:=\dif\wtil U\cdot\wtil U^*$ will be denoted by
$[\dif\wtil G]$. That is, ignoring the sign,
\begin{eqnarray*}
\fbox{$[\dif\wtil G] :=\wedge \Pa{\dif\wtil U\cdot\wtil U^*} =
\wedge\Pa{\wtil U\cdot\dif\wtil U^*}.$}
\end{eqnarray*}
If the diagonal entries or the entries in one row of this unitary
matrix $\wtil U$ are assumed to be real, then the skew hermitian
matrix $\dif\wtil U\cdot\wtil U^*$ will be denoted by $\dif\wtil
G_1$ and its wedge product by
\begin{eqnarray*}
[\dif\wtil G_1] := \wedge\Pa{\dif\wtil U\cdot\wtil U^*}.
\end{eqnarray*}
Indeed, $[\dif\wtil G]$ here means the wedge product over $\cU(n)$,
but however $[\dif\wtil G_1]$ means the wedge product over
$\cU(n)/\cU(1)^{\times n}$. Therefore, for any measurable function
$f$ over $\cU(n)$, $[\dif\wtil G]=[\dif\wtil G_1][\dif\wtil D]$,
$$
\int_{\cU(n)} f(\wtil U)[\dif\wtil G] =
\int_{\cU_1(n)}\int_{\cU(1)^{\times n}}f(\wtil V\wtil D)[\dif \wtil
G_1][\dif \wtil D],
$$
where $\wtil U=\wtil V\wtil D$ for $\wtil V\in\cU_1(n)$, $\dif\wtil
G=\wtil U^*\dif \wtil U$ and $\dif\wtil G_1=\wtil V^*\dif \wtil V$.
Furthermore, let $\wtil V=[\wtil v_{ij}]$ for $\wtil
v_{ij}\in\complex$ and $\wtil v_{jj}=v_{jj}\in\real^+$. Then it
holds still that $\wtil V\in\cU(n)$, thus $v_{jj}$ is not an
independent variable, for example, $v_{11}=\sqrt{1-\abs{\wtil
v_{21}}^2-\cdots-\abs{\wtil v_{n1}}^2}$. From this, we see that
$$
[\dif \wtil G_1]=\prod_{i<j} \dif\Pa{\re(\wtil V^*\dif\wtil
V)_{ij}}\dif\Pa{\im(\wtil V^*\dif\wtil V)_{ij}}
$$
and
$$
[\dif \wtil G]=\Pa{\prod^n_{j=1}\im(\wtil U^*\dif \wtil
U)_{jj}}\times \prod_{i<j} \re(\wtil U^*\dif\wtil U)_{ij}\im(\wtil
U^*\dif\wtil U)_{ij}.
$$
\begin{prop}\label{prop:LU-complex-case}
Let $\wtil T\in\complex^{n\times n}$ be lower triangular and $\wtil
U\in\cU(n)$ be of independent complex variables. Let $\wtil X=\wtil
T\wtil U$. Then:
\begin{enumerate}[(i)]
\item for all the diagonal entries $t_{jj},j=1,\ldots,n$ of $\wtil
T$ being real and positive,
\begin{eqnarray}
\fbox{$[\dif\wtil X] = \Pa{\prod^n_{j=1}t^{2(n-j)+1}_{jj}}[\dif\wtil
T][\dif\wtil G]$}
\end{eqnarray}
where $\dif\wtil G = \dif\wtil U\cdot\wtil U^*$; and
\item for all the
diagonal entries in $\wtil U$ being real,
\begin{eqnarray}
[\dif\wtil X] = \Pa{\prod^n_{j=1}\abs{\wtil
t_{jj}}^{2(n-j)}}\cdot[\dif\wtil T][\dif\wtil G_1]
\end{eqnarray}
where $\dif\wtil G_1 = \dif\wtil U\cdot\wtil U^*$.
\end{enumerate}
\end{prop}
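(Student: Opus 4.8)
The plan is to mimic the structure of Proposition~\ref{prop:rectangular-matrix} and Proposition~\ref{prop:TT-complex-case}: take differentials in $\wtil X = \wtil T\wtil U$, multiply by $\wtil U^*$ on the right to turn the unitary-group part into the skew-hermitian form $\dif\wtil G = \dif\wtil U\cdot\wtil U^*$, and then read off the Jacobian as a product of a ``$\wtil T$ block'' and a ``$\wtil G$ block'' that decouple because of the triangular structure. Concretely, $\dif\wtil X = \dif\wtil T\cdot\wtil U + \wtil T\cdot\dif\wtil U$, so $\dif\wtil X\cdot\wtil U^* = \dif\wtil T\cdot\wtil U\wtil U^* + \wtil T\cdot\dif\wtil U\cdot\wtil U^* = \dif\wtil T\cdot(\wtil U\wtil U^*) + \wtil T\cdot\dif\wtil G$. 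Wait—$\wtil T$ and $\wtil U$ both act, so I should instead write $\dif\wtil W := \dif\wtil X\cdot\wtil U^{-1} = \dif\wtil X\cdot\wtil U^*$, giving $\dif\wtil W = \dif\wtil T + \wtil T\cdot\dif\wtil G$ where $\dif\wtil G := \dif\wtil U\cdot\wtil U^*$ is skew hermitian. Since the map $\wtil X\mapsto\wtil W$ is (for fixed $\wtil U$) a right multiplication by a unitary matrix, $[\dif\wtil W]=[\dif\wtil X]$ by the earlier Jacobian result for $\wtil Y=\wtil A\wtil X$ with $\abs{\det(\wtil U\wtil U^*)}=1$. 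So it suffices to compute the wedge product of $\dif\wtil T + \wtil T\cdot\dif\wtil G$.

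Next I would analyze the structure of $\dif\wtil T + \wtil T\cdot\dif\wtil G$ entrywise. Here $\dif\wtil T$ is lower triangular, $\dif\wtil G$ is skew hermitian (with, in case (i), purely imaginary diagonal since the diagonal of $\wtil G$ comes from the phases—but actually here the diagonal entries of $\wtil T$ are real positive and it's the diagonal of $\dif\wtil G$ that matters), and $\wtil T$ is lower triangular. The key observation, exactly as in the $p=2,3$ computations of Proposition~\ref{prop:rectangular-matrix}, is that the new variables are the $\frac{n(n+1)}2$ differentials in $\dif\wtil T$ together with the $n^2$ real parameters in $\dif\wtil G$ (in case (i): the $n$ diagonal imaginary parts and the $n(n-1)$ real/imaginary off-diagonal parts), and that the triangular shape forces the Jacobian matrix, after ordering the rows of $\dif\wtil W$ appropriately, to be block-triangular with diagonal blocks that are powers of the $t_{jj}$'s times identity-type blocks. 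Row $j$ of $\wtil T\cdot\dif\wtil G$ only sees $\dif g$-entries paired against $t_{j1},\dots,t_{jj}$, and the ``top'' contribution of the $j$th column of $\dif\wtil W$ picks up a factor $t_{jj}$ from the term $t_{jj}\,\dif g_{jk}$ ($k>j$ off-diagonal) and from $t_{jj}\,\dif g_{jj}$ (diagonal). Counting: the off-diagonal entry $\dif g_{jk}$ with $j<k$ appears multiplied by $t_{jj}$, and for fixed $j$ there are $n-j$ such $k$; the diagonal $\dif g_{jj}$ contributes another factor $t_{jj}$ (in case (i) this is the imaginary part only). Each off-diagonal $\dif g_{jk}$ is one complex variable, i.e. two real ones, each carrying $t_{jj}$, so the factor from row $j$ is $t_{jj}^{2(n-j)}\cdot t_{jj}$ in case (i), giving $\prod_j t_{jj}^{2(n-j)+1}$; in case (ii) the diagonal of $\wtil U$ is real so the diagonal $\dif g_{jj}$ is absent (zero), giving only $\prod_j \abs{\wtil t_{jj}}^{2(n-j)}$. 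The remaining $\dif\wtil T$-variables contribute trivially (coefficient $1$), and the wedge product of the $\dif\wtil G$-variables reassembles into $[\dif\wtil G]$ (resp.\ $[\dif\wtil G_1]$), since the off-diagonal part of $\dif\wtil U\cdot\wtil U^*$ is exactly what $[\dif\wtil G]$ and $[\dif\wtil G_1]$ are defined to be.

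Then I would assemble: $[\dif\wtil X] = [\dif\wtil W] = \wedge(\dif\wtil T + \wtil T\cdot\dif\wtil G) = \bigl(\prod_{j=1}^n t_{jj}^{2(n-j)+1}\bigr)[\dif\wtil T][\dif\wtil G]$ in case (i), and $[\dif\wtil X] = \bigl(\prod_{j=1}^n \abs{\wtil t_{jj}}^{2(n-j)}\bigr)[\dif\wtil T][\dif\wtil G_1]$ in case (ii), where in the latter the diagonal phase variables $\dif\wtil D$ have been factored off so that $[\dif\wtil G]=[\dif\wtil G_1][\dif\wtil D]$ and the count of $t_{jj}$-powers drops by one per row. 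To keep the exposition honest but not tedious, I would carry out explicitly the $n=2$ and $n=3$ cases (paralleling the excerpt's treatment of Proposition~\ref{prop:rectangular-matrix}), displaying $\dif\wtil T + \wtil T\cdot\dif\wtil G$ as a matrix and computing its wedge product, then state that the general $n$ follows by the same straight multiplication, remembering that the independent variables are precisely $\{\dif\wtil t_{jk}\}_{j\ge k}$ and $\{\dif\wtil g_{jk}\}_{j<k}$ (plus the diagonal of $\dif\wtil G$ in case (i)).

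The main obstacle I anticipate is bookkeeping the exponent $2(n-j)$ versus $2(n-j)+1$ cleanly—i.e.\ being careful about whether the diagonal entries of $\wtil T$ are real (so contribute a single real variable $\dif t_{jj}$, no extra $t_{jj}$) while the diagonal entries of $\dif\wtil G$ are purely imaginary and present in case (i) but forced to zero in case (ii). The off-diagonal count $n-j$ complex entries $=2(n-j)$ real entries each scaled by $t_{jj}$ is routine; the subtlety is entirely in the diagonal term and in correctly identifying which real degrees of freedom survive in $\dif\wtil G$ versus $\dif\wtil G_1$. A secondary point requiring care is justifying that the Jacobian matrix is genuinely (block-)triangular in the chosen ordering of variables, i.e.\ that no $\dif\wtil G$-entry from a later row contaminates an earlier column of $\dif\wtil W$; this follows because $(\wtil T\cdot\dif\wtil G)_{jk} = \sum_{\ell\le j} t_{j\ell}\,\dif g_{\ell k}$ involves only $\dif g_{\ell k}$ with $\ell\le j$, and combined with the lower-triangularity of $\dif\wtil T$ this gives exactly the staircase structure exploited in Proposition~\ref{prop:rectangular-matrix}.
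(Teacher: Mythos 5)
Your proposal is correct and follows essentially the same route as the paper: postmultiply the differential of $\wtil X=\wtil T\wtil U$ by $\wtil U^*$ to reduce to the wedge product of $\dif\wtil T+\wtil T\cdot\dif\wtil G$, then exploit the staircase structure to count $n-j$ complex off-diagonal factors of $\wtil t_{jj}$ per row plus (in case (i)) one extra $t_{jj}$ from the purely imaginary diagonal of $\dif\wtil G$. The only point to tighten is case (ii): there $\wtil t_{jj}$ is complex, so each off-diagonal pair of real variables is not scaled by a real scalar but transformed by a $2\times2$ block of the form $\bigl[\begin{smallmatrix}a&-b\\ b&a\end{smallmatrix}\bigr]$, and one invokes Lemma~\ref{lem:fact-on-det} to conclude the contribution is $\abs{\wtil t_{jj}}^{2}$ per complex entry, exactly as the paper does.
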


\begin{proof}
Taking differentials in $\wtil X=\wtil T\wtil U$ one has
\begin{eqnarray*}
\dif\wtil X = \dif\wtil T\cdot\wtil U + \wtil T\cdot\dif\wtil U.
\end{eqnarray*}
Postmultiplying by $\wtil U^*$ and observing that $\wtil U\wtil
U^*=\I_n$ we have
\begin{eqnarray}\label{eq:aaaa}
\dif\wtil X\cdot \wtil U^* = \dif\wtil T + \wtil T\cdot\dif\wtil
U\cdot\wtil U^*.
\end{eqnarray}
(i). Let the diagonal elements in $\wtil T$ be real and positive and
all other elements in $\wtil T$ and $\wtil U$ be complex. Let
\begin{eqnarray}\label{eq:bbbb}
\dif\wtil V = \dif\wtil X\cdot \wtil U^*\Longrightarrow [\dif\wtil
V]=[\dif\wtil X]
\end{eqnarray}
ignoring the sign, since $\wtil U$ is unitary. Let $\dif\wtil
G=\dif\wtil U\cdot \wtil U^*$ and its wedge product be $[\dif\wtil
G]$. Then
\begin{eqnarray}\label{eq:cccc}
\dif\wtil V = \dif\wtil T+\wtil T\cdot\dif\wtil G
\end{eqnarray}
where $\dif\wtil G$ is skew hermitian. Write
\begin{eqnarray*}
\wtil V &=& [\wtil v_{jk}], \quad \wtil v_{jk} =
v^{(1)}_{jk}+\sqrt{-1}v^{(2)}_{jk},\\
\wtil T &=& [\wtil t_{jk}],\quad\wtil t_{jk} =
t^{(1)}_{jk}+\sqrt{-1}t^{(2)}_{jk}, j>k,~~~t^{(1)}_{jj}=t_{jj}>0,~~t^{(2)}_{jj}=0,\\
\dif\wtil G &=& [\dif\wtil g_{jk}],\quad \dif\wtil g_{jk} = \dif
g^{(1)}_{jk}+\sqrt{-1}\dif g^{(2)}_{jk},~~~\dif g^{(1)}_{jk}=-\dif
g^{(1)}_{kj},~~\dif g^{(2)}_{jk}=\dif g^{(2)}_{kj}.
\end{eqnarray*}
From Eq.~\eqref{eq:cccc},
\begin{eqnarray*}
\dif\wtil v_{jk} =
\begin{cases}
\dif\wtil t_{jk} + \Pa{\wtil t_{j1}\dif\wtil g_{1k}+\cdots+\wtil
t_{jj}\dif\wtil g_{jk}},&j\geqslant k\\
\Pa{\wtil t_{j1}\dif\wtil g_{1k}+\cdots+\wtil t_{jj}\dif\wtil
g_{jk}},&j<k.
\end{cases}
\end{eqnarray*}
The new variables are $\dif v^{(m)}_{jk} = \widehat v^{(m)}_{jk},
\dif t^{(m)}_{jk}=\widehat t^{(m)}_{jk}, \dif g^{(m)}_{jk}=\widehat
g^{(m)}_{jk}, j\geqslant k,m=1,2$. The matrices of partial
derivatives are easily seen to be the following:
\begin{eqnarray*}
&&\Br{\frac{\partial \widehat v^{(1)}_{jj}}{\partial \widehat
t_{jj}}}=\I,\quad \Br{\frac{\partial \widehat v^{(m)}_{jk}}{\partial
\widehat t^{(m)}_{jk}},j>k}=\I, \quad m=1,2,\\
&&\Br{\frac{\partial \widehat v^{(1)}_{kj}}{\partial \widehat
g^{(1)}_{kj}},j>k}=A,\quad \Br{\frac{\partial \widehat
v^{(2)}_{kj}}{\partial \widehat g^{(2)}_{kj}},j>k}=B
\end{eqnarray*}
where $A$ and $B$ are triangular matrices with $t_{jj}$ repeated
$n-j$ times,
\begin{eqnarray*}
\Br{\frac{\partial \widehat v^{(2)}_{jj}}{\partial \widehat
g^{(2)}_{jj}}}=\diag(t_{11},\ldots,t_{nn}).
\end{eqnarray*}
By using the above identity matrices one can wipe out other
submatrices in the same rows and columns and using the triangular
blocks one can wipe out other blocks below it when evaluating the
determinant of the Jacobian matrix and finally the determinant in
absolute value reduces to the form
\begin{eqnarray*}
\det(A)\det(B)t_{11}\cdots t_{nn} = \prod^n_{j=1}t^{2(n-j)+1}_{jj}.
\end{eqnarray*}
Hence the result. As a specific example, we consider the case where
$n=3$. We expand the expression: $\dif \wtil V=\dif \wtil T+\wtil
T\cdot\dif \wtil G$. That is,
\begin{eqnarray*}
\dif V_1 &=& \dif T_1 + T_1\dif \Pa{\re(\wtil G)} - T_2\dif \Pa{\im(\wtil G)},\\
\dif V_2 &=& \dif T_2 + T_2\dif \Pa{\re(\wtil G)} + T_1\dif\Pa{
\im(\wtil G)}.
\end{eqnarray*}
Furthermore,
\begin{eqnarray*}
&&\Br{\begin{array}{ccc}
      \dif v^{(1)}_{11} & \dif v^{(1)}_{12} & \dif v^{(1)}_{13} \\
      \dif v^{(1)}_{21} & \dif v^{(1)}_{22} & \dif v^{(1)}_{23} \\
      \dif v^{(1)}_{31} & \dif v^{(1)}_{32} & \dif v^{(1)}_{33}
    \end{array}
} \\
&&= \Br{\begin{array}{ccc}
      \dif t^{(1)}_{11} & 0 & 0 \\
      \dif t^{(1)}_{21} & \dif t^{(1)}_{22} & 0 \\
      \dif t^{(1)}_{31} & \dif t^{(1)}_{32} & \dif t^{(1)}_{33}
    \end{array}
}+\Br{\begin{array}{ccc}
       t^{(1)}_{11} & 0 & 0 \\
       t^{(1)}_{21} &  t^{(1)}_{22} & 0 \\
       t^{(1)}_{31} &  t^{(1)}_{32} &  t^{(1)}_{33}
    \end{array}
}\Br{\begin{array}{ccc}
      0 & \dif g^{(1)}_{12} & \dif g^{(1)}_{13} \\
      -\dif g^{(1)}_{12} & 0 & \dif g^{(1)}_{23} \\
      -\dif g^{(1)}_{13} & -\dif g^{(1)}_{23} & 0
    \end{array}
} \\
&&~~~~~~- \Br{\begin{array}{ccc}
       0 & 0 & 0 \\
       t^{(2)}_{21} &  0 & 0 \\
       t^{(2)}_{31} &  t^{(2)}_{32} &  0
    \end{array}
}\Br{\begin{array}{ccc}
      \dif g^{(2)}_{11} & \dif g^{(2)}_{12} & \dif g^{(2)}_{13} \\
      \dif g^{(2)}_{12} & \dif g^{(2)}_{22} & \dif g^{(2)}_{23} \\
      \dif g^{(2)}_{13} & \dif g^{(2)}_{23} & \dif g^{(2)}_{33}
    \end{array}
}
\end{eqnarray*}
and
\begin{eqnarray*}
&&\Br{\begin{array}{ccc}
      \dif v^{(2)}_{11} & \dif v^{(2)}_{12} & \dif v^{(2)}_{13} \\
      \dif v^{(2)}_{21} & \dif v^{(2)}_{22} & \dif v^{(2)}_{23} \\
      \dif v^{(2)}_{31} & \dif v^{(2)}_{32} & \dif v^{(2)}_{33}
    \end{array}
} \\
&&= \Br{\begin{array}{ccc}
      0 & 0 & 0 \\
      \dif t^{(2)}_{21} & 0 & 0 \\
      \dif t^{(2)}_{31} & \dif t^{(2)}_{32} & 0
    \end{array}
}+\Br{\begin{array}{ccc}
       0 & 0 & 0 \\
       t^{(2)}_{21} &  0 & 0 \\
       t^{(2)}_{31} &  t^{(2)}_{32} &  0
    \end{array}
}\Br{\begin{array}{ccc}
      0 & \dif g^{(1)}_{12} & \dif g^{(1)}_{13} \\
      -\dif g^{(1)}_{12} & 0 & \dif g^{(1)}_{23} \\
      -\dif g^{(1)}_{13} & -\dif g^{(1)}_{23} & 0
    \end{array}
} \\
&&~~~~~~+ \Br{\begin{array}{ccc}
       t^{(1)}_{11} & 0 & 0 \\
       t^{(1)}_{21} &  t^{(1)}_{22} & 0 \\
       t^{(1)}_{31} &  t^{(1)}_{32} &  t^{(1)}_{33}
    \end{array}
}\Br{\begin{array}{ccc}
      \dif g^{(2)}_{11} & \dif g^{(2)}_{12} & \dif g^{(2)}_{13} \\
      \dif g^{(2)}_{12} & \dif g^{(2)}_{22} & \dif g^{(2)}_{23} \\
      \dif g^{(2)}_{13} & \dif g^{(2)}_{23} & \dif g^{(2)}_{33}
    \end{array}
}.
\end{eqnarray*}
Thus
\begin{eqnarray*}
\dif v^{(1)}_{11} &=& \dif t_{11}, ~~\dif v^{(1)}_{12} = t_{11}\dif
g^{(1)}_{12},~~\dif v^{(1)}_{13} = t_{11}\dif g^{(1)}_{13},\\
\dif v^{(1)}_{21} &=& \dif t^{(1)}_{21} - t_{22}\dif g^{(1)}_{12} -
t^{(2)}_{21}\dif g^{(2)}_{11},~~\dif v^{(1)}_{22} = \dif t_{22} +
t^{(1)}_{21}\dif g^{(1)}_{12} - t^{(2)}_{21}\dif g^{(2)}_{12},\\
\dif v^{(1)}_{23}&=& t^{(1)}_{21}\dif g^{(1)}_{13} + t_{22}\dif
g^{(1)}_{23} - t^{(2)}_{21}\dif g^{(2)}_{13},\\
\dif v^{(1)}_{31}&=& \dif t^{(1)}_{31} -  t^{(1)}_{32}\dif
g^{(1)}_{12} - t_{33}\dif g^{(1)}_{13} - t^{(2)}_{31}\dif
g^{(2)}_{11} - t^{(2)}_{32}\dif g^{(2)}_{12},\\
\dif v^{(1)}_{32}&=& \dif t^{(1)}_{32} + t^{(1)}_{31}\dif
g^{(1)}_{12} - t_{33}\dif g^{(1)}_{23} - t^{(2)}_{31}\dif
g^{(2)}_{12} - t^{(2)}_{32}\dif g^{(2)}_{22},\\
\dif v^{(1)}_{33}&=& \dif t_{33} + t^{(1)}_{31}\dif g^{(1)}_{13} +
t^{(1)}_{32}\dif g^{(1)}_{23} - t^{(2)}_{31}\dif g^{(2)}_{13} -
t^{(2)}_{32}\dif g^{(2)}_{23}
\end{eqnarray*}
and
\begin{eqnarray*}
\dif v^{(2)}_{11} &=& t_{11} \dif g^{(2)}_{11}, ~~\dif v^{(2)}_{12}
= t_{11}\dif
g^{(2)}_{12},~~\dif v^{(2)}_{13} = t_{11}\dif g^{(2)}_{13},\\
\dif v^{(2)}_{21} &=& \dif t^{(2)}_{21} + t^{(1)}_{21}\dif
g^{(2)}_{11}+ t_{22}\dif g^{(2)}_{12}- t^{(2)}_{21}\dif
g^{(1)}_{12},\\
\dif v^{(2)}_{22} &=& t^{(2)}_{21}\dif g^{(1)}_{12}+
t^{(1)}_{21}\dif g^{(2)}_{12} + t_{22}\dif g^{(2)}_{22},\\
\dif v^{(2)}_{23}&=& t^{(2)}_{21}\dif g^{(1)}_{13}+
t^{(1)}_{21}\dif g^{(2)}_{13} + t_{22}\dif g^{(2)}_{23},\\
\dif v^{(2)}_{31}&=& \dif t^{(2)}_{31} -t^{(2)}_{32}\dif g^{(1)}_{12}
+ t^{(1)}_{31}\dif g^{(2)}_{11} + t^{(1)}_{32}\dif g^{(2)}_{12} + t_{33}\dif g^{(2)}_{13},\\
\dif v^{(2)}_{32}&=& \dif t^{(2)}_{32} + t^{(2)}_{31}\dif g^{(1)}_{12}+ t^{(1)}_{31}\dif g^{(2)}_{12} + t^{(1)}_{32}\dif g^{(2)}_{22} + t_{33}\dif g^{(2)}_{23},\\
\dif v^{(2)}_{33}&=&t^{(2)}_{31}\dif g^{(1)}_{13}+ t^{(2)}_{32}\dif
g^{(1)}_{23} + t^{(1)}_{31}\dif g^{(2)}_{13}+ t^{(1)}_{32}\dif
g^{(2)}_{23} + t_{33}\dif g^{(2)}_{33}.
\end{eqnarray*}
According to the definition, we now have $\dif\wtil v_{jk}=\dif
v^{(1)}_{jk}\wedge\dif v^{(2)}_{jk}$, then
\begin{eqnarray*}
[\dif \wtil V] &=& \dif \wtil v_{11}\wedge\dif \wtil
v_{12}\wedge\dif \wtil v_{13}\wedge\dif \wtil v_{21}\wedge\dif \wtil
v_{22}\wedge\dif \wtil v_{23}\wedge\dif \wtil v_{31}\wedge\dif \wtil
v_{32}\wedge\dif \wtil v_{33}\\
&=&\Pa{\prod^3_{j=1}t^{2(3-j)+1}_{jj}}[\dif \wtil T][\dif \wtil G].
\end{eqnarray*}
(ii). Let the diagonal elements of $\wtil U$ be real and all other
elements in $\wtil U$ and $\wtil T$ complex. Starting from
Eq.~\eqref{eq:cccc}, observing that $\dif\wtil G$ is $\dif\wtil G_1$
in this case with the wedge product $[\dif\wtil G_1]$, and taking
the variables $\dif\wtil v_{jk}$'s in the order $\dif v^{(1)}_{jk},
j\geqslant k, \dif v^{(1)}_{jk}, j< k, \dif v^{(2)}_{jk}, j\geqslant
k, \dif v^{(2)}_{jk}, j< k$ and the other variables in the order
$\dif t^{(1)}_{jk},j\geqslant k,\dif g^{(1)}_{jk}, j>k, \dif
t^{(2)}_{jk},j\geqslant k,\dif g^{(2)}_{jk},j>k$, we have the
following configuration in the Jacobian matrix:
\begin{eqnarray*}
\Br{\begin{array}{cccc}
      \I & * & * & * \\
      0 & -A_1 & 0 & A_2 \\
      0 & * & \I & * \\
      0 & -A_2 & 0 & -A_1
    \end{array}
}
\end{eqnarray*}
where the matrices marked by $*$ can be made null by operating with
the first and third column submatrices when taking the determinant.
Thus they can be taken as null matrices, and $A_1$ and $A_2$ are
triangular matrices with respectively $t^{(1)}_{jj}$ and
$t^{(2)}_{jj}$ repeated $n-j$ times in the diagonal. The Jacobian
matrix can be reduced to the form
\begin{eqnarray*}
\Br{\begin{array}{cc}
      A & B \\
      -B & A
    \end{array}
}, A=\Br{\begin{array}{cc}
      \I & 0 \\
      0 & -A_1
    \end{array}
}, B=\Br{\begin{array}{cc}
      0 & 0 \\
      0 & A_2
    \end{array}
}.
\end{eqnarray*}
Then the determinant is given by
\begin{eqnarray*}
\abs{\begin{array}{cc}
      A & B \\
      -B & A
    \end{array}} &=& \abs{\det((A+\sqrt{-1}B)(A+\sqrt{-1}B)^*)}\\
    &=& \abs{\det((-A_1+\sqrt{-1}A_2)(-A_1+\sqrt{-1}A_2)^*)}\\
    &=&\prod^p_{j=1}\abs{\wtil t_{jj}}^{2(n-j)}
\end{eqnarray*}
since $-A_1+\sqrt{-1}A_2$ is triangular with the diagonal elements
$-t^{(1)}_{jj}+\sqrt{-1}t^{(2)}_{jj}$ repeated $n-j$ times, giving
$\Pa{t^{(1)}_{jj}}^2+\Pa{t^{(2)}_{jj}}^2=\abs{\wtil t_{jj}}^2$
repeated $n-j$ times in the final determinant and hence the result.
\end{proof}

By using Proposition~\ref{prop:LU-complex-case} one can obtain
expressions for the integral over $\wtil U$ of $[\dif\wtil G]$ and
$[\dif\wtil G_1]$. These will be stated as corollaries here and the
proofs will be given after stating both the corollaries.
\begin{thrm}\label{th:vol-of-unitary-group}
Let $\dif\wtil G=\dif\wtil U\cdot \wtil U^*$, where $\wtil
U\in\cU(n)$. Then
\begin{framed}
\begin{eqnarray}
\vol\Pa{\cU(n)}=\int_{\cU(n)} [\dif\wtil G] =
\frac{2^n\pi^{n^2}}{\wtil \Gamma_n(n)}=\frac{
2^n\pi^{\frac{n(n+1)}2}}{1!2!\cdots (n-1)!}.
\end{eqnarray}
\end{framed}
\end{thrm}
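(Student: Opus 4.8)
The plan is to mimic the Gaussian-integral device already used in this section for the Stiefel manifold and the orthogonal group, but now carried out over the full matrix space $\complex^{n\times n}$ using the factorization $\wtil X=\wtil T\wtil U$ furnished by Proposition~\ref{prop:LU-complex-case}(i). The starting point is the elementary observation that $\Tr{\wtil X\wtil X^*}=\sum_{i,j}\abs{\wtil x_{ij}}^2$ is a sum of $2n^2$ independent real squares, so that by separating variables and evaluating one-dimensional Gaussian integrals,
$$
\int_{\wtil X}[\dif\wtil X]\,e^{-\Tr{\wtil X\wtil X^*}}=\pi^{n^2}.
$$

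Next I would perform the change of variables $\wtil X=\wtil T\wtil U$, where $\wtil T$ is lower triangular with real positive diagonal entries $t_{jj}>0$ and $\wtil U\in\cU(n)$. Here $\wtil X\wtil X^*=\wtil T\wtil U\wtil U^*\wtil T^*=\wtil T\wtil T^*$, so $\Tr{\wtil X\wtil X^*}=\sum_{j=1}^n t_{jj}^2+\sum_{j>k}\abs{\wtil t_{jk}}^2$ depends only on $\wtil T$, and by Proposition~\ref{prop:LU-complex-case}(i) we have $[\dif\wtil X]=\Pa{\prod_{j=1}^n t_{jj}^{2(n-j)+1}}[\dif\wtil T][\dif\wtil G]$ with $\dif\wtil G=\dif\wtil U\cdot\wtil U^*$. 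Consequently the integral factors as
$$
\pi^{n^2}=\Pa{\int_{\wtil T}\prod_{j=1}^n t_{jj}^{2(n-j)+1}\,e^{-\Tr{\wtil T\wtil T^*}}[\dif\wtil T]}\Pa{\int_{\cU(n)}[\dif\wtil G]}.
$$

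It then remains to evaluate the $\wtil T$-integral, which splits into $n$ integrals over the $t_{jj}$'s and $n(n-1)/2$ planar Gaussian integrals over the complex off-diagonal entries $\wtil t_{jk}$ with $j>k$. The substitution $u=t_{jj}^2$ gives $\int_0^\infty t_{jj}^{2(n-j)+1}e^{-t_{jj}^2}\dif t_{jj}=\tfrac12\Gamma(n-j+1)$ for $j=1,\ldots,n$, while $\int_{\complex}e^{-\abs{\wtil t_{jk}}^2}\dif\wtil t_{jk}=\pi$ for each $j>k$. Multiplying these, the $\wtil T$-integral equals $2^{-n}\pi^{n(n-1)/2}\prod_{j=1}^n\Gamma(n-j+1)=2^{-n}\pi^{n(n-1)/2}\prod_{k=1}^n\Gamma(k)=2^{-n}\wtil\Gamma_n(n)$, the last step being nothing but the definition of the complex matrix-variate gamma at $\alpha=n$ (Example~\ref{exam:complex-gamma}, Definition~\ref{def:tilde-gamma}). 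Substituting back yields $\int_{\cU(n)}[\dif\wtil G]=2^n\pi^{n^2}/\wtil\Gamma_n(n)$, and the alternative closed form follows from $\wtil\Gamma_n(n)=\pi^{n(n-1)/2}\,1!2!\cdots(n-1)!$ together with the exponent identity $n^2-\tfrac{n(n-1)}2=\tfrac{n(n+1)}2$.

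The one point requiring genuine care — the rest being Gaussian bookkeeping — is the justification that the change of variables loses no normalizing constant: as $\wtil X$ ranges over the full-rank $n\times n$ complex matrices, the unitary factor $\wtil U$ sweeps all of $\cU(n)$ exactly once, so that $[\dif\wtil G]$ may be integrated over $\cU(n)$ with no correction factor. This is exactly the uniqueness of the "$LQ$" factorization with real positive diagonal, and it contrasts with the real spectral decomposition $X=VDV^\t$ used earlier, where the $2^n$-fold sign ambiguity forced a division by $2^n$; I would state this uniqueness explicitly before invoking it.
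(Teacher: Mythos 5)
Your proposal is correct and follows essentially the same route as the paper's own proof: the Gaussian integral $\int_{\wtil X}[\dif\wtil X]e^{-\Tr{\wtil X\wtil X^*}}=\pi^{n^2}$, the change of variables $\wtil X=\wtil T\wtil U$ via Proposition~\ref{prop:LU-complex-case}(i), and the evaluation of the triangular factor as $2^{-n}\wtil\Gamma_n(n)$. Your added remark on the uniqueness of the factorization with $t_{jj}>0$ is a worthwhile clarification that the paper leaves implicit, but it does not change the argument.
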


\begin{proof}
Let $\wtil X$ be a $n\times n$ matrix of independent complex
variables. Let
\begin{eqnarray*}
B = \int_{\wtil X} [\dif\wtil X]e^{-\Tr{\wtil X\wtil X^*}} =
\int_{\wtil X} [\dif\wtil X]e^{-\sum_{j,k}\abs{\wtil x_{jk}}^2} =
\pi^{n^2}
\end{eqnarray*}
since
\begin{eqnarray*}
\int_{\wtil x_{jk}} e^{-\abs{\wtil x_{jk}}^2} \dif\wtil
x_{jk}=\int^{+\infty}_{-\infty}\int^{+\infty}_{-\infty}
e^{-\Pa{\Pa{x^{(1)}_{jk}}^2 + \Pa{x^{(2)}_{jk}}^2}}\dif
x^{(1)}_{jk}\dif x^{(2)}_{jk}=\pi.
\end{eqnarray*}
Consider the transformation used in
Proposition~\ref{prop:LU-complex-case} with $t_{jj}$'s real and
positive. Then
\begin{eqnarray*}
\Tr{\wtil X\wtil X^*} = \Tr{\wtil T\wtil T^*}
\end{eqnarray*}
and let
\begin{eqnarray*}
B = \int_{\wtil X} [\dif\wtil X]e^{-\Tr{\wtil X\wtil X^*}} =
\int_{\wtil T}\int_{\wtil U}
\Pa{\prod^n_{j=1}t^{2(n-j)+1}_{jj}}e^{-\sum_{j\geqslant k}\abs{\wtil
t_{jk}}^2}[\dif\wtil T][\dif\wtil G].
\end{eqnarray*}
But
\begin{eqnarray*}
\int^\infty_0 t^{2(n-j)+1}_{jj}e^{-t^2_{jj}}\dif t_{jj} =
\frac12\Gamma(n-j+1)~~~\text{for}~~~n-j+1>0
\end{eqnarray*}
and for $j>k$
\begin{eqnarray*}
\int_{\wtil t_{jk}}e^{-\abs{\wtil t_{jk}}^2}\dif\wtil t_{jk}=\pi.
\end{eqnarray*}
Then the integral over $\wtil T$ gives
\begin{eqnarray*}
2^{-n}\pi^{\frac{n(n-1)}2} \prod^n_{j=1}\Gamma(n-j+1)=2^{-n}\wtil
\Gamma_n(n).
\end{eqnarray*}
Hence
\begin{eqnarray*}
\int_{\cU(n)} [\dif\wtil G] = \frac{2^n\pi^{n^2}}{\wtil
\Gamma_n(n)}.
\end{eqnarray*}
\end{proof}

\begin{thrm}\label{thrm:vol-of-sub-unitary-group}
Let $\wtil U_1\in\cU(n)$ with the diagonal elements real. Let
$\dif\wtil G_1=\dif\wtil U_1\cdot \wtil U^*_1$. Let the full unitary
group of such $n\times n$ matrices $\wtil U_1$ be denoted by
$\cU_1(n)=\cU(n)/\cU(1)^{\times n}$. Then
\begin{eqnarray}
\vol\Pa{\cU_1(n)} = \vol\Pa{\cU(n)/\cU(1)^{\times n}}=
\int_{\cU_1(n)} [\dif\wtil G_1] = \frac{\pi^{n(n-1)}}{\wtil
\Gamma_n(n)}= \frac{ \pi^{\frac{n(n-1)}2}}{1!2!\cdots (n-1)!}.
\end{eqnarray}
\end{thrm}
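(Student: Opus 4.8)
The plan is to imitate the proof of Theorem~\ref{th:vol-of-unitary-group} just given for $\vol(\cU(n))$, only invoking part~(ii) of Proposition~\ref{prop:LU-complex-case} in place of part~(i). Start again from the Gaussian identity $\int_{\wtil X}[\dif\wtil X]e^{-\Tr{\wtil X\wtil X^*}}=\pi^{n^2}$ for an $n\times n$ matrix $\wtil X$ of independent complex entries, and apply the decomposition $\wtil X=\wtil T\wtil U$, this time normalized so that the diagonal entries of $\wtil U$ are real (and positive, as in the discussion preceding Proposition~\ref{prop:LU-complex-case}, which fixes the factorization uniquely and so makes the transformation one-to-one). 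Then $\wtil T$ ranges over all lower triangular matrices with arbitrary complex diagonal, $\Tr{\wtil X\wtil X^*}=\Tr{\wtil T\wtil T^*}$, and part~(ii) gives $[\dif\wtil X]=\Pa{\prod_{j=1}^n\abs{\wtil t_{jj}}^{2(n-j)}}[\dif\wtil T][\dif\wtil G_1]$, so the Gaussian identity factors as
\begin{equation*}
\pi^{n^2}=\Pa{\int_{\wtil T}\prod_{j=1}^n\abs{\wtil t_{jj}}^{2(n-j)}e^{-\sum_{j\geqslant k}\abs{\wtil t_{jk}}^2}[\dif\wtil T]}\Pa{\int_{\cU_1(n)}[\dif\wtil G_1]}.
\end{equation*}

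The second step is to evaluate the $\wtil T$-integral, which splits over the $\tfrac{n(n+1)}2$ complex entries of $\wtil T$. The only difference from the proof of Theorem~\ref{th:vol-of-unitary-group} is that the $n$ diagonal entries are now complex: in polar coordinates $\wtil t_{jj}=re^{\sqrt{-1}\theta}$, $\dif\wtil t_{jj}=r\,\dif r\,\dif\theta$, so $\int\abs{\wtil t_{jj}}^{2(n-j)}e^{-\abs{\wtil t_{jj}}^2}\dif\wtil t_{jj}=2\pi\int_0^\infty r^{2(n-j)+1}e^{-r^2}\dif r=\pi\,\Gamma(n-j+1)$ for $j=1,\ldots,n$ (an extra factor $2\pi$ per diagonal direction compared with the real positive-diagonal case), while each of the $\tfrac{n(n-1)}2$ strictly subdiagonal complex entries still contributes $\int e^{-\abs{\wtil t_{jk}}^2}\dif\wtil t_{jk}=\pi$. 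Multiplying and recalling $\wtil\Gamma_n(n)=\pi^{n(n-1)/2}\Gamma(n)\Gamma(n-1)\cdots\Gamma(1)$ from Definition~\ref{def:tilde-gamma}, the $\wtil T$-integral equals $\pi^{n(n-1)/2}\prod_{j=1}^n\pi\,\Gamma(n-j+1)=\pi^n\wtil\Gamma_n(n)$. Hence $\int_{\cU_1(n)}[\dif\wtil G_1]=\pi^{n^2}/(\pi^n\wtil\Gamma_n(n))=\pi^{n(n-1)}/\wtil\Gamma_n(n)$, and substituting $\wtil\Gamma_n(n)=\pi^{n(n-1)/2}\,1!\,2!\cdots(n-1)!$ gives the claimed closed form $\pi^{n(n-1)/2}/(1!\,2!\cdots(n-1)!)$.

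There is no genuine obstacle here; the only thing requiring care is the bookkeeping of the powers of $2$ and $\pi$, since the present theorem and Theorem~\ref{th:vol-of-unitary-group} must differ by exactly the factor $(2\pi)^n$ accounting for the $n$ circle directions removed in passing from $\cU(n)$ to $\cU(n)/\cU(1)^{\times n}$, and one must use the unique (positive-diagonal) normalization of the decomposition so that this is the only discrepancy. In fact this yields a quick consistency check that sidesteps the computation altogether: from $\cU(n)\sim\Pa{\cU(n)/\cU(1)^{\times n}}\times\cU(1)^{\times n}$ and $[\dif\wtil G]=[\dif\wtil G_1][\dif\wtil D]$, together with $\dif\wtil D\cdot\wtil D^*=\diag(\sqrt{-1}\,\dif\theta_1,\ldots,\sqrt{-1}\,\dif\theta_n)$ so that $\int_{\cU(1)^{\times n}}[\dif\wtil D]=(2\pi)^n$, one obtains $\vol\Pa{\cU_1(n)}=\vol\Pa{\cU(n)}/(2\pi)^n$, and Theorem~\ref{th:vol-of-unitary-group} then gives the stated value directly.
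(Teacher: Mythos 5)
Your proposal is correct and follows essentially the same route as the paper: the Gaussian integral $\pi^{n^2}$, the decomposition $\wtil X=\wtil T\wtil U$ with real diagonal in $\wtil U$ via part (ii) of Proposition~\ref{prop:LU-complex-case}, and the polar-coordinate evaluation $\int\abs{\wtil t_{jj}}^{2(n-j)}e^{-\abs{\wtil t_{jj}}^2}\dif\wtil t_{jj}=\pi\,\Gamma(n-j+1)$ giving $\pi^n\wtil\Gamma_n(n)$ for the $\wtil T$-integral. The closing consistency check against Theorem~\ref{th:vol-of-unitary-group} via the factor $(2\pi)^n$ is a sensible addition but not part of the paper's argument.
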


\begin{proof}
Now consider the transformation used in
Proposition~\ref{prop:LU-complex-case} with all the elements in
$\wtil T$ complex and the diagonal elements of $\wtil U$ real. Then
\begin{eqnarray*}
B=\int_{\wtil T}\int_{\wtil U_1}
\Pa{\prod^n_{j=1}t^{2(n-j)}_{jj}}e^{-\sum_{j\geqslant k}\abs{\wtil
t_{jk}}^2}[\dif\wtil T][\dif\wtil G_1].
\end{eqnarray*}
Note that
\begin{eqnarray*}
\prod_{j>k} \Pa{\int_{\wtil t_{jk}}e^{-\abs{\wtil
t_{jk}}^2}\dif\wtil t_{jk}} = \pi^{\frac{n(n-1)}2}.
\end{eqnarray*}
Let $\wtil t_{jj}=\wtil t=t_1+\sqrt{-1}t_2$. Put $t_1=r\cos\theta$
and $t_2=r\sin\theta$. Let the integral over $\wtil t_{jj}$ be
denoted by $a_j$. Then
\begin{eqnarray*}
a_j &=& \int_{\wtil t} \abs{\wtil t}^{2(n-j)}e^{-\abs{\wtil
t}^2}\dif\wtil t =
\int^{+\infty}_{-\infty}\int^{+\infty}_{-\infty}(t^2_1+t^2_2)^{n-j}e^{-(t^2_1+t^2_2)}\dif t_1\dif t_2\\
&=&4
\int^{+\infty}_0\int^{+\infty}_0(t^2_1+t^2_2)^{n-j}e^{-(t^2_1+t^2_2)}\dif t_1\dif t_2\\
&=&4\int^{\frac\pi2}_{\theta=0}\int^\infty_{r=0} (r^2)^{n-j}\cdot
e^{-r^2}\cdot r\cdot\dif r\dif\theta\\
&=&\pi\Gamma(n-j+1)~~\text{for}~~n-j+1>0.
\end{eqnarray*}
Then
\begin{eqnarray*}
B = \pi^n\wtil \Gamma_n(n)\int_{\wtil U_1}[\dif\wtil G_1]=\pi^{n^2}
\end{eqnarray*}
implies that
\begin{eqnarray*}
\int_{\wtil U_1}[\dif\wtil G_1]=\frac{\pi^{n(n-1)}}{\wtil
\Gamma_n(n)},
\end{eqnarray*}
which establishes the result.
\end{proof}

\begin{exam}
Evaluate the integral
\begin{framed}
\begin{eqnarray}
\Delta(\alpha) = \int_{\wtil X} [\dif\wtil X]\abs{\det(\wtil X\wtil
X^*)}^\alpha \cdot e^{-\Tr{\wtil X\wtil X^*}} = \pi^{n^2}\frac{\wtil
\Gamma_n(\alpha+n)}{\wtil \Gamma_n(n)}
\end{eqnarray}
\end{framed}
\noindent for $\re(\alpha)>-1$, where $\wtil X\in\complex^{n\times
n}$ matrix of independent complex variables. Indeed, put $\wtil
X=\wtil T\wtil U$, where $\wtil U\in\cU(n)$, $\wtil T$ is lower
triangular with real distinct and positive diagonal elements. Then
\begin{eqnarray*}
\abs{\det(\wtil X\wtil X^*)}^\alpha &=&
\prod^n_{j=1}t^{2\alpha}_{jj},\\
\Tr{\wtil X\wtil X^*} &=& \Tr{\wtil T\wtil T^*} = \sum_{j\geqslant
k} \abs{\wtil t_{jk}}^2 =
\sum^n_{j=1}t^2_{jj}+\sum_{j>k}\abs{\wtil t_{jk}}^2,\\
~[\dif\wtil X] &=& \Pa{\prod^n_{j=1}t^{2(n-j)+1}_{jj}}[\dif\wtil
T][\dif\wtil G],\\
\int_{\wtil U}[\dif\wtil G] &=& \frac{2^n\pi^{n^2}}{\wtil
\Gamma_n(n)}.
\end{eqnarray*}
Thus
\begin{eqnarray*}
\Delta(\alpha) &=& \int_{\wtil T,\wtil U}[\dif\wtil T][\dif\wtil
G]\Pa{\prod^n_{j=1}t^{2\alpha+2(n-j)+1}e^{-t^2_{jj}}}e^{-\sum_{j>k}\abs{\wtil
t_{jk}}^2}\\
&=& \Pa{\prod^n_{j=1}\int^\infty_0
t^{2\alpha+2(n-j)+1}e^{-t^2_{jj}}\dif t_{jj}}
\cdot\Pa{\prod_{j>k}\int^{+\infty}_{-\infty} e^{-\abs{\wtil
t_{jk}}^2}\dif\wtil t_{jk}}\cdot\int_{\wtil
U}[\dif\wtil G]\\
&=& \Pa{2^{-n}\prod^n_{j=1}\Gamma(\alpha+n-j+1)}\cdot
\pi^{\frac{n(n-1)}2}\cdot\frac{2^n\pi^{n^2}}{\wtil \Gamma_n(n)}
\end{eqnarray*}
for $\re(\alpha+n)>n-1$ or $\re(\alpha)>-1$. That is,
\begin{eqnarray*}
\Delta(\alpha) = \pi^{n^2}\frac{\wtil \Gamma_n(\alpha+n)}{\wtil
\Gamma_n(pn)}~~\text{for}~~\re(\alpha)>-1.
\end{eqnarray*}
\end{exam}

\begin{prop}
Let $\wtil X\in\complex^{n\times n}$ be a hermitian matrix of
independent complex entries with real distinct eigenvalues
$\lambda_1>\lambda_2>\cdots>\lambda_n$. Let $\wtil U\in\cU(n)$ with
real diagonal entries and let $\wtil X = \wtil UD\wtil U^*$, where
$D=\diag(\lambda_1,\ldots,\lambda_n)$. Then
\begin{eqnarray}\label{eq:dif-psd}
\fbox{$[\dif\wtil X] =
\Pa{\prod_{j>k}\abs{\lambda_k-\lambda_j}^2}\cdot[\dif D][\dif\wtil
G_1]$,}
\end{eqnarray}
where $\dif\wtil G_1=\wtil U^*\cdot\dif\wtil U$.
\end{prop}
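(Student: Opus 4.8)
The strategy is to mimic the proof of the real spectral-decomposition Jacobian (Proposition~\ref{prop:spectral-decom}) step by step, tracking the complex structure carefully. First I would differentiate the relation $\wtil X = \wtil U D \wtil U^*$ to get
$$
\dif\wtil X = \dif\wtil U\cdot D\cdot \wtil U^* + \wtil U\cdot\dif D\cdot\wtil U^* + \wtil U\cdot D\cdot\dif\wtil U^*,
$$
and then conjugate by $\wtil U$ (i.e. form $\wtil U^*\cdot\dif\wtil X\cdot\wtil U$), which is a measure-preserving linear change of variables on the space of Hermitian matrices because $\wtil U$ is unitary and the map $\wtil Y\mapsto\wtil U^*\wtil Y\wtil U$ on Hermitian matrices has Jacobian $\abs{\det(\wtil U\wtil U^*)}^n=1$ by Proposition~\ref{prop:adjoint-action}. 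Setting $\dif\wtil Y := \wtil U^*\cdot\dif\wtil X\cdot\wtil U$ and recalling $\dif\wtil G_1 = \wtil U^*\cdot\dif\wtil U$ (which is skew Hermitian since $\wtil U^*\wtil U=\I_n$), one obtains
$$
\dif\wtil Y = \dif\wtil G_1\cdot D + \dif D + D\cdot\dif\wtil G_1^* = \dif D + [\dif\wtil G_1, D],
$$
using $\dif\wtil G_1^* = -\dif\wtil G_1$, exactly as in the real case but now with $\dif\wtil G_1$ skew Hermitian rather than skew symmetric.

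Next I would read off the entries. The diagonal gives $\dif\wtil y_{jj} = \dif\lambda_j$ (a real variable, consistent with $\wtil Y$ Hermitian and $D$ real), contributing $[\dif D] = \prod_{j=1}^n \dif\lambda_j$. For $j>k$, the off-diagonal entry is $\dif\wtil y_{jk} = (\lambda_k - \lambda_j)\dif\wtil g_{jk}$, where $\dif\wtil g_{jk}$ is the $(j,k)$ entry of $\dif\wtil G_1$. Writing $\dif\wtil y_{jk} = \dif y^{(1)}_{jk} + \sqrt{-1}\,\dif y^{(2)}_{jk}$ and similarly $\dif\wtil g_{jk} = \dif g^{(1)}_{jk} + \sqrt{-1}\,\dif g^{(2)}_{jk}$, multiplication by the \emph{real} scalar $(\lambda_k-\lambda_j)$ scales each of the two real components by that factor, so
$$
\dif y^{(1)}_{jk}\wedge\dif y^{(2)}_{jk} = (\lambda_k-\lambda_j)^2\,\dif g^{(1)}_{jk}\wedge\dif g^{(2)}_{jk} = \abs{\lambda_k-\lambda_j}^2\,\dif g^{(1)}_{jk}\wedge\dif g^{(2)}_{jk},
$$
ignoring signs as per the conventions of the paper. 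Taking the wedge over all $j>k$ and combining with the diagonal part yields
$$
[\dif\wtil Y] = \Pa{\prod_{j>k}\abs{\lambda_k-\lambda_j}^2}[\dif D][\dif\wtil G_1],
$$
and since $[\dif\wtil X] = [\dif\wtil Y]$ this is the desired identity~\eqref{eq:dif-psd}.

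The main subtlety — and the only place where something genuinely new over the real case enters — is the bookkeeping of which components of $\dif\wtil G_1$ are independent and why the diagonal of $\dif\wtil G_1$ contributes nothing to $[\dif\wtil X]$. Because $\dif\wtil G_1$ is skew Hermitian, its diagonal entries are purely imaginary; but in the transformed equation $\dif\wtil Y = \dif D + [\dif\wtil G_1, D]$ the commutator $[\dif\wtil G_1,D]$ has zero diagonal (the $(j,j)$ entry of $\dif\wtil G_1\cdot D - D\cdot\dif\wtil G_1$ is $(\lambda_j-\lambda_j)(\dif\wtil g_{jj}) = 0$), so the imaginary diagonal parameters of $\dif\wtil G_1$ simply do not appear — which is correct, since $\wtil X$ has only $n^2$ real degrees of freedom ($n$ eigenvalues plus $n^2-n$ real parameters in $\wtil G_1$ off the diagonal, matching $n + 2\binom{n}{2} = n^2$), and the diagonal of $\dif\wtil G_1$ over $\cU_1(n)$ is already non-independent as noted in the text preceding Proposition~\ref{prop:LU-complex-case}. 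I would remark explicitly that $X=\wtil U D\wtil U^*$ is made unique precisely by the normalization that $\wtil U$ has real (nonnegative) diagonal entries, so that the integration over $\wtil G_1$ runs over $\cU_1(n)=\cU(n)/\cU(1)^{\times n}$ with no residual overcounting; this is the analogue of the factor-$2^n$ discussion following Proposition~\ref{prop:spectral-decom} in the real case. Beyond this observation the computation is entirely routine.
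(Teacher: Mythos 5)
Your proof is correct and follows essentially the same route as the paper: differentiate $\wtil X=\wtil U D\wtil U^*$, conjugate by $\wtil U$ to reduce to $\dif D+[\dif\wtil G_1,D]$, and read off the factor $\abs{\lambda_k-\lambda_j}^2$ from the two real components of each off-diagonal entry. Your version actually spells out the entrywise computation and the degree-of-freedom count that the paper leaves implicit by referring to ``the same steps'' as in its Proposition on the Cayley-parametrized spectral decomposition.
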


\begin{proof}
Take the differentials in $\wtil X=\wtil UD\wtil U^*$ to get
\begin{eqnarray*}
\dif\wtil X = \dif\wtil U\cdot D\cdot\wtil U^* + \wtil U\cdot\dif
D\cdot \wtil U^*+ \wtil U\cdot D\cdot \dif\wtil U^*.
\end{eqnarray*}
Premultiply by $\wtil U^*$, postmultiply by $\wtil U$ and observe
that $\dif\wtil G_1$ is skew hermitian. Then one has
\begin{eqnarray*}
\dif\wtil W = \dif\wtil G_1\cdot D + \dif D - D\cdot \dif\wtil G_1
\end{eqnarray*}
where $\dif\wtil W  = \wtil U^*\cdot\dif\wtil X\wtil U$ with
$[\dif\wtil W]=[\dif\wtil X]$. Using the same steps as in the proof
of Proposition~\ref{prop:spetral-decom-complex}, we have
\begin{eqnarray*}
[\dif\wtil W] = \Pa{\prod_{j>k}\abs{\lambda_k-\lambda_j}^2}[\dif
D][\dif\wtil G_1].
\end{eqnarray*}
Hence the result follows.
\end{proof}

\begin{exam}\label{exam:two-integrals}
Let $D=\diag(\lambda_1,\ldots,\lambda_n)$ where the $\lambda_j$'s
are real distinct and positive or let
$\lambda_1>\cdots>\lambda_n>0$. Show that
\begin{enumerate}[(i)]
\item
\begin{eqnarray*}
\int_{\lambda_1>\cdots>\lambda_n>0} [\dif
D]\Pa{\prod_{j>k}\abs{\lambda_k-\lambda_j}^2}e^{-\Tr{D}} =
\frac{\Pa{\wtil
\Gamma_n(n)}^2}{\pi^{n(n-1)}}=\Pa{\prod^n_{j=1}\Gamma(j)}^2;
\end{eqnarray*}
\item \begin{eqnarray*}
\int_{\lambda_1>\cdots>\lambda_n>0} [\dif
D]\Pa{\prod_{j>k}\abs{\lambda_k-\lambda_j}^2}\cdot
\Pa{\prod^n_{j=1}\abs{\lambda_j}^{\alpha-n}}\cdot e^{-\Tr{D}} =
\frac{\wtil \Gamma_n(\alpha)\wtil \Gamma_n(n)}{\pi^{n(n-1)}}.
\end{eqnarray*}
\end{enumerate}
In fact, let $\wtil Y$ be a $n\times n$ hermitian positive definite
matrix, $\wtil U$ a unitary matrix with real diagonal elements such
that
\begin{eqnarray*}
\wtil U^* \wtil Y\wtil U = D=\diag(\lambda_1,\ldots,\lambda_n).
\end{eqnarray*}
From the matrix-variate gamma integral
\begin{eqnarray*}
\wtil \Gamma_n(\alpha) = \int_{\wtil Y=\wtil Y^*>0}[\dif\wtil Y]
\abs{\det(\wtil Y)}^{\alpha-n}\cdot e^{-\Tr{\wtil
Y}}~~\text{for}~~\re(\alpha)>n-1.
\end{eqnarray*}
Hence
\begin{eqnarray*}
\wtil \Gamma_n(n) = \int_{\wtil Y=\wtil Y^*>0}[\dif\wtil Y]
e^{-\Tr{\wtil Y}}.
\end{eqnarray*}
Then $\wtil Y=\wtil UD\wtil U^*$ implies that
\begin{eqnarray*}
[\dif\wtil Y]=\Pa{\prod_{j>k}\abs{\lambda_k-\lambda_j}^2}[\dif
D][\dif\wtil G_1],\quad \dif\wtil G_1=\wtil U^*\cdot \dif\wtil U.
\end{eqnarray*}
Note that
\begin{eqnarray*}
\Tr{\wtil Y} &=& \Tr{\wtil UD\wtil U^*} = \Tr{D},\\
\abs{\det(\wtil Y)}^{\alpha-n}&=&
\Pa{\prod^n_{j=1}\abs{\lambda_j}}^{\alpha-n}.
\end{eqnarray*}
Then
\begin{eqnarray*}
\wtil \Gamma_n(n) = \int_{\lambda_1>\cdots>\lambda_n>0} [\dif
D]\Pa{\prod_{j>k}\abs{\lambda_k-\lambda_j}^2}e^{-\Tr{D}}\times
\int_{\wtil U} [\dif\wtil G_1].
\end{eqnarray*}
Clearly
\begin{eqnarray*}
\int_{\wtil U} [\dif\wtil G_1] = \frac{\pi^{n(n-1)}}{\wtil
\Gamma_n(n)}.
\end{eqnarray*}
Substituting this, results (i) and (ii) follow.
\end{exam}

\begin{remark}
We can try to compute the following integral in \eqref{eq:dif-psd}:
\begin{eqnarray*}
\int_{\wtil X:\Tr{\wtil X}=1} [\dif\wtil X] &=&
\int_{\lambda_1>\lambda_2>\cdots>\lambda_n>0}\delta\Pa{1-\sum^n_{j=1}\lambda_j}\prod_{i<j}(\lambda_i-\lambda_j)^2\prod^n_{j=1}\dif\lambda_j\times
\int_{\cU_1(n)}[\dif\wtil G_1]
\end{eqnarray*}
which is equivalent to the following
\begin{eqnarray*}
\int_{\wtil X:\Tr{\wtil X}=1} [\dif\wtil X] &=&
\frac1{n!}\int\delta\Pa{1-\sum^n_{j=1}\lambda_j}\prod_{i<j}(\lambda_i-\lambda_j)^2\prod^n_{j=1}\dif\lambda_j\times
\int_{\cU_1(n)}[\dif\wtil G_1]\\
&=&
\frac1{n!}\frac{\prod^{n-1}_{j=0}\Gamma(n-j)\Gamma(n-j+1)}{\Gamma(n^2)}\times
\frac{\pi^{\frac{n(n-1)}2}}{\prod^n_{j=1}\Gamma(j)}\\
&=&\frac1{n!}\pi^{\frac{n(n-1)}2}\frac{\Gamma(1)\Gamma(2)\cdots\Gamma(n+1)}{\Gamma(n^2)},
\end{eqnarray*}
where we used the integral formula:
\begin{eqnarray*}
\int\delta\Pa{1-\sum^n_{j=1}\lambda_j}\prod_{i<j}(\lambda_i-\lambda_j)^2\prod^n_{j=1}\dif\lambda_j=
\frac{\prod^{n-1}_{j=0}\Gamma(n-j)\Gamma(n-j+1)}{\Gamma(n^2)}.
\end{eqnarray*}
This means that
\begin{framed}
\begin{eqnarray}\label{eq:vol-formula-of-state}
\vol\Pa{\density{\complex^n}} =
\pi^{\frac{n(n-1)}2}\frac{\Gamma(1)\Gamma(2)\cdots\Gamma(n)}{\Gamma(n^2)}.
\end{eqnarray}
\end{framed}
We make some remarks here: to obtain the volume of the set of mixed
states acting on $\complex^n$, one has to integrate the volume
element $[\dif \wtil X]$. By definition, the first integral gives
$\frac1{n!}\frac1{C^{\rH\rS}_n}$, where $C^{\rH\rS}_n=C^{(1,2)}_n$
and
\begin{eqnarray*}
\frac1{C^{(\alpha,\beta)}_n}
&=&\overbrace{\int^\infty_0\cdots\int^\infty_0}^n\delta\Pa{1-\sum^n_{j=1}\lambda_j}\prod^n_{k}\lambda^{\alpha-1}_k\prod_{i<j}(\lambda_i-\lambda_j)^\beta\prod^n_{j=1}\dif\lambda_j\\
&=&\frac1{\Gamma\Pa{\alpha
n+\beta\frac{n(n-1)}2}}\prod^n_{j=1}\frac{\Gamma\Pa{1+j\frac{\beta}2}\Gamma\Pa{\alpha+(j-1)\frac{\beta}2}}{\Gamma\Pa{1+\frac{\beta}2}},
\end{eqnarray*}
while the second is equal to the volume of the flag manifold. To
make the diagonalization unique, one has to restrict to a certain
order of eigenvalues, say $\lambda_1>\lambda_2>\cdots>\lambda_n$ (a
generic density matrix is not degenerate), which corresponds to a
choice of a certain Weyl chamber of the eigenvalue simplex
$\Delta_{n-1}$. In other words, different permutations of the vector
of $n$ generically different permutations (Weyl chambers) equals to
$n!$. This is why the factor $\frac1{n!}$ appears in the right hand
side in the above identity. In summary, the transformation
$$
\wtil
X\mapsto (D,\wtil U)
$$
such that $\wtil X=\wtil UD\wtil U^*$, is one-to-one if and only if
$D=\diag(\lambda_1,\ldots,\lambda_n)$, where
$\lambda_1>\cdots>\lambda_n$ and $\wtil U\in \cU(n)/\cU(1)^{\times
n}$.
\end{remark}

\begin{remark}
In this remark, we will discuss the connection between two integrals
\cite{Santosh2011}: for $\alpha,\beta>0$,
\begin{eqnarray*}
\cI^{(1)}_n(\alpha, \beta)
&=&\overbrace{\int^\infty_0\cdots\int^\infty_0}^n\delta\Pa{1-\sum^n_{j=1}\lambda_j}
\prod^n_{k=1}\lambda^{\alpha-1}_k\prod_{1\leqslant i<j\leqslant n}(\lambda_i-\lambda_j)^\beta\prod^n_{j=1}\dif\lambda_j,\\
\cI^{(\mathrm{w})}_n(\alpha, \beta)
&=&\overbrace{\int^\infty_0\cdots\int^\infty_0}^n
\exp\Pa{-\sum^n_{j=1}x_j}\prod^n_{k=1}x^{\alpha-1}_k\prod_{1\leqslant
i<j\leqslant n}(x_i-x_j)^\beta\prod^n_{j=1}\dif x_j.
\end{eqnarray*}
We introduce an auxiliary variable $t$ in the expression of
$\cI^{(1)}_n(\alpha,\beta)$, and define $I(t)$ as
\begin{eqnarray}
I(t)=\overbrace{\int^\infty_0\cdots\int^\infty_0}^n\delta\Pa{t-\sum^n_{j=1}\lambda_j}
\prod^n_{k=1}\lambda^{\alpha-1}_k\prod_{1\leqslant i<j\leqslant
n}(\lambda_i-\lambda_j)^\beta\prod^n_{j=1}\dif\lambda_j.
\end{eqnarray}
Then $I(1)=\cI^{(1)}_n(\alpha,\beta)$. Taking the Laplace transform,
denoted by $\sL$, of $I(t)$, we obtain
\begin{eqnarray}
\sL(I)&:=&
\overbrace{\int^\infty_0\cdots\int^\infty_0}^n\Br{\int\delta\Pa{t-\sum^n_{j=1}\lambda_j}e^{-st}\dif
t} \prod^n_{k=1}\lambda^{\alpha-1}_k\prod_{1\leqslant i<j\leqslant
n}(\lambda_i-\lambda_j)^\beta\prod^n_{j=1}\dif\lambda_j\\
&=&\overbrace{\int^\infty_0\cdots\int^\infty_0}^n
\exp\Pa{-s\sum^n_{j=1}\lambda_j}
\prod^n_{k=1}\lambda^{\alpha-1}_k\prod_{1\leqslant i<j\leqslant
n}(\lambda_i-\lambda_j)^\beta\prod^n_{j=1}\dif\lambda_j\\
&=&s^{-\alpha
n-\beta\binom{n}{2}}\overbrace{\int^\infty_0\cdots\int^\infty_0}^n
\exp\Pa{-\sum^n_{j=1}x_j}
\prod^n_{k=1}x^{\alpha-1}_k\prod_{1\leqslant i<j\leqslant
n}(x_i-x_j)^\beta\prod^n_{j=1}\dif x_j,
\end{eqnarray}
which means that
\begin{eqnarray}
\wtil I(s) := \sL(I) = s^{-\alpha
n-\beta\binom{n}{2}}\cdot\cI^{(\mathrm{w})}(\alpha,\beta).
\end{eqnarray}
Therefore
\begin{eqnarray}
I(t) = \sL^{-1}(\sL(I)) = \sL^{-1}(\wtil I) = \frac{t^{\alpha
n+\beta\binom{n}{2}-1}}{\Gamma\Pa{\alpha
n+\beta\binom{n}{2}}}\cdot\cI^{(\mathrm{w})}(\alpha,\beta).
\end{eqnarray}
Letting $t=1$ gives the conclusion:
\begin{eqnarray}
\cI^{(1)}_n(\alpha,\beta) = \frac{1}{\Gamma\Pa{\alpha
n+\beta\binom{n}{2}}}\cdot\cI^{(\mathrm{w})}(\alpha,\beta).
\end{eqnarray}
In order to calculate the integral $\cI^{(1)}_n(\alpha,\beta)$, it
suffices to calculate the integral
$\cI^{(\mathrm{w})}(\alpha,\beta)$, which is derived in
Corollary~\ref{cor:Lag-int} via Selberg's integral (See Appendix):
\begin{eqnarray}
\cI^{(\mathrm{w})}(\alpha,\beta)=\prod^n_{j=1}\frac{\Gamma\Pa{1+j\frac{\beta}2}\Gamma\Pa{\alpha+(j-1)\frac{\beta}2}}{\Gamma\Pa{1+\frac{\beta}2}}.
\end{eqnarray}
\end{remark}

Matrix integrals, especially over unitary groups, are very
important. We recently present some results of this aspect
\cite{Zhang1}. Generally speaking, the computation of matrix
integrals is very difficult from the first principle. Thus
frequently we need to perform variable substitution in computing
integrals. The first step in substitution is to compute Jacobians of
this transformation, this is what we present. We also apply the
matrix integrals over unitary groups to a problem
\cite{Zhang2,Zhang3} in quantum information theory.

\section{The volume of a compact Lie group}

The content of this section is mainly from \cite{Macdonald1980}.
Some missing details are provided. Note that Macdonald's result
presents unifying treatment for computing the volume of a compact
Lie group. Before proceeding, it is necessary to recall the notion
of root system and its related properties.

\begin{definition}
A root system $(E,R)$ is a finite-dimensional \blue{real} vector
space $E$ with an inner product $\Inner{\cdot}{\cdot}$, together
with a finite collection $R$ of nonzero vectors in $E$ satisfying
the following properties:
\begin{enumerate}[(i)]
\item $\spn \set{R}=E$.
\item If $\alpha\in R$, then $L_{\alpha}\cap R=\set{\pm\alpha}$, where $L_\alpha=\set{r\alpha:
r\in\real}$.
\item If $\alpha,\beta\in R$, then $s_\alpha(\beta)\in R$, where $s_\alpha$ is the linear transformation of $E$ defined by
$$
s_\alpha(x)=\beta-2\frac{\inner{x}{\alpha}}{\inner{\alpha}{\alpha}}\alpha,\quad
x\in E.
$$
\item For all $\alpha,\beta\in R$,
$2\frac{\inner{\beta}{\alpha}}{\inner{\alpha}{\alpha}}\in\integer$.
\end{enumerate}
The dimension of $E$ is called the \emph{rank} of the root system
and the elements of $R$ are called \emph{roots}.
\end{definition}

\begin{definition}
If $(E,R)$ is a root system, the \emph{Weyl group} $W$ of $R$ is the
subgroup of the orthogonal group of $E$ generated by the reflections
$s_\alpha(\alpha\in R)$. That is, for arbitrary positive integer
$k\in\natural$ and any non-negative integer $n_1,\ldots,n_k$,
$s^{n_1}_{\alpha_1}\cdots s^{n_k}_{\alpha_k}\in W$, i.e.,
$$
W = \Set{s^{n_1}_{\alpha_1}\cdots s^{n_k}_{\alpha_k}:
s_{\alpha_j}\in R\text{ for }j=1,\ldots,k\in \natural}.
$$
\end{definition}

\begin{definition}
If $(E,R)$ is a root system, then for each root $\alpha\in R$, the
\emph{co-root} $\alpha^\vee$ is the vector given by
$$
\alpha^\vee := 2\frac{\alpha}{\inner{\alpha}{\alpha}}.
$$
The set of all co-roots is denoted $R^\vee$ and is called the
\emph{dual root system} to $R$.
\end{definition}
It is easily seen that
$\Inner{\beta}{\alpha^\vee}=2\frac{\Inner{\beta}{\alpha}}{\inner{\beta}{\beta}}\in\integer$
for $\alpha,\beta\in R$. Besides, we see that
$$
\Inner{\alpha^\vee}{\alpha^\vee} = \frac4{\inner{\alpha}{\alpha}}.
$$
In fact, if $R$ is a rooty system, then $R^\vee$ is also a root
system and the Weyl group for $R^\vee$ is the same as the Weyl group
for $R$. Furthermore, $R^{\vee\vee}=R$. Thus
$$
\alpha = 2\frac{\alpha^\vee}{\inner{\alpha^\vee}{\alpha^\vee}}.
$$
\begin{definition}
If $(E,R)$ is a root system, a subset $\Delta$ of $R$ is called a
\emph{base} if the following conditions hold:
\begin{enumerate}[(i)]
\item $\Delta$ is a basis for $E$ as a vector space.
\item Each root $\alpha\in R$ can be expressed as a linear combination of elements
of $\Delta$ with integer coefficients and in such a way that the
coefficients are either all non-negative or all non-positive.
\end{enumerate}
The roots for which the coefficients are non-negative are called
\emph{positive roots} and the others are called \emph{negative
roots} (relative to the base $\Delta$). The set of positive roots
relative to a fixed base $\Delta$ is denoted by $R^+$ and the set of
negative roots is denote by $R^-$. Thus $R=R^+\sqcup R^-$. The
elements of $\Delta$ are called the \emph{positive simple roots}.
\end{definition}
Note that if $\Delta$ is a base for $R$, then the set of all
co-roots $\alpha^\vee(\alpha\in\Delta)$ is a base for the dual root
system $R^\vee$. We also see that if $\Delta$ is a base, then $W$ is
generated by the reflections $s_\alpha$ with $\alpha\in \Delta$.

\begin{definition}
An element $\mu$ of $E$ is an \emph{integral element} if for all
$\alpha\in R$, $\Inner{\mu}{\alpha^\vee}\in\integer$. If $\Delta$ is
a base for $R$, an element $\mu$ of $E$ is \emph{dominant} (relative
to $\Delta$) if $\Inner{\mu}{\alpha}\geqslant0$ for all
$\alpha\in\Delta$ and strictly dominant if $\Inner{\mu}{\alpha}>0$
for all $\alpha\in\Delta$.
\end{definition}

\begin{definition}
Let $\Delta=\Set{\alpha_1,\ldots,\alpha_n}$ be a base. Then the
\emph{fundamental weights} (relative to $\Delta$) are the elements
$\mu_1,\ldots,\mu_n$ with the property that
$$
\Inner{\mu_i}{\alpha^\vee_j}=\delta_{ij},\quad i,j=1,\ldots,n.
$$
That is, $\set{\mu_1,\ldots,\mu_n}$ can be viewed as the dual base
of $\Delta^\vee:=\Set{\alpha^\vee_1,\ldots,\alpha^\vee_n}$.
\end{definition}

\begin{definition}
Let $\Delta=\Set{\alpha_1,\ldots,\alpha_n}$ be a base for $R$ and
$R^+$ the associated set of positive roots. We then let $\rho$
denote \emph{half the sum of the positive roots} and let $\sigma$
denote \emph{half the sum of the positive co-roots}:
$$
\rho = \frac12\sum_{\alpha\in R^+}\alpha,\quad \sigma
=\frac12\sum_{\alpha\in R^+}\alpha^\vee.
$$
They are often called the \emph{Weyl vectors}.
\end{definition}
One important result for Weyl vector is that $\rho$ are strictly
dominant integral element; indeed,
\begin{eqnarray}
\Inner{\rho}{\alpha^\vee}=1,\quad \forall \alpha^\vee\in
\Delta^\vee.
\end{eqnarray}
Similarly,
\begin{eqnarray}
\Inner{\sigma}{\alpha}=1,\quad \forall \alpha\in \Delta.
\end{eqnarray}\label{eq:weyl-vec-basic-weight}
It is well-known that the Weyl vector can be expressed as the sum of
fundamental weights:
\begin{eqnarray}
\rho = \sum^n_{j=1}\mu_j.
\end{eqnarray}

\begin{definition}
Let $\Delta=\set{\alpha_1,\ldots,\alpha_n}$ be a base for $R$. For a
positive root $\alpha\in R^+$, the \emph{height} of $\alpha$ is
defined by
\begin{eqnarray*}
\mathrm{ht}(\alpha) = \sum^n_{j=1}k_j\in\integer_+,
\end{eqnarray*}
where $\alpha = \sum^n_{j=1}k_j\alpha_j$ for non-negative integers
$k_1,\ldots,k_n$. That is,
\begin{eqnarray*}
\mathrm{ht}\Pa{\sum^n_{j=1}k_j\alpha_j} =
\sum^n_{j=1}k_j\in\integer_+.
\end{eqnarray*}
Apparently, $\mathrm{ht}(\alpha)=1$ if and only if
$\alpha\in\Delta$.
\end{definition}
With these notations, we can show that
$\Inner{\rho}{\alpha^\vee}=\mathrm{ht}(\alpha^\vee)$ for any
$\alpha\in R^+$. Note that the base
$\Delta=\set{\alpha_1,\ldots,\alpha_n}$ for the root system $R$ is
given, then $\Delta^\vee=\set{\alpha^\vee_1,\ldots,\alpha^\vee_n}$
is known for the dual root system $R^\vee$. Now for any $\alpha\in
R^+$, $\alpha^\vee$ is also positive co-root and it is can be
written as
$$
\alpha^\vee = \sum^n_{j=1}k_j\alpha^\vee_j,\quad k_j\in\natural,
j=1,\ldots,n.
$$
By using \eqref{eq:weyl-vec-basic-weight}, we see that
\begin{eqnarray*}
\Inner{\rho}{\alpha^\vee} =
\Inner{\sum^n_{i=1}\mu_i}{\sum^n_{j=1}k_j\alpha^\vee_j} =
\sum^n_{i,j=1}k_j\Inner{\mu_i}{\alpha^\vee_j} =
\sum^n_{i,j=1}k_j\delta_{ij} =
\sum^n_{j=1}k_j=\mathrm{ht}(\alpha^\vee).
\end{eqnarray*}
Similarly, $\Inner{\sigma}{\alpha}=\mathrm{ht}(\alpha)$ for
$\alpha\in R^+$.

We also have to recall the following polynomial function
$P:E\to\real$ given by
$$
P(X) = \prod_{\alpha\in R^+}\Inner{\alpha}{X}.
$$
Note that $P$ has an important property:
$$
P(w\cdot X) = \det(w)P(X), \quad\forall w\in W, \forall X\in E.
$$
Indeed, for any $w\in W$, we have
\begin{eqnarray*}
P(w\cdot X) =\prod_{\alpha\in R^+}\Inner{\alpha}{w\cdot X} =
\prod_{\alpha\in R^+}\Inner{w^{-1}\cdot\alpha}{X}.
\end{eqnarray*}
Suppose first that $w=w^{-1}=s_\beta$, where $\beta$ is a positive
simple root. Since $s_\beta$ permutes the positive roots different
from $\beta$, whereas $s_\beta\cdot\beta=-\beta$. Thus
\begin{eqnarray*}
P(s_\beta\cdot X) =\prod_{\alpha\in
R^+\backslash\set{\beta}}\Inner{\alpha}{X}\times \Inner{-\beta}{X} =
-\prod_{\alpha\in R^+}\Inner{\alpha}{X}.
\end{eqnarray*}
This because the determinant of a reflection is $-1$. Note $W$ is
generated by all positive simple roots $\alpha$. Thus for $w\in W$,
it can be written as the product of some positive simple roots:
$w=s_{\alpha_1}\cdots s_{\alpha_k}$. Then
\begin{eqnarray*}
P(w\cdot X) &=& P(s_{\alpha_1}\cdots s_{\alpha_k}\cdot X) =
-P(s_{\alpha_1}\cdots s_{\alpha_{k-1}}\cdot X) \\
&=& \cdots\\
&=&(-1)^kP(X)=\det(s_{\alpha_1}\cdots s_{\alpha_k})P(X)\\
&=&\det(w)P(X).
\end{eqnarray*}
This means that $P$ is alternating, or skew-symmetric.

Next we can start our focus, i.e., computing the volume of a compact
Lie group. Let $G$ be a compact Lie group and let $\g$ be its Lie
algebra, thought of as the tangent space $\rT_eG$ to $G$ at the
identity element $e$. Choose a \blue{Lebesgue measure} $\mu_L$ on
the vector space $\g$. By means of a chart of $G$ at $e$, we can
construct from $\mu_L$ a translation-invariant measure on a
neighborhood of $e$ in $G$, and then we can extend this by
translation in $G$ to a \blue{Haar measure} $\mu_{\mathrm{Haar}}$ on
$G$. The purpose of this section is to establish a formula
(Eq.~\eqref{eq:A1} below), obtained by Macdonald in 1980, for
$\mu_{\mathrm{Haar}}(G)$, the volume of $G$ relative to the measure
$\mu_{\mathrm{Haar}}$, as a function of $\mu_L$.

There are two ingredients in the formula. Firstly, from a suitably
chosen \emph{Chevalley basis} (i.e. root vectors) of the
complexification of $\g$, we can construct an "integer lattice"
$\g_\integer$, which is a lattice in $\g$ and a Lie algebra over
$\integer$. By abuse of notation, let $\mu_L(\g/\g_\integer)$ denote
\red{the volume (with respect to $\mu_L$) of a fundamental
parallelepiped} for $\g_\integer$ in $\g$. Secondly, it is
well-known that the manifold $G$ has the same cohomology, apart from
torsion, as a product of odd-dimensional spheres, say of dimensions
$r_1,\ldots,r_n$. We shall prove the following result:
\begin{thrm}[\cite{Macdonald1980}]
It holds that
\begin{eqnarray}\label{eq:A1}
\mu_{\mathrm{Haar}}(G) =
\mu_L(\g/\g_\integer)\prod^n_{j=1}\vol\Pa{\mathbb{S}^{r_j}}
\end{eqnarray}
where $\vol\Pa{\mathbb{S}^{r_j}}$ is the superficial measure of the
unit sphere $\mathbb{S}^{r_j}$ in $\real^{r_j+1}$, that is (since
$r_j=2m_j+1$ is odd)
$\vol\Pa{\mathbb{S}^{r_j}}=\vol\Pa{\mathbb{S}^{2m_j+1}}=\frac{2\pi^{m_j+1}}{\Gamma(m_j+1)}$.
By abuse of notation, we record the above fact as
$$
\vol(G) =
\vol(\g/\g_\integer)\prod^n_{j=1}\vol\Pa{\mathbb{S}^{r_j}}.
$$
\end{thrm}

\begin{proof}
Let $T$ be a maximal torus in $G$ and let $\liet\subset\g$ be its
Lie algebra. Let $d=\dim(G), n=\dim(T)$. The Lebesgue measure
$\mu_L$ on $\g$ determines a Lebesgue measure (also denoted by
$\mu_L$) and hence a Haar measure (also denoted by
$\mu_{\mathrm{Haar}}$) on $T$\footnote{Note: an inner product on
$\g$ does determine Lebesgue measures on both $\g$ and $\liet$.}.
Let $\liet_\integer$ be the lattice in $\liet$ such that the kernel
of $\exp: \liet \to T$ is $\ker\exp=2\pi\liet_\integer$, then
clearly
\begin{eqnarray}
\mu_{\mathrm{Haar}}(T)=\mu_L(\liet/2\pi\liet_\integer)=(2\pi)^n\mu_L(\liet/\liet_\integer),\\
\vol(T)=\vol(\liet/2\pi\liet_\integer)=(2\pi)^n\vol(\liet/\liet_\integer).\label{eq:ration-vol}
\end{eqnarray}
Let $R$ be the set of roots of $G$ relative to $T$, and let $W$ be
the Weyl group. The roots are \red{real} linear forms on $\liet$,
integer-valued on the lattice $\liet_\integer$. \red{Fix a system of
positive roots, and let $P = \prod_{\alpha>0}\alpha$, a homogeneous
polynomial function on $\liet$ of degree $N=\frac12(d-n)$, i.e., the
number of all positive roots.}

Let $\Inner{\xi}{\eta}$ be a positive definite inner product on $\g$
which is invariant under the adjoint action of $G$ and such that the
cube generated by an orthonormal basis of $\g$ has \textbf{unit
volume} relative to $\mu_L$. Let
$\norm{\xi}=\sqrt{\Inner{\xi}{\xi}}$.

If $\phi$ is a suitable $G$-invariant function on $\g$, we have
\cite{Duistermaat2000}
\begin{eqnarray}
\int_\g \phi(\xi)\dif\mu_L(\xi) =
\frac{\vol(G/T)}{\abs{W}}\int_\liet
\phi(\tau)P(\tau)^2\dif\mu_L(\tau)
\end{eqnarray}
be the counterpart for $\g$ of Weyl's integration formula. By taking
$\phi(\xi) = \exp\Pa{-\frac12\Inner{\xi}{\xi}}$, we obtain
\begin{eqnarray}\label{eq:vital-equation}
\sqrt{(2\pi)^d} =\int_\g
\exp\Pa{-\frac12\Inner{\xi}{\xi}}\dif\mu_L(\xi)=
\frac1{\abs{W}}\frac{\vol(G)}{\vol(T)}\int_\liet
\exp\Pa{-\frac12\Inner{\tau}{\tau}}P(\tau)^2\dif\mu_L(\tau)
\end{eqnarray}
To calculate this integral, we proceed as follows. Let
$\Inner{x}{y}=\sum_jx_jy_j$ be the usual inner product on $\real^n$.
This extends to a scalar product on the algebra $S(\real^n)$ of
polynomial functions on $\real^n$, such that
$\Inner{x^\alpha}{x^\beta}=\alpha!\delta_{\alpha\beta}$ for any two
multi-indices
$\alpha=(\alpha_1,\ldots,\alpha_n),\beta=(\beta_1,\ldots,\beta_n)\in\natural^n$.
Here $\alpha!:=\prod^n_{j=1}\alpha_j!$ and
$\delta_{\alpha\beta}:=\prod^n_{j=1}\delta_{\alpha_j\beta_j}$.
Indeed, such scalar product is equivalently defined
\cite{Iwasaki1997} by
$$
\Inner{p}{q}_\partial := p(\partial)q|_{x=0},
$$
where $p,q\in\real[x_1,\ldots,x_n]$ are homogeneous polynomials of
the same order and $p(\partial)$ means that $x_j$ is replaced by
$\partial_j$ in $p(x)\equiv p(x_1,\ldots,x_n)$. For instance, a
generic polynomial in $\real[x_1,\ldots,x_n]$ can be written as
$$
p(x) = \sum_{\alpha} c_\alpha x^\alpha,\quad x^\alpha:=\prod^n_{j=1}
x^{\alpha_j}_j.
$$
Thus we know that $\left.\frac{\partial^{\alpha_j}}{\partial
x_j^{\alpha_j}}x^{\beta_j}_j\right|_{x_j=0}=\alpha_j!\delta_{\alpha_j\beta_j}$,
and
\begin{eqnarray*}
\Inner{x^\alpha}{x^\beta} =\left.\Pa{
\prod^n_{j=1}\frac{\partial^{\alpha_j}}{\partial
x_j^{\alpha_j}}}\Pa{\prod^n_{j=1}
x^{\beta_j}_j}\right|_{(x_1,\ldots,x_n)=0}
=\prod^n_{j=1}\frac{\partial^{\alpha_j}}{\partial
x_j^{\alpha_j}}x^{\beta_j}_j
=\prod^n_{j=1}\alpha_j!\delta_{\alpha_j\beta_j}=\alpha!\delta_{\alpha\beta}.
\end{eqnarray*}
Let $\gamma$ be the (Gaussian) measure on $\real^n$ defined by
$$
\dif\gamma(x) =
\frac1{\sqrt{(2\pi)^n}}\exp\Pa{-\frac12\Inner{x}{x}}[\dif x]
$$
where $[\dif x]$ is Lebesgue measure. For a function $f$ on
$\real^n$, let $f^*=f*\gamma$, i.e.,
$$
f^*(x) = \int_{\real^n}f(x-y)\dif\gamma(y)
$$
whenever the integral is defined. Then for all $f,g\in S(\real^n)$
we have
\begin{eqnarray}\label{eq:conv-with-gaussian}
f^* = e^{\frac\Delta2}f
\end{eqnarray}
where $\Delta$ is the Laplace operator, and
\begin{eqnarray}\label{eq:inner-lap}
\int_{\real^n} f^*(\mathrm{i}x)
\overline{g^*(\mathrm{i}x)}\dif\gamma(x) = \Inner{f}{g}_\partial.
\end{eqnarray}
To prove Eq.~\eqref{eq:conv-with-gaussian}, we may assume by
linearity that $f$ is a monomial $x^\alpha$, i.e. the coefficient of
$\frac{\xi^\alpha}{\alpha!}$ in $e^{\Inner{x}{\xi}}$. Indeed,
$$
e^{\Inner{x}{\xi}} =
\exp\Pa{\sum^n_{j=1}x_j\xi_j}=\sum^\infty_{m=0}\frac{(\sum^n_{j=1}x_j\xi_j)^m}{m!},
$$
where
$$
\Pa{\sum^n_{j=1}x_j\xi_j}^m = \sum_{\alpha:\abs{\alpha}=m}
\binom{m}{\alpha}x^\alpha\xi^\alpha
$$
for $\binom{m}{\alpha} = \frac{m!}{\alpha_1!\cdots\alpha_n!},
x^\alpha=x^{\alpha_1}_1\cdots x^{\alpha_n}_n$, and
$\xi^\alpha=\xi^{\alpha_1}_1\cdots \xi^{\alpha_n}_n$;
$\abs{\alpha}=\sum^n_{j=1}\alpha_j$ and
$\alpha!=\alpha_1!\cdots\alpha_n!$. Thus
$$
e^{\Inner{x}{\xi}}
=\sum^\infty_{m=0}\sum_{\alpha:\abs{\alpha}=m}\frac{\xi^\alpha}{\alpha!}x^\alpha.
$$
This means that $e^{\Inner{x}{\xi}}$ is a linear combination of
monomials $x^\alpha$. Hence it is enough to verify
Eq.~\eqref{eq:conv-with-gaussian} when $f(x)=e^{\Inner{x}{\xi}}$;
indeed,
\begin{eqnarray*}
f^*(x) &=& \int_{\real^n} f(x-y)\dif\gamma(y)\\
&=& \int_{\real^n}e^{\Inner{x-y}{\xi}}
\frac1{\sqrt{(2\pi)^n}}\exp\Pa{-\frac12\Inner{y}{y}}[\dif y]\\
&=&e^{\Inner{x}{\xi}}\int_{\real^n}e^{\Inner{-y}{\xi}}
\frac1{\sqrt{(2\pi)^n}}\exp\Pa{-\frac12\Inner{y}{y}}[\dif y]
\end{eqnarray*}
implying
\begin{eqnarray*}
f^*(x) &=&
e^{\Inner{x}{\xi}}\prod^n_{j=1}\frac1{\sqrt{2\pi}}\int_{\real^n}
\exp\Pa{-\frac12(y_j^2+2\xi_jy_j)}\dif y_j\\
&=&
e^{\Inner{x}{\xi}}\prod^n_{j=1}e^{\frac12\xi^2_j}\frac1{\sqrt{2\pi}}\int_{\real^n}
\exp\Pa{-\frac12(y_j+\xi_j)^2}\dif (y_j+\xi_j)
\end{eqnarray*}
then a simple calculation shows that
$f^*(x)=e^{\Inner{x}{\xi}}\exp\Pa{\frac12\Inner{\xi}{\xi}}$. Then
\begin{eqnarray*}
\Delta e^{\Inner{x}{\xi}} &=&
\Pa{\sum^n_{j=1}\frac{\partial^2}{\partial
x_j^2}}\exp\Pa{\sum^n_{j=1}\xi_jx_j} =\sum^n_{j=1}
\Pa{\frac{\partial^2}{\partial
x_j^2}\exp\Pa{\sum^n_{j=1}\xi_jx_j}}\\
&=& \sum^n_{j=1}\xi^2_je^{\Inner{x}{\xi}} =
\Inner{\xi}{\xi}e^{\Inner{x}{\xi}},
\end{eqnarray*}
which means that $f^*(x)=e^{\frac\Delta2}f(x)$ since
\begin{eqnarray*}
e^{\frac\Delta2}f(x) &=& e^{\frac\Delta2}e^{\Inner{x}{\xi}} =
\sum^\infty_{m=0}\frac1{m!}\Pa{\frac\Delta2}^me^{\Inner{x}{\xi}}\\
&=& \sum^\infty_{m=0}\frac1{m!2^m}\Delta^me^{\Inner{x}{\xi}} =
\sum^\infty_{m=0}\frac1{m!2^m}\Inner{\xi}{\xi}^me^{\Inner{x}{\xi}}\\
&=& e^{\frac12\Inner{\xi}{\xi}}e^{\Inner{x}{\xi}} = f^*(x).
\end{eqnarray*}
Likewise, it is enough to verify Eq.~\eqref{eq:inner-lap} when
$f(x)$ is replaced by $e^{\Inner{x}{\xi}}$ and $g(x)$ by
$e^{\Inner{x}{\eta}}$ (and $\Inner{f}{g}_\partial$ by
$e^{\Inner{\xi}{\eta}}$). Indeed, for this case,
\begin{eqnarray*}
\Inner{f}{g}_\partial &=&
\Inner{e^{\Inner{x}{\xi}}}{e^{\Inner{x}{\eta}}}_\partial =
\Inner{\sum^\infty_{i=0}\sum_{\alpha:\abs{\alpha}=i}\frac{\xi^\alpha}{\alpha!}x^\alpha}{\sum^\infty_{j=0}\sum_{\beta:\abs{\beta}=j}\frac{\eta^\beta}{\beta!}x^\beta}_\partial\\
&=&\sum^\infty_{i=0}\sum_{\alpha:\abs{\alpha}=i}\sum^\infty_{j=0}\sum_{\beta:\abs{\beta}=j}\frac{\xi^\alpha}{\alpha!}\frac{\eta^\beta}{\beta!}\Inner{x^\alpha}{x^\beta}_\partial\\
&=&\sum^\infty_{j=0}\sum_{\alpha:\abs{\alpha}=j}\frac{\xi^\alpha
\eta^\alpha}{\alpha!} = e^{\Inner{\xi}{\eta}}
\end{eqnarray*}
and
\begin{eqnarray*}
\int_{\real^n} f^*(\mathrm{i}x)
\overline{g^*(\mathrm{i}x)}\dif\gamma(x) &=& \int_{\real^n}
e^{\frac12\Inner{\xi}{\xi}}e^{\Inner{\mathrm{i}x}{\xi}}\overline{e^{\frac12\Inner{\eta}{\eta}}e^{\Inner{\mathrm{i}x}{\eta}}}\dif\gamma(x)\\
&=&e^{\frac12\Inner{\xi}{\xi}}e^{\frac12\Inner{\eta}{\eta}}\int_{\real^n}e^{\mathrm{i}\Inner{x}{\xi-\eta}}\dif\gamma(x)\\
&=&e^{\frac12\Inner{\xi}{\xi}}e^{\frac12\Inner{\eta}{\eta}}e^{-\frac12\Inner{\xi-\eta}{\xi-\eta}}=e^{\Inner{\xi}{\eta}}
\end{eqnarray*}
where we used the fact that
$$
\frac1{\sqrt{2\pi}}\int^\infty_{-\infty}
e^{-at^2}e^{-\mathrm{i}ty}\dif t =
\frac1{\sqrt{2a}}\exp\Pa{-\frac{y^2}{4a}}.
$$
Suppose now that $f$ is a \emph{harmonic} homogeneous polynomial,
i.e. $\Delta f=0$. Then $\Delta^kf=0$ for $k\geqslant1$ and
$$
f^*=e^{\frac\Delta2}f=\sum^\infty_{k=0}\frac{\Delta^k}{2^kk!}f = f
$$
by Eq.~\eqref{eq:conv-with-gaussian}, and therefore from
Eq.~\eqref{eq:inner-lap} we obtain
\begin{eqnarray*}
\Inner{f}{f}_\partial &=&\int_{\real^n} f^*(\mathrm{i}x)
\overline{f^*(\mathrm{i}x)}\dif\gamma(x)= \int_{\real^n}
f(\mathrm{i}x)
\overline{f(\mathrm{i}x)}\dif\gamma(x)\notag\\
&=&\int_{\real^n}f(x)^2\dif\gamma(x).
\end{eqnarray*}
Now we have already proven that if $f$ is a \emph{harmonic}
homogeneous polynomial, then we have
\begin{eqnarray}\label{eq:ff}
\Inner{f}{f}_\partial=\int_{\real^n}f(x)^2\dif\gamma(x)
=\frac1{\sqrt{(2\pi)^n}}\int_{\real^n}f(x)^2e^{-\frac12\inner{x}{x}}[\dif
x].
\end{eqnarray}
The polynomial $P$ is skew-symmetric with respect to the Weyl group
$W$, and the inner product (on $\liet$ or $\liet^*$) is
$W$-invariant. It follows that $\Delta P=0$, and hence from
Eq.~\eqref{eq:ff} that
$$
\int_\liet
\exp\Pa{-\frac12\Inner{\tau}{\tau}}P(\tau)^2\dif\mu_L(\tau) =
\sqrt{(2\pi)^n}\Inner{P}{P}_\partial,
$$
where
\begin{eqnarray*}
\Inner{P}{P}_\partial&=&\Inner{\prod_{\alpha>0}\alpha}{\prod_{\alpha>0}\alpha}
=\Inner{\prod^N_{j=1}\alpha_j}{\prod^N_{j=1}\alpha_j} \\
&\defeq& \sum_{\sigma\in
S_N}\Inner{\alpha_1\ot\cdots\ot\alpha_N}{\alpha_{\sigma^{-1}(1)}\ot\cdots\ot\alpha_{\sigma^{-1}(N)}}.
\end{eqnarray*}
Now substituting this integral value in
Eq.~\eqref{eq:vital-equation}, we obtain
\begin{eqnarray*}
\sqrt{(2\pi)^d} = \frac1{\abs{W}}\frac{\vol(G)}{\vol(T)}
\sqrt{(2\pi)^n}\Inner{P}{P}_\partial.
\end{eqnarray*}
That is,
\begin{eqnarray}
\frac{\vol(G)}{\vol(T)}= (2\pi)^{\frac{d-n}2}
\frac{\abs{W}}{\Inner{P}{P}_\partial}=
(2\pi)^N\frac{\abs{W}}{\Inner{P}{P}_\partial}.
\end{eqnarray}
Note here that $N=\frac{d-n}2$. Now the number $\Inner{P}{P}$ has
already been calculated by Steinberg \cite{Harder}:
$$
\Inner{P}{P}_\partial = \abs{W}P(\rho) =
2^{-N}\abs{W}\prod^n_{j=1}m_j!\prod_{\alpha>0}
\Inner{\alpha}{\alpha}
$$
where $\rho:=\frac12\sum_{\alpha>0}\alpha$ is half the sum of the
positive roots, and the $m_j$ are the \blue{exponents} of $G$.
Indeed, from Weyl's denominator formula:
$$
\sum_{w\in W}\sign(w)e^{w\rho} =
\prod_{\alpha>0}\Pa{e^{\frac12\alpha} - e^{-\frac12\alpha}},
$$
we see that for any chosen $X$ such that $\Inner{\alpha}{X}\neq0$
for all $\alpha>0$, we have that
\begin{eqnarray*}
\sum_{w\in W}\sign(w)e^{\Inner{w\rho}{tX}} =e^{-\Inner{\rho}{tX}}
\prod_{\alpha>0}\Pa{e^{\Inner{\alpha}{tX}} - 1},\quad\forall
t\in\real.
\end{eqnarray*}
That is,
\begin{eqnarray*}
\sum_{w\in W}\sign(w)e^{t\Inner{w\rho}{X}} &=& e^{-t\Inner{\rho}{X}}
\prod_{\alpha>0}\Pa{e^{t\Inner{\alpha}{X}} - 1}\\
&=&t^N e^{-t\Inner{\rho}{X}}
\prod_{\alpha>0}\frac{e^{t\Inner{\alpha}{X}} - 1}{t}.
\end{eqnarray*}
Now using the Taylor expansion of $e^x$:
$$
e^x=\sum^\infty_{k=0}\frac{x^k}{k!},
$$
we see that
$$
\frac{e^{t\Inner{\alpha}{X}} - 1}{t}
=\Inner{\alpha}{X}+\sum^\infty_{k=2}\frac{t^{k-1}\Inner{\alpha}{X}^k}{k!}
= \Inner{\alpha}{X}+o(t).
$$
And thus
\begin{eqnarray*}
t^N e^{-t\Inner{\rho}{X}}
\prod_{\alpha>0}\frac{e^{t\Inner{\alpha}{X}} - 1}{t} =
t^N(1+o(t))(\prod_{\alpha>0}\Inner{\alpha}{X}+o(t))=t^N(P(X)+o(t)).
\end{eqnarray*}
Taking the $N$-th derivative with respect to $t$ on both sides, we
get that
$$
\sum_{w\in W}\sign(w)\Inner{w\rho}{X}^Ne^{t\Inner{w\rho}{X}} =
N!P(X)+o(t).
$$
Now taking the limit for $t\to0$, we get that
$$
\sum_{w\in W}\sign(w)\Inner{w\rho}{X}^N =
N!P(X)\Longleftrightarrow\sum_{w\in W} \sign(w)(w\rho)^N = N!P.
$$
Thus
\begin{eqnarray*}
\Inner{P}{P}_\partial &=& \frac1{N!}\Inner{\sum_{w\in W}
\sign(w)(w\rho)^N}{\prod_{\alpha>0}\alpha}  = \frac1{N!}\sum_{w\in
W} \sign(w)\Inner{(w\rho)^N}{\prod_{\alpha>0}\alpha}\\
&=& \frac1{N!}\sum_{w\in W}
\sign(w)^2\Inner{\rho^N}{\prod_{\alpha>0}\alpha}=\frac1{N!}\abs{W}N!\prod_{\alpha>0}\Inner{\rho}{\alpha}
= \abs{W}P(\rho)
\end{eqnarray*}
where we used the fact that the polynomial $P$ is skew-symmetric
with respect to the Weyl group $W$. We can also present another
approach to the identity $\Inner{P}{P}_\partial=\abs{W}P(\rho)$.
Denote $q_\rho(X)=\sum_{w\in W}\sign(w)e^{\Inner{w\rho}{X}}$ Note
that
\begin{eqnarray*}
P(\partial) q_\rho(0) &=& \sum_{w\in
W}\sign(w)\left.P(\partial)e^{\Inner{w\rho}{X}}\right|_{X=0}=
\sum_{w\in W}\sign(w)\left.\prod_{\alpha>0}\partial_\alpha
e^{\Inner{w\rho}{X}}\right|_{X=0}\\
&=& \sum_{w\in W}\sign(w)\prod_{\alpha>0}\frac{\dif}{\dif
t}\big|_{t=0} e^{\Inner{w\rho}{t\alpha}} = \sum_{w\in
W}\sign(w)\prod_{\alpha>0} \Inner{w\rho}{\alpha},
\end{eqnarray*}
i.e.,
\begin{eqnarray*}
P(\partial) q_\rho(0) &=& \sum_{w\in W}\sign(w)\prod_{\alpha>0}
\Inner{\alpha}{w\rho} = \sum_{w\in W}\sign(w) P(w\rho)\\
&=&\sum_{w\in W}\sign(w)^2 P(\rho)=\abs{W}P(\rho).
\end{eqnarray*}
In addition,
\begin{eqnarray*}
\left.P(\partial)\prod_{\alpha>0}\Pa{e^{\frac12\Inner{\alpha}{X}} -
e^{-\frac12\Inner{\alpha}{X}}}\right|_{X=0} = P(\partial)P(0) =
\Inner{P}{P}_\partial.
\end{eqnarray*}
Indeed,
\begin{eqnarray*}
\prod_{\alpha>0}\Pa{e^{\frac12\alpha} - e^{-\frac12\alpha}} =
\prod_{\alpha>0}\Pa{\alpha +
\sum^\infty_{n=1}\frac{\alpha^{2n+1}}{2^{2n}(2n+1)!}} =
\prod_{\alpha>0}\alpha + \text{terms of degree at least } N+1.
\end{eqnarray*}
Note that $P(\partial)$ is the linear partial operator of order $N$,
hence
\begin{eqnarray*}
\left.P(\partial)\prod_{\alpha>0}\Pa{e^{\frac12\Inner{\alpha}{X}} -
e^{-\frac12\Inner{\alpha}{X}}}\right|_{X=0} = P(\partial)P(0) +0=
\Inner{P}{P}_\partial.
\end{eqnarray*}
Thus
$$
\Inner{P}{P}_\partial = \abs{W}P(\rho).
$$
Now for any root $\alpha$,
$$
\frac{2\Inner{\rho}{\alpha}}{\Inner{\alpha}{\alpha}}
=\Inner{\rho}{\alpha^\vee}= \mathrm{ht}(\alpha^\vee),
$$
the height of $\alpha^\vee$, and it is known that if $d_1\leqslant
d_2\leqslant \cdots\leqslant d_n$ are the \blue{degrees of the basic
invariants}, where $d_j=m_j+1$ \cite{Humphreys1990}, the number of
roots of height $k$ minus the number of roots of height $k+1$ is
just the number of $m_j$'s equal to $k$ (or the number of $d_j$'s
equal to $k+1$)
\cite{Kostant1959,Steinberg1959,Macdonald1972,Humphreys1990}. That
is,
\begin{eqnarray*}
\abs{\Set{\alpha^\vee>0: \mathrm{ht}(\alpha^\vee)=k}} -
\abs{\Set{\alpha^\vee>0: \mathrm{ht}(\alpha^\vee)=k+1}} =
\abs{\Set{j\in[n]: m_j=k}}.
\end{eqnarray*}
Indeed, in order to show this result, we partition the number $N$ of
all positive roots via the height of all positive co-roots:
$$
\Set{\alpha^\vee|\alpha>0} =\bigsqcup^{h-1}_{j=1}\Set{\alpha^\vee>0:
\mathrm{ht}(\alpha^\vee)=j},
$$
where $h$ is the Coxeter number \cite{Humphreys1990} for Weyl group
$W$. Denote by $n_k=\abs{\Set{\alpha^\vee>0:
\mathrm{ht}(\alpha^\vee)=k}}$. Thus we see that
$$
N = \sum^{h-1}_{k=1}n_k,\quad n_1\geqslant
n_2\geqslant\cdots\geqslant n_{h-1}.
$$
It is shown that the Poincar\'{e} polynomial $W(t):=\sum_{w\in
W}t^{\ell(w)}$, where $\ell(w)$ is the length of $w\in W$, can be
factorized by two ways \cite{Macdonald1972}:
\begin{eqnarray}
W(t) =
\prod_{\alpha>0}\frac{1-t^{1+\mathrm{ht}(\alpha)}}{1-t^{\mathrm{ht}(\alpha)}}
= \prod^n_{j=1}\frac{1-t^{m_j+1}}{1-t}.
\end{eqnarray}
This implies that
$$
\abs{W} = \lim_{t\to1}W(t) =
\prod_{\alpha>0}\frac{1+\mathrm{ht}(\alpha)}{\mathrm{ht}(\alpha)} =
\prod^n_{j=1}(m_j+1).
$$
Note the above identity holds also for dual root system since $W$
corresponds to both the original root system $R$ and its dual root
system $R^\vee$, it follows that
$$
\prod_{\alpha>0}\frac{1+\mathrm{ht}(\alpha^\vee)}{\mathrm{ht}(\alpha^\vee)}
= \prod^n_{j=1}(m_j+1).
$$
Now partition $[n]$ by the number of times $k$ as exponents of $W$:
$$
[n] = \bigsqcup^{h-1}_{k=1}\cI_k,\quad \cI_k:=\Set{j\in[n]: m_j=k}.
$$
Then
$$
\prod_{\alpha>0}\frac{1+\mathrm{ht}(\alpha^\vee)}{\mathrm{ht}(\alpha^\vee)}
= 2^{n_1}\Pa{\frac32}^{n_2}\cdots \Pa{\frac{h}{h-1}}^{n_{h-1}} =
2^{n_1-n_2}3^{n_2-n_3}\cdots h^{n_{h-1}}
$$
and
$$
\prod^n_{j=1}(m_j+1) = 2^{\abs{\cI_1}}3^{\abs{\cI_2}}\cdots
h^{\abs{\cI_{h-1}}}.
$$
Therefore we have
$$
n_k - n_{k+1} = \abs{\cI_k},\quad k=1,\ldots, h-2; \quad
n_{h-1}=\abs{\cI_{h-1}}.
$$
In what follows, we calculate the value of
$\prod_{\alpha>0}\mathrm{ht}(\alpha^\vee)$. In fact, we have
$\mathrm{ht}(\alpha^\vee)\in[1,h-1]\cap\natural$ and
\begin{eqnarray}
\prod_{\alpha>0}\mathrm{ht}(\alpha^\vee) = 1^{n_1}2^{n_2}\cdots
(h-1)^{n_{h-1}}.
\end{eqnarray}
Clearly,
\begin{eqnarray*}
n_1 &=& \abs{\cI_{h-1}}+\cdots+ \abs{\cI_2}+\abs{\cI_1},\\
n_2 &=& \abs{\cI_{h-1}}+\cdots+ \abs{\cI_2},\\
&\vdots&\\
n_{h-1}&=&\abs{\cI_{h-1}}.
\end{eqnarray*}
Furthermore,
\begin{eqnarray*}
\prod_{\alpha>0}\mathrm{ht}(\alpha^\vee) &=&
(1!)^{\abs{\cI_1}}(2!)^{\abs{\cI_2}}\cdots
[(h-1)!]^{\abs{\cI_{h-1}}}\\
&=& \prod_{j\in \cI_1}m_j!\times \prod_{j\in
\cI_2}m_j!\times\cdots\times \prod_{j\in \cI_{h-1}}m_j!\\
&=&\prod_{j\in \cI_1\sqcup\cdots\sqcup \cI_{h-1}}m_j! =
\prod_{j\in[n]}m_j! = \prod^n_{j=1}m_j!.
\end{eqnarray*}
Thus
$$
\prod_{\alpha>0}\frac{2\Inner{\rho}{\alpha}}{\Inner{\alpha}{\alpha}}
=\prod_{\alpha>0}\Inner{\rho}{\alpha^\vee}=\prod^n_{j=1}(d_j-1)!=\prod^n_{j=1}m_j!.
$$
Therefore,
\begin{eqnarray*}
P(\rho) = \prod_{\alpha>0}\Inner{\alpha}{\rho} =
\prod_{\alpha>0}\frac{2\Inner{\rho}{\alpha}}{\Inner{\alpha}{\alpha}}\prod_{\alpha>0}\frac{\Inner{\alpha}{\alpha}}2
= 2^{-N}\prod^n_{j=1}m_j!\prod_{\alpha>0}\Inner{\alpha}{\alpha}.
\end{eqnarray*}
That is,
\begin{eqnarray}
P(\rho) =
2^{-N}\prod^n_{j=1}m_j!\prod_{\alpha>0}\Inner{\alpha}{\alpha}.
\end{eqnarray}
Since also $\abs{W}=\prod^n_{j=1}d_j=\prod^n_{j=1}(m_j+1)$
\cite{Humphreys1990}, it follows that
\begin{eqnarray*}
\Inner{P}{P} = \abs{W}P(\rho) =
2^{-N}\abs{W}\prod^n_{j=1}m_j!\prod_{\alpha>0}\Inner{\alpha}{\alpha}.
\end{eqnarray*}
That is,
\begin{eqnarray}
\Inner{P}{P} =
2^{-N}\prod^n_{j=1}d_j!\prod_{\alpha>0}\Inner{\alpha}{\alpha}.
\end{eqnarray}
Hence we have
\begin{eqnarray}
\frac{\vol(G)}{\vol(T)}=\frac{(2\pi)^N}{P(\rho)} =
\prod_{\alpha>0}\frac{2\pi}{\Inner{\alpha}{\rho}}
\end{eqnarray}
a formula due to Harish-Chandra, and also, using
Eq.~\eqref{eq:ration-vol},
\begin{eqnarray*}
\vol(G) &=& \frac{(2\pi)^N\vol(T)}{P(\rho)}=
\frac{(2\pi)^{N+n}\vol(\liet/\liet_\integer)}{2^{-N}\prod^n_{j=1}m_j!\prod_{\alpha>0}\Inner{\alpha}{\alpha}}\\
&=&
\vol(\liet/\liet_\integer)\prod_{\alpha>0}\frac4{\Inner{\alpha}{\alpha}}\prod^n_{j=1}\frac{2\pi^{m_j+1}}{m_j!}\\
&=&\vol(\liet/\liet_\integer)\prod_{\alpha>0}\Inner{\alpha^\vee}{\alpha^\vee}\prod^n_{j=1}\vol\Pa{\mathbb{S}^{r_j}}
\end{eqnarray*}
where $r_j=2m_j+1$ and $\sum^n_{j=1}m_j=N$ and
$\frac4{\Inner{\alpha}{\alpha}}=\Inner{\alpha^\vee}{\alpha^\vee}$
for $\alpha^\vee$ is the coroot corresponding to $\alpha$.

Let $\g_\complex,\liet_\complex$ be the complexifications of
$\g,\liet$. For $X=\xi+\mathrm{i}\eta\in\g_\complex(\xi,\eta\in\g)$
Let $\bar X = \xi-\mathrm{i}\eta$. There exists a Chevalley basis of
$\g_\complex$ relative to $\liet_\complex$ in which the root vectors
$X_\alpha$ satisfy $X_{-\alpha}=\bar X_\alpha$. Let
$\xi_\alpha=X_\alpha+X_{-\alpha},\eta_\alpha=\mathrm{i}(X_\alpha-X_{-\alpha})$,
then the "integer lattice" $\g_\integer$ is spanned by the
$\xi_\alpha$ and $\eta_\alpha$ for $\alpha>0$, together with a basis
of $\liet_\integer$; the $2N$ vectors $\xi_\alpha,\eta_\alpha$ are
orthogonal to each other and to $\liet$, and
$\norm{\xi_\alpha}=\norm{\eta_\alpha}=\norm{\alpha^\vee}$. It
follows that
$$
\vol(\g/\g_\integer) =
\vol(\liet/\liet_\integer)\prod_{\alpha>0}\Inner{\alpha^\vee}{\alpha^\vee};
$$
substituting this, we derive the desired formula:
\begin{eqnarray*}
\vol(G) =\vol(\g/\g_\integer)\prod^n_{j=1}\vol\Pa{\mathbb{S}^{r_j}}.
\end{eqnarray*}
We are done.
\end{proof}

\begin{remark}
Let $(W,S)$ be an irreducible finite Coxeter system of rank $n$ with
$S=\Set{s_{\alpha_1},\ldots,s_{\alpha_n}}$ its set of simple
reflections. The element $c:=s_{\alpha_1}\cdots s_{\alpha_n}\in W$
is called Coxeter transformation. Since all Coxeter transformations
are conjugate, they have the same order, characteristic polynomial
and eigenvalues. The order $h$ of Coxeter elements is called the
\emph{Coxeter number} of $W$. For a fixed Coxeter element $c\in W$,
if its eigenvalues are of the form:
$$
\exp\Pa{2\pi\mathrm{i}m_1/h},\quad,\exp\Pa{2\pi\mathrm{i}m_n/h}
$$
with $0<m_1\leqslant \cdots\leqslant m_n<h$, then the integers
$m_1,\ldots,m_n$ are called the \emph{exponents} of $W$. We have
already known that for any irreducible root system $R$ of rank $n$,
\begin{enumerate}[(i)]
\item $m_j+m_{n-j+1}=h$ for $1\leqslant j\leqslant n$;
\item $m_1=1,m_n=h-1$;
\item the height of the highest root in $R^+$ is given by $m_n=h-1$.
\end{enumerate}
Let $k_j:=\abs{\Set{\alpha\in R^+|\mathrm{ht}(\alpha)=j}}$. The
height distribution of $R^+$ is defined as a multiset of positive
integers:
$$
\Set{k_1,\ldots,k_{h-1}}.
$$
Apparently, $(k_1,\ldots,k_{h-1})\vdash N=\abs{R^+}$. It is
well-known that the exponents of the Weyl group $W$ are given by the
\emph{dual partition} of the height distribution of $R^+$ is given
by a multiset of non-negative integers
\cite{Kostant1959,Steinberg1959,Macdonald1972,Abc2016}:
$$
\Set{(0)^{n-k_1}, (1)^{k_1-k_2},\ldots, (h-2)^{k_{h-2}-k_{h-1}},
(h-1)^{k_{h-1}}},
$$
where $(a)^b$ means the integer $a$ appears exactly $b$ times. If
$k_j-k_{j+1}>0$, then $m_j$ appears exactly $k_j-k_{j+1}$ times.
Otherwise, $m_j$ does not appear if $k_j=k_{j+1}$.

Note that in the Macdonald's method to the volume of a compact Lie
group, we see that $d=\sum^n_{j=1}r_j=2\sum^n_{j=1}m_j+n$. It
follows that $\sum^n_{j=1}m_j=\frac{d-n}2=N$. Besides, since
$m_j+m_{n-j+1}=h$, it follows that
$$
N=\sum^n_{j=1}m_j = \frac12nh.
$$
\end{remark}

\begin{remark}
Denote all polynomials in $x=(x_1,\ldots,x_d)$ with coefficients in
$\mathbb{F}$, where $\mathbb{F}=\complex$ or $\real$, by
$\mathbb{F}[x_1,\ldots,x_d]$. A polynomial is called
\emph{homogeneous} if all the monomials appearing in this polynomial
have the same total degree. Denote the space of homogeneous
polynomials of degree $n$ in $d$ variables by $\cP_n(\mathbb{F}^d)$.
That is,
\begin{eqnarray*}
\cP_n(\mathbb{F}^d) = \Set{P\mid
P(x)=\sum_{\alpha:\abs{\alpha}=n}c_\alpha x^\alpha, \text{where
}x\in\mathbb{F}^d}.
\end{eqnarray*}
Note that $\dim \cP_n(\real^d)=\binom{n+d-1}{n}$. For any two
homogeneous polynomials $P,Q\in \cP_n(\mathbb{F}^d)$, a scalar
product may be defined as follows:
\begin{eqnarray*}
\Inner{P}{Q} := \sum_{\alpha:\abs{\alpha}=n}\bar p_\alpha q_\alpha
\alpha!,
\end{eqnarray*}
where $P(x)=\sum_{\alpha:\abs{\alpha}=n}p_\alpha x^\alpha$ and
$Q(x)=\sum_{\alpha:\abs{\alpha}=n}q_\alpha x^\alpha$. From this
definition, we see that
$\Inner{x^\alpha}{x^\beta}=\alpha!\delta_{\alpha\beta}$. In
particular, if $P,Q\in\cP_n(\real^d)$, the scalar product can be
rewritten as \cite{Iwasaki1997,Knapp2002}
\begin{eqnarray*}
\Inner{P}{Q} = (P(\partial)Q)(0).
\end{eqnarray*}
Indeed, since $(\partial^\alpha x^\beta)_{x=0}
=\alpha!\delta_{\alpha\beta}$ and
\begin{eqnarray*}
P(\partial)Q &=& \Pa{\sum_{\alpha:\abs{\alpha}=n}p_\alpha
\partial^\alpha}\Pa{\sum_{\beta:\abs{\beta}=n}q_\beta x^\beta} = \sum_{\alpha,\beta:\abs{\alpha}=\abs{\beta}=n}p_\alpha
q_\beta \Pa{\partial^\alpha x^\beta},
\end{eqnarray*}
it follows that
\begin{eqnarray*}
(P(\partial)Q)(0) =
\sum_{\alpha,\beta:\abs{\alpha}=\abs{\beta}=n}p_\alpha q_\beta
\Pa{\alpha!\delta_{\alpha\beta}} = \sum_{\alpha:\abs{\alpha}=n}
p_\alpha q_\alpha \alpha!,
\end{eqnarray*}
which means that $\Inner{P}{Q} = (P(\partial)Q)(0)$.

As an example, let $P(x)=\Inner{\alpha_1}{x}\Inner{\alpha_2}{x}$ and
$Q(x)=\Inner{\beta_1}{x}\Inner{\beta_2}{x}$ be real polynomials in
$\cP_2(\real^3)$, where $\alpha_j=(a^1_j,a^2_j,a^3_j)$ and
$\beta_j=(b^1_j,b^2_j,b^3_j)$ for $j=1,2$. In what follows, we
calculate the scalar product $\Inner{P}{Q}$. Apparently,
\begin{eqnarray*}
P(\partial) &=& \sum^3_{i,j=1}a^i_1a^j_2\partial_i\partial_j =
\sum^3_{i=1} a^i_1a^i_2\partial^2_i +
\sum_{i<j}\Pa{a^i_1a^j_2+a^j_1a^i_2}\partial_i\partial_j,\\
Q(x) &=& \sum^3_{i,j=1}b^i_1b^j_2x_ix_j= \sum^3_{i=1}b^i_1b^i_2x^2_i
+ \sum_{i<j}\Pa{b^i_1b^j_2+b^j_1b^i_2}x_ix_j.
\end{eqnarray*}
It follows that
\begin{eqnarray*}
P(\partial)Q(x) &=& 2\sum^3_{i=1}a^i_1a^i_2b^i_1b^i_2 +
\sum_{i<j}\Pa{a^i_1a^j_2+a^j_1a^i_2}\Pa{b^i_1b^j_2+b^j_1b^i_2}
\end{eqnarray*}
Now we see that
\begin{eqnarray*}
&&\Inner{\alpha_1}{\beta_1}\Inner{\alpha_2}{\beta_2}+
\Inner{\alpha_1}{\beta_2}\Inner{\alpha_2}{\beta_1}\\
&&= \Pa{\sum^3_{i=1}a^i_1b^i_1}\Pa{\sum^3_{j=1}a^j_2b^j_2} +
\Pa{\sum^3_{i=1}a^i_1b^i_2}\Pa{\sum^3_{j=1}a^j_2b^j_1}\\
&&=\Br{\sum^3_{i=1}a^i_1b^i_1a^i_2b^i_2 + \sum_{i\neq
j}a^i_1b^i_1a^j_2b^j_2} + \Br{\sum^3_{i=1}a^i_1b^i_2a^i_2b^i_1 +
\sum_{i\neq j}a^i_1b^i_2a^j_2b^j_1}\\
&&= 2\sum^3_{i=1}a^i_1a^i_2b^i_1b^i_2 + \sum_{i\neq
j}a^i_1b^i_1a^j_2b^j_2 + \sum_{i\neq j}a^i_1b^i_2a^j_2b^j_1.
\end{eqnarray*}
That is,
\begin{eqnarray*}
&&\Inner{\alpha_1}{\beta_1}\Inner{\alpha_2}{\beta_2}+
\Inner{\alpha_1}{\beta_2}\Inner{\alpha_2}{\beta_1}=
2\sum^3_{i=1}a^i_1a^i_2b^i_1b^i_2 + \sum_{i\neq
j}a^i_1b^i_1a^j_2b^j_2 + \sum_{i\neq j}a^i_1b^i_2a^j_2b^j_1\\
&&= 2\sum^3_{i=1}a^i_1a^i_2b^i_1b^i_2 + \sum_{i<
j}\Pa{a^i_1b^i_1a^j_2b^j_2+a^j_1b^j_1a^i_2b^i_2} + \sum_{i<
j}\Pa{a^i_1b^i_2a^j_2b^j_1+a^j_1b^j_2a^i_2b^i_1}\\
&&= 2\sum^3_{i=1}a^i_1a^i_2b^i_1b^i_2 + \sum_{i<
j}\Pa{a^i_1b^i_1a^j_2b^j_2+a^j_1b^j_1a^i_2b^i_2+a^i_1b^i_2a^j_2b^j_1+a^j_1b^j_2a^i_2b^i_1}\\
&&=2\sum^3_{i=1}a^i_1a^i_2b^i_1b^i_2 +
\sum_{i<j}\Pa{a^i_1a^j_2+a^j_1a^i_2}\Pa{b^i_1b^j_2+b^j_1b^i_2}.
\end{eqnarray*}
Therefore, we can conclude that
\begin{eqnarray*}
(P(\partial)Q)(0) &=&
\Inner{\alpha_1}{\beta_1}\Inner{\alpha_2}{\beta_2}+
\Inner{\alpha_1}{\beta_2}\Inner{\alpha_2}{\beta_1}\\
&=& \sum_{\sigma\in S_2}
\Inner{\alpha_1\ot\alpha_2}{\beta_{\sigma^{-1}(1)}\ot
\beta_{\sigma^{-1}(2)}}.
\end{eqnarray*}
We can generalize this result to the case where
$P(x)=\prod^n_{j=1}\Inner{\alpha_j}{x}$ and
$Q(x)=\prod^k_{j=1}\Inner{\beta_j}{x}$, in short,
$P=\prod^k_{j=1}\alpha_j$ and $Q=\prod^k_{j=1}\beta_j$. We have the
following:
\begin{eqnarray*}
\Inner{P}{Q} = (P(\partial)Q)(0) = \sum_{\sigma\in
S_k}\Inner{\alpha_1\otimes
\cdots\otimes\alpha_k}{\beta_{\sigma^{-1}(1)}\otimes
\cdots\beta_{\sigma^{-1}(k)}}.
\end{eqnarray*}

\end{remark}

\begin{remark}
Consider the real vector space $V$ generated by the roots.
$\Inner{\cdot}{\cdot}$ denotes the inner product on $V$ which is
invariant under the operation of the Weyl group $W$. The given inner
product on $V$ extends to one on the symmetric algebra of $V$ by the
formula:
\begin{eqnarray}
\Inner{\prod^k_{j=1}\alpha_j}{\prod^k_{j=1}\beta_j} &=&
\sum_{\sigma\in S_k}
\prod^k_{j=1}\Inner{\alpha_j}{\beta_{\sigma^{-1}(j)}} =
\sum_{\sigma\in
S_k}\Inner{\alpha_1}{\beta_{\sigma^{-1}(1)}}\cdots\Inner{\alpha_k}{\beta_{\sigma^{-1}(k)}}\\
&=&\sum_{\sigma\in S_k}\Inner{\alpha_1\otimes
\cdots\otimes\alpha_k}{\beta_{\sigma^{-1}(1)}\otimes
\cdots\beta_{\sigma^{-1}(k)}}
\end{eqnarray}
where $\alpha_j,\beta_j\in V$. In view of this, we see that
$$
\Inner{\prod_{\alpha>0}\alpha}{\rho^N} =\sum_{\sigma\in
S_N}\prod_{\alpha>0}\Inner{\alpha}{\rho} =
N!\prod_{\alpha>0}\Inner{\alpha}{\rho}.
$$
In the following, we show that \cite{Macdonald1982}
\begin{eqnarray}
\Inner{\prod_{\alpha>0}\alpha}{\prod_{\alpha>0}\alpha^\vee}
=\prod^n_{j=1}d_j!,
\end{eqnarray}
where $\alpha^\vee = \frac{2\alpha}{\Inner{\alpha}{\alpha}}$ such
that $\Inner{\alpha}{\alpha^\vee}=2$. Indeed, by replacing each
$\alpha^\vee_{\sigma^{-1}(j)}$ by
$\frac{2\alpha_{\sigma^{-1}(j)}}{\Inner{\alpha_{\sigma^{-1}(j)}}{\alpha_{\sigma^{-1}(j)}}}$
in the definition, we see that
\begin{eqnarray*}
\Inner{\prod_{\alpha>0}\alpha}{\prod_{\alpha>0}\alpha^\vee}
&\defeq&\sum_{\sigma\in S_N}\Inner{\alpha_1\otimes
\cdots\alpha_N}{\alpha^\vee_{\sigma^{-1}(1)}\otimes
\cdots\alpha^\vee_{\sigma^{-1}(N)}}\\
&=&\sum_{\sigma\in S_N}\Inner{\alpha_1\otimes
\cdots\alpha_N}{\alpha_{\sigma^{-1}(1)}\otimes
\cdots\alpha_{\sigma^{-1}(N)}}\frac{2^N}{\prod^N_{j=1}\Inner{\alpha_{\sigma^{-1}(j)}}{\alpha_{\sigma^{-1}(j)}}}\\
&=&\frac{2^N}{\prod_{\alpha>0}\Inner{\alpha}{\alpha}}
\sum_{\sigma\in S_N}\Inner{\alpha_1\otimes
\cdots\alpha_N}{\alpha_{\sigma^{-1}(1)}\otimes
\cdots\alpha_{\sigma^{-1}(N)}}.
\end{eqnarray*}
Therefore
\begin{eqnarray*}
\Inner{\prod_{\alpha>0}\alpha}{\prod_{\alpha>0}\alpha^\vee}
=\frac{2^N}{\prod_{\alpha>0}\Inner{\alpha}{\alpha}}\Inner{P}{P}=\prod^n_{j=1}d_j!.
\end{eqnarray*}
\end{remark}

Note that in deriving the volume formula of a compact Lie group, we
used the following fact which is necessarily recorded here:
\begin{thrm}[\cite{Duistermaat2000}]
The mapping that assigns to $\phi$ in $C_0(\g)$, the space of
continuous functions with compact support on $\g$, the function on
$G/T\times \liet$
\begin{eqnarray}
(gT,X)\longmapsto \phi(\Ad_g(X))\abs{\det(\ad_X)_{\g/\liet}}
\end{eqnarray}
extends to a topological isomorphism: $L^1(\g)\to L^1(G/T\times
\liet)^W$, where now $sT\in W$ acts on $(gT,X)$ by sending it to
$(gs^{-1}T, \Ad_s(X))$. Moreover, if $\phi\in L^1(\g)$, then:
\begin{eqnarray}
\int_\g \phi(X)\dif X  = \frac1{\abs{W}}\int_\liet
\Pa{\int_{G/T}\phi(\Ad_g(X))\dif(gT)}\abs{\det(\ad_X)_{\g/\liet}}\dif
X.
\end{eqnarray}
\end{thrm}
It is easily seen that if $\phi$ is $G$-invariant, i.e.,
$\phi(\Ad_g(X))=\phi(X)$ for all $g\in G$, then
$$
\int_{G/T}\phi(\Ad_g(X))\dif(gT) = \phi(X)\vol(G/T)=
\phi(X)\frac{\vol(G)}{\vol(T)}.
$$
Thus
\begin{eqnarray*}
\int_\g \phi(X)\dif X  =
\frac1{\abs{W}}\frac{\vol(G)}{\vol(T)}\int_\liet
\phi(X)\abs{\det(\ad_X)_{\g/\liet}}\dif X,
\end{eqnarray*}
where Jacobian $\abs{\det(\ad_X)_{\g/\liet}}=P(X)^2$ for
$P=\prod_{\alpha>0} \alpha$.

More simpler proof about Macdonald's volume formula can be found in
\cite{Hashimoto}. With Macdonald's volume formula, one can derive
all volume formulas for orthogonal groups and unitary groups.
Although these methods is a digression to the subject of the present
paper, I also collect these material together for reference.

Once again, we will use the Macdonald's volume formula for a compact
Lie group to derive the volume formula for the special unitary group
$G:=\rS\rU(n)$. The special unitary group $\rS\rU(n)$ is of rank
$n-1$. Its root system consists of $2\binom{n}{2}$ roots spanning a
$(n-1)$-dimensional Euclidean space. Here, we use $n$ redundant
coordinates instead of $n-1$ to emphasize the symmetries of the root
system (the $n$ coordinates have to add up to zero). In other words,
we are embedding this $n-1$ dimensional vector space in an
$n$-dimensional one. The maximal torus $T$ of $G$ is equivalently
identified with $\rS\rU(1)^{\times n}$, and its Lie algebra is
$\liet\cong\real^{n-1}$ of rank $n-1$. The lattice
$\liet_\integer\cong \integer^{n-1}$. Thus
$\vol(\liet/\liet_\integer)=1$.

The set of all positive roots of $\g$ is given by
\begin{eqnarray}
R^+ = \Set{\alpha_{ij}\in\liet^*: \alpha_{ij}(X) = x_i-x_j\text{ for
any } X=\diag(x_1,\ldots,x_n), i<j}.
\end{eqnarray}
Thus for any $\alpha\in R^+$, we have $\Inner{\alpha}{\alpha}=2$.
Note that $N=\abs{R^+}=\binom{n}{2}$. Then the Coxeter number
$h=\frac{2N}{n-1}=n$. The exponents of the special unitary group
$\rS\rU(n)$ are: $\set{1,\ldots,n-1}$. Note that
\begin{eqnarray*}
\vol_{\rH\rS}(G) =
\vol(\liet/\liet_\integer)\prod_{\alpha>0}\frac4{\Inner{\alpha}{\alpha}}\prod^{n-1}_{j=1}\vol\Pa{\mathbb{S}^{r_j}},
\end{eqnarray*}
where $r_j=2j+1$. It follows that
\begin{eqnarray*}
\vol_{\rH\rS}(G)
=2^{\binom{n}{2}}\prod^{n-1}_{j=1}\vol\Pa{\mathbb{S}^{2j+1}}.
\end{eqnarray*}
Since $\rU(n)=\rU(1)\times \rS\rU(n)$, it follows from
$\vol_{\rH\rS}(\rU(1))=\vol\Pa{\mathbb{S}^{1}}$ that
$$
\vol_{\rH\rS}(\rU(n)) =
\vol\Pa{\mathbb{S}^{1}}\vol_{\rH\rS}(\rS\rU(n)) =
2^{\binom{n}{2}}\prod^n_{j=1}\vol\Pa{\mathbb{S}^{2j-1}}.
$$
That is,
$$
\vol_{\rH\rS}(\rU(n)) =
\frac{(2\pi)^{\frac{n(n+1)}2}}{\prod^n_{j=1}\Gamma(j)}.
$$
Besides, we can also employ Harish-Chandra's volume formula for flag
manifold, we get that
\begin{eqnarray*}
\vol_{\rH\rS}(\rU(n)/\mathbb{T}^n) &=&
\prod_{\alpha>0}\frac{2\pi}{\inner{\alpha}{\rho}} =
\frac{(2\pi)^N}{P(\rho)} =
\frac{(2\pi)^N}{2^{-N}\prod^n_{j=1}m_j!\prod_{\alpha>0}\Inner{\alpha}{\alpha}}\\
&=& \frac{(2\pi)^N}{2^{-N}0!1!\cdots (n-1)!2^N} =
\frac{(2\pi)^{\frac{n(n-1)}2}}{\prod^n_{j=1}\Gamma(j)},
\end{eqnarray*}
implying that
\begin{eqnarray*}
\vol_{\rH\rS}(\rU(n)) = \vol(\mathbb{T}^n)
\frac{(2\pi)^{\frac{n(n-1)}2}}{\prod^n_{j=1}\Gamma(j)} = (2\pi)^n
\frac{(2\pi)^{\frac{n(n-1)}2}}{\prod^n_{j=1}\Gamma(j)}.
\end{eqnarray*}
Again, we also see that
$$
\vol_{\rH\rS}(\rU(n)) =
\frac{(2\pi)^{\frac{n(n+1)}2}}{\prod^n_{j=1}\Gamma(j)}.
$$

\begin{remark}
We can rewrite the above results as two partitions of $N$, the
number of all positive roots, $(n_1,\ldots,n_{h-1})\vdash N$ and
$(m_n,\ldots,m_1)\vdash N$, where $n_j$ denotes the number of $j$-th
row; $m_i$ the number of $n-j+1$-th column or the $n-j+1$-th row of
the conjugate partition of the original partition
$(m_n,\ldots,m_1)\vdash N$.
$$
\young(~~~~,~~~,~~,~)
$$
\end{remark}

\section{Applications}

The present section is directly written based on
\cite{Wei,Karol01,Karol03}. The results were already obtained. We
just here add some interpretation, from my angle, about them since
the details concerning computation therein are almost ignored.

\subsection{Hilbert-Schmidt volume of the set of mixed quantum states}

Any unitary matrix may be considered as an element of the
Hilbert-Schmidt space of operators with the scalar product
$\inner{\wtil U}{\wtil V}_{\rH\rS}=\Tr{\wtil U^*\wtil V}$. This
suggests the following definition of an invariant metric of the
unitary group $\cU(n)$: denote $\dif\wtil G:=\wtil U^*\dif\wtil U$,
then
\begin{eqnarray}
\dif s^2 := \Inner{\dif\wtil G}{\dif\wtil G}_{\rH\rS} =
-\Tr{\dif\wtil G^2},
\end{eqnarray}
implying
\begin{eqnarray}
\dif s^2 = \sum^n_{i,j=1} \abs{\dif\wtil G_{ij}}^2 = \sum^n_{j=1}
\abs{\dif\wtil G_{jj}}^2 + 2\sum^n_{i<j}\abs{\dif\wtil G_{ij}}^2.
\end{eqnarray}
Since $\dif\wtil G^*=-\dif\wtil G$, it follows that
\begin{eqnarray}
\dif s^2 = \sum^n_{j=1} \abs{\dif\wtil G_{jj}}^2 +
2\sum^n_{i<j}\Pa{\dif(\re(\wtil
G_{ij}))}^2+2\sum^n_{i<j}\Pa{\dif(\im(\wtil G_{ij}))}^2.
\end{eqnarray}
This indicates that the Hilbert-Schmidt volume element is given by
\begin{eqnarray}
\dif \nu= 2^{\frac{n(n-1)}2}\prod^n_{j=1}\dif\Pa{\im(\wtil
G_{jj})}\times \prod_{i<j}\dif(\re(\wtil G_{ij}))\dif(\im(\wtil
G_{ij})) = 2^{\frac{n(n-1)}2}[\dif\wtil G],
\end{eqnarray}
that is,
\begin{eqnarray}
\vol_{\rH\rS}\Pa{\cU(n)} &:=& \int_{\cU(n)}\dif \nu=
2^{\frac{n(n-1)}2}\int_{\cU(n)}[\dif\wtil G]\\
&=&2^{\frac{n(n-1)}2}\times\frac{
2^n\pi^{\frac{n(n+1)}2}}{1!2!\cdots (n-1)!}\\
&=&\frac{ (2\pi)^{\frac{n(n+1)}2}}{1!2!\cdots (n-1)!}.
\end{eqnarray}
Finally we have obtained the Hilbert-Schmidt volume of unitary
group:
\begin{framed}
\begin{eqnarray}
\vol_{\rH\rS}\Pa{\cU(n)} = 2^{\frac{n(n-1)}2}\vol\Pa{\cU(n)} =
\frac{ (2\pi)^{\frac{n(n+1)}2}}{1!2!\cdots (n-1)!}.
\end{eqnarray}
\end{framed}

We compute the volume of the convex $(n^2-1)$-dimensional set
$\density{\complex^n}$ of density matrices of size $n$ with respect
to the Hilbert-Schmidt measure.

The set of mixed quantum states $\density{\complex^n}$ consists of
Hermitian, positive matrices of size $n$, normalized by the trace
condition
\begin{eqnarray}
\density{\complex^n} = \set{\wtil\rho:\complex^n\to\complex^n|
\wtil\rho^*=\wtil\rho, \wtil\rho\geqslant 0,\Tr{\wtil\rho}=1}.
\end{eqnarray}
It is a compact convex set of dimensionality $(n^2-1)$. Any density
matrix may be diagonalized by a unitary rotation, $\wtil\rho =\wtil
U\Lambda\wtil U^*$, where $\wtil U\in\cU(n)$ and
$\Lambda=\diag(\lambda_1,\ldots,\lambda_n)$ for
$\lambda_j\in\real^+$. Since $\Tr{\wtil\rho}=1$, it follows that
$\sum^n_{j=1}\lambda_j=1$, so the spectra space is isomorphic with a
$(n-1)$-dimensional probability simplex
$\Delta_{n-1}:=\Set{p\in\real^+: \sum^n_{j=1}p_j=1}$.

Let $\wtil B$ be a diagonal unitary matrix. Since $\wtil \rho=\wtil
U\wtil B\Lambda \wtil B^* \wtil U^*$, in the generic case of a
\emph{non-degenerate spectrum} (i.e. with distinct non-negative
eigenvalues), the unitary matrix $\wtil U$ is determined up to $n$
arbitrary phases entering $\wtil B$. On the other hand, the matrix
$\Lambda$ is defined up to a permutation of its entries. The form of
the set of all such permutations depends on the character of the
degeneracy of the spectrum of $\wtil \rho$.

Representation $\wtil \rho=\wtil U\wtil B\Lambda \wtil B^* \wtil
U^*$ makes the description of some topological properties of the
$(n^2-1)$-dimensional space $\density{\complex^n}$ easier.
Identifying points in $\Delta_{n-1}$ which have the same components
(but ordered in a different way), we obtain an asymmetric simplex
$\wtil \Delta_{n-1}$. Equivalently, one can divide $\Delta_{n-1}$
into $n!$ identical simplexes and take any one of them. The
asymmetric simplex $\wtil \Delta_{n-1}$ can be decomposed in the
following natural way:
\begin{eqnarray}
\wtil\Delta_{n-1} = \bigcup_{d_1+\cdots+d_k=n}
\delta_{d_1,\ldots,d_k},
\end{eqnarray}
where $k=1,\ldots,n$ denotes the number of different coordinates of
a given point of $\wtil \Delta_{n-1}$, $d_1$ the number of
occurrences of the largest coordinate, $d_2$ the number of
occurrences of the second largest etc. Observe that
$\delta_{d_1,\ldots,d_k}$ is homeomorphic with the set $G_k$, where
$G_1$ is a single point, $G_2$ is a half-closed interval, $G_3$ an
open triangle with one edge but without corners and, generally,
$G_k$ is an $(k-1)$-dimensional simplex with one $(k-2)$-dimensional
hyperface without boundary (the latter is homeomorphic with an
$(k-2)$-dimensional open simplex). There are $n$ ordered
eigenvalues: $\lambda_1\geqslant\lambda_2\geqslant\cdots\geqslant
\lambda_n$, and $n-1$ independent relation operators "larger(>) or
equal(=)", which makes altogether $2^{n-1}$ different possibilities.
Thus, $\wtil \Delta_{n-1}$ consists of $2^{n-1}$ parts, out of which
$\binom{n-1}{m-1}$ parts are homeomorphic with $G_m$, when $m$
ranges from $1$ to $n$.

Let us denote the part of the space $\density{\complex^n}$ related
to the spectrum in $\delta_{d_1,\ldots,d_k}$ ($k$ different
eigenvalues; the largest eigenvalue has $d_1$ multiplicity, the
second largest $d_2$ etc) by $\rD_{d_1,\ldots,d_k}$. A mixed state
$\wtil\rho$ with this kind of the spectrum remains invariant under
arbitrary unitary rotations performed in each of the
$d_j$-dimensional subspaces of degeneracy. Therefore the unitary
matrix $\wtil B$ has a block diagonal structure with $k$ blocks of
size equal to $d_1,\ldots,d_k$ and
\begin{eqnarray}
\rD_{d_1,\ldots,d_k}\sim \Br{\cU(n)/(\cU(d_1)\times \cdots\times
\cU(d_k))} \times G_k,
\end{eqnarray}
where $d_1+\cdots+d_k=n$ and $d_j>0$ for $j=1,\ldots,k$. Thus
$\density{\complex^n}$ has the structure
\begin{eqnarray}
\density{\complex^n}\sim\bigcup_{d_1+\cdots+d_k=n}\rD_{d_1,\ldots,d_k}\sim
\bigcup_{d_1+\cdots+d_k=n}\Br{\cU(n)/(\cU(d_1)\times \cdots\times
\cU(d_k))} \times G_k,
\end{eqnarray}
where the sum ranges over all partitions $(d_1,\cdots,d_k)\vdash n$
of $n$. The group of rotation matrices $\wtil B$ equivalent to
$\cU(d_1)\times \cdots\times \cU(d_k)$ is called the \emph{stability
group} of $\cU(n)$.

Note also that the part of $\rD_{1,\ldots,1}$ represents a generic,
non-degenerate spectrum. In this case all elements of the spectrum
of $\wtil \rho$ are different and the stability group is equivalent
to an $n$-torus
\begin{eqnarray}
\rD_{1,\ldots,1}\sim \Br{\cU(n)/(\cU(1)^{\times n}} \times G_n.
\end{eqnarray}
The above representation of generic states enables us to define a
\emph{product measure} in the space $\density{\complex^n}$ of mixed
quantum states. To this end, one can take the uniform (Haar) measure
on $\cU(n)$ and a certain measure on the simplex $\Delta_{n-1}$.

The other $2^{n-1}-1$ parts of $\density{\complex^n}$ represent
various kinds of degeneracy and have \emph{measure zero}. The number
of non-homeomorphic parts is equal to the number $P(n)$ of different
representations of the number $n$ as the sum of positive natural
numbers. Thus $P(n)$ gives the number of different topological
structures present in the space $\density{\complex^n}$.

To specify uniquely the unitary matrix of eigenvectors $\wtil U$, it
is thus sufficient to select a point on the coset space
$$
\mathrm{Fl}^{(n)}_\complex : = \cU(n)/\cU(1)^{\times n},
$$
called the \emph{complex flag manifold}. The volume of this complex
flag manifold is:
\begin{framed}
\begin{eqnarray}
\vol_{\rH\rS}\Pa{\mathrm{Fl}^{(n)}_\complex} =
\frac{\vol_{\rH\rS}\Pa{\cU(n)}}{\vol_{\rH\rS}\Pa{\cU(1)}^n} =\frac{
(2\pi)^{\frac{n(n-1)}2}}{1!2!\cdots (n-1)!}.
\end{eqnarray}
\end{framed}
The generic density matrix is thus determined by $(n-1)$
parameters determining eigenvalues and $(n^2-n)$ parameters related
to eigenvectors, which sum up to the dimensionality $(n^2-1)$ of
$\density{\complex^n}$. Although for degenerate spectra the
dimension of the flag manifold decreases, these cases of measure
zero do not influence the estimation of the volume of the entire set
of density matrices. In this subsection, we shall use the
Hilbert-Schmidt metric. The infinitesimal distance takes a
particularly simple form
\begin{eqnarray}
\dif s_{\rH\rS}^2
=\norm{\dif\wtil\rho}^2_{\rH\rS}=\inner{\dif\wtil\rho}{\dif\wtil\rho}_{\rH\rS}
\end{eqnarray}
valid for any dimension $n$. Making use of the diagonal form
$\wtil\rho=\wtil U\Lambda \wtil U^\dagger$, we may write
\begin{eqnarray}
\dif\wtil\rho = \wtil U\Pa{\dif\Lambda + [\wtil U \dif \wtil U,
\Lambda]}\wtil U^*
\end{eqnarray}
Thus the infinitesimal distance can be rewritten as
\begin{eqnarray}
\dif s_{\rH\rS}^2 &=& \sum^n_{j=1}\dif\lambda_j^2 + 2
\sum^n_{i<j}(\lambda_i-\lambda_j)^2\abs{\Innerm{i}{\wtil
U^*\dif\wtil U}{j}}^2\\
&=&\sum^n_{j=1}\dif\lambda_j^2 + 2
\sum^n_{i<j}(\lambda_i-\lambda_j)^2\abs{\Innerm{i}{\dif\wtil
G}{j}}^2,
\end{eqnarray}
where $\dif\wtil G=\wtil U^*\dif\wtil U$. Apparently,
$\sum^n_{j=1}\dif\lambda_j=0$ since $\sum^n_{j=1}\lambda_j=1$. Thus
\begin{eqnarray}
\dif s_{\rH\rS}^2 =\sum^{n-1}_{i,j=1}\dif\lambda_i
m_{ij}\dif\lambda_j + 2
\sum^n_{i<j}(\lambda_i-\lambda_j)^2\abs{\dif\wtil G_{ij}}^2.
\end{eqnarray}
The corresponding volume element gains a factor $\sqrt{\det(M)}$,
where $M=[m_{ij}]$ is the metric in the $(n^2-n)$-dimensional
simplex $\Delta_{n-1}$ of eigenvalues. Note that
$$
M = \I_n+\Br{\begin{array}{cccc}
               1 & 1 & \cdots & 1 \\
               1 & 1 & \cdots & 1 \\
               \vdots & \vdots & \ddots & \vdots \\
               1 & 1 & \cdots & 1
             \end{array}
}.
$$
Therefore the Hilbert-Schmidt volume element is given by
\begin{eqnarray}
\dif V_{\rH\rS} =
\sqrt{n}\prod^{n-1}_{j=1}\dif\lambda_j\prod_{i<j}(\lambda_i-\lambda_j)^2\abs{\prod_{i<j}2\dif\Pa{\re(\wtil
G_{ij})}\dif\Pa{\im(\wtil G_{ij})}}.
\end{eqnarray}
Then
\begin{eqnarray*}
\int \dif V_{\rH\rS} &=&
\sqrt{n}2^{\frac{n(n-1)}2}\int\prod_{i<j}(\lambda_i-\lambda_j)^2\prod^{n-1}_{j=1}\dif\lambda_j\times
\int[\dif\wtil G_1]\\
&=&\sqrt{n}2^{\frac{n(n-1)}2}\vol\Pa{\density{\complex^n}}.
\end{eqnarray*}
That is, respect to Hilbert-Schmidt measure, the volume of the set
of mixed quantum states is
\begin{eqnarray*}
\vol_{\rH\rS}\Pa{\density{\complex^n}}
=\sqrt{n}2^{\frac{n(n-1)}2}\vol\Pa{\density{\complex^n}},
\end{eqnarray*}
i.e.
\begin{framed}
\begin{eqnarray}\label{eq:hs-vol}
\vol_{\rH\rS}\Pa{\density{\complex^n}}
=\sqrt{n}(2\pi)^{\frac{n(n-1)}2}\frac{\Gamma(1)\Gamma(2)\cdots\Gamma(n)}{\Gamma(n^2)}.
\end{eqnarray}
\end{framed}
We see from the above discussion that the obtained formula of volume
depends the used measure. If we used the Hilbert-Schmidt measure,
then we get the Hilbert-Schmidt volume of the set of quantum states
\cite{Karol03}; if we used the Bures measure, then we get the Bures
volume of the set of quantum states \cite{Sommers}.

A special important problem is to compute the volume of the set of
all separable quantum states, along this line, some investigation on
this topic had already been made \cite{Karol98,Karol99}. There are
some interesting topics for computing volumes of the set of some
kinds of states, for instance, Milz also considered the volumes of
conditioned bipartite state spaces \cite{Milz}, Link gave the
geometry of Gaussian quantum states \cite{Link} as well. We can also
propose some problems like this. Consider the following set of all
states being of the form:
$$
\cC\Pa{\frac12\I_2,\frac12\I_2}:=\Set{\wtil\rho_{12}\in\density{\complex^2\ot\complex^2}:
\ptr{1}{\wtil\rho_{12}}=\frac12\I_2=\ptr{2}{\wtil\rho_{12}}}.
$$
Paratharathy characterized the extremal points of this set
\cite{Parthasarathy}. He obtained that all the extremal points of
this convex set is maximal entangled states. That is,
$$
\frac{\ket{0}\ket{\psi_0}+\ket{1}\ket{\psi_1}}{\sqrt{2}}.
$$
It remains open to compute the volume of this convex set
$\cC\Pa{\frac12\I_2,\frac12\I_2}$.

\subsection{Area of the boundary of the set of mixed states}

The boundary of the set of mixed states is far from being trivial.
Formally it may be written as a solution of the equation
$$
\det(\wtil\rho)=0
$$
which contains all matrices of a lower rank. The boundary $\partial
\density{\complex^n}$ contains orbits of different dimensionality
generated by spectra of different rank and degeneracy. Fortunately
all of them are of measure \emph{zero} besides the generic orbits
created by unitary rotations of diagonal matrices with all
eigenvalues different and one of them equal to zero;
$$
\Lambda=\Set{0,\lambda_2<\cdots<\lambda_n}.
$$
Such spectra form the $(n-2)$-dimensional simplex $\Delta_{n-2}$,
which contains $(n-1)!$ the Weyl chambers---this is the number of
possible permutations of elements of $\Lambda$ which all belong to
the same unitary orbits.

Hence the hyper-area of the boundary may be computed in a way
analogous to \eqref{eq:hs-vol}:
\begin{eqnarray*}
&&\int_{\rank(\wtil X)=n-1} [\dif\wtil X] =
\int_{0<\lambda_2<\cdots<\lambda_n}
\delta\Pa{\sum^n_{j=2}\lambda_j-1}\prod_{2=i<j\leqslant
n}\abs{\lambda_i-\lambda_j}^2\prod^n_{j=2}(\lambda^2_j\dif\lambda_j)\times\int_{\cU_1(n)}[\dif\wtil
G_1]\\
&&= \frac1{(n-1)!}\overbrace{\int^\infty_0\cdots\int^\infty_0}^{n-1}
\delta\Pa{\sum^n_{j=2}\lambda_j-1}\prod_{2=i<j\leqslant
n}\abs{\lambda_i-\lambda_j}^2\prod^n_{j=2}(\lambda^2_j\dif\lambda_j)\times\int_{\cU_1(n)}[\dif\wtil
G_1]\\
&&=\frac1{\Gamma(n)}\frac{\Gamma(1)\cdots\Gamma(n)\Gamma(1)\cdots\Gamma(n+1)}{\Gamma(n^2-1)}\frac{\pi^{\frac{n(n-1)}2}}{\Gamma(1)\cdots\Gamma(n)}\\
&&=\pi^{\frac{n(n-1)}2}\frac{\Gamma(1)\cdots\Gamma(n+1)}{\Gamma(n)\Gamma(n^2-1)}
= \vol^{(n-1)}\Pa{\density{\complex^d}},
\end{eqnarray*}
i.e.
\begin{framed}
\begin{eqnarray*}
\vol^{(n-1)}_{\rH\rS} =
\sqrt{n-1}2^{\frac{n(n-1)}2}\vol^{(n-1)}\Pa{\density{\complex^d}}
=\sqrt{n-1}(2\pi)^{\frac{n(n-1)}2}\frac{\Gamma(1)\cdots\Gamma(n+1)}{\Gamma(n)\Gamma(n^2-1)}.
\end{eqnarray*}
\end{framed}
In an analogous way, we may find the volume of edges, formed by the
unitary orbits of the vector of eigenvalues with two zeros. More
generally, states of rank $N - n$ are unitarily similar to diagonal
matrices with $n$ eigenvalues vanishing,
$$
\Lambda=\Set{\lambda_1=\cdots=\lambda_m=0,\lambda_{m+1}<\cdots<\lambda_n}.
$$
These edges of order $m$ are $n^2-m^2-1$ dimensional, since the
dimension of the set of such spectra is $n-m-1$, while the orbits
have the structure of $\cU(n)/\Pa{\cU(m)\times \cU(1)^{n-m}}$ and
dimensionality $n^2-m^2-(n-m)$. We obtain the volume of the
hyperedges
\begin{eqnarray*}
\vol^{(n-m)}_{\rH\rS} =
\frac{\sqrt{n-m}}{(n-m)!}\frac1{C^{(1+2m,2)}_{n-m}}
\frac{\vol_{\rH\rS}\Pa{\mathrm{Fl}^{(n)}_\complex}}{\vol_{\rH\rS}\Pa{\mathrm{Fl}^{(m)}_\complex}}
\end{eqnarray*}

\subsection{Volume of a metric ball in unitary group}

Consider a metric ball around the identity $\I_n$ in the
$n$-dimensional unitary group $\cU(n)$ with Euclidean distance
$\epsilon$,
\begin{eqnarray}
B_\epsilon:=\Set{\wtil U\in\cU(n): \norm{\wtil U-\I_n}_2\leqslant
\epsilon},
\end{eqnarray}
where $\norm{*}_p$ is the $p$-norm for $p=2$. We consider the
invariant Haar-measure $\mu$, a unform distribution defined over
$\cU(n)$. Denote the eigenvalues of $\wtil U$ by
$e^{\sqrt{-1}\theta_j}$. The joint density of the angles $\theta_j$
is given by
\begin{eqnarray}
p(\theta_1,\ldots,\theta_n)=\frac1{(2\pi)^nn!}\prod_{1\leqslant
i<j\leqslant n}\abs{e^{\sqrt{-1}\theta_i} -
e^{\sqrt{-1}\theta_j}}^2,
\end{eqnarray}
where $\theta_j\in[-\pi,\pi], j=1,\ldots,n$. In what follows, we
check the correctness of the integral formula:
$$
\int p(\theta)\dif\theta=1.
$$
Indeed, set $J(\theta)=\prod_{1\leqslant i<j\leqslant
n}\abs{e^{\sqrt{-1}\theta_i} - e^{\sqrt{-1}\theta_j}}^2$ and
$\zeta_j=e^{\sqrt{-1}\theta_j}$, so
\begin{eqnarray}
J(\theta) &=& \prod_{i<j}\abs{\zeta_i-\zeta_j}^2 = \prod_{i<j}
(\zeta_i-\zeta_j)(\zeta^{-1}_i-\zeta^{-1}_j)\\
&=&
(\sign\tau)(\zeta_1\cdots\zeta_n)^{-(n-1)}\prod_{i<j}(\zeta_i-\zeta_j)^2,
\end{eqnarray}
where $\tau=(n\cdots 21)$, i.e. $\tau(j)=n+1-j$ or $\tau$ is written
as
$$
\tau:=\left(
  \begin{array}{cccc}
    1 & 2 & \cdots & n \\
    n & n-1 & \cdots & 1 \\
  \end{array}
\right).
$$
Note that $\sign\tau=(-1)^{\frac{n(n-1)}2}$. We see that the
integral is the constant term in
\begin{eqnarray}
C_n(\sign\tau)(\zeta_1\cdots\zeta_n)^{-(n-1)}\prod_{i<j}(\zeta_i-\zeta_j)^2.
\end{eqnarray}
Thus our task is to identify the constant term in this Laurent
polynomial. To work on the last factor, we recognize
\begin{eqnarray}
V(\zeta) = \prod_{i<j} (\zeta_i-\zeta_j)
\end{eqnarray}
as a Vandermonde determinant; hence
\begin{eqnarray}
V(\zeta) = \sum_{\sigma\in S_n}
(\sign\sigma)\zeta^{\sigma(1)-1}_1\cdots\zeta^{\sigma(n)-1}_n.
\end{eqnarray}
Hence
\begin{eqnarray}
\prod_{i<j}(\zeta_i-\zeta_j)^2 = V(\zeta)^2 = \sum_{\sigma,\pi\in
S_n}
(\sign\sigma)(\sign\pi)\zeta^{\sigma(1)+\pi(1)-2}_1\cdots\zeta^{\sigma(n)+\pi(n)-2}_n.
\end{eqnarray}
Let us first identify the constant term in
\begin{eqnarray}
J(\theta) = (\sign\tau)(\zeta_1\cdots\zeta_n)^{-(n-1)}V(\zeta)^2.
\end{eqnarray}
We see this constant term is equal to
\begin{eqnarray*}
&&\frac1{(2\pi)^n}
\overbrace{\int^{2\pi}_0\cdots\int^{2\pi}_0}^{n}J(\theta)\dif\theta \\
&&= (\sign\tau)\frac1{(2\pi)^n}
\overbrace{\int^{2\pi}_0\cdots\int^{2\pi}_0}^{n}\Pa{\sum_{\sigma,\pi\in
S_n}(\sign\sigma)(\sign\pi)\zeta^{\sigma(1)+\pi(1)-n-1}_1\cdots\zeta^{\sigma(n)+\pi(n)-n-1}_n}\dif\theta\\
&&=(\sign\tau) \sum_{\sigma,\pi\in
S_n}(\sign\sigma)(\sign\pi)\Pa{\frac1{2\pi}\int^{2\pi}_0
\zeta^{\sigma(1)+\pi(1)-n-1}_1\dif\theta_1}\times\cdots\times
\Pa{\frac1{2\pi}\int^{2\pi}_0
\zeta^{\sigma(n)+\pi(n)-n-1}_n\dif\theta_n}\\
&&=(\sign\tau) \sum_{(\sigma,\pi)\in S_n\times S_n:\forall
j,\sigma(j)+\pi(j)=n+1}(\sign\sigma)(\sign\pi)=(\sign\tau)
\sum_{(\sigma,\pi)\in S_n\times
S_n:\pi=\tau\sigma}(\sign\sigma)(\sign\pi).
\end{eqnarray*}
Note that the sum is over all $(\sigma,\pi)\in S_n\times S_n$ such
that $\sigma(j)+\pi(j)=n+1$ for each $j\in\set{1,\ldots,n}$. In
other words, we get $\pi(j)=n+1-\sigma(j)=\tau(\sigma(j))$ for all
$j\in\set{1,\ldots,n}$, i.e. $\pi=\tau\sigma$. Thus the sum is equal
to
$$
(\sign\tau)\sum_{\sigma\in S_n}(\sign\sigma)(\sign\tau\sigma) = n!,
$$
which gives rise to
\begin{eqnarray*}
\frac1{(2\pi)^n}
\overbrace{\int^{2\pi}_0\cdots\int^{2\pi}_0}^{n}J(\theta)\dif\theta
=\frac1{n!}.
\end{eqnarray*}

Now the condition on the distance measure $\norm{\wtil
U-\I_n}_2\leqslant \epsilon$ is equivalent to
$$
\sum^n_{j=1}\abs{e^{\sqrt{-1}\theta_j}-1}^2\leqslant \epsilon^2.
$$
Using Euler's formula
$e^{\sqrt{-1}\theta}=\cos\theta+\sqrt{-1}\sin\theta$ and the fact
that $(\cos\theta-1)^2+\sin^2\theta=4\sin^2\Pa{\frac\theta2}$, we
get
\begin{eqnarray}
\norm{\wtil U-\I_n}_2\leqslant \epsilon \Longleftrightarrow
\sum^n_{j=1}\sin^2\Pa{\frac{\theta_j}2}\leqslant \frac{\epsilon^2}4.
\end{eqnarray}
Thus the (normalized) volume of the metric ball $B_\epsilon$ equals
the following:
\begin{eqnarray}
\vol(B_\epsilon):=\mu(B_\epsilon)=\int_{B_\epsilon} \dif \mu(\wtil
U).
\end{eqnarray}
By spectral decomposition of unitary matrix, we have
$$
\wtil U=\wtil V\wtil D\wtil V^*,~~\wtil
D=e^{\sqrt{-1}\Theta},~~\Theta=\Br{\begin{array}{cccc}
         \theta_1 & 0 & \cdots & 0 \\
         0 & \theta_2 & \cdots & 0 \\
         \vdots & \vdots & \ddots & 0 \\
         0 & 0 & \cdots & \theta_n
       \end{array}
},~~\theta_j\in[-\pi,\pi](j=1,\ldots,n).
$$
Hence
$$
\wtil V^*\cdot\dif \wtil U\cdot \wtil V = \dif\wtil D + \Br{\wtil
V^*\dif\wtil V,\wtil D},
$$
implying that
$$
\wtil V^*\cdot\wtil U^*\dif \wtil U\cdot \wtil V = \wtil
D^*\Pa{\dif\wtil D + \Br{\wtil V^*\dif\wtil V,\wtil D}}.
$$
Let $\dif \wtil G=\wtil U^*\dif \wtil U, \dif \wtil G_1=\wtil
V^*\dif \wtil V$ and $\dif \wtil X = \wtil V^*\cdot\dif \wtil G\cdot
\wtil V$. Thus
\begin{eqnarray}
[\dif \wtil X] = [\dif \wtil G]
\end{eqnarray}
because of $\wtil V\in\cU(n)/\cU(1)^{\times n}$. We also have
$$
\dif\wtil X = \wtil D^*\cdot\Pa{\dif\wtil D + \Br{\dif\wtil
G_1,\wtil D}} = \wtil D^*\cdot\wtil V^*\cdot\dif \wtil U\cdot \wtil
V,
$$
it follows that
\begin{eqnarray}
[\dif\wtil X] = [\dif\wtil U].
\end{eqnarray}
Apparently,
\begin{eqnarray}
B_\epsilon=\Set{\wtil V\wtil D\wtil V^*\in\cU(n): \wtil D\in\cU(n),
\wtil V\in\cU(n)/\cU(1)^{\times n},\norm{\wtil
D-\I_n}_2\leqslant\epsilon}.
\end{eqnarray}
But
\begin{eqnarray}
[\dif\wtil U] = \prod_{1\leqslant i<j\leqslant
n}\abs{e^{\sqrt{-1}\theta_i} - e^{\sqrt{-1}\theta_j}}^2 [\dif\wtil
D][\dif\wtil G_1],
\end{eqnarray}
therefore
\begin{eqnarray}
[\dif\wtil G] = \prod_{1\leqslant i<j\leqslant
n}\abs{e^{\sqrt{-1}\theta_i} - e^{\sqrt{-1}\theta_j}}^2 [\dif\wtil
D][\dif\wtil G_1],
\end{eqnarray}
together with the facts that the region in which
$(\theta_1,\ldots,\theta_n)$ lies is symmetric and $\wtil V\in
\cU(n)/\cU(1)^{\times n}$, implying
\begin{eqnarray*}
\int_{B_\epsilon}[\dif\wtil G] &=&\frac1{n!}
\int\cdots\int_{\stackrel{\theta_j\in[-\pi,\pi](j=1,\ldots,n);}{
\sum^n_{j=1}\sin^2(\theta_j/2)\leqslant
\frac{\epsilon^2}4}}\prod_{1\leqslant i<j\leqslant
n}\abs{e^{\sqrt{-1}\theta_i} - e^{\sqrt{-1}\theta_j}}^2 [\dif\wtil
D]\times \int_{\cU(n)}[\dif\wtil G_1].
\end{eqnarray*}
That is,
\begin{eqnarray*}
\int_{B_\epsilon}\dif \mu(\wtil U) &=&\frac1{(2\pi)^nn!}
\int\cdots\int_{\stackrel{\theta_j\in[-\pi,\pi](j=1,\ldots,n);}{
\sum^n_{j=1}\sin^2(\theta_j/2)\leqslant
\frac{\epsilon^2}4}}\prod_{1\leqslant i<j\leqslant
n}\abs{e^{\sqrt{-1}\theta_i} - e^{\sqrt{-1}\theta_j}}^2 [\dif\wtil
D],
\end{eqnarray*}
where
$$
\dif\mu(\wtil U) = \frac{[\dif\wtil G]}{\int_{\cU(n)}[\dif\wtil
G]}~~\text{and}~~\int_{\cU(n)}[\dif\wtil G] =
(2\pi)^n\int_{\cU_1(n)}[\dif\wtil G_1].
$$
From this, we get
\begin{eqnarray}\label{eq:vol-ball-integral}
\vol\Pa{B_\epsilon} =
\int\cdots\int_{\stackrel{\theta_j\in[-\pi,\pi](j=1,\ldots,n);}{
\sum^n_{j=1}\sin^2(\theta_j/2)\leqslant \frac{\epsilon^2}4}}
p(\theta_1,\ldots,\theta_n)\prod^n_{j=1}\dif{\theta_j},
\end{eqnarray}
where $\epsilon\in[0,2\sqrt{n}]$. For the maximal distance
$\epsilon=2\sqrt{n}$, the restriction
$\sum^n_{j=1}\sin^2(\theta_j/2)\leqslant \frac{\epsilon^2}4$ becomes
irrelevant and $\vol\Pa{B_{2\sqrt{n}}}=1$.

We start by rewriting the $n$-dimensional integral
\eqref{eq:vol-ball-integral}, with the help of a Dirac delta
function, as
\begin{eqnarray}\label{eq:vol-ball-integral-1}
\vol\Pa{B_\epsilon} =
\int^{\frac{\epsilon^2}4}_0\int^\pi_{-\pi}\cdots\int^\pi_{-\pi}
\delta\Pa{t- \sum^n_{j=1}\sin^2\Pa{\frac{\theta_j}2}}
p(\theta_1,\ldots,\theta_n)\prod^n_{j=1}\dif{\theta_j}\dif t.
\end{eqnarray}
We know that
$$
\int^\infty_{-\infty} \delta(t-a)f(t)\dif t = f(a).
$$
If $f(t)\equiv\I_{\set{t:\alpha\leqslant t\leqslant \beta}}$, then
$$
\int^\beta_\alpha \delta(t-a)\dif t = \int^\infty_{-\infty}
\delta(t-a)\I_{\set{t:\alpha\leqslant t\leqslant \beta}}\dif t =
\I_{\set{t:\alpha\leqslant t\leqslant \beta}}.
$$
By using the Fourier representation of Dirac Delta function
$$
\delta(t-a) =
\frac1{2\pi}\int^\infty_{-\infty}e^{\sqrt{-1}(t-a)s}\dif s,
$$
we get
\begin{eqnarray*}
\int^{\frac{\epsilon^2}4}_0 \delta(t-a)\dif t &=&
\frac1{2\pi}\int^\infty_{-\infty}\Pa{\int^{\frac{\epsilon^2}4}_0
e^{\sqrt{-1}(t-a)s}\dif t}\dif s\\
&=&\frac1{2\pi}\int^\infty_{-\infty}
\frac{\sqrt{-1}\Pa{1-e^{\sqrt{-1}\frac{\epsilon^2}4s}}}{se^{\sqrt{-1}as}}\dif
s.
\end{eqnarray*}
Let $a=\sum^n_{j=1}\sin^2\Pa{\frac{\theta_j}2} = \frac
n2-\sum^n_{j=1}\frac12\cos\theta_j$. Indeed,
\begin{eqnarray*}
\frac n2-\sum^n_{j=1}\sin^2(\theta_j/2) &=& -\frac n2 +
\sum^n_{j=1}(1-\sin^2(\theta_j/2)) = -\frac n2 +
\sum^n_{j=1}\cos^2(\theta_j/2)\\
&=& -\frac n2 + \sum^n_{j=1}\frac12\Pa{1+\cos\theta_j}
=\sum^n_{j=1}\frac12\cos\theta_j.
\end{eqnarray*}
Therefore
\begin{eqnarray*}
\int^{\frac{\epsilon^2}4}_0
\delta\Pa{t-\sum^n_{j=1}\sin^2\Pa{\frac{\theta_j}2} }\dif t
&=&\frac1{2\pi}\int^\infty_{-\infty}
\frac{\sqrt{-1}\Pa{1-e^{\sqrt{-1}\frac{\epsilon^2}4s}}}{se^{\sqrt{-1}\frac
n2s}}e^{\sqrt{-1}s\sum^n_{j=1}\frac{\cos\theta_j}2}\dif s\\
&=&\frac1{2\pi}\int^\infty_{-\infty}
\frac{\sqrt{-1}\Pa{1-e^{\sqrt{-1}\frac{\epsilon^2}4s}}}{se^{\sqrt{-1}\frac
n2s}}\Pa{\prod^n_{j=1}e^{\sqrt{-1}s\frac{\cos\theta_j}2}}\dif s
\end{eqnarray*}
Inserting this formula into \eqref{eq:vol-ball-integral-1} and
performing the integration over $t$ first, we have
\begin{eqnarray*}
\vol(B_\epsilon) &=&
\frac1{2\pi}\int^\infty_{-\infty}\frac{\sqrt{-1}\Pa{1-e^{\sqrt{-1}\frac{\epsilon^2}4s}}}{se^{\sqrt{-1}\frac
n2s}}\Pa{\int_{[-\pi,\pi]^n}
p(\theta_1,\ldots,\theta_n)\prod^n_{j=1}e^{\sqrt{-1}s\frac{\cos\theta_j}2}\dif\theta_j}\\
&=&\frac1{2\pi}\int^\infty_{-\infty}\frac{\sqrt{-1}\Pa{1-e^{\sqrt{-1}\frac{\epsilon^2}4s}}}{se^{\sqrt{-1}\frac
n2s}}D_n(s)\dif s,
\end{eqnarray*}
where
\begin{eqnarray*}
D_n(s) = \int_{[-\pi,\pi]^n}
p(\theta_1,\ldots,\theta_n)\prod^n_{j=1}e^{\sqrt{-1}s\frac{\cos\theta_j}2}\dif\theta_j.
\end{eqnarray*}
Since
\begin{eqnarray*}
\prod_{1\leqslant i<j\leqslant n}\abs{e^{\sqrt{-1}\theta_i} -
e^{\sqrt{-1}\theta_j}}^2=
\det\Pa{e^{\sqrt{-1}(i-1)\theta_k}}\overline{\det\Pa{e^{\sqrt{-1}(i-1)\theta_k}}}
\end{eqnarray*}
is a product of two Vandermonde determinants where $i,k=1,\ldots,n$.
The following fact will be used.
\begin{prop}[Andr\'{e}ief's identity]
For two $n\times n$ matrices $M(\mathbf{x})$ and $N(\mathbf{x})$,
defined by the following:
\begin{eqnarray*}
M(\mathbf{x}) = \Br{\begin{array}{cccc}
                          M_1(x_1) & M_1(x_2) & \cdots & M_1(x_n) \\
                          M_2(x_1) & M_2(x_2) & \cdots & M_2(x_n)\\
                          \vdots & \vdots & \ddots & \vdots \\
                          M_n(x_1) & M_n(x_2) & \cdots & M_n(x_n)
                        \end{array}
},N(\mathbf{x}) = \Br{\begin{array}{cccc}
                          N_1(x_1) & N_1(x_2) & \cdots & N_1(x_n) \\
                          N_2(x_1) & N_2(x_2) & \cdots & N_2(x_n)\\
                          \vdots & \vdots & \ddots & \vdots \\
                          N_n(x_1) & N_n(x_2) & \cdots & N_n(x_n)
                        \end{array}
}
\end{eqnarray*}
and a function $w(\cdot)$ such that the integral
$$
\int^b_a M_i(x)N_j(x)w(x)\dif x
$$
exists, then the following multiple integral can be evaluated as
\begin{eqnarray}
\int\cdots\int_{\Delta_{a,b}}\det(M(\mathbf{x}))\det(N(\mathbf{x}))\prod^n_{j=1}w(x_j)\dif
x_j = \det\Pa{\int^b_a M_i(t)N_j(t)w(t)\dif t},
\end{eqnarray}
where $\Delta_{a,b}:=\set{\mathbf{x}=(x_1,\ldots,x_n)\in\real^n:
b\geqslant x_1\geqslant x_2\geqslant \cdots\geqslant x_n\geqslant
a}$.
\end{prop}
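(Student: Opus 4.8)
The plan is to prove Andr\'{e}ief's identity by expanding both determinants with the Leibniz formula, recombining the resulting double sum over permutations, and then exploiting a symmetry to replace the integral over the ordered region $\Delta_{a,b}$ by an integral over the full cube $[a,b]^n$, where it factorizes into one-dimensional integrals.

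First I would write, by definition of the determinant,
\begin{eqnarray*}
\det(M(\mathbf{x})) = \sum_{\sigma\in S_n}(\sign\sigma)\prod^n_{k=1}M_{\sigma(k)}(x_k),\qquad
\det(N(\mathbf{x})) = \sum_{\pi\in S_n}(\sign\pi)\prod^n_{k=1}N_{\pi(k)}(x_k),
\end{eqnarray*}
so that, after multiplying by $\prod^n_{j=1}w(x_j)$ and integrating over $\Delta_{a,b}$, the left-hand side becomes
\begin{eqnarray*}
\sum_{\sigma,\pi\in S_n}(\sign\sigma)(\sign\pi)\int\cdots\int_{\Delta_{a,b}}\Pa{\prod^n_{k=1}M_{\sigma(k)}(x_k)N_{\pi(k)}(x_k)w(x_k)}\prod^n_{j=1}\dif x_j.
\end{eqnarray*}
Next I would reindex the double sum via $\tau=\pi\sigma^{-1}$, so that $(\sign\sigma)(\sign\pi)=\sign\tau$ and, for each fixed $\tau$, the remaining inner sum over $\sigma$ equals $\sum_{\sigma\in S_n}\prod^n_{k=1}F^{(\tau)}_{\sigma(k)}(x_k)$ with $F^{(\tau)}_l(x):=M_l(x)N_{\tau(l)}(x)w(x)$. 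This inner sum is the permanent of the matrix $\Br{F^{(\tau)}_l(x_k)}$ and hence is invariant under every permutation of $(x_1,\ldots,x_n)$, because a permanent is unchanged by permuting rows.

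The key step is to use this symmetry together with the fact that the $n!$ images of $\Delta_{a,b}$ under coordinate permutations tile $[a,b]^n$ up to the measure-zero diagonal hyperplanes: the integral of a symmetric integrand over $\Delta_{a,b}$ equals $\tfrac1{n!}$ times its integral over $[a,b]^n$. Over the cube the integral factorizes, and since reindexing the product by $l=\sigma(k)$ makes each of the $n!$ permutations $\sigma$ contribute the same value,
\begin{eqnarray*}
\frac1{n!}\sum_{\sigma\in S_n}\prod^n_{k=1}\int^b_a F^{(\tau)}_{\sigma(k)}(x)\dif x = \prod^n_{l=1}\int^b_a M_l(x)N_{\tau(l)}(x)w(x)\dif x.
\end{eqnarray*}
Summing over $\tau$ with the sign $\sign\tau$ reassembles $\det\Pa{\int^b_a M_i(t)N_j(t)w(t)\dif t}$, which is the claim.

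I expect the only delicate point to be the bookkeeping in the reindexing $\tau=\pi\sigma^{-1}$ and the justification that a symmetric integrand may be integrated over the cube at the cost of the factor $n!$; both are routine once stated carefully, and convergence of all the interchanges of sum and integral is immediate since each sum is finite and the one-dimensional integral $\int^b_a M_i N_j w\,\dif t$ is assumed to exist. An alternative route, which avoids permanents, runs the argument backwards: expand the right-hand determinant, write each entry $\int^b_a M_i N_j w$ as an $n$-fold integral over $[a,b]^n$, decompose $[a,b]^n$ into the Weyl chambers $C_\rho=\Set{x:x_{\rho(1)}\geqslant\cdots\geqslant x_{\rho(n)}}$, and change variables $y_k=x_{\rho(k)}$ on each chamber; tracking how $\rho$ composes with the Leibniz permutation again collapses everything onto the two Vandermonde-type sums constituting $\det(M(\mathbf{x}))\det(N(\mathbf{x}))$, and this is the form in which the identity is applied above to $\prod_{i<j}\abs{e^{\sqrt{-1}\theta_i}-e^{\sqrt{-1}\theta_j}}^2$.
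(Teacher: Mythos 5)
Your proof is correct. Note that the paper itself states Andr\'eief's identity without proof (it is introduced only as "the following fact will be used"), so there is no argument in the text to compare against; yours is the standard symmetrization proof and it goes through. The key steps all check out: the Leibniz expansions, the reparametrization $\tau=\pi\sigma^{-1}$ with $(\sign\sigma)(\sign\pi)=\sign\tau$, the observation that for fixed $\tau$ the inner sum over $\sigma$ is a permanent and hence a symmetric function of $(x_1,\ldots,x_n)$, the replacement of $\Delta_{a,b}$ by $\tfrac1{n!}$ times the cube $[a,b]^n$, the factorization over the cube, and the final reassembly of $\det\bigl(\int_a^b M_iN_jw\bigr)$ via its own Leibniz expansion. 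Two cosmetic remarks: permuting the $x_k$ permutes the \emph{columns} of $\bigl[F^{(\tau)}_l(x_k)\bigr]$ rather than its rows (immaterial, since a permanent is invariant under both), and the factorization of the multiple integral over the cube into one-dimensional integrals tacitly uses Fubini, so strictly one wants $\int_a^b\abs{M_iN_j}\,\abs{w}\,\dif t<\infty$ rather than mere existence of the integrals --- but this is at the level of rigor of the surrounding text. Your alternative route (expanding the right-hand determinant and decomposing $[a,b]^n$ into Weyl chambers) is the same computation read backwards and is equally valid.
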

By invoking this identity, we know that
\begin{eqnarray*}
D_n(s) &=& \frac1{(2\pi)^nn!}\int\cdots\int_{[-\pi,\pi]^n}
\det\Pa{e^{\sqrt{-1}(i-1)\theta_k}}\det\Pa{e^{-\sqrt{-1}(i-1)\theta_k}}\prod^n_{j=1}e^{\sqrt{-1}s\frac{\cos\theta_j}2}\dif\theta_j\\
&=&
\frac1{(2\pi)^n}\det\Pa{\int^\pi_{-\pi}e^{\sqrt{-1}(i-j)\theta}e^{\sqrt{-1}s\frac{\cos\theta}2}\dif\theta}.
\end{eqnarray*}
In what follows, we need the Bessel function of the first kind. The
definition is
\begin{framed}
\begin{eqnarray*}
J_n(x) = \frac1{2\pi}\int^\pi_{-\pi}
e^{\sqrt{-1}(n\theta-x\sin\theta)}\dif\theta =
\sum^\infty_{j=0}\frac{(-1)^j}{\Gamma(j+n+1)j!}\Pa{\frac x2}^{2j+n}.
\end{eqnarray*}
\end{framed}
We can list some properties of this Bessel function:
\begin{enumerate}[(i)]
\item $\int^\pi_{-\pi} e^{\sqrt{-1}(n\theta+x\sin\theta)}\dif \theta = \int^\pi_{-\pi} e^{\sqrt{-1}(-n\theta-x\sin\theta)}\dif \theta$
\item $J_n(x) = \frac1{2\pi}\int^\pi_{-\pi}
e^{\sqrt{-1}(n\theta-x\sin\theta)}\dif\theta =
\frac1{2\pi}\int^\pi_{-\pi}
e^{\sqrt{-1}(-n\theta+x\sin\theta)}\dif\theta$
\item $J_{-n}(x) = \frac1{2\pi}\int^\pi_{-\pi}
e^{\sqrt{-1}(-n\theta-x\sin\theta)}\dif\theta =
\frac1{2\pi}\int^\pi_{-\pi}
e^{\sqrt{-1}(n\theta+x\sin\theta)}\dif\theta$
\end{enumerate}
We claim that
\begin{framed}
\begin{eqnarray}
\int^\pi_{-\pi} e^{\sqrt{-1}(n\theta+x\cos\theta)}\dif\theta =2\pi
e^{\sqrt{-1}\frac{n\pi}2} J_n(x).
\end{eqnarray}
\end{framed}
Indeed, since
\begin{eqnarray*}
\int^\pi_{-\pi} e^{\sqrt{-1}(n\theta+x\cos\theta)}\dif\theta
=e^{-\sqrt{-1}\frac{n\pi}2}\int^\pi_{-\pi}
e^{\sqrt{-1}(n\theta+x\sin\theta)}\dif\theta
=e^{-\sqrt{-1}\frac{n\pi}2}\cdot 2\pi J_{-n}(x)
\end{eqnarray*}
and $J_{-n}(x)=(-1)^nJ_n(x)$, the desired claim is obtained.

Thus
\begin{eqnarray*}
\int^\pi_{-\pi}e^{\sqrt{-1}(i-j)\theta}e^{\sqrt{-1}s\frac{\cos\theta}2}\dif\theta
&=& \int^\pi_{-\pi}e^{\sqrt{-1}\Br{(i-j)\theta+\frac
s2\cos\theta}}\dif\theta\\
&=& 2\pi e^{\sqrt{-1}\frac{i-j}2\pi}J_{i-j}\Pa{\frac s2}.
\end{eqnarray*}
We now have
\begin{framed}
\begin{eqnarray*}
D_n(s) = \frac1{(2\pi)^n}\det\Pa{2\pi
e^{\sqrt{-1}\frac{i-j}2\pi}J_{i-j}\Pa{\frac s2}} =
\frac1{(2\pi)^n}\det\Pa{J_{i-j}\Pa{\frac s2}}.
\end{eqnarray*}
\end{framed}
By the definition of $J_n(x)$, we have $J_n(-x)=(-1)^n J_n (x)$, and
furthermore,
$$
D_n(-s)=D_n(s).
$$
Therefore
\begin{eqnarray*}
\vol(B_\epsilon)
&=&\frac1{2\pi}\int^\infty_{-\infty}\frac{\sqrt{-1}\Pa{1-e^{\sqrt{-1}\frac{\epsilon^2}4s}}}{se^{\sqrt{-1}\frac
n2s}}D_n(s)\dif s\\
&=& \frac1\pi\int^\infty_0
\frac{\sin\Pa{\frac{ns}2}+\sin\Pa{\Pa{\frac{\epsilon^2}4-\frac
n2}s}}{s}\det\Pa{J_{i-j}\Pa{\frac s2}}\dif s,
\end{eqnarray*}
where we used Euler's formula. In fact, we have
\begin{eqnarray*}
&&\frac{1-e^{\sqrt{-1}\frac{\epsilon^2}4s}}{se^{\sqrt{-1}\frac n2
s}} = \frac{e^{-\sqrt{-1}\frac n2
s}-e^{\sqrt{-1}\Pa{\frac{\epsilon^2}4-\frac
n2 }s}}{s} \\
&&= \frac{\cos\Pa{\frac n2 s}-\cos\Pa{\Pa{\frac{\epsilon^2}4-\frac
n2}s}}s + \sqrt{-1}\frac{-\sin\Pa{\frac n2 s} -
\sin\Pa{\Pa{\frac{\epsilon^2}4-\frac n2}s}}s.
\end{eqnarray*}
Finally, we get the volume of a metric ball in unitary group
$\cU(n)$:
\begin{framed}
\begin{eqnarray}
\vol(B_\epsilon) = \frac1\pi\int^\infty_0
\frac{\sin\Pa{\frac{ns}2}+\sin\Pa{\Pa{\frac{\epsilon^2}4-\frac
n2}s}}{s}\det\Pa{J_{i-j}\Pa{\frac s2}}\dif s.
\end{eqnarray}
\end{framed}

\section{Appendix I: Volumes of a sphere and a ball}\label{AppI}
The following result is very important in deriving the volumes of a
sphere and a ball in $\mathbb{F}^n$ where
$\mathbb{F}=\real,\complex$.
\begin{prop}
It holds that
\begin{framed}
\begin{eqnarray}
\int_{\complex^n} \delta(t-\iinner{\psi}{\psi})[\dif\psi] =
\frac{\pi^n}{\Gamma(n)}t^{n-1}\quad(t\geqslant0).
\end{eqnarray}
\end{framed}
In particular,
\begin{eqnarray}
\int_{\complex^n} \delta(1-\iinner{\psi}{\psi})[\dif\psi] =
\frac{\pi^n}{\Gamma(n)}.
\end{eqnarray}
\end{prop}

\begin{proof}
Recall that
$$
[\dif\psi] = [\dif(\re(\psi))][\dif(\im(\psi))]=\prod^n_{j=1}\dif
x_j\dif y_j
$$
for $\ket{\psi}=\sum^n_{j=1}\psi_j\ket{j}$ and
$\psi_j=x_j+\mathrm{i}y_j\in\real+\mathrm{i}\real(j=1,\ldots,n)$. We
now have
\begin{eqnarray*}
\delta(1-\iinner{\psi}{\psi})[\dif\psi] =
\delta\Pa{1-\sum^n_{j=1}(x^2_j+y^2_j)}\prod^n_{j=1}\dif x_j\dif y_j.
\end{eqnarray*}
Let
$$
F(t)=\frac{\Gamma(n)}{\pi^n}\int\delta\Pa{t-\sum^n_{j=1}(x^2_j+y^2_j)}\prod^n_{j=1}\dif
x_j\dif y_j.
$$
Then the Laplace transform of $F(t)$ is given by:
\begin{eqnarray*}
\sL(F)(s) &=&
\frac{\Gamma(n)}{\pi^n}\int\Br{\int^\infty_0\delta\Pa{t-\sum^n_{j=1}(x^2_j+y^2_j)}e^{-st}\dif
t}\prod^n_{j=1}\dif x_j\dif y_j\\
&=&\frac{\Gamma(n)}{\pi^N}\Br{\int^\infty_{-\infty}
\prod^n_{j=1}e^{-sx^2_j}\dif x_j}\Br{\int^\infty_{-\infty}
\prod^n_{j=1}e^{-sy^2_j}\dif y_j} = s^{-n}\Gamma(n)
\end{eqnarray*}
implying via the inverse Laplace transform that $F(t)=t^{n-1}$. This
indicates that
\begin{eqnarray*}
\int\delta(t-\iinner{\psi}{\psi})[\dif\psi]
=\frac{\pi^n}{\Gamma(n)}t^{n-1}.
\end{eqnarray*}
In particular, for $t=1$, we get the desired result.
\end{proof}

In the following, we will give a method via Dirac delta function to
calculate the volume of $n$-dimensional ball in $\real^n$. Moreover,
based on this formula, we then give the surface of
$(n-1)$-dimensional sphere in $\real^n$. Denote the ball of radius
$R$ in $\real^n$ as
$$
\mathbb{B}_n(R):=\Set{(x_1,\ldots,x_n)\in\real^n:
\sum^n_{j=1}x^2_j\leqslant R^2}
$$
and the sphere of radius $R$ in $\real^n$ as
$$
\mathbb{S}^{n-1}(R):=\Set{(x_1,\ldots,x_n)\in\real^n:
\sum^n_{j=1}x^2_j= R^2}.
$$
Now the volume of the ball of radius $R$ is given by
\begin{eqnarray}
\vol(\mathbb{B}_n(R)) = \int^{R^2}_0\dif t \int_{\real^n}\delta\Pa{t
- \sum^n_{j=1}x^2_j}\prod^n_{j=1}\dif x_j.
\end{eqnarray}
Denote
$$
F(t) = \int\delta\Pa{t - \sum^n_{j=1}x^2_j}\prod^n_{j=1}\dif x_j.
$$
Then its Laplace transform is given by
$$
\sL(F)(s) = \pi^{\frac n2}s^{-\frac n2}.
$$
This implies that
$$
F(t) =\frac{\pi^{\frac n2}}{\Gamma\Pa{\frac n2}}t^{\frac n2-1}.
$$
Thus
\begin{eqnarray}
\vol(\mathbb{B}_n(R)) = \int^{R^2}_0 F(t)\dif t = \frac{\pi^{\frac
n2}}{\Gamma\Pa{\frac n2}}\frac{R^n}{\frac n2} = \frac{\pi^{\frac
n2}}{\Gamma\Pa{\frac n2+1}}R^n.
\end{eqnarray}
By differentiating the above formula with respect to the radius $R$,
we get that
\begin{eqnarray}
\vol\Pa{\mathbb{S}^{n-1}(R)}=\frac{\dif\vol(\mathbb{B}_n(R))}{\dif
R} = 2R\cdot F(R^2) = \frac{\pi^{\frac n2}}{\Gamma\Pa{\frac
n2+1}}nR^{n-1} = \frac{2\pi^{\frac n2}}{\Gamma\Pa{\frac n2}}R^{n-1}.
\end{eqnarray}
From this, we see that
\begin{eqnarray}
\vol\Pa{\mathbb{S}^{n-1}(R)}= 2R\cdot \int\delta\Pa{R^2 -
\sum^n_{j=1}x^2_j}\prod^n_{j=1}\dif x_j,
\end{eqnarray}
where
\begin{eqnarray}
\int\delta\Pa{R^2 - \sum^n_{j=1}x^2_j}\prod^n_{j=1}\dif x_j =
\frac{\pi^{\frac n2}}{\Gamma\Pa{\frac n2}}R^{n-2}.
\end{eqnarray}
In particular, for $R=1$, we see that
\begin{eqnarray}
\vol\Pa{\mathbb{S}^{n-1}(1)}= 2\int\delta\Pa{1 -
\sum^n_{j=1}x^2_j}\prod^n_{j=1}\dif x_j =2\int_{\real^n}\delta\Pa{1
- \iinner{u}{u}}[\dif u],
\end{eqnarray}
where
\begin{eqnarray}
\int\delta\Pa{1 - \sum^n_{j=1}x^2_j}\prod^n_{j=1}\dif x_j
=\int_{\real^n}\delta\Pa{1 - \iinner{u}{u}}[\dif u]=
\frac{\pi^{\frac n2}}{\Gamma\Pa{\frac n2}}.
\end{eqnarray}
We also see that the volume of the ball of radius $R$ in
$\complex^n\cong\real^{2n}$ is given by
\begin{eqnarray}
\vol(\mathbb{B}_n(R,\complex)) =\int^{R^2}_0\dif
t\int_{\complex^n}\delta\Pa{t-\iinner{\psi}{\psi}}[\dif\psi] =
\frac{\pi^n}{\Gamma\Pa{n+1}}R^{2n}.
\end{eqnarray}
The surface of the sphere of radius $R$ in $\complex^n$ is given by
\begin{eqnarray}
\vol\Pa{\mathbb{S}^{2n-1}(R,\complex)} = 2R\cdot
\int_{\complex^n}\delta\Pa{R^2-\iinner{\psi}{\psi}}[\dif\psi]=\frac{2\pi^n}{\Gamma\Pa{n}}R^{2n-1},
\end{eqnarray}
where
\begin{eqnarray}
\int_{\complex^n}\delta\Pa{R^2-\iinner{\psi}{\psi}}[\dif\psi]=\frac{\pi^n}{\Gamma\Pa{n}}R^{2n-2}.
\end{eqnarray}
Recall a property of Dirac delta function:
\begin{eqnarray}
\delta\Pa{a^2-x^2} = \frac1{2\abs{a}}(\delta(a-x)+\delta(a+x)).
\end{eqnarray}
From this, we see that
\begin{eqnarray}
\delta\Pa{R^2-\iinner{\psi}{\psi}} = \delta\Pa{R^2-\norm{\psi}^2} =
\frac1{2R}(\delta(R-\norm{\psi})+\delta(R+\norm{\psi})).
\end{eqnarray}
Clearly $\delta(R+\norm{\psi})=0$ since $R+\norm{\psi}>0$. Therefore
we obtain
\begin{framed}
\begin{eqnarray}
2R\cdot\delta\Pa{R^2-\iinner{\psi}{\psi}} = \delta(R-\norm{\psi}).
\end{eqnarray}
\end{framed}
In summary, we have the following result.
\begin{prop}
The surface of sphere of radius $R$ in $\complex^n$ can be
represented by
\begin{framed}
\begin{eqnarray}
\vol\Pa{\mathbb{S}^{2n-1}(R,\complex)} =
\int_{\complex^n}\delta\Pa{R-\norm{\psi}}[\dif\psi]=\frac{2\pi^n}{\Gamma\Pa{n}}R^{2n-1}.
\end{eqnarray}
\end{framed}
\end{prop}

We remark here that
\begin{framed}
\begin{eqnarray}
\int_{\real^n}f(x)[\dif x] = \int^\infty_0\dif r\int_{\real^n}[\dif
u] r^{n-1}f(r\cdot u)\delta(1-\norm{u}).
\end{eqnarray}
\end{framed}
In particular, if $f$ is independent of $\mathbf{u}$, i.e., $f(r
u)=f(r)$, then we have that
\begin{eqnarray}
\int_{\real^n}f(x)[\dif x] = \int^\infty_0r^{n-1}f(r)\dif
r\times\int_{\real^n}\delta(1-\norm{u})[\dif u] = \frac{2\pi^{\frac
n2}}{\Gamma\Pa{\frac n2}}\int^\infty_0r^{n-1}f(r)\dif r.
\end{eqnarray}
Indeed, For a Gaussian random vector
$\omega=[\omega_1,\ldots,\omega_n]^\t\in\real^n$ with i.i.d.
standard normal random variable $\omega_j\sim N(0,1)$, we have
\begin{eqnarray}
1=(2\pi)^{-\frac n2}\int_{\real^n}
\exp\Pa{-\frac12\norm{\omega}^2}[\dif \omega].
\end{eqnarray}
By using polar coordinate of $\omega$, we see
$\omega=\norm{\omega}\cdot u=r\cdot u$, where $\norm{u}=1$. Thus
\begin{eqnarray}
[\dif\omega] = \dif r\times\delta(r-\norm{\omega})[\dif\omega]=
r^{n-1}\dif r\times \delta(1-\norm{u})[\dif u],
\end{eqnarray}
where
\begin{eqnarray*}
\fbox{$\delta(1-\norm{u})[\dif u] = 2\delta(1-\iinner{u}{u})[\dif
u].$}
\end{eqnarray*}
Indeed, the truth is checked as follows,
\begin{eqnarray*}
&&(2\pi)^{-\frac
n2}\int_{\real^n}\exp\Pa{-\frac12\norm{\omega}^2}[\dif\omega]
\\&&=(2\pi)^{-\frac n2}\int^\infty_0\exp\Pa{-\frac12r^2}r^{n-1}\dif
r\int_{\real^n} \delta(1-\norm{u})[\dif u]\\
&&=(2\pi)^{-\frac n2} 2^{\frac n2-1}\Gamma\Pa{\frac
n2}\int_{\real^n} \delta(1-\norm{u})[\dif u]\\
&&=\frac{\Gamma\Pa{\frac n2}}{2\pi^{\frac n2}}\int_{\real^n}
\delta(1-\norm{u})[\dif u]=\frac{\Gamma\Pa{\frac n2}}{2\pi^{\frac
n2}}\frac{2\pi^{\frac n2}}{\Gamma\Pa{\frac n2}}=1.
\end{eqnarray*}

\section{Appendix II: Some useful facts}

\subsection{Matrices with simple eigenvalues form open dense sets of full measure}

The present subsection is written based on the Book by Deift and
Gioev \cite{Deift}.

Let $\sH_n(\complex)$ be the set of all $n\times n$ Hermitian
matrices with simple spectrum, i.e. the multiplicity of each
eigenvalue are just one. Next we show that $\sH_n(\complex)$ is an
\blue{open and dense set of full measure} (i.e. the Lebesgue measure
of the complement is vanished) in $\Her{\complex^n}\simeq
\real^{n^2}$, the set of all $n\times n$ Hermitian matrices, that
is, $\real^{n^2}\backslash \sH_n(\complex)$ is a set of zero-measure
in the sense of Lebesgue measure.

Assume that $\wtil M$ is an arbitrary Hermitian matrix in
$\sH_n(\complex)$. with simple spectrum $\mu_1<\cdots<\mu_n$, then
by standard perturbation theory, all matrices in a neighborhood of
$\wtil M$ have simple spectrum. Moreover if $\wtil
H\in\Her{\complex^n}$ with eigenvalues $\set{h_j:j=1,\ldots,n}$,
then by spectral theorem, $\wtil H=\wtil U\Lambda \wtil U^*$ for
some unitary $\wtil U$ and $\Lambda=\diag(h_1,\ldots,h_n)$. Now we
can always find $\epsilon_j$ arbitrarily small for all $j$ so that
$h_j+\epsilon_j$ are distinct for all $j$. Thus
$$
\wtil H_\epsilon := \wtil
U\diag(h_1+\epsilon_1,\ldots,h_n+\epsilon_n)\wtil U^*
$$
is a Hermitian matrix with simple spectrum, arbitrarily close to
$\wtil H$. The above two facts show that $\sH_n(\complex)$ is open
and dense. In order to show that $\sH_n(\complex)$ is of full
measure, consider the discriminant:
$$
\Delta(\lambda): = \prod_{1\leqslant i<j\leqslant
n}(\lambda_j-\lambda_i)^2.
$$
By the fundamental theorem of symmetric functions, $\Delta$ is a
polynomial function of the elementary symmetric functions of the
$\lambda_j$'s, and hence a polynomial function of the $n^2$ entries
$\wtil H_{11},\ldots,\wtil H_{nn},\re(\wtil H_{ij}), \im(\wtil
H_{ij})(i<j)$ of $\wtil H$. Now if $\sH_n(\complex)$ were not of
full measure, then $\Delta$ would vanish on a set of positive
measure in $\real^{n^2}$. It follows that $\Delta\equiv0$ because
$\Delta$ is polynomial in $\real^{n^2}$.

Let $H=\diag(1,2,\ldots,n)$ is a Hermitian matrix with distinct
spectrum and $\Delta(H)\neq0$. This gives a contradiction and so
$\sH_n(\complex)$ is of full measure in $\Her{\complex^n}$.

Similarly, the set $\sS_n(\real)$ of all $n\times n$ real symmetric
matrices with simple spectrum is an \blue{open and dense set of full
measure} in the set $\bS\Pa{\real^n}$ of all real symmetric
matrices.

From the above discussion, we conclude that the set of all density
matrices with distinct positive eigenvalues is an open and dense set
of full measure in the set of all density matrices.

We begin by considering real symmetric matrices of size $n$, which
is the simplest case. Because $\sS_n(\real)$ is of full measure, it
is sufficient for the purpose of integration to restrict our
attention to matrices $M\in\sS_n(\real)$. Let $\mu_1<\cdots<\mu_n$
denote the eigenvalues of $M$. By spectral theorem, $M=U\Lambda
U^\t$, where $\Lambda=\diag(\mu_1,\ldots,\mu_n)$. Clearly the
columns of $U$ are defined only up to multiplication by $\pm1$, and
so the map $M\mapsto (\Lambda, U)$:
$$
\sS_n(\real)\ni M\mapsto (\Lambda, U)\in\real^n_\uparrow\times
\cO(n)
$$
is not well-defined, where
$$
\real^n_\uparrow=\Set{(\mu_1,\ldots,\mu_n)\in\real^n:
\mu_1<\cdots<\mu_n},~~~\cO(n)=n\times n~\text{orthogonal group}.
$$
We consider instead the map
\begin{eqnarray}
\phi_1:\sS_n(\real)\ni M\mapsto (\Lambda,\widehat U)\in
\real^n_\uparrow\times (\cO(n)/K_1)
\end{eqnarray}
where $K_1$ is the closed subgroup of $\cO(n)$ containing $2^n$
elements of the form $\diag(\pm1,\ldots,\pm1)$, $\cO(n)/K_1$ is the
homogeneous manifold obtained by factoring $\cO(n)$ by $K_1$, and
$\widehat U=UK_1$ is the coset containing $U$. The map $\phi_1$ is
now clearly well-defined.

The differentiable structure on $\cO(n)/K_1$ is described in the
following general result about homogeneous manifolds:
\begin{prop}\label{prop:dif-structure}
Let $K$ be a closed subgroup of a Lie group $G$ and let $G/K$ be the
set $\Set{gK: g\in G}$ of left cosets module $K$. Let $\pi:G\to G/K$
denote the natural projection $\pi(g)=gK$. Then $G/K$ has a unique
manifold structure such that $\pi$ is $C^\infty$ and there exist
local smooth sections of $G/K$ in $G$, i.e., if $gK\in G/K$, there
is a neighborhood $W$ of $gK$ and a $C^\infty$ map $\tau: W\to G$
such that $\pi\circ\tau=\mathrm{id}$.
\end{prop}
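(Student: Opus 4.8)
The plan is to construct charts on $G/K$ explicitly from a linear complement to the Lie algebra of $K$, taking Cartan's closed subgroup theorem as the only substantial external input.

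First I would invoke Cartan's theorem: since $K$ is closed in $G$, it is an embedded Lie subgroup, so its Lie algebra $\mathfrak{k}\subseteq\mathfrak{g}$ is defined and $K$ sits in $G$ as a genuine submanifold. Fix a linear subspace $\mathfrak{m}\subseteq\mathfrak{g}$ with $\mathfrak{g}=\mathfrak{k}\oplus\mathfrak{m}$ and consider the map $\Phi(X,k)=\exp(X)\,k$ from $\mathfrak{m}\times K$ to $G$. Its differential at $(0,e)$ is the identity of $\mathfrak{m}\oplus\mathfrak{k}=\mathfrak{g}$, so by the inverse function theorem there are neighborhoods $0\in V\subseteq\mathfrak{m}$ and $e\in N\subseteq K$ on which $\Phi$ restricts to a diffeomorphism onto an open subset of $G$. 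Shrinking $V$, one upgrades this to a \emph{slice} statement: for $X_1,X_2\in V$, $\exp(X_1)K=\exp(X_2)K$ forces $X_1=X_2$. This is exactly where embeddedness of $K$ enters, so that $\exp(V)^{-1}\exp(V)$ meets $K$ only in a neighborhood of $e$ contained in $N$.

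Next I would globalize the slice by left translation. Give $G/K$ the quotient topology; then $\pi$ is open because $\pi^{-1}(\pi(U))=UK$ is open whenever $U$ is, and since $K$ is closed the quotient is Hausdorff, while second countability descends from $G$. For each coset $gK$ define a chart on the open set $\pi\bigl(g\exp(V)\bigr)$ by $\pi\bigl(g\exp(X)\bigr)\mapsto X\in V$; this is well defined by the slice property. On an overlap a transition map sends $X$ to the $\mathfrak{m}$-component of $\Phi^{-1}\bigl((g')^{-1}g\exp X\bigr)$, which is a smooth function of $X$, hence $C^\infty$; so these charts make $G/K$ a smooth manifold of dimension $\dim\mathfrak{g}-\dim\mathfrak{k}$. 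In this atlas $\pi$ reads locally as $g\exp(X)\mapsto X$, so $\pi$ is a smooth submersion, and the required section is immediate: on $W=\pi\bigl(g\exp(V)\bigr)$ set $\tau\bigl(\pi(g\exp X)\bigr)=g\exp X$, which is $C^\infty$ and satisfies $\pi\circ\tau=\mathrm{id}$.

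For uniqueness, suppose $G/K$ also carries a smooth structure for which $\pi$ is smooth and admits local smooth sections; composing such a section with our charts shows the identity map between the two structures is locally smooth in both directions, so they coincide. I expect the middle step to be the main obstacle: verifying that $\exp(V)$ is genuinely a slice and that the transition maps are smooth. Both reduce to the fact, guaranteed by Cartan's theorem, that $K$ is embedded rather than merely immersed in $G$; granting that, the remainder is routine use of the inverse function theorem together with openness of $\pi$.
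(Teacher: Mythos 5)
The paper states this proposition without proof, as a standard background fact about homogeneous manifolds (it is quoted from the differential-geometry literature and immediately applied to $\cO(n)/K_1$ and $\cU(n)/K_2$). Your argument is the classical textbook proof of exactly this result: Cartan's closed subgroup theorem, a slice $\exp(V)$ built from a complement $\mathfrak{m}$ of $\mathfrak{k}$, left-translated charts with smooth transition maps, and uniqueness via composing local sections with the projection. It is correct, and you rightly flag the one genuinely delicate step --- shrinking $V$ so that $\exp(V)$ meets each coset at most once, which is where closedness (hence embeddedness) of $K$ is used; the remaining assertions (openness of $\pi$, Hausdorffness of the quotient, the section $\tau(\pi(g\exp X))=g\exp X$) are routine as you indicate.
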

In other words, for each $gK$, it is possible to choose a $g'\in
gK\subset G$ such that the map $gK\mapsto g'\equiv \tau(gK))$ is
locally defined and smooth.

For example, if $G=\real^\times$, the multiplication group of
nonzero real numbers, and if $K$ is the subgroup $\set{\pm1}$, then
$G/K\cong\set{x>0}$ and $\pi(a)=\abs{a}\cong\set{\pm a}$. If
$gK=\set{\pm a}$ for some $a>0$, then $\tau(\set{\pm a})=a$. Also,
if $G=\cO(2)$ and $K_1$ consists of four elements of the form
$$
\Br{\begin{array}{cc}
  \pm1 & 0 \\
  0 & \pm1
\end{array}},\quad g=\Br{\begin{array}{cc}
                           u_{11} & u_{12} \\
                           u_{21} & u_{22}
                         \end{array}
}\in\cO(2),
$$
then
$$
gK_1 = \Set{\Br{\begin{array}{cc}
                           u_{11} & u_{12} \\
                           u_{21} & u_{22}
                         \end{array}
},\Br{\begin{array}{cc}
                           -u_{11} & u_{12} \\
                           -u_{21} & u_{22}
                         \end{array}
},\Br{\begin{array}{cc}
                           u_{11} & -u_{12} \\
                           u_{21} & -u_{22}
                         \end{array}
},\Br{\begin{array}{cc}
                           -u_{11} & -u_{12} \\
                           -u_{21} & -u_{22}
                         \end{array}
}}.
$$
Since $u^2_{11}+u^2_{21}=1=u^2_{12}+u^2_{22}$, each column of $g$
contains at least one nonzero number: For example, suppose $u_{21}$
and $u_{22}$ are nonzero; then there exists a \emph{unique} $g'\in
gK$ such that the elements in the second row of $g'$ are positive.
The same is true for all $g''K$ close to $gK$. Then $g''K\mapsto
g''$ is the desired (local map) $\tau$. We can generalize the above
construction on $\cO(n)/K_1$. We will prove the following result:
\begin{prop}\label{prop:diffeomorphism-real}
$\phi_1$ is a diffeomorphism from $\sS_n(\real)$ onto
$\real^n_\uparrow\times (\cO(n)/K_1)$.
\end{prop}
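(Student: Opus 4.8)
The plan is to check in turn that $\phi_1$ is well defined, bijective, smooth, and has smooth inverse. \emph{Well-definedness}: since $M\in\sS_n(\real)$ has simple spectrum its eigenvalues are uniquely determined and, after ordering, give a unique $\Lambda\in\real^n_\uparrow$; each eigenspace is one-dimensional, so the columns of $U$ are pinned down up to sign, i.e.\ $U$ is unique up to right multiplication by an element of $K_1$, whence $\widehat U=UK_1$ is well defined. \emph{Injectivity}: if $\phi_1(M)=\phi_1(M')$ then $\Lambda=\Lambda'$ and $U'=U\kappa$ for some $\kappa\in K_1$; as $\kappa$ is diagonal it commutes with $\Lambda$ and $\kappa^2=\I_n$, so $M'=U'\Lambda (U')^\t=U\kappa\Lambda\kappa U^\t=U\Lambda U^\t=M$. \emph{Surjectivity}: given $(\Lambda,\widehat U)$, choose any representative $U\in\cO(n)$ and set $M=U\Lambda U^\t$; this does not depend on the representative (replacing $U$ by $U\kappa$, $\kappa\in K_1$, leaves it unchanged), $M\in\sS_n(\real)$ because its eigenvalues are the distinct increasing entries of $\Lambda$, and $\phi_1(M)=(\Lambda,\widehat U)$. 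As a sanity check the dimensions match: $\dim\sS_n(\real)=\tfrac{n(n+1)}2=n+\tfrac{n(n-1)}2=\dim\real^n_\uparrow+\dim(\cO(n)/K_1)$.

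To see that $\phi_1$ is $C^\infty$, fix $M_0\in\sS_n(\real)$ with eigenvalues $\mu^0_1<\cdots<\mu^0_n$. Because these roots of the characteristic polynomial are simple, analytic perturbation theory (or the implicit function theorem, isolating each root and then solving $(M-\mu_j(M)\I_n)u=0$ under a fixed normalization) furnishes on a neighborhood of $M_0$ smooth maps $M\mapsto\mu_j(M)$ and smooth unit eigenvectors $M\mapsto u_j(M)$; the strict ordering $\mu_1(M)<\cdots<\mu_n(M)$ persists near $M_0$ by continuity. Then $\Lambda(M)=\diag(\mu_1(M),\ldots,\mu_n(M))$ and $U(M)=[u_1(M),\ldots,u_n(M)]$ form a local smooth lift with values in $\real^n_\uparrow\times\cO(n)$, and composing the second component with the smooth projection $\pi\colon\cO(n)\to\cO(n)/K_1$ of Proposition~\ref{prop:dif-structure} gives $\phi_1=(\Lambda,\pi\circ U)$ locally, hence $\phi_1\in C^\infty$.

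For the inverse I would invoke the local smooth sections guaranteed by Proposition~\ref{prop:dif-structure}: near any $\widehat U_0$ there is a neighborhood $W\subset\cO(n)/K_1$ and a $C^\infty$ map $\tau\colon W\to\cO(n)$ with $\pi\circ\tau=\mathrm{id}$. On $\real^n_\uparrow\times W$ the assignment $(\Lambda,\widehat U)\mapsto\tau(\widehat U)\,\Lambda\,\tau(\widehat U)^\t$ is smooth since it is built from matrix multiplication, it is independent of the choice of section (conjugating $\tau(\widehat U)$ by $\kappa\in K_1$ does nothing), and it agrees with $\phi_1^{-1}$ there by the surjectivity computation above. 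Therefore $\phi_1^{-1}$ is smooth, so $\phi_1$ is a diffeomorphism.

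The step I expect to be the real obstacle is the smooth selection of eigenvalues and eigenvectors used to lift $\phi_1$ locally into $\cO(n)$: one must argue that for a matrix with simple spectrum these can be chosen in a $C^\infty$ (not merely continuous) fashion, either by citing Rellich/Kato-type results or by a careful implicit-function-theorem argument together with a subsequent normalization. Once such a smooth local lift into $\cO(n)$ is available, both the smoothness of $\phi_1$ and that of $\phi_1^{-1}$ reduce to formal consequences of Proposition~\ref{prop:dif-structure} and the smoothness of matrix multiplication.
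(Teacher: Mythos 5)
Your proposal is correct and follows essentially the same route as the paper: dimension count, explicit two-sided inverse built from a coset representative, smoothness of $\phi_1$ via perturbation theory for simple eigenvalues, and smoothness of the inverse via the local sections $\tau$ of Proposition~\ref{prop:dif-structure}. The only point you leave as a citation -- the $C^\infty$ selection of normalized eigenvectors -- is carried out explicitly in the paper via the Riesz spectral projection $P_j(M)=\frac{1}{2\pi\mathrm{i}}\oint_{\Gamma_j}(s-M)^{-1}\,\dif s$ applied to $u_j(M_0)$ and then normalized, which is exactly the "careful implicit-function-type argument" you anticipated.
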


\begin{proof}
Note that (here and below we always speak of the \blue{real}
dimensions)
\begin{eqnarray}
\dim\Pa{\real^n_\uparrow\times (\cO(n)/K_1)} =
N+\frac{N(N-1)}2=\frac{N(N+1)}2
\end{eqnarray}
as it should.

Define the map $\hat \phi_1:\real^n_\uparrow\times(\cO(n)/K_1)\to
\sS_n(\real)$ as follows: If $(\Lambda,\widehat U)$ lies in
$\real^n_\uparrow\times(\cO(n)/K_1)\to \sS_n(\real)$, then
\begin{eqnarray}
\hat \phi_1(\Lambda, \widehat U) = U\Lambda U^\t
\end{eqnarray}
where $U$ is \blue{any} matrix in the coset $\widehat U$. If $U'$ is
another such matrix, then $U'=Uh$ for some $h\in K_1$ and so
$$
U'\Lambda (U')^\t = Uh\Lambda h^\t U^\t = U\Lambda U^\t.
$$
Hence $\hat \phi_1$ is well-defined. We will show that
\begin{eqnarray}\label{eq:phi-1}
\phi_1\circ\hat \phi_1=\mathrm{id}_{\real^n_\uparrow\times
(\cO(n)/K_1)}
\end{eqnarray}
and
\begin{eqnarray}\label{eq:hat-phi-1}
\hat \phi_1\circ\phi_1=\mathrm{id}_{\sS_n(\real)}.
\end{eqnarray}
Indeed,
\begin{eqnarray}
\phi_1\Pa{\hat\phi_1\Pa{\Lambda,\widehat U}} &=& \phi_1\Pa{U\Lambda
U^\t},~~U\in\widehat U\\
&=& \Pa{\Lambda, UK_1} = \Pa{\Lambda, \widehat U}.
\end{eqnarray}
Conversely, if $M=U\Lambda U^\t\in \sS_n(\real)$, then
\begin{eqnarray}
\hat\phi_1(\phi_1(M)) = \hat \phi_1\Pa{\Lambda,\widehat
U=UK_1}=U\Lambda U^\t=M.
\end{eqnarray}
This proves Eq.~\eqref{eq:phi-1} and Eq.~\eqref{eq:hat-phi-1}. In
order to prove that $\phi_1$ is a diffeomorphism, it suffices to
show that $\phi_1$ and $\hat\phi_1$ are smooth.

The smoothness of $\phi_1$ follows from perturbation theory: Fix
$M_0=U_0\Lambda_0 U^\t_0\in\sS_n(\real)$, where
$\Lambda_0=\diag(\mu_{01},\ldots,\mu_{0n})$ for
$\mu_{01}<\cdots<\mu_{0n}$, and $U_0\in\cO(n)$. Then for $M$ near
$M_0$, $M\in\sS_n(\real)$, the eigenvalues of $M$,
$\mu_1(M)<\cdots<\mu_n(M)$, are smooth functions of $M$. Moreover,
the associated eigenvectors $u_j(M)$,
$$
Mu_j(M) = \lambda_j(M)u_j(M),~~1\leqslant j\leqslant n,
$$
can be chosen orthogonal
$$
\Inner{u_i(M)}{u_j(M)}=\delta_{ij},~~~(1\leqslant i,j\leqslant n),
$$
and smooth in $M$. Indeed, for any $j$ with $1\leqslant j\leqslant
n$, let $P_j$ be the orthogonal projection
\begin{eqnarray}
P_j(M) = \frac1{2\pi\mathrm{i}}\oint_{\Gamma_j}\frac{1}{s-M}\dif s
\end{eqnarray}
where $\Gamma_j$ is a small circle of radius $\epsilon$ around
$\mu_{0j}$, $\abs{\mu_{0i}-\mu_{0j}}>\epsilon$ for $i\neq j$. Then,
where $u_j(M_0)$ is the $j$-th column of $M_0$,
$$
u_j(M) =
\frac{P_j(M)u_j(M_0)}{\sqrt{\Inner{u_j(M_0)}{P_j(M)u_j(M_0)}}},~~1\leqslant
j\leqslant n,
$$
is the desired eigenvector of $M$.

Set $U(M)=[u_1(M),\ldots,u_n(M)]\in\cO(n)$. Then
$M\mapsto(\Lambda(M),U(M))$ is smooth and hence
$$
\phi_1(M) = (\Lambda(M),U(M))\equiv\pi(U(M))
$$
is smooth, as claimed.

Finally, we show that $\hat \phi_1$ is smooth. Fix
$\Pa{\Lambda_0,\widehat U_0}\in \real^n_\uparrow\times (\cO(n)/
K_1)$ and let $\tau$ be the lifting map from some neighborhood $W$
of $\widehat U_0$ to $\cO(n)$. Now for all $\widehat U\in W$,
$\tau(\widehat U)\in\widehat U$ by
Proposition~\ref{prop:dif-structure}. Hence
\begin{eqnarray}
\hat \phi_1\Pa{\Lambda,\widehat U} = \tau(\widehat
U)\Lambda\tau(\widehat U)^\t
\end{eqnarray}
from which it is clear that $\hat \phi_1$ is smooth near
$\Pa{\Lambda_0,\widehat U_0}$, and hence everywhere on
$\real^n_\uparrow\times (\cO(n)/K_1)$. This completes the proof.
\end{proof}

Now consider $\sH_n(\complex)$. The calculations are similar to
$\sS_n(\real)$. Define the map
\begin{eqnarray}
\phi_2:\sH_n(\complex)\ni M\mapsto \Pa{\Lambda,\widehat
U}\in\real^n_\uparrow\times (\cU(n)/K_2).
\end{eqnarray}
Here $K_2$ is the closed subgroup of $\cU(n)$ given by
$\mathbb{T}\times\cdots\times\mathbb{T}=\Set{\diag\Pa{e^{\mathrm{i}\theta_1},\ldots,e^{\mathrm{i}\theta_n}}:\theta_j\in\real}$,
$\Lambda=\diag(\lambda_1,\ldots,\lambda_n),\lambda_1<\cdots<\lambda_n$
as before, and
$$
M = U\Lambda U^*,~~\widehat U=\pi(U),
$$
where $\pi:\cU(n)\to \cU(n)/K_2$ is the natural projection $U\mapsto
\pi(U)=UK_2$ of the unitary group $\cU(n)$ onto the homogeneous
manifold $\cU(n)/K_2$. Let $\tau:\cU(n)/K_2\to \cU(n)$ denote the
(locally defined) lifting map as above. As before, $\phi_2$ is a
well-defined map that is one-to-one from $\sH_n(\complex)$ to
$\cU(n)\to \cU(n)/K_2$. Note that because
$\dim(\cU(n))=n^2,\dim(\cU(n)/K_2)=\dim(\cU(n))-\dim(K_2) =n^2-n$,
as it should.

As before, the proof of the following result is similar to that of
Proposition~\ref{prop:diffeomorphism-real}.
\begin{prop}\label{prop:diffeomorphism-complex}
$\phi_2$ is a diffeomorphism from $\sH_n(\complex)$ onto
$\real^n_\uparrow\times (\cU(n)/K_2)$.
\end{prop}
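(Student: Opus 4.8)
The plan is to follow the proof of Proposition~\ref{prop:diffeomorphism-real} essentially verbatim, with the finite group $K_1=\diag(\pm1,\ldots,\pm1)$ replaced throughout by the torus $K_2$. The only conceptual change is that an eigenvector of a Hermitian matrix with simple spectrum is determined up to a unimodular phase rather than merely a sign, and this indeterminacy is precisely what is absorbed by passing from $\cU(n)$ to $\cU(n)/K_2$. As noted before the statement, the real dimensions match: $\dim\sH_n(\complex)=n^2$ and $\dim\Pa{\real^n_\uparrow\times(\cU(n)/K_2)}=n+(n^2-n)=n^2$.

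First I would produce the candidate inverse $\hat\phi_2:\real^n_\uparrow\times(\cU(n)/K_2)\to\sH_n(\complex)$, defined by $\hat\phi_2(\Lambda,\widehat U)=U\Lambda U^*$ for \emph{any} representative $U\in\widehat U$, and check it is well defined: if $U'=Uh$ with $h\in K_2$, then $h$ is diagonal and therefore commutes with $\Lambda$, and $hh^*=\I_n$, so $U'\Lambda(U')^*=Uh\Lambda h^*U^*=U\Lambda U^*$. Then, exactly as in Eqs.~\eqref{eq:phi-1}--\eqref{eq:hat-phi-1}, I would verify $\phi_2\circ\hat\phi_2=\mathrm{id}_{\real^n_\uparrow\times(\cU(n)/K_2)}$ and $\hat\phi_2\circ\phi_2=\mathrm{id}_{\sH_n(\complex)}$; the first is immediate from the definitions, and the second follows from the spectral theorem together with the fact that a simple spectrum pins down the ordered eigenvalue matrix $\Lambda$ and the diagonalizing unitary $U$ up to right multiplication by an element of $K_2$. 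This already shows $\phi_2$ is a bijection with inverse $\hat\phi_2$.

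Smoothness of $\phi_2$ I would get from perturbation theory, copying the argument for $\phi_1$. Fix $M_0=U_0\Lambda_0U_0^*\in\sH_n(\complex)$ with $\Lambda_0=\diag(\mu_{01},\ldots,\mu_{0n})$ and $\mu_{01}<\cdots<\mu_{0n}$. For $M$ near $M_0$ the eigenvalues $\mu_1(M)<\cdots<\mu_n(M)$ are smooth in $M$, the rank-one spectral projections $P_j(M)=\frac1{2\pi\mathrm{i}}\oint_{\Gamma_j}(s-M)^{-1}\dif s$ (with $\Gamma_j$ a small circle about $\mu_{0j}$) are smooth and self-adjoint, and $u_j(M)=P_j(M)u_j(M_0)/\sqrt{\Inner{u_j(M_0)}{P_j(M)u_j(M_0)}}$ is a smooth unit eigenvector; hence $U(M)=[u_1(M),\ldots,u_n(M)]\in\cU(n)$ is smooth and $\phi_2(M)=(\Lambda(M),\pi(U(M)))$ is smooth. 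Smoothness of $\hat\phi_2$ follows from Proposition~\ref{prop:dif-structure}: near $(\Lambda_0,\widehat U_0)$ choose the local section $\tau:W\to\cU(n)$ with $\pi\circ\tau=\mathrm{id}$, so that $\hat\phi_2(\Lambda,\widehat U)=\tau(\widehat U)\Lambda\tau(\widehat U)^*$ on $\real^n_\uparrow\times W$, which is manifestly smooth. Combining the two smoothness statements with the previous paragraph gives that $\phi_2$ is a diffeomorphism.

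The main obstacle — mild, but the step that actually uses the hypotheses — is the verification that the phase indeterminacy of the eigenvectors is \emph{exactly} the subgroup $K_2$: one must check that $U\Lambda U^*=U'\Lambda(U')^*$ with $\Lambda$ having pairwise distinct diagonal entries forces $U^*U'$ to be a diagonal unitary matrix, i.e. an element of $K_2$. This is where the simple-spectrum assumption enters, and it is what makes $\phi_2$, rather than the ill-defined assignment $M\mapsto(\Lambda,U)$, a genuine map. A second point deserving a line is that the normalizing denominator $\Inner{u_j(M_0)}{P_j(M)u_j(M_0)}$ is real and positive for $M$ near $M_0$ (it equals $1$ at $M=M_0$ and varies continuously, and $P_j(M)$ is a positive self-adjoint projection), so $u_j(M)$ is indeed well defined and smooth.
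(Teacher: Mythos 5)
Your proposal is correct and is exactly the adaptation the paper intends: the paper's own "proof" of this proposition is simply the remark that the argument is similar to that of Proposition~\ref{prop:diffeomorphism-real}, and you have carried out that adaptation faithfully (inverse map well-defined because $h\in K_2$ is diagonal and commutes with $\Lambda$, the two composition identities, smoothness of $\phi_2$ via the Riesz projections, smoothness of $\hat\phi_2$ via the local section of Proposition~\ref{prop:dif-structure}). Your added remarks — that the phase ambiguity of eigenvectors for a simple spectrum is exactly $K_2$, and that the normalizing denominator is positive near $M_0$ — are the right points to make explicit and are consistent with the paper.
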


\subsection{Results related to orthogonal groups}

\begin{prop}\label{prop:LU-decom}
Let $Y,X,T\in\real^{n\times n}$, where $Y$ and $T$ are nonsingular,
$X$ is skew symmetric and $T$ is lower triangular of independent
real entries $y_{ij}$'s, $x_{ij}$'s and $t_{ij}$'s respectively. Let
$t_{jj}>0, j=1,\ldots,n-1, -\infty<t_{nn}<\infty$;
$-\infty<t_{jk}<\infty, j>k$; $-\infty<x_{jk}<\infty, j<k$ or
$-\infty<t_{jk}<\infty, j\geqslant k$ and the first row entries,
except the first one, of $(\I_n+X)^{-1}$ are negative. Then the
unique representation $Y=T\Br{2(\I_n+X)^{-1}-\I_n} =
T(\I_n-X)(\I_n+X)^{-1}$ implies that
\begin{eqnarray}
[\dif{Y}] =
2^{n(n-1)/2}\Pa{\prod^n_{j=1}\abs{t_{jj}}^{n-j}}\Pa{\det(\I_n+X)}^{-(n-1)}[\dif{T}][\dif{X}].
\end{eqnarray}
\end{prop}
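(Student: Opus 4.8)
The plan is to read the substitution through the Cayley transform and then chain two Jacobian formulas that are already proved in the text. Write $V:=2(\I_n+X)^{-1}-\I_n=(\I_n-X)(\I_n+X)^{-1}$, so that $Y=TV$. First I would record that $V$ is orthogonal: since $X^\t=-X$ we have $(\I_n-X)^\t=\I_n+X$ and $(\I_n+X)^{-\t}=(\I_n-X)^{-1}$, and since all the factors are polynomials in $X$ (hence commute) one gets $VV^\t=(\I_n-X)(\I_n+X)^{-1}(\I_n-X)^{-1}(\I_n+X)=\I_n$; likewise $\det(\I_n-X)=\det(\I_n+X)$ gives $\det V=1$. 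Thus $Y=TV$ exhibits the nonsingular matrix $Y$ as (lower triangular)$\times$(orthogonal). I would then confirm that the stated sign and range conventions ($t_{jj}>0$ for $j<n$, $t_{nn}\in\real\setminus\{0\}$, and the off-diagonal entries of the first row of $(\I_n+X)^{-1}$ negative) are exactly the ones under which Proposition~\ref{prop:rectangular-matrix} (applied to $Y=TV$) and Proposition~\ref{prop:Haar-measure} (applied to the Cayley parametrization $X\mapsto V$) each describe a legitimate change of variables, so that their Jacobians may be composed — the $t_{nn}$-sign freedom matching the two branches of the Cholesky-type factor of $YY^\t$, and the sign condition on $(\I_n+X)^{-1}$ pinning down the Cayley branch.

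With this in place the computation is immediate. Applying Proposition~\ref{prop:rectangular-matrix} in the square case $p=n$ to $Y=TV$ — identifying $V$ with $U_1^\t$, and using the bi-invariance of the orthogonal Haar volume element established in Proposition~\ref{prop:polar-decom-3} (equivalently Proposition~\ref{prop:skew-symmetric-case}) to identify the resulting wedge product $[\dif U_1]$ of row inner products with $[\dif G]=\wedge(V^\t\dif V)$ — gives
\[
[\dif Y]=\Pa{\prod_{j=1}^{n}\abs{t_{jj}}^{n-j}}[\dif T][\dif G].
\]
Next, Proposition~\ref{prop:Haar-measure}, equation~\eqref{eq:ortho-2}, applied under our hypothesis on $(\I_n+X)^{-1}$, gives
\[
[\dif G]=2^{n(n-1)/2}\det(\I_n+X)^{-(n-1)}[\dif X].
\]
Substituting the second display into the first yields
\[
[\dif Y]=2^{n(n-1)/2}\Pa{\prod_{j=1}^{n}\abs{t_{jj}}^{n-j}}\det(\I_n+X)^{-(n-1)}[\dif T][\dif X],
\]
which is exactly the assertion.

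There is essentially no routine calculation once the two cited propositions are invoked; the work is all bookkeeping. I expect the main obstacle to be the first step — verifying that the prescribed sign and range restrictions really do turn $(T,X)\mapsto Y$ into a well-defined change of variables (locally a diffeomorphism) matching the domains in Propositions~\ref{prop:rectangular-matrix} and~\ref{prop:Haar-measure}, with the $t_{nn}$-sign freedom correctly accounting for the two orientation components and no overcounting — and, as a lesser point, matching the triangular-side and row-versus-column conventions between Proposition~\ref{prop:rectangular-matrix} as stated (for $X=TU_1^\t$, with the harmless a.e.\ proviso of distinct diagonal entries) and the factorization $Y=TV$ used here.
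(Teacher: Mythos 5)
Your proof is correct and reaches the stated Jacobian, but it is organized differently from the paper's own argument. You factor the map $(T,X)\mapsto Y$ through the orthogonal group, $(T,X)\mapsto(T,V)\mapsto Y=TV$ with $V=2(\I_n+X)^{-1}-\I_n$, and then simply multiply the Jacobian of Proposition~\ref{prop:rectangular-matrix} in the square case ($[\dif Y]=\prod_j\abs{t_{jj}}^{n-j}[\dif T][\dif G]$, after identifying $[\dif U_1]$ with the Haar element via bi-invariance) by the Cayley-parametrization Jacobian \eqref{eq:ortho-2} ($[\dif G]=2^{n(n-1)/2}\det(\I_n+X)^{-(n-1)}[\dif X]$). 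The paper instead recomputes everything in one pass: it takes differentials in $Y=TZ$, forms $\dif U=T^{-1}\cdot\dif Y\cdot Z^\t=\dif V-\tfrac12\dif W$ with $\dif V=T^{-1}\dif T$ lower triangular and $\dif W=(\I_n+Z)\dif X(\I_n+Z^\t)$ skew symmetric, and reads off the factor $2^{-n(n-1)/2}$ from the block structure of that linear relation, assembling $\prod\abs{t_{jj}}^{n-j}$ as $\prod\abs{t_{jj}}^{n}\cdot\prod\abs{t_{jj}}^{-j}$ and $2^{n(n-1)/2}$ as $2^{n(n-1)}\cdot2^{-n(n-1)/2}$. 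Your route buys economy and makes the provenance of each factor transparent (triangular part from Proposition~\ref{prop:rectangular-matrix}, Cayley part from Proposition~\ref{prop:Haar-measure}), at the cost of having to check the interface conditions you flag yourself — that $[\dif U_1]$ (row inner products, $V=U_1^\t$) equals $\wedge(V^\t\dif V)$ by invariance, and that the sign restrictions ($t_{jj}>0$ for $j<n$, $t_{nn}$ free, the first-row condition on $(\I_n+X)^{-1}$) make the composite a genuine change of variables; the paper's self-contained differential computation sidesteps those matching issues, and since the Jacobian identity is local neither proof really depends on the uniqueness bookkeeping beyond stating it.
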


\begin{proof}
Take the differentials to get
\begin{eqnarray*}
\dif Y &=& \dif T \cdot \Pa{2(\I_n+X)^{-1}-\I_n} + T\cdot
\Pa{-2(\I_n+X)^{-1}\cdot\dif X\cdot(\I_n+X)^{-1}}\\
&=&\dif T \cdot \Pa{2(\I_n+X)^{-1}-\I_n} + T\cdot
\Pa{-\frac12(\I_n+Z)\cdot\dif X\cdot(\I_n+Z)}\\
&=& \dif T\cdot Z -\frac12 T(\I_n+Z)\cdot \dif X\cdot (\I_n+Z).
\end{eqnarray*}
where $Z=2(\I_n+X)^{-1}-\I_n$. Then we have
\begin{eqnarray*}
T^{-1}\cdot \dif Y\cdot Z^\t = T^{-1}\cdot\dif T -
\frac12(\I_n+Z)\cdot\dif X\cdot (\I_n+Z^\t).
\end{eqnarray*}
The Jacobian of the transformation of $T$ and $X$ going to $Y$ is
equal to the Jacobian of $\dif T,\dif X$ going to $\dif Y$. Now
treat $\dif T,\dif X, \dif Y$ as variables and everything else as
constants. Let
\begin{eqnarray*}
\dif U = T^{-1}\cdot \dif Y\cdot Z^\t,~~\dif V = T^{-1}\cdot \dif T,
~~\dif W = (\I_n+Z)\cdot \dif X\cdot (\I_n+Z^\t).
\end{eqnarray*}
Thus
\begin{eqnarray*}
&&[\dif U] = \det(T)^{-n}\det(Z^\t)^n [\dif Y]=\det(T)^{-n} [\dif
Y]\\
&&\Longrightarrow[\dif Y] = \det(T)^n[\dif
U]=\Pa{\prod^n_{j=1}\abs{t_{jj}}^n}[\dif U].
\end{eqnarray*}
Note that $\det(Z^\t)=\pm1$ since $Z$ is orthogonal. Since $X$ is
skew symmetric, one has
\begin{eqnarray*}
[\dif W] = \det(\I_n+Z)^{n-1}[\dif X] =
2^{n(n-1)}\det(\I_n+X)^{-(n-1)}[\dif X].
\end{eqnarray*}
One also has
\begin{eqnarray*}
[\dif V] = \Pa{\prod^n_{j=1}\abs{t_{jj}}^{-j}}[\dif T].
\end{eqnarray*}
Now we see that
$$
\dif U = \dif V - \frac12\dif W\Longrightarrow U=V-\frac12W.
$$
Let $U=[u_{ij}],V=[v_{ij}], W=[w_{ij}]$. Then since $T$ is lower
triangular $t_{ij}=0, i<j$ and thus $V$ is lower triangular, and
since $X$ is skew symmetric $x_{jj}=0$ for all $j$ and
$x_{ij=-x_{ji}}$ for $i\neq j$ and thus $W$ is skew symmetric. Thus
we have
\begin{eqnarray*}
u_{ii} &=& v_{ii}, u_{ij}=-\frac12 w_{ij}, i<j\\
u_{ij} &=& v_{ij}+\frac12 w_{ij}, i>j.
\end{eqnarray*}
Take the $u$-variables in the order $u_{ii}, i=1,\ldots,n; u_{ij},
i<j; u_{ij}, i>j$ and $v_{ii}, i=1,\ldots,n; v_{ij}, i<j; v_{ij},
i>j$. Then the Jacobian matrix is of the following form:
\begin{eqnarray*}
\Br{\begin{array}{ccc}
      \I_n & 0 & 0 \\
      0 & \Pa{-\frac12}\I_{\binom{n}{2}} & 0 \\
      0 & \Pa{\frac12}\I_{\binom{n}{2}} & \I_{\binom{n}{2}}
    \end{array}
}
\end{eqnarray*}
and the determinant, in absolute value, is
$\Pa{\frac12}^{n(n-1)/2}$. That is,
$$
[\dif U]=\Pa{\frac12}^{n(n-1)/2}[\dif V][\dif W].
$$
Now substitute for $[\dif U], [\dif W]$ and $[\dif V]$, respectively
to obtain the result.
\end{proof}

\begin{prop}\label{prop:LU-spetral-decom}
Let $Y,X,D\in\real^{n\times n}$ be of independent real entries,
where $Y$ is symmetric with distinct and nonzero eigenvalues, $X$ is
skew symmetric with the entries of the first row of $(\I_n+X)^{-1}$,
except the first entry, negative and
$D=\diag(\lambda_1,\ldots,\lambda_n)$, with
$\lambda_1>\cdots>\lambda_n$. Then, excluding the sign, $Y =
\Br{2(\I_n+X)^{-1}-\I_n}D\Br{2(\I_n+X)^{-1}-\I_n}$ implies that
\begin{eqnarray}
[\dif{Y}] =
2^{n(n-1)/2}\Pa{\det(\I_n+X)}^{-(n-1)}\Pa{\prod_{i<j}\abs{\lambda_i-\lambda_j}}[\dif{X}][\dif{D}].
\end{eqnarray}
\end{prop}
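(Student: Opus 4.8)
The plan is to obtain the identity by composing two Jacobian computations already established in the excerpt: the spectral‑decomposition formula of Proposition~\ref{prop:spectral-decom} and the Cayley‑parametrization formula of Proposition~\ref{prop:Haar-measure}. Set $V := 2(\I_n+X)^{-1}-\I_n = (\I_n-X)(\I_n+X)^{-1}$. As recorded in Proposition~\ref{prop:Haar-measure}, $V$ is orthogonal, $VV^\t=\I_n$, so $Y=VDV^\t$ is a genuine orthogonal diagonalization of the symmetric matrix $Y$, with the diagonal entries $\lambda_1>\cdots>\lambda_n$ of $D$ as its (distinct, nonzero) eigenvalues. The sign restriction on the first row of $(\I_n+X)^{-1}$ is precisely the normalization that pins down one representative among the $2^n$ sign choices for the columns of $V$ and matches the convention used in Proposition~\ref{prop:Haar-measure}; hence the correspondence $Y\leftrightarrow(D,X)$ is one‑to‑one on the relevant open dense set and no spurious power of $2$ enters from multiplicity of the parametrization.

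First I would apply Proposition~\ref{prop:spectral-decom} with its ``$X$'' taken to be our symmetric $Y$, its ``$D$'' our $D$, and its ``$V$'' our $V$, yielding
\[
[\dif Y] = \left(\prod_{i<j}\abs{\lambda_i-\lambda_j}\right)[\dif D][\dif G], \qquad \dif G = V^\t\dif V .
\]
Next I would substitute the expression for $[\dif G]$ in terms of $[\dif X]$ provided by Proposition~\ref{prop:Haar-measure}, namely \eqref{eq:ortho-2},
\[
[\dif G] = 2^{n(n-1)/2}\,\det(\I_n+X)^{-(n-1)}[\dif X].
\]
Combining the two displays gives
\[
[\dif Y] = 2^{n(n-1)/2}\,\det(\I_n+X)^{-(n-1)}\left(\prod_{i<j}\abs{\lambda_i-\lambda_j}\right)[\dif X][\dif D],
\]
which is the asserted identity, up to the sign convention the paper has already agreed to ignore and the harmless reordering $[\dif D][\dif X]=[\dif X][\dif D]$.

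The only point requiring genuine care — and the place I expect the sole friction — is the legitimacy of composing the two Jacobians. Proposition~\ref{prop:spectral-decom} is proved by holding $V$ fixed and differentiating $Y=VDV^\t$, treating $(\dif D,\dif G)$ as the new variables, whereas Proposition~\ref{prop:Haar-measure} re‑expresses the left‑invariant one‑form $\dif G=V^\t\dif V$ on $\cO(n)$ in the chart $X\mapsto V$. Since $\dif G$ depends only on $V$, hence only on $X$, and not at all on $D$, the change of variables $(D,G)\mapsto(D,X)$ has block‑triangular Jacobian with an identity block in the $D$‑direction, so the two Jacobian factors simply multiply; I would spell this out to make the composition rigorous. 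A secondary, routine check is that the standing hypotheses line up on an open dense set: $Y$ symmetric with distinct nonzero eigenvalues, $\I_n+X$ invertible, and the first‑row sign normalization in force — all of which is guaranteed by the discussion following Proposition~\ref{prop:spectral-decom} and in Appendix~I.
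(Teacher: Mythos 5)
Your proposal is correct and follows essentially the same route as the paper: the paper's own proof simply inlines the two ingredients you cite, setting $Z=2(\I_n+X)^{-1}-\I_n$, writing $Z^\t\,\dif Y\,Z=\dif D+\frac12 D\,\dif V-\frac12\,\dif V\,D$ with $\dif V=(\I_n+Z^\t)\,\dif X\,(\I_n+Z)$, and extracting the factor $2^{n(n-1)}\det(\I_n+X)^{-(n-1)}$ from the Cayley substitution and the factor $2^{-n(n-1)/2}\prod_{i<j}\abs{\lambda_i-\lambda_j}$ from the eigenvalue commutator, exactly the two Jacobians you compose. Your observation that the change of variables $(D,G)\mapsto(D,X)$ is block-triangular with an identity block in the $D$-direction is the right justification for multiplying the two factors, and it matches what the paper does implicitly.
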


\begin{proof}
Let $Z=2(\I_n+X)^{-1}-\I_n$. Take the differentials and reduce to
get
\begin{eqnarray*}
Z^\t \cdot\dif Y\cdot Z = -\frac12(\I_n+Z^\t)\cdot \dif X\cdot
(\I_n+Z)\cdot D + \dif D + \frac12D\cdot(\I_n+Z^\t)\cdot \dif X\cdot
(\I_n+Z).
\end{eqnarray*}
Put
\begin{eqnarray*}
\dif U = Z^\t\cdot \dif Y\cdot Z,~~\dif W = \dif D,~~\dif V =
(\I_n+Z^\t)\dif X(\I_n+Z).
\end{eqnarray*}
Thus
\begin{eqnarray*}
\dif U = -\frac12\dif V\cdot D+\frac12D\cdot \dif V + \dif W.
\end{eqnarray*}
But since $\dif Y$ is symmetric and $Z$ is orthogonal, $[\dif
U]=\det(Z)^{n+1}[\dif Y]=[\dif Y]$, excluding the sign. Clearly
$[\dif W]=[\dif D]$. Since $X$ is skew symmetric we have
\begin{eqnarray*}
\dif V = \det(\I_n+Z)^{n-1}[\dif
X]=2^{n(n-1)}\det(\I_n+X)^{-(n-1)}[\dif X].
\end{eqnarray*}
We see that
\begin{eqnarray*}
\dif u_{ii} &=& \dif w_{ii}, \\
\dif u_{ij} &=& \frac12(\lambda_i-\lambda_j)\dif v_{ij}, i<j.
\end{eqnarray*}
Take the $u$-variables in the order $u_{ii}, i=1,\ldots,n; u_{ij},
i<j$ and the $w$ and $v$-variables in the order
$w_{ii},i=1,\ldots,n; v_{ij}, i<j$. Then the matrix of partial
derivatives is of the following form:
$$
\Br{\begin{array}{cc}
      \I & 0 \\
      0 & M
    \end{array}
},
$$
where $M$ is a diagonal matrix with the diagonal elements
$\frac12(\lambda_i-\lambda_j),i<j$. There are $n(n-1)/2$ elements.
Hence the determinant of the above matrix, in absolute value, is
$2^{-n(n-1)/2}\prod_{i<j}\abs{\lambda_i-\lambda_j}$. That is,
\begin{eqnarray*}
[\dif U]
=2^{-n(n-1)/2}\Pa{\prod_{i<j}\abs{\lambda_i-\lambda_j}}[\dif V][\dif
D].
\end{eqnarray*}
Hence
\begin{eqnarray*}
[\dif Y]
=2^{n(n-1)/2}\det(\I_n+X)^{-(n-1)}\Pa{\prod_{i<j}\abs{\lambda_i-\lambda_j}}[\dif
X][\dif D].
\end{eqnarray*}
\end{proof}

\begin{remark}
When integrating over the skew symmetric matrix $X$ using the
transformation in Proposition~\ref{prop:LU-spetral-decom}, under the
unique choice for $Z=2(\I_n+X)^{-1}-\I_n$, observe that
\begin{eqnarray}
2^{n(n-1)/2}\int_X \det(\I_n+X)^{-(n-1)}[\dif X] =
\frac{\pi^{\frac{n^2}2}}{\Gamma_n\Pa{\frac n2}}.
\end{eqnarray}
Note that the $\lambda_j$'s are to be integrated out over
$\infty>\lambda_1>\cdots>\lambda_n$ and $X$ over a unique choice of
$Z$.
\end{remark}

\subsection{Results related to unitary groups}

When $\wtil X$ is skew hermitian, that is, $\wtil X^* = - \wtil X$,
it is not difficult to show that $\I\pm \wtil X$ are both
nonsingular and $\wtil Z=2(\I+\wtil X)^{-1}-\I$ is unitary, that is,
$\wtil Z\wtil Z^\ast=\I$. This property will be made use of in the
first result that will be discussed here. Also note that
\begin{eqnarray*}
2(\I+\wtil X)^{-1}-\I = (\I+\wtil X)^{-1}(\I-\wtil X) =(\I-\wtil
X)(\I+\wtil X)^{-1}.
\end{eqnarray*}

When $\wtil X$ is skew hermitian and of functionally independent
complex variables, then there are $p+2\frac{n(n-1)}2=n^2$ real
variables in $\wtil X$. Let $\wtil T$ be a lower triangular matrix
of functionally independent complex variables with the diagonal
elements being real. Then there are $n^2$ real variables in $\wtil
T$ also. Thus combined, there are $2n^2$ real variables in $\wtil T$
and $\wtil X$. It can be shown that
$$
\wtil Y = \wtil T\Pa{2(\I+\wtil X)^{-1}-\I}
$$
can produce a one-to-one transformation when the $t_{jj}$'s are real
and positive.
\begin{eqnarray*}
\wtil Y = \wtil T \wtil Z,
\end{eqnarray*}
where $\wtil Z \wtil Z^* = \I, \wtil T = [\wtil t_{jk}], \wtil
t_{jj}=t_{jj}>0, j=1,\ldots,n$. Then
\begin{eqnarray*}
\wtil Y\wtil Y^* = \wtil T \wtil T^*\Longrightarrow t^2_{11} =
\sum^n_{k=1} \abs{\wtil y_{1k}}^2.
\end{eqnarray*}
Note that when $t_{11}$ is real and positive it is uniquely
determined in terms of $\wtil Y$. Now consider the first row
elements of $\wtil T \wtil T^*$ that is $t^2_{11},t_{11}\wtil
t_{21},\ldots, t_{11}\wtil t_{n1}$. Hence $\wtil t_{21},\ldots,\wtil
t_{n1}$, that is, the first column of $\wtil T$ is uniquely
determined in terms of $\wtil Y$. Now consider the second row of
$\wtil T \wtil T^*$ and so on. Thus $\wtil T$ is uniquely determined
in terms of $\wtil Y$. But $\wtil Z=\wtil T^{-1} \wtil Y$ and hence
$\wtil Z$, thereby $\wtil X$ is uniquely determined in terms of
$\wtil Y$ with no additional restrictions imposed on the elements of
$\wtil Z$.

In this chapter the Jacobians will also be written ignoring the sign
as in the previous chapters.

\begin{prop}\label{prop:tilde-LU-complex}
Let $\wtil Y,\wtil X$ and $\wtil T = [\wtil t_{jk}]$ be $n\times n$
matrices of functionally independent complex variables where $\wtil
Y$ is nonsingular, $\wtil X$ is skew hermitian and $\wtil T$ is
lower triangular with real and positive diagonal elements. Ignoring
the sign, if
\begin{eqnarray*}
\wtil Y = \wtil T\Pa{2(\wtil X+\I)^{-1}-\I} = \wtil T(\I-\wtil
X)(\I+\wtil X)^{-1},
\end{eqnarray*}
then
\begin{framed}
\begin{eqnarray}
[\dif\wtil Y] = 2^{n^2}\cdot\Pa{\prod_{j=1}^n
t^{2(n-j)+1}_{jj}}\cdot\abs{\det((\I+\wtil X)(\I-\wtil
X))}^{-n}\cdot[\dif\wtil X][\dif\wtil T].
\end{eqnarray}
\end{framed}
\end{prop}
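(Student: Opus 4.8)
The plan is to mirror the real-variable argument of Proposition~\ref{prop:LU-decom} step for step, replacing the orthogonal-group Jacobians by their unitary counterparts already established in this section. Write $\wtil Z = 2(\I+\wtil X)^{-1}-\I$, which is unitary because $\wtil X$ is skew hermitian, and record the identities $\I+\wtil Z = 2(\I+\wtil X)^{-1}$ and $\I+\wtil Z^{*} = 2(\I-\wtil X)^{-1}$. Taking differentials in $\wtil Y = \wtil T\wtil Z$ gives $\dif\wtil Y = \dif\wtil T\cdot\wtil Z + \wtil T\cdot\dif\wtil Z$ with $\dif\wtil Z = -2(\I+\wtil X)^{-1}\dif\wtil X(\I+\wtil X)^{-1} = -\frac12(\I+\wtil Z)\,\dif\wtil X\,(\I+\wtil Z)$. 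Premultiplying by $\wtil T^{-1}$ and postmultiplying by $\wtil Z^{*}$, and using $\wtil Z\wtil Z^{*}=\I$ together with $(\I+\wtil Z)\wtil Z^{*} = \I+\wtil Z^{*}$, one obtains
$$\wtil T^{-1}\cdot\dif\wtil Y\cdot\wtil Z^{*} = \wtil T^{-1}\cdot\dif\wtil T - \tfrac12(\I+\wtil Z)\cdot\dif\wtil X\cdot(\I+\wtil Z^{*}).$$
As in the real case I would then introduce, treating everything except the matrices of differentials as constant, the new differentials $\dif\wtil U = \wtil T^{-1}\dif\wtil Y\,\wtil Z^{*}$, $\dif\wtil V = \wtil T^{-1}\dif\wtil T$ and $\dif\wtil W = (\I+\wtil Z)\dif\wtil X(\I+\wtil Z^{*})$, so that $\dif\wtil U = \dif\wtil V - \frac12\dif\wtil W$; here $\dif\wtil V$ is lower triangular with real diagonal ($n^{2}$ real parameters) and $\dif\wtil W$ is skew hermitian ($n^{2}$ real parameters), matching the $2n^{2}$ real parameters of the full complex matrix $\dif\wtil U$.

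Next I would evaluate the three auxiliary Jacobians. For $\dif\wtil U$, apply the complex analogue of Proposition~\ref{prop:matrix-transformation} for $\wtil Y = \wtil A\wtil X\wtil B$ with $\wtil A = \wtil T^{-1}$ and $\wtil B = \wtil Z^{*}$: since $\wtil Z$ is unitary $\abs{\det(\wtil Z\wtil Z^{*})}=1$, and $\abs{\det(\wtil T\wtil T^{*})}=\prod_{j}t_{jj}^{2}$ because $\wtil T$ is lower triangular with real diagonal, hence $[\dif\wtil Y] = \Pa{\prod_{j}t_{jj}^{2n}}[\dif\wtil U]$. For $\dif\wtil V = \wtil T^{-1}\dif\wtil T$, the constant matrix $\wtil T^{-1}$ is again lower triangular with real diagonal entries $1/t_{jj}$, so \eqref{eq:3ddd} gives $[\dif\wtil V] = \Pa{\prod_{j}t_{jj}^{-(2j-1)}}[\dif\wtil T]$. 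For $\dif\wtil W = (\I+\wtil Z)\dif\wtil X(\I+\wtil Z)^{*}$, the skew-hermitian version of Proposition~\ref{prop:adjoint-action} gives $[\dif\wtil W] = \abs{\det((\I+\wtil Z)(\I+\wtil Z)^{*})}^{n}[\dif\wtil X]$, and from $\I+\wtil Z = 2(\I+\wtil X)^{-1}$, $\I+\wtil Z^{*}=2(\I-\wtil X)^{-1}$ one has $(\I+\wtil Z)(\I+\wtil Z)^{*} = 4(\I+\wtil X)^{-1}(\I-\wtil X)^{-1}$, so $[\dif\wtil W] = 4^{n^{2}}\abs{\det((\I+\wtil X)(\I-\wtil X))}^{-n}[\dif\wtil X]$.

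The remaining, and least mechanical, step is the Jacobian of the affine relation $\dif\wtil U = \dif\wtil V - \frac12\dif\wtil W$. I would split the entries of $\dif\wtil U$ into three groups: the $n$ diagonal entries, where $\re\wtil u_{jj}=\wtil v_{jj}$ and $\im\wtil u_{jj}=-\frac12\im\wtil w_{jj}$ (a $2\times2$ block of absolute determinant $\frac12$ for each $j$); the $\frac{n(n-1)}2$ strictly-upper entries, where $\wtil u_{ij}=-\frac12\wtil w_{ij}=\frac12\overline{\wtil w_{ji}}$; and the $\frac{n(n-1)}2$ strictly-lower entries, where $\wtil u_{ij}=\wtil v_{ij}-\frac12\wtil w_{ij}$. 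Pairing each upper entry with its transpose partner, the free real variables coming from $(\wtil w_{ji},\wtil v_{ji})$ for $i<j$ map to $(\wtil u_{ij},\wtil u_{ji})$ through a $4\times4$ block-triangular matrix of absolute determinant $\frac14$. Hence $[\dif\wtil U] = \Pa{\tfrac12}^{n}\Pa{\tfrac14}^{n(n-1)/2}[\dif\wtil V][\dif\wtil W] = 2^{-n^{2}}[\dif\wtil V][\dif\wtil W]$. Substituting the three expressions back, the powers of $t_{jj}$ combine as $\prod_{j}t_{jj}^{2n-(2j-1)}=\prod_{j}t_{jj}^{2(n-j)+1}$ and the powers of two as $2^{-n^{2}}\cdot 4^{n^{2}}=2^{n^{2}}$, yielding the stated formula (signs being ignored throughout, as is the convention here). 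I expect the main obstacle to be exactly this last bookkeeping: correctly singling out the independent entries of the skew-hermitian $\dif\wtil W$, matching them against the triangular $\dif\wtil V$ entry by entry, and tracking the signs and factors of two so that the overall constant comes out as $2^{n^{2}}$ rather than $2^{2n^{2}}$.
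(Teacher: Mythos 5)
Your proposal is correct and follows essentially the same route as the paper's own proof: the substitution $\wtil Z=2(\I+\wtil X)^{-1}-\I$, the reduction to $\wtil T^{-1}\dif\wtil Y\,\wtil Z^{*}=\wtil T^{-1}\dif\wtil T-\frac12(\I+\wtil Z)\dif\wtil X(\I+\wtil Z^{*})$, the three auxiliary Jacobians, and the final factor $2^{-n^{2}}$ from the affine relation are all exactly the paper's steps (with your $\dif\wtil U$ and $\dif\wtil V$ swapping names relative to the paper's). Your entry-by-entry bookkeeping for the last Jacobian, including the $4\times4$ block-triangular pairing of transpose partners, matches the paper's block-matrix computation and yields the same constant.
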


\begin{proof}
Taking differentials in $\wtil Y = \wtil T\Pa{2(\I+\wtil
X)^{-1}-\I}$, one has
\begin{eqnarray*}
\dif\wtil Y = \wtil T \Pa{-2(\I+\wtil X)^{-1}\cdot\dif\wtil
X\cdot(\I+\wtil X)^{-1}} + \dif\wtil T\cdot\Pa{2(\I+\wtil
X)^{-1}-\I}.
\end{eqnarray*}
Let
\begin{eqnarray*}
\wtil Z = 2(\I+\wtil X)^{-1}-\I\Longrightarrow (\I+\wtil X)^{-1} =
\frac12(\I+\wtil Z)
\end{eqnarray*}
and observe that $\wtil Z\wtil Z^*=\I$. Then
\begin{eqnarray}\label{eq:a}
\wtil T^{-1} \cdot\dif\wtil Y\cdot\wtil Z^* = -\frac12(\I+\wtil
Z)\cdot \dif\wtil X\cdot (\I+\wtil Z^*) + \wtil T^{-1}\cdot
\dif\wtil T.
\end{eqnarray}
Let $\dif\wtil W = (\I+\wtil Z)\cdot \dif\wtil X\cdot (\I+\wtil
Z^*)$. Then
\begin{eqnarray}\label{eq:b}
[\dif\wtil W] = \abs{\det((\I+\wtil Z)(\I+\wtil
Z^*))}^n\cdot[\dif\wtil X] = 2^{2n^2}\cdot\abs{\det((\I+\wtil
X)(\I-\wtil X))}^{-n}\cdot[\dif\wtil X].
\end{eqnarray}
Let $\dif\wtil U = \wtil T^{-1}\cdot \dif\wtil T$, then
\begin{eqnarray}\label{eq:c}
[\dif\wtil U] = \Pa{\prod^n_{j=1}t^{-(2j-1)}_{jj}}\cdot [\dif\wtil
T].
\end{eqnarray}
Let $\dif\wtil V = \wtil T^{-1}\cdot\dif\wtil Y\cdot\wtil Z^*$, then
\begin{eqnarray}\label{eq:d}
[\dif\wtil V] = \abs{\det(\wtil T\wtil T^*)}^{-n}\cdot[\dif\wtil Y].
\end{eqnarray}
Eq.~\eqref{eq:a} reduces to
\begin{eqnarray}\label{eq:e}
\dif\wtil V = -\frac12 \dif\wtil W + \dif\wtil U.
\end{eqnarray}
Note that $\dif\wtil W$ is skew hermitian. Denote
\begin{eqnarray*}
\wtil V &=& [\wtil v_{jk}] = [v^{(1)}_{jk}]+\sqrt{-1}[v^{(2)}_{jk}],\\
\wtil W &=& [\wtil w_{jk}] = [w^{(1)}_{jk}]+\sqrt{-1}[w^{(2)}_{jk}],\\
\wtil U &=& [\wtil u_{jk}] = [u^{(1)}_{jk}]+\sqrt{-1}[u^{(2)}_{jk}],
\end{eqnarray*}
where $v^{(m)}_{jk},w^{(m)}_{jk},u^{(m)}_{jk}, m=1,2$ are all real.
Then from Eq.~\eqref{eq:e}, we see that $\wtil V = -\frac12 \wtil
W+\wtil U$. Thus
\begin{eqnarray*}
v^{(m)}_{jk} &=& \frac12 w^{(m)}_{kj}+u^{(m)}_{jk}, j>k, m=1,2,\\
v^{(m)}_{jk} &=& -\frac12 w^{(m)}_{jk}, j<k, m=1,2,\\
v^{(1)}_{jj} &=& u^{(1)}_{jj},\\
v^{(2)}_{jj} &=& -\frac12w^{(2)}_{jj}.
\end{eqnarray*}
The matrices of partial derivatives are the following:
\begin{eqnarray*}
&&\frac{\partial\Pa{v^{(1)}_{jj}, v^{(2)}_{jj}; v^{(1)}_{jk},
v^{(2)}_{jk}(j<k); v^{(1)}_{jk},
v^{(2)}_{jk}(j>k)}}{\partial\Pa{u^{(1)}_{jj},w^{(2)}_{jj};
w^{(1)}_{jk},w^{(2)}_{jk}(j<k);
u^{(1)}_{jk},u^{(2)}_{jk}(j>k)}}\\
&&=\Br{\begin{array}{cccccc}
         \I_n & 0 & 0 & 0 & 0 & 0  \\
         0 & -\frac12\I_n & 0 & 0 & 0 & 0  \\
         0 & 0 & -\frac12\I_{\binom{n}{2}} & 0 & 0 & 0 \\
         0 & 0 & 0 & -\frac12\I_{\binom{n}{2}} & 0 & 0  \\
         0 & 0 & \frac12\I_{\binom{n}{2}} & 0 & \I_{\binom{n}{2}} & 0 \\
         0 & 0 & 0 & \frac12\I_{\binom{n}{2}} & 0 & \I_{\binom{n}{2}}

       \end{array}
}
\end{eqnarray*}
where
\begin{eqnarray*}
A_{11} &=& \Pa{\frac{\partial v^{(1)}_{jj}}{\partial
u^{(1)}_{jj}},j=1,\ldots,n}
=\I_n,\\
A_{22} &=& \Pa{\frac{\partial v^{(2)}_{jj}}{\partial
w^{(2)}_{jj}},j=1,\ldots,n} = -\frac12\I_n
\end{eqnarray*}
and
\begin{eqnarray*}
A_{33} &=& \Pa{\frac{\partial v^{(1)}_{jk}}{\partial
w^{(1)}_{jk}},j<k}
=-\frac12\I_{\binom{n}{2}},\\
A_{44} &=& \Pa{\frac{\partial v^{(2)}_{jk}}{\partial w^{(2)}_{jk}},j<k} =-\frac12\I_{\binom{n}{2}} \\
A_{55} &=& \Pa{\frac{\partial v^{(1)}_{jk}}{\partial
u^{(1)}_{jk}},j>k}
=\I_{\binom{n}{2}},\\
A_{66} &=& \Pa{\frac{\partial v^{(2)}_{jk}}{\partial
u^{(2)}_{jk}},j>k} =\I_{\binom{n}{2}}.
\end{eqnarray*}
$$
A_{53}=\Pa{\frac{\partial v^{(1)}_{jk}}{\partial w^{(1)}_{kj}}:
j>k}=\frac12\I_{\binom{n}{2}},~~A_{64}=\Pa{\frac{\partial
v^{(2)}_{jk}}{\partial w^{(2)}_{kj}}: j>k}=\frac12\I_{\binom{n}{2}}.
$$
The determinants of $A_{11},\ldots,A_{66}$ contribute towards the
Jacobian and the product of the determinants, in absolute value, is
$2^{-n(n-1)-n}=2^{-n^2}$. Without going through the above procedure
one may note from \eqref{eq:a} that since $\dif\wtil X$ has $n^2$
real variables, multiplication by $\frac12$ produces the factor
$2^{-n^2}$ in the Jacobian. From \eqref{eq:b},\eqref{eq:c},
\eqref{eq:d} and \eqref{eq:e} we have
\begin{eqnarray*}
[\dif\wtil Y] &=& \Pa{\prod^n_{j=1}t^{2n}_{jj}}\cdot[\dif\wtil V] =
\Pa{\prod^n_{j=1}t^{2n}_{jj}}\cdot2^{-n^2}\cdot[\dif\wtil
W][\dif\wtil U]\\
&=&
\Pa{\prod^n_{j=1}t^{2n}_{jj}}\cdot2^{2n^2-n^2}\cdot\abs{\det((\I+\wtil
X)(\I-\wtil X))}^{-n}\cdot\Pa{\prod^n_{j=1}t^{-(2j-1)}_{jj}}\cdot
[\dif\wtil X][\dif\wtil T]\\
&=&
2^{n^2}\Pa{\prod^n_{j=1}t^{2(n-j)+1}_{jj}}\cdot\abs{\det((\I+\wtil
X)(\I-\wtil X))}^{-n}\cdot [\dif\wtil X][\dif\wtil T].
\end{eqnarray*}
This completes the proof.
\end{proof}

\begin{exam}\label{exam:integral-det}
Let $\wtil X\in\complex^{n\times n}$ skew hermitian matrix of
independent complex variables. Then show that
\begin{framed}
\begin{eqnarray*}
\int_{\wtil X} [\dif\wtil X]\abs{\det((\I+\wtil X)(\I-\wtil
X))}^{-n} = \frac{\pi^{n^2}}{2^{n(n-1)}\wtil \Gamma_n(n)}.
\end{eqnarray*}
\end{framed}
Indeed, let $\wtil Y=[\wtil y_{jk}]$ be a $n\times n$ matrix of
independent complex variables. Consider the integral
\begin{eqnarray}\label{eq:aa}
\int_{\wtil Y} [\dif\wtil Y] e^{-\Tr{\wtil Y\wtil Y^*}}=
\prod^n_{j,k=1}\int^{+\infty}_{-\infty}e^{-\abs{\wtil y_{jk}}^2}
\dif\wtil y_{jk} = \pi^{n^2}.
\end{eqnarray}
Now consider a transformation $\wtil Y=\wtil T\Pa{2(\I+\wtil
X)^{-1}-\I}$, where $\wtil T=[\wtil t_{jk}]$ is lower triangular
with $t_{jj}$'s real and positive and no restrictions on $\wtil X$
other than that it is skew hermitian. Then since $\Tr{\wtil Y\wtil
Y^*}=\Tr{\wtil T\wtil T^*}$, from
Proposition~\ref{prop:tilde-LU-complex} and Eq.~\eqref{eq:aa}, we
have
\begin{eqnarray*}
\pi^{n^2} &=& \int_{\wtil T,\wtil X} [\dif\wtil X][\dif\wtil
T]e^{-\Tr{\wtil T\wtil T^*}}\cdot 2^{n^2}\cdot
\Pa{\prod^n_{j=1}t^{2(n-j)+1}_{jj}}\cdot
\abs{\det((\I+\wtil X)(\I-\wtil X))}^{-n}\\
&=&\int_{\wtil T}[\dif\wtil T]e^{-\Tr{\wtil T\wtil T^*}}\cdot
2^{n^2}\cdot \Pa{\prod^n_{j=1}t^{2(n-j)+1}_{jj}}\times \int_{\wtil
X}[\dif\wtil X]\abs{\det((\I+\wtil X)(\I-\wtil X))}^{-n}
\end{eqnarray*}
Note that
\begin{eqnarray*}
e^{-\Tr{\wtil T\wtil T^*}} = \exp\Pa{-\sum^n_{j=1}t^2_{jj} -
\sum_{j>k}\abs{\wtil t_{jk}}^2}.
\end{eqnarray*}
But
\begin{eqnarray*}
\int^{+\infty}_{-\infty} e^{-\abs{\wtil t_{jk}}^2}\dif\wtil t_{jk} =
\pi~~~\text{and}~~~\int^{+\infty}_0
t^{2(n-j)+1}_{jj}e^{-t^2_{jj}}\dif t_{jj}= \frac12\Gamma(n-j+1).
\end{eqnarray*}
Hence
\begin{eqnarray*}
\int_{\wtil T}[\dif\wtil T]e^{-\Tr{\wtil T\wtil
T^*}}\Pa{\prod^n_{j=1}t^{2(n-j)+1}_{jj}} = 2^{-n}\wtil \Gamma_n(n).
\end{eqnarray*}
Substituting this the result follows.
\end{exam}

\begin{remark}
When a skew hermitian matrix $\wtil X$ is used to parameterize a
unitary matrix such as $\wtil Z$ in
Proposition~\ref{prop:tilde-LU-complex}, can we evaluate the
Jacobian by direct integration? This will be examined here. Let
\begin{eqnarray*}
\sI_n:= \int_{\wtil X} [\dif\wtil X]\abs{\det((\I+\wtil X)(\I-\wtil
X))}^{-n}.
\end{eqnarray*}
Partition $\I+\wtil X$ as follows:
\begin{eqnarray*}
\I+\wtil X = \Br{\begin{array}{cc}
                        1+\wtil x_{11} & \wtil X_{12} \\
                        -\wtil X^*_{12} & \I+\wtil X_1
                      \end{array}
},
\end{eqnarray*}
where $\wtil X_{12}$ represents the first row of $\I+\wtil X$
excluding the first element $1+\wtil x_{11}$, and $\I+\wtil X_1$ is
obtained from $\I+\wtil X$ by deleting the first row and the first
column. Note that
\begin{eqnarray*}
\det(\I+\wtil X) = \det(\I+\wtil X_1)\Pa{1+\wtil x_{11} + \wtil
X_{12}(\I+\wtil X_1)^{-1}\wtil X^*_{12}}.
\end{eqnarray*}
Similarly,
\begin{eqnarray*}
\I-\wtil X = \Br{\begin{array}{cc}
                        1-\wtil x_{11} & -\wtil X_{12} \\
                        \wtil X^*_{12} & \I-\wtil X_1
                      \end{array}
}
\end{eqnarray*}
and
\begin{eqnarray*}
\det(\I-\wtil X) = \det(\I-\wtil X_1)\Pa{1-\wtil x_{11} + \wtil
X_{12}(\I+\wtil X_1)^{-1}\wtil X^*_{12}}.
\end{eqnarray*}
For fixed $(\I+\wtil X_1)$ let $\wtil U_{12} := \wtil
X_{12}(\I+\wtil X_1)^{-1}$, then
\begin{eqnarray*}
[\dif\wtil U_{12}] = \abs{\det((\I+\wtil X_1)(\I-\wtil
X_1))}^{-1}\cdot[\dif\wtil X_{12}]
\end{eqnarray*}
and observing that $\wtil X^*_1=-\wtil X_1$ we have
\begin{eqnarray*}
\wtil X_{12}(\I+\wtil X_1)^{-1}\wtil X^*_{12} = \wtil
U_{12}(\I-\wtil X_1)\wtil U^*_{12}.
\end{eqnarray*}
Let $\wtil Q$ be a unitary matrix such that
\begin{eqnarray*}
\wtil Q^* \wtil X_1\wtil Q =
\mathrm{diag}(\sqrt{-1}\lambda_1,\ldots,\sqrt{-1}\lambda_{n-1})
\end{eqnarray*}
where $\lambda_1,\ldots,\lambda_{n-1}$ are real. Let
\begin{eqnarray*}
\wtil V_{12}=\wtil U_{12}\wtil Q=[\wtil v_1,\ldots,\wtil v_{n-1}].
\end{eqnarray*}
Then
\begin{eqnarray*}
\wtil U_{12}(\I-\wtil X_1)\wtil
U^*_{12}=(1-\sqrt{-1}\lambda_1)\abs{\wtil
v_1}^2+\cdots+(1-\sqrt{-1}\lambda_{n-1})\abs{\wtil v_{n-1}}^2
\end{eqnarray*}
and
\begin{eqnarray*}
1+\wtil x_{11} + \wtil X_{12}(\I+\wtil X_1)^{-1}\wtil X^*_{12} =
a-\sqrt{-1}b
\end{eqnarray*}
where
\begin{eqnarray*}
a &=& 1+ \abs{\wtil v_1}^2 +\cdots+\abs{\wtil v_{n-1}}^2\\
b &=& -x^{(2)}_{11}+\lambda_1\abs{\wtil
v_1}^2+\cdots+\lambda_{n-1}\abs{\wtil v_{n-1}}^2
\end{eqnarray*}
observing that $\wtil x_{11}$ is purely imaginary, that is, $\wtil
x_{11}=\sqrt{-1}x^{(2)}_{11}$, where $x^{(2)}_{11}$ is real. Thus
$$
\abs{\det((\I+\wtil X)(\I-\wtil X))}^{-n}
$$
yields the factor
\begin{eqnarray*}
[(a - \sqrt{-1}b)(a + \sqrt{-1}b)]^{-n}=(a^2+b^2)^{-n}.
\end{eqnarray*}
So
\begin{eqnarray*}
\sI_n &=& \int_{\wtil X_1} \int_{\wtil X_{12}}\int_{\wtil x_{11}}
\Pa{\det(\I+\wtil X_1)(a-\sqrt{-1}b)\det(\I-\wtil
X_1)(a+\sqrt{-1}b)}^{-n}[\dif\wtil X_1][\dif\wtil X_{12}]\dif\wtil
x_{11}\\
&=&\int_{\wtil X_1}\int_{\wtil X_{12}}\int_{\wtil
x_{11}}\abs{\det\Pa{(\I+\wtil X_1)(\I-\wtil
X_1)}}^{-n}(a^2+b^2)^{-n}[\dif\wtil X_1][\dif\wtil X_{12}]\dif\wtil
x_{11}.
\end{eqnarray*}
Since
$$
[\dif\wtil V_{12}] = [\dif\wtil U_{12}]~~\text{and}~~[\dif\wtil
U_{12}]= \abs{\det\Pa{(\I+\wtil X_1)(\I-\wtil X_1)}}^{-1}[\dif\wtil
X_{12}],
$$
it follows that
$$
[\dif\wtil X_{12}] =\abs{\det\Pa{(\I+\wtil X_1)(\I-\wtil X_1)}}
[\dif\wtil V_{12}].
$$
Based on this, we have
\begin{eqnarray*}
\sI_n &=& \int_{\wtil X_1}\int_{\wtil V_{12}}\int_{\wtil
x_{11}}\abs{\det\Pa{(\I+\wtil X_1)(\I-\wtil
X_1)}}^{-(n-1)}(a^2+b^2)^{-n}[\dif\wtil X_1][\dif\wtil
V_{12}]\dif\wtil x_{11}.
\end{eqnarray*}
Then
\begin{eqnarray*}
\sI_n &=& \sI_{n-1}\int_{\wtil
V_{12}}\int_{x^{(2)}_{11}}(a^2+b^2)^{-n}[\dif\wtil
V_{12}]\dif x^{(2)}_{11}\\
&=&\sI_{n-1}\int_{\wtil V_{12}}[\dif\wtil V_{12}]
\Pa{a^{-2n}\int_{x^{(2)}_{11}}\Pa{1+\frac{b^2}{a^2}}^{-n}\dif
x^{(2)}_{11}}.
\end{eqnarray*}
Consider the integral over
$x^{(2)}_{11},-\infty<x^{(2)}_{11}<+\infty$. Change $x^{(2)}_{11}$
to $b$ and then to $c=b/a$. Then
\begin{eqnarray*}
\int_{x^{(2)}_{11}}\Pa{1+\frac{b^2}{a^2}}^{-n}\dif x^{(2)}_{11} &=&
\int_b \Pa{1+\frac{b^2}{a^2}}^{-n}\dif b =
a\int^{+\infty}_{c=-\infty}(1+c^2)^{-p}\dif c\\
&=& 2a\int^\infty_0(1+c^2)^{-n}\dif c =
a\frac{\Gamma\Pa{\frac12}\Gamma\Pa{n-\frac12}}{\Gamma(n)}:=k,
\end{eqnarray*}
by evaluating using a type-2 beta integral after transforming
$u=c^2$. Hence
\begin{eqnarray*}
\sI_n &=& k\sI_{n-1}\int_{\wtil V_{12}}[\dif\wtil
V_{12}] a^{-(2n-1)}\\
&=& k\sI_{n-1} \int^{+\infty}_{-\infty} \cdots
\int^{+\infty}_{-\infty}\Pa{1+ \abs{\wtil v_1}^2 +\cdots+\abs{\wtil
v_{n-1}}^2}^{-(2n-1)}\dif\wtil v_1\cdots\dif\wtil v_{n-1}.
\end{eqnarray*}
For evaluating the integral use the polar coordinates. Let $\wtil
v_j = v^{(1)}_{j}+\sqrt{-1}v^{(2)}_{j}$, where $v^{(1)}_{j}$ and
$v^{(2)}_{j}$ are real. Let
$$
\begin{cases}
v^{(1)}_{j}=r_j\cos\theta_j, \\
v^{(2)}_{j}=r_j\sin\theta_j,
\end{cases}~~~0\leqslant r_j<\infty, 0\leqslant \theta_j\leqslant
2\pi.
$$
Then denoting the multiple integral by $\cI_{n-1}$, we have
\begin{eqnarray*}
\cI_{n-1} &=& \int^{+\infty}_{-\infty} \cdots
\int^{+\infty}_{-\infty}\Pa{1+ \abs{\wtil v_1}^2 +\cdots+\abs{\wtil
v_{n-1}}^2}^{-(2n-1)}\dif\wtil
v_1\cdots\dif\wtil v_{n-1}\\
&=& (2\pi)^{n-1}\int^{+\infty}_{r_1=0} \cdots
\int^{+\infty}_{r_{n-1}=0}r_1\cdots r_{n-1}\Pa{1+r^2_1+\cdots
+r^2_{n-1}}^{-(2n-1)}\dif r_1\cdots\dif r_{n-1}.
\end{eqnarray*}
Evaluating this by a Dirichlet integral we have
\begin{eqnarray*}
\cI_{n-1} &=&
\pi^{n-1}\frac{\Gamma(n)}{\Gamma(2n-1)}~~\text{for}~~n\geqslant2.
\end{eqnarray*}
Hence for $n\geqslant2$,
\begin{eqnarray*}
\sI_n = \sI_{n-1}
\pi^{n-1}\sqrt{\pi}\frac{\Gamma(n)\Gamma\Pa{n-\frac12}}{\Gamma(2n-1)\Gamma(n)}.
\end{eqnarray*}
By using the duplication formula for gamma functions
\begin{eqnarray*}
\Gamma(2n-1)= \sqrt{\pi}2^{2n-2}\Gamma\Pa{n-\frac12}\Gamma(n).
\end{eqnarray*}
Hence
\begin{eqnarray*}
\sI_n = \sI_{n-1}\frac{\pi^n}{2^{2n-2}\Gamma(n)}.
\end{eqnarray*}
Repeating this process we have
\begin{eqnarray*}
\sI_n =
\frac{\pi^n}{2^{2n-2}\Gamma(n)}\frac{\pi^{n-1}}{2^{2(n-1)-2}\Gamma(p-1)}\cdots
\frac{\pi}{2^{2-2}\Gamma(1)} = \frac{\pi^{n^2}}{2^{n(n-1)}\wtil
\Gamma_n(n)}.
\end{eqnarray*}
This is what we obtained in Example~\ref{exam:integral-det}.
\end{remark}

Next we consider a representation of a hermitian positive definite
matrix $\wtil Y$ in terms of a skew hermitian matrix $\wtil X$ and a
diagonal matrix $D$ such that
\begin{eqnarray*}
\wtil Y = \Pa{2(\I+\wtil X)^{-1}-\I}D\Pa{2(\I+\wtil X)^{-1}-\I}^*
\end{eqnarray*}
where $D=\mathrm{diag}(\lambda_1,\ldots,\lambda_p)$ with the
$\lambda_j$'s real distinct and positive, and the first row elements
of $(\I+\wtil X)^{-1}$ real and of specified signs, which amounts to
require $2(\I+\wtil X)^{-1}-\I\in\cU(n)/\cU(1)^{\times n}$. In this
case it can be shown that the transformation is unique. Note that
\begin{eqnarray*}
\wtil Y = \wtil Z D\wtil Z^* = \lambda_1\wtil Z_1\wtil
Z_1^*+\cdots+\lambda_p\wtil Z_n\wtil Z_n^*
\end{eqnarray*}
it indicates that
\begin{eqnarray*}
(\wtil Y - \lambda_j\I)\wtil Z_j = 0, j=1,\ldots,n
\end{eqnarray*}
where $\wtil Z_1,\ldots,\wtil Z_n$ are the columns of $\wtil Z$ such
that $\inner{\wtil Z_j}{\wtil Z_k}=\delta_{jk}$. Since
$\lambda_1,\ldots,\lambda_n$ are the eigenvalues of $\wtil Y$, which
are assumed to be real distinct and positive, $D$ is uniquely
determined in terms of $\wtil Y$. Note that $\wtil Z_j$ is an
eigenvector corresponding to $\lambda_j$ such that $\inner{\wtil
Z_j}{\wtil Z_j}=1, j=1,\ldots,n$. Hence $\wtil Z_j$ is uniquely
determined in terms of $\wtil Y$ except for a multiple of $\pm
1,\pm\sqrt{-1}$. If any particular element of $\wtil Z_j$ is assumed
to be real and positive, for example the first element, then $\wtil
Z_j$ is uniquely determined. Thus if the first row elements of
$\wtil Z$ are real and of specified signs, which is equivalent to
saying that the first row elements of $(\I+\wtil X)^{-1}$ are real
and of specified signs, then the transformation is unique.

\begin{prop}\label{prop:spetral-decom-complex}
Let $\wtil Y$ and $\wtil X$ be $n\times n$ matrices of functionally
independent complex variables such that $\wtil Y$ is hermitian
positive definite, $\wtil X$ is skew hermitian and the first row
elements of $(\I+\wtil X)^{-1}$ are real and of specified signs. Let
$D=\mathrm{diag}(\lambda_1,\ldots,\lambda_n)$, where the
$\lambda_j$'s are real distinct and positive. Ignoring the sign, if
\begin{eqnarray*}
\wtil Y = \Pa{2(\wtil X+\I)^{-1}-\I} D \Pa{2(\wtil X+\I)^{-1}-\I}^*,
\end{eqnarray*}
then
\begin{framed}
\begin{eqnarray*}
[\dif\wtil Y] = 2^{n(n-1)}\cdot\Pa{\prod_{j>k}
\abs{\lambda_k-\lambda_j}^2}\cdot\abs{\det((\I+\wtil X)(\I-\wtil
X))}^{-n}\cdot[\dif\wtil X][\dif D].
\end{eqnarray*}
\end{framed}
\end{prop}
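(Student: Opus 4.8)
The plan is to mimic, in the Hermitian setting, the argument used for the real analogue Proposition~\ref{prop:LU-spetral-decom}, replacing the orthogonal Cayley parametrization by the unitary one and invoking the complex Jacobian rules proved above. First I would set $\wtil Z := 2(\I+\wtil X)^{-1}-\I = (\I-\wtil X)(\I+\wtil X)^{-1}$, recalling from the discussion preceding the statement that $\wtil Z\wtil Z^*=\I$, that $\I+\wtil Z=2(\I+\wtil X)^{-1}$ and $(\I+\wtil Z)^*=2(\I-\wtil X)^{-1}$, and that differentiating gives $\dif\wtil Z=-2(\I+\wtil X)^{-1}\dif\wtil X(\I+\wtil X)^{-1}$, hence $\wtil Z^*\dif\wtil Z=-2(\I-\wtil X)^{-1}\dif\wtil X(\I+\wtil X)^{-1}$, which is skew Hermitian. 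Differentiating $\wtil Y=\wtil Z D\wtil Z^*$, premultiplying by $\wtil Z^*$ and postmultiplying by $\wtil Z$, and using $\wtil Z^*\wtil Z=\I$ together with $\dif\wtil Z^*\wtil Z=-\wtil Z^*\dif\wtil Z$, one reduces to
\[
\dif\wtil W:=\wtil Z^*\cdot\dif\wtil Y\cdot\wtil Z=\dif D-\tfrac12[\dif\wtil V,\,D],\qquad \dif\wtil V:=(\I+\wtil Z)^*\dif\wtil X(\I+\wtil Z)=4(\I-\wtil X)^{-1}\dif\wtil X(\I+\wtil X)^{-1},
\]
with $[M,D]:=MD-DM$ and $\dif\wtil V$ skew Hermitian.

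Next I would evaluate the three wedge products. Since $\wtil Y$ is Hermitian and $\wtil Z$ unitary, Proposition~\ref{prop:adjoint-action} gives $[\dif\wtil W]=|\det(\wtil Z\wtil Z^*)|^n[\dif\wtil Y]=[\dif\wtil Y]$. Reading $\dif\wtil W=\dif D-\tfrac12[\dif\wtil V,D]$ entrywise, the diagonal gives $\dif w_{jj}=\dif\lambda_j$ and the off-diagonal gives $\dif w_{jk}=\tfrac12(\lambda_j-\lambda_k)\dif v_{jk}$ for $j\ne k$; splitting each $\dif v_{jk}$ into real and imaginary parts, every unordered pair $\{j,k\}$ contributes $\bigl(\tfrac12|\lambda_j-\lambda_k|\bigr)^2$, so $[\dif\wtil W]=2^{-n(n-1)}\bigl(\prod_{j>k}|\lambda_k-\lambda_j|^2\bigr)[\dif D][\dif\wtil V]$. (This is exactly the step to which the proof of \eqref{eq:dif-psd} refers back, with $\dif\wtil V$ there being $\wtil U^*\dif\wtil U$.) Finally, $\dif\wtil V$ is a scaled congruence transform of the skew Hermitian $\dif\wtil X$ with transforming matrix $(\I+\wtil X)^{-1}$, so Proposition~\ref{prop:} (the skew Hermitian analogue of Proposition~\ref{prop:adjoint-action}) contributes the factor $|\det((\I+\wtil X)^{-1}((\I+\wtil X)^{-1})^*)|^n=|\det((\I+\wtil X)(\I-\wtil X))|^{-n}$, while the scalar $4$ contributes a power of $2$. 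Multiplying the three identities and collecting the powers of $2$ should produce the stated coefficient $2^{n(n-1)}$.

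The step that needs care — and the likely main obstacle — is the bookkeeping of dimensions forced by the non-uniqueness of $\wtil Y=\wtil Z D\wtil Z^*$: as in Propositions~\ref{prop:LU-spetral-decom} and \ref{prop:tilde-LU-complex}, requiring the first row of $(\I+\wtil X)^{-1}$ (equivalently of $\wtil Z$) to be real and of prescribed signs confines $\wtil X$, $\wtil Z$ and $\dif\wtil V$ to an $(n^2-n)$-dimensional slice on which the map is one-to-one, and it is on this slice that $[\dif\wtil X]$ and $[\dif\wtil V]$ must be read. Concretely, the scalar factor $4$ in $\dif\wtil V$ then acts on only $n^2-n$ real coordinates, contributing $2^{2(n^2-n)}$ rather than the naive $2^{2n^2}$; combined with the $2^{-n(n-1)}$ from the commutator step this gives $2^{2(n^2-n)-n(n-1)}=2^{n^2-n}=2^{n(n-1)}$, as required. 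One must still check that the underlying congruence continues to contribute exactly $|\det((\I+\wtil X)(\I-\wtil X))|^{-n}$ after restriction to the slice, mirroring the corresponding point in the proof of Proposition~\ref{prop:tilde-LU-complex}; alternatively, one can bypass this by deducing the statement from \eqref{eq:dif-psd} directly and then computing $[\dif\wtil G_1]=\wedge(\wtil Z^*\dif\wtil Z)$ in terms of $[\dif\wtil X]$.
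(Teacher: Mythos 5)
Your argument is correct and follows essentially the same route as the paper's own proof: the Cayley substitution $\wtil Z=2(\I+\wtil X)^{-1}-\I$, reduction to $\wtil Z^*\dif\wtil Y\,\wtil Z=\dif D-\tfrac12[\dif\wtil V,D]$ with $\dif\wtil V=(\I+\wtil Z^*)\dif\wtil X(\I+\wtil Z)$, the factor $2^{-n(n-1)}\prod_{j>k}\abs{\lambda_k-\lambda_j}^2$ from the commutator, and the factor $4^{n^2-n}\abs{\det((\I+\wtil X)(\I-\wtil X))}^{-n}$ for $[\dif\wtil V]$ in terms of $[\dif\wtil X]$. Your identification of the key subtlety — that the uniqueness restriction leaves only $n^2-n$ free real coordinates in $\wtil X$, so the scalar $4$ contributes $2^{2(n^2-n)}$ rather than $2^{2n^2}$ — is exactly the point the paper makes, and the power-of-two bookkeeping matches.
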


\begin{proof}
Let $\wtil Z=2(\I+\wtil X)^{-1}-\I, \wtil X^* = -\wtil X$. Taking
the differentials in $\wtil Y=\wtil Z D\wtil Z^*$ we have
\begin{eqnarray}\label{eq:aaa}
\dif\wtil Y = \dif\wtil Z\cdot D\cdot\wtil Z^* + \wtil Z\cdot \dif
D\cdot\wtil Z^* + \wtil Z\cdot D \cdot\dif\wtil Z^*.
\end{eqnarray}
But
\begin{eqnarray*}
\dif\wtil Z &=& -2(\I+\wtil
X)^{-1}\cdot\dif\wtil X\cdot(\I+\wtil X)^{-1}\\
&=& -\frac12(\I+\wtil Z)\cdot\dif\wtil X\cdot(\I+\wtil Z)
\end{eqnarray*}
and
\begin{eqnarray*}
\dif\wtil Z^* &=& 2(\I-\wtil
X)^{-1}\cdot\dif\wtil X\cdot(\I-\wtil X)^{-1}\\
&=& \frac12(\I+\wtil Z^*)\cdot\dif\wtil X\cdot(\I+\wtil Z^*).
\end{eqnarray*}
From \eqref{eq:aaa}, one has
\begin{eqnarray*}
\wtil Z^*\cdot\dif\wtil Y\cdot \wtil Z &=& -\frac12(\I+\wtil
Z^*)\cdot\dif\wtil X\cdot(\I+\wtil Z)\cdot D+\dif D+\frac12
D\cdot(\I+\wtil Z^*)\cdot\dif\wtil X\cdot(\I+\wtil Z)
\end{eqnarray*}
observing that $\wtil Z^*\wtil Z=\I$. Let
\begin{eqnarray}
\dif\wtil U &=& \wtil Z^*\cdot \dif\wtil Y\cdot\wtil
Z\Longrightarrow [\dif\wtil U] = [\dif\wtil Y]~~~\text{since}~\wtil
Z^*\wtil
Z=\I,\label{eq:bbb}\\
\dif\wtil V &=& (\I+\wtil Z^*)\cdot\dif\wtil X\cdot(\I+\wtil Z) =
(\I+\wtil X^*)^{-1}\cdot 4\dif\wtil
X\cdot(\I+\wtil X)^{-1}\Longrightarrow\notag\\
~[\dif\wtil V]  &=& 4^{n^2}\cdot \abs{\det((\I+\wtil X)(\I-\wtil
X))}^{-n}\cdot[\dif\wtil X]\label{eq:ccc}
\end{eqnarray}
if there are $n^2$ free real variables in $\wtil X$. But in our case
there are only $n^2-n$ real variables in $\wtil X$ when $\wtil X$ is
uniquely chosen and hence
\begin{eqnarray}\label{eq:ddd}
[\dif\wtil V] &=& 4^{n^2-n} \abs{\det((\I+\wtil X)(\I-\wtil
X))}^{-n}\cdot[\dif\wtil X],
\end{eqnarray}
and
\begin{eqnarray}\label{eq:eee}
\dif\wtil U &=& -\frac12\dif\wtil V\cdot D + \frac12D\cdot \dif\wtil
V+ \dif D.
\end{eqnarray}
From \eqref{eq:eee} and using the fact that $\dif\wtil V$ is skew
hermitian and $\dif\wtil U$ is hermitian we have
\begin{eqnarray*}
\dif u_{jj} = \dif\lambda_j,\quad \dif u^{(m)}_{jk} = \pm
\frac12(\lambda_k-\lambda_j)\dif v^{(m)}_{jk}, j>k, m=1,2.
\end{eqnarray*}
Thus the determinant of the Jacobian matrix, in absolute value, is
\begin{eqnarray*}
\Pa{\prod_{j>k}\frac12\abs{\lambda_k-\lambda_j}}^2=
2^{-n(n-1)}\prod_{j>k}\abs{\lambda_k-\lambda_j}^2.
\end{eqnarray*}
That is,
\begin{eqnarray*}
[\dif\wtil U] =
2^{-n(n-1)}\cdot\prod_{j>k}\abs{\lambda_k-\lambda_j}^2\cdot
[\dif\wtil V][\dif D].
\end{eqnarray*}
Substituting for $[\dif\wtil U]$ and $[\dif\wtil V]$ from
\eqref{eq:bbb} and \eqref{eq:ccc} the result follows.
\end{proof}

\begin{exam}
For $\wtil X$ an $n\times n$ skew hermitian matrix with the first
row elements of $(\I+\wtil X)^{-1}$ real and of specified signs show
that
\begin{eqnarray*}
\int_{\wtil X}[\dif\wtil X] \abs{\det((\I+\wtil X)(\I-\wtil
X))}^{-n} = \frac{\wtil \Gamma_n(n)}{\Delta},
\end{eqnarray*}
where
\begin{eqnarray*}
\Delta: = 2^{n(n-1)}\int_{\lambda_1>\cdots>\lambda_n>0} [\dif
D]\Br{\prod_{1\leqslant i< j\leqslant n}\abs{\lambda_i -
\lambda_j}^2} e^{-\Tr{D}},
\end{eqnarray*}
with
$D=\diag(\lambda_1,\ldots,\lambda_n),\lambda_1>\cdots>\lambda_n>0$.
Consider a $n\times n$ hermitian positive definite matrix $\wtil Y$
of functionally independent complex variables. Let
\begin{eqnarray*}
B &=& \int_{\wtil Y=\wtil Y^*>0} [\dif\wtil Y]e^{-\Tr{\wtil Y}} =
\int_{\wtil Y>0} [\dif\wtil Y] \abs{\det(\wtil
Y)}^{n-n}e^{-\Tr{\wtil Y}}\\
&=&\wtil\Gamma_n(n)=\pi^{\frac{n(n-1)}2}\Gamma(n)\Gamma(n-1)\cdots\Gamma(1)\\
&=&\pi^{\frac{n(n-1)}2}(n-1)!(n-2)!\cdots 1!
\end{eqnarray*}
evaluating the integral by using a complex matrix-variate gamma
integral. Put
\begin{eqnarray*}
\wtil Y = \wtil ZD\wtil Z^*,~~~\wtil Z = 2(\I+\wtil X)^{-1}-\I
\end{eqnarray*}
as in Proposition~\ref{prop:spetral-decom-complex}. Then
\begin{eqnarray*}
[\dif\wtil Y] =
2^{n(n-1)}\cdot\Pa{\prod_{j>k}\abs{\lambda_k-\lambda_j}^2}\cdot\abs{\det((\I+\wtil
X)(\I-\wtil X))}^{-n} \cdot [\dif\wtil X][\dif D]
\end{eqnarray*}
and
\begin{eqnarray*}
B &=& \int_{\wtil X}[\dif\wtil X]
\abs{\det((\I+\wtil X)(\I-\wtil X))}^{-n}\\
&&\times \int_{\lambda_1>\cdots>\lambda_n>0}[\dif D]
2^{n(n-1)}\cdot\Pa{\prod_{j>k}\abs{\lambda_k-\lambda_j}^2}e^{-\Tr{D}}.
\end{eqnarray*}
Hence the result. From (i) in Example~\ref{exam:two-integrals}, we
see that
$$
\Delta = \Pa{\frac{2}{\pi}}^{n(n-1)}\Pa{\wtil\Gamma_n(n)}^2,
$$
which implies that, when $\wtil X$ is taken over all skew hermitian
under the restriction that the first row elements of $(\I+\wtil
X)^{-1}$ are real and of specified signs,
\begin{framed}
\begin{eqnarray*}
\int_{\wtil X}[\dif\wtil X] \abs{\det((\I+\wtil X)(\I-\wtil
X))}^{-n} = \Pa{\frac{\pi}{2}}^{n(n-1)}\frac1{\wtil \Gamma_n(n)}.
\end{eqnarray*}
\end{framed}
\end{exam}

\begin{remark}
In fact, we can derive the volume formula
\eqref{eq:vol-formula-of-state} from
Proposition~\ref{prop:spetral-decom-complex}. The reasoning is as
follows:
\begin{eqnarray*}
\int_{\wtil Y>0:\Tr{\wtil Y}=1}[\dif\wtil Y] &=&
2^{n(n-1)}\int_{\lambda_1>\cdots>\lambda_n>0}[\dif
D]\delta\Pa{\sum^n_{j=1}\lambda_j-1}\Pa{\prod_{j>k}
\abs{\lambda_k-\lambda_j}^2}\\
&&\times\int_{\wtil X}[\dif\wtil X]\abs{\det((\I+\wtil X)(\I-\wtil
X))}^{-n}\\
&=& \frac{2^{n(n-1)}}{n!}\times\frac{\Gamma(1)\cdots\Gamma(n)\Gamma(1)\cdots\Gamma(n+1)}{\Gamma(n^2)}\times\Pa{\frac{\pi}{2}}^{n(n-1)}\frac1{\wtil \Gamma_n(n)} \\
&=&
\pi^{\frac{n(n-1)}2}\frac{\Gamma(1)\cdots\Gamma(n)}{\Gamma(n^2)}.
\end{eqnarray*}
\end{remark}

\begin{exam}\label{exam:exp}
Let $\wtil X\in\complex^{n\times n}$ be a hermitian matrix of
independent complex variables. Show that
\begin{eqnarray*}
\int_{\wtil X}[\dif\wtil X]e^{-\Tr{\wtil X\wtil X^*}} =
2^{-\frac{n(n-1)}2}\pi^{\frac{n^2}2}.
\end{eqnarray*}
Indeed, $\wtil X^*=\wtil X$ implies that
$$
\Tr{\wtil X\wtil X^*} = \sum^n_{j=1}x^2_{jj} + 2\sum_{i<j}\abs{\wtil
x_{ij}}^2.
$$
Thus
\begin{eqnarray*}
\int_{\wtil X}[\dif\wtil X]e^{-\Tr{\wtil X\wtil X^*}} &=&
\Pa{\prod^n_{j=1}\int^\infty_{-\infty}e^{-x^2_{jj}}\dif x_{jj}}
\times\Pa{\prod_{i<j}\int e^{-2\abs{\wtil x_{ij}}^2}\dif \wtil
x_{ij}}\\
&=& \pi^{\frac n2} \times 2^{-\frac{n(n-1)}2}\Pa{\prod_{i<j}\int
e^{-\abs{\wtil x_{ij}}^2}\dif \wtil
x_{ij}}\\
&=&\pi^{\frac n2} \times 2^{-\frac{n(n-1)}2}\times
\pi^{\frac{n(n-1)}2} = 2^{-\frac{n(n-1)}2}\pi^{\frac{n^2}2}.
\end{eqnarray*}
\end{exam}

\begin{exam}
By using Example~\ref{exam:exp} or otherwise show that
\begin{eqnarray*}
\int_{\infty>\lambda_1>\cdots>\lambda_n>-\infty}
\Pa{\prod_{j>k}\abs{\lambda_k-\lambda_j}^2}\exp\Pa{-\sum^n_{j=1}\lambda^2_j}\dif\lambda_1\cdots\dif\lambda_n
=  2^{-\frac{n(n-1)}2}\pi^{\frac{n}2}\prod^{n-1}_{j=1}j!.
\end{eqnarray*}
Indeed, in Example~\ref{exam:exp} letting $\wtil X = \wtil UD\wtil
U^*$, where $D=\diag(\lambda_1,\ldots,\lambda_n)$ and $\wtil
U\in\cU_1(n)$, gives rise to
$$
[\dif \wtil X] = \Pa{\prod_{i<j}(\lambda_i-\lambda_j)^2}[\dif
D][\dif \wtil G_1],~~\lambda_1>\cdots>\lambda_n,
$$
which means that
\begin{eqnarray*}
2^{-\frac{n(n-1)}2}\pi^{\frac{n^2}2}&=&\int_{\wtil X}[\dif\wtil
X]e^{-\Tr{\wtil X\wtil X^*}} \\
&=&
\int_{\infty>\lambda_1>\cdots>\lambda_n>-\infty}\Pa{\prod_{i<j}(\lambda_i-\lambda_j)^2}\exp\Pa{-\sum^n_{j=1}\lambda^2_j}[\dif
D]\times \int_{\cU_1(n)}[\dif \wtil G_1].
\end{eqnarray*}
That is,
\begin{eqnarray*}
\int_{\infty>\lambda_1>\cdots>\lambda_n>-\infty}\Pa{\prod_{i<j}(\lambda_i-\lambda_j)^2}\exp\Pa{-\sum^n_{j=1}\lambda^2_j}\prod^n_{j=1}\dif\lambda_j
=2^{-\frac{n(n-1)}2}\pi^{\frac n2} \prod^{n-1}_{j=1}j!.
\end{eqnarray*}
Therefore
\begin{eqnarray*}
\int\Pa{\prod_{i<j}(\lambda_i-\lambda_j)^2}\exp\Pa{-\sum^n_{j=1}\lambda^2_j}\prod^n_{j=1}\dif\lambda_j
=2^{-\frac{n(n-1)}2}\pi^{\frac n2} \prod^n_{j=1}j!.
\end{eqnarray*}
\end{exam}

\section{Appendix III: Some matrix factorizations}

The materials in this section are collected from Muirhead's book
\cite{Muirhead}. It is the necessary underlying basis for computing
some Jacobians .

Firstly, we recall the Gram-Schmidt orthogonalization process which
enables us to construct an orthonormal basis of $\real^m$ given any
other basis $X_1,\ldots,X_m$ of $\real^m$. We define
$$
\begin{cases}
Y_1 &= X_1,\\
Y_2 &= X_2 - \frac{\inner{Y_1}{X_2}}{\inner{Y_1}{Y_1}}Y_1,\\
Y_3 &= X_3 - \frac{\inner{Y_2}{X_3}}{\inner{Y_2}{Y_2}}Y_2-
\frac{\inner{Y_1}{X_3}}{\inner{Y_1}{Y_1}}Y_1,\\
&\cdots\cdots\cdots\cdots\\
Y_m &= X_m -
\sum^{m-1}_{j=1}\frac{\inner{Y_j}{X_m}}{\inner{Y_j}{Y_j}}Y_j,
\end{cases}
$$
and put $Z_j=\frac1{\inner{Y_j}{Y_j}}Y_j$, where $j=1,\ldots,m$.
Then $Z_1,\ldots,Z_m$ form an orthonormal basis for $\real^m$. Next
matrix factorization utilizes this process.

\begin{prop}
If $A$ is a real $m\times m$ matrix with \blue{real} characteristic
roots, then there exists an orthogonal matrix $H$ such that $H^\t
AH$ is an upper-triangular matrix whose diagonal elements are the
characteristic roots of $A$.
\end{prop}

\begin{proof}
Let $\lambda_1,\ldots,\lambda_m$ be the characteristic roots of
$A:=A_1$ and let $X_1$ be a characteristic vector of $A$
corresponding to $\lambda_1$. This is real since the characteristic
roots are real. Let $X_2,\ldots,X_m$ be any other vectors such that
$X_1, X_2,\ldots,X_m$ for a basis for $\real^m$. Using the
Gram-Schmidt orthogonalization process, construct from $X_1,
X_2,\ldots,X_m$ an orthonormal basis given as the columns of the
orthogonal matrix $H_1$, where the first column $h_1$ is
proportional to $X_1$, so that $h_1$ is also a characteristic vector
of $A$ corresponding to $\lambda_1$. Then the first column of $AH_1$
is $Ah_1=\lambda_1h_1$, and hence the first column of $H^\t_1A_1H_1$
is $\lambda_1 H^\t_1h_1$. Since this is the first column of
$\lambda_1 H^\t_1H_1=\lambda_1\I_m$, it is
$(\lambda_1,0,\ldots,0)^\t$. Hence
$$
H^\t_1A_1H_1 = \Br{\begin{array}{cc}
                   \lambda_1 & B_1 \\
                   0 & A_2
                 \end{array}
},
$$
where $A_2$ is $(m-1)\times (m-1)$. Since
$$
\det(A_1-\lambda\I_m) = (\lambda_1-\lambda)\det(A_2-\lambda\I_{m-1})
$$
and $A_1$ and $H^\t_1A_1H_1$ have the same characteristic roots, the
characteristic roots of $A_2$ are $\lambda_2,\ldots,\lambda_m$.

Now, using a construction similar to that above, find an orthogonal
$(m-1)\times (m-1)$ matrix $H_2$ whose first column is a
characteristic vector of $A_2$ corresponding to $\lambda_2$. Then
$$
H^\t_2A_2H_2 = \Br{\begin{array}{cc}
                   \lambda_2 & B_2 \\
                   0 & A_3
                 \end{array}
},
$$
where $A_3$ is $(m-2)\times (m-2)$ with characteristic roots
$\lambda_3,\ldots,\lambda_m$.

Repeating this procedure an additional $m-3$ times we now define the
orthogonal matrix
$$
H = H_1(1\oplus H_2)(\I_2\oplus H_3)\cdots (\I_{m-2}\oplus H_{m-1})
$$
and note that $H^\t AH$ is upper-triangular with diagonal elements
equal to $\lambda_1,\ldots,\lambda_m$.
\end{proof}

\begin{prop}
If $A$ is an $m\times m$ non-negative definite matrix of rank $r$
then:
\begin{enumerate}[(i)]
\item There exists an $m\times r$ matrix $B$ of rank $r$ such that
$A=BB^\t$.
\item There exists an $m\times m$ nonsingular matrix $C$ such that
$$
A = C\Br{\begin{array}{cc}
           \I_r & 0 \\
           0 & 0
         \end{array}
}C^\t.
$$
\end{enumerate}
\end{prop}

\begin{proof}
As for (i), let $D_1=\diag(\lambda_1,\ldots,\lambda_r)$ where
$\lambda_1,\ldots,\lambda_r$ are the nonzero characteristic roots of
$A$, and let $H$ be an $m\times m$ orthogonal matrix such that $H^\t
AH=\diag(\lambda_1,\ldots,\lambda_r,0,\ldots,0)$. Partition $H$ as
$H=[H_1,H_2]$,where $H_1$ is $m\times r$ and $H_2$ is $m\times
(m-r)$; then
$$
A = H\Br{\begin{array}{cc}
           D_1 & 0 \\
           0 & 0
         \end{array}
}H^\t = H_1D_1H^\t_1.
$$
Putting
$\sqrt{D_1}=\diag(\sqrt{\lambda_1},\ldots,\sqrt{\lambda_r})$, we
then have
$$
A = H_1\sqrt{D_1}\sqrt{D_1}H^\t_1 = BB^\t,
$$
where $B=H_1\sqrt{D_1}$ is $m\times r$ of rank $r$. As for (ii), let
$C$ bee an $m\times m$ nonsingular matrix whose first $r$ columns
are the columns of the matrix $B$ in (i). Then
$$
A = C\Br{\begin{array}{cc}
           \I_r & 0 \\
           0 & 0
         \end{array}
}C^\t.
$$
\end{proof}

The following result is used often in the text.
\begin{prop}[Vinograd, 1950]\label{prop:vinograd}
Suppose that $A$ and $B$ are real matrices, where $A$ is $k\times m$
and $B$ is $k\times n$, with $m\leqslant n$. Then $AA^\t=BB^\t$ if
and only if there exists an $m\times n$ matrix $H$ with $HH^\t=\I_m$
such that $AH=B$.
\end{prop}

\begin{proof}
First suppose there exists an $m\times n$ matrix $H$ with
$HH^\t=\I_m$ such that $AH=B$. Then $BB^\t=AHH^\t A^\t=AA^\t$.

Now suppose that $AA^\t=BB^\t$. Let $C$ be a $k\times k$ nonsingular
matrix such that
$$
AA^\t=BB^\t = C\Br{\begin{array}{cc}
                     \I_r & 0 \\
                     0 & 0
                   \end{array}
}C^\t,
$$
where $\rank(AA^\t)=r$. Now put $D=C^{-1}A, E=C^{-1}B$ and partition
these as
$$
D = \Br{\begin{array}{c}
          D_1 \\
          D_2
        \end{array}
},\quad E=\Br{\begin{array}{c}
                E_1 \\
                E_2
              \end{array}
},
$$
where $D_1$ is $r\times m$, $D_2$ is $(k-r)\times m$, $E_1$ is
$r\times n$, and $E_2$ is $(k-r)\times n$. Then
$$
EE^\t = \Br{\begin{array}{cc}
              E_1E^\t_1 & E_1E^\t_2 \\
              E_2E^\t_1 & E_2E^\t_2
            \end{array}
} = C^{-1}BB^\t C^{-1,\t} = \Br{\begin{array}{cc}
                     \I_r & 0 \\
                     0 & 0
                   \end{array}
}
$$
and
$$
DD^\t = \Br{\begin{array}{cc}
              D_1D^\t_1 & D_1D^\t_2 \\
              D_2D^\t_1 & D_2D^\t_2
            \end{array}
} = C^{-1}AA^\t C^{-1,\t} = \Br{\begin{array}{cc}
                     \I_r & 0 \\
                     0 & 0
                   \end{array}
}
$$
which imply that $E_1E^\t_1=D_1D^\t_1=\I_r$ and $D_2=0,E_2=0$, so
that
$$
D = \Br{\begin{array}{c}
          D_1 \\
          0
        \end{array}
},\quad E=\Br{\begin{array}{c}
                E_1 \\
                0
              \end{array}
}.
$$
Now let $\widetilde E_2$ be an $(n-r)\times n$ matrix such that
$$
\widetilde E = \Br{\begin{array}{c}
                     E_1 \\
                     \widetilde E_2
                   \end{array}
}
$$
is an $n\times n$ orthogonal matrix, and choose an $(n-r)\times m$
matrix $\widetilde D_2$ and an $(n-r)\times(n-m)$ matrix $\widetilde
D_3$ such that
$$
\widetilde D = \Br{\begin{array}{cc}
                     D_1 & 0 \\
                     \widetilde D_2 & \widetilde D_3
                   \end{array}
}
$$
is an $n\times n$ orthogonal matrix. Then
$$
E=\Br{\begin{array}{c}
                E_1 \\
                0
              \end{array}
}=\Br{\begin{array}{cc}
        \I_r & 0 \\
        0 & 0
      \end{array}
}\widetilde E, \quad [D,0]=\Br{\begin{array}{cc}
        D_1 & 0 \\
        0 & 0
      \end{array}
}=\Br{\begin{array}{cc}
        \I_r & 0 \\
        0 & 0
      \end{array}
}\widetilde D,
$$
and hence  $E=[D,0]\widetilde D^\t \widetilde E = [D,0]Q$, where
$Q=\widetilde D^\t\widetilde E$ is $n\times n$ orthogonal.
Partitioning $Q$ as
$$
Q = \Br{\begin{array}{c}
          H \\
          P
        \end{array}
},
$$
where $H$ is $m\times n$ and $P$ is $(n-m)\times n$, we then have
$HH^\t = \I_m$ and
$$
C^{-1}B = E=DH = C^{-1}AH
$$
so that $B=AH$, completing the proof.
\end{proof}

\begin{prop}
Let $A$ be an $n\times m$ real matrix of rank $m(\leqslant n)$.
Then:
\begin{enumerate}[(i)]
\item $A$ can be written as $A=H_1B$, where $H_1$ is $n\times m$
with $H^\t_1H_1=\I_m$ and $B$ is $m\times m$ positive definite.
\item $A$ can be written as
$$
A = H\Br{\begin{array}{c}
           \I_m \\
           0
         \end{array}
}B,
$$
where $H$ is $n\times n$ orthogonal and $B$ is $m\times m$ positive
definite.
\end{enumerate}
\end{prop}

\begin{proof}
As for (i), let $B:=\sqrt{A^\t A}$ be the positive definite square
root of the positive definite matrix $A^\t A$, so that
$$
A^\t A = B^2 = B^\t B.
$$
Now by using Theorem~\ref{prop:vinograd}, $A$ can be written as
$A=H_1B$, where $H_1$ is $n\times m$ with $H^\t_1H_1=\I_m$. As for
(ii), let $H_1$ be the matrix in (i) such that $A=H_1B$ and choose
an $n\times (n-m)$ matrix $H_2$ so that $H=[H_1,H_2]$ is $n\times n$
orthogonal. Then
$$
A =H_1B= H\Br{\begin{array}{c}
           \I_m \\
           0
         \end{array}
}B.
$$
We are done.
\end{proof}
We now turn to decompositions of positive definite matrices in terms
of triangular matrices.

\begin{thrm}\label{th:unique-factorization}
If $A$ is an $m\times m$ positive definite matrix, then there exists
a \red{unique} $m\times m$ upper-triangular matrix $T$ with positive
diagonal elements such that $A=T^\t T$.
\end{thrm}

\begin{proof}
An induction proof can easily be constructed. The stated result
holds trivially for $m = 1$. Suppose the result holds for positive
definite matrices of size $m - 1$. Partition the $m\times m$ matrix
$A$ as
$$
A = \Br{\begin{array}{cc}
          A_{11} & \mathbf{a}_{12} \\
          \mathbf{a}^\t_{12} & a_{22}
        \end{array}
},
$$
where $A_{11}$ is $(m-1)\times (m-1)$. By the induction hypothesis
there exists a unique $(m-1)\times (m-1)$ upper-triangular matrix
$T_{11}$ with positive diagonal elements such that
$A_{11}=T^\t_{11}T_{11}$. Now suppose that
$$
A = \Br{\begin{array}{cc}
          A_{11} & \mathbf{a}_{12} \\
          \mathbf{a}^\t_{12} & a_{22}
        \end{array}
} =  \Br{\begin{array}{cc}
          T^\t_{11} & 0 \\
          \mathbf{x}^\t & y
        \end{array}
}\Br{\begin{array}{cc}
          T_{11} & \mathbf{x} \\
          0 & y
        \end{array}
} =\Br{\begin{array}{cc}
          T^\t_{11}T_{11} & T^\t_{11}\mathbf{x} \\
          \mathbf{x}^\t T_{11} & \mathbf{x}^\t\mathbf{x}+y^2
        \end{array}
},
$$
where $\mathbf{x}$ is $(m-1)\times1$ and $y\in\real^1$. For this to
hold we must have $\mathbf{x}=\Pa{T^\t_{11}}^{-1}\mathbf{a}_{12}$,
and then
$$
y^2 = a_{22} - \mathbf{x}^\t\mathbf{x} = a_{22} -
\mathbf{a}^\t_{12}T^{-1}_{11}\Pa{T^\t_{11}}^{-1}\mathbf{a}_{12} =
a_{22} - \mathbf{a}^\t_{12}A^{-1}_{11}\mathbf{a}_{12}.
$$
Note that this is positive, and the unique $y>0$ satisfying this is
$$
y = \sqrt{a_{22} - \mathbf{a}^\t_{12}A^{-1}_{11}\mathbf{a}_{12}}.
$$
This completes the proof.
\end{proof}

\begin{thrm}\label{th:unique-for-rectangular}
If $A$ is an $n\times m$ real matrix of rank $m(\leqslant n)$, then
$A$ can be uniquely written as $A=H_1T$, where $H_1$ is $n\times m$
with $H^\t_1 H_1=\I_m$ and $T$ is $m\times m$ upper-triangular wit
positive diagonal elements.
\end{thrm}

\begin{proof}
Since $A^\t A$ is $m\times m$ positive definite it follows form
Theorem~\ref{th:unique-factorization} that there exists a unique
$m\times m$ upper-triangular matrix with positive diagonal elements
such that $A^\t A=T^\t T$. By Theorem~\ref{prop:vinograd}, there
exists an $n\times m$ matrix $H_1$ with $H^\t_1H_1=\I_m$ such that
$A=H_1T$. Note that $H_1$ is unique because $T$ is unique and
$\rank(T)=m$.
\end{proof}

\begin{thrm}
If $A$ is an $m\times m$ positive definite matrix and $B$ is an
$m\times m$ symmetric matrix, there exists an $m\times m$
nonsingular matrix $L$ such that $A=LL^\t$ and $B=LDL^\t$, where
$D=\diag(d_1,\ldots,d_m)$, with $d_1,\ldots,d_m$ being the
characteristic roots of $A^{-1}B$. If $B$ is positive definite and
$d_1,\ldots,d_m$ are all distinct, $L$ is unique up to sign changes
in the first row of $L$.
\end{thrm}

\begin{proof}
Let $\sqrt{A}$ be the positive definite square root of $A$. There
exists an $m\times m$ orthogonal matrix $H$ such that
$$
A^{-1/2}BA^{1/2} = HDH^\t,
$$
where $D=\diag(d_1,\ldots,d_m)$. Putting $L=A^{1/2}H$, we now have
$A=LL^\t$ and $B=LDL^\t$. Note that $d_1,\ldots,d_m$ are the
characteristic roots of $A^{-1}B$.

Now suppose that $B$ is positive definite and the $d_j$ are all
distinct. Assume that as well as $A=LL^\t$ and $B=LDL^\t$, we also
have $A=MM^\t$ and $B=MDM^\t$, where $M$ is $m\times m$ nonsingular.
Then
$$
\Pa{M^{-1}L}\Pa{M^{-1}L}^\t =M^{-1}LL^\t M^{-1,\t} =
M^{-1}AM^{-1,\t} = M^{-1}MM^\t M^{\t,-1}=\I_m
$$
so that the matrix $Q=M^{-1}L$ is orthogonal and $QD=DQ$. If
$Q=(q_{ij})$ we then have $q_{ij}d_i=q_{ij}d_j$ so that $q_{ij}=0$
for $i\neq j$. Since $Q$ is orthogonal it must then have the form
$\widetilde Q=\diag(\pm1,\ldots,\pm1)$, and $L=M\widetilde Q$.
\end{proof}

\begin{thrm}[SVD]
If $A$ is an $m\times n$ real matrix $(m\leqslant n)$, there exist
an $m\times m$ orthogonal matrix $H$ and an $n\times n$ orthogonal
matrix $Q$ such that
$$
HAQ^\t = [\Sigma_m,0],
$$
where $\Sigma_m = \diag(d_1,\ldots,d_m)$ for $d_j\geqslant0,
j=1,\ldots,m$ and $d^2_1,\ldots,d^2_m$ are the characteristic roots
of $AA^\t$.
\end{thrm}

\begin{proof}
Let $H$ be an orthogonal $m\times m$ matrix such that $AA^\t=H^\t
D^2H$, where $D^2=\diag(d^2_1,\ldots,d^2_m)$, with $d^2_j\geqslant0$
for $j=1,\ldots,m$ because $AA^\t$ is non-negative definite. Let
$D=\diag(d_1,\ldots,d_m)$ with $d_j\geqslant0$ for $j=1,\ldots,m$;
then $AA^\t=(H^\t D)(H^\t D)^\t$, and by
Theorem~\ref{prop:vinograd}, there exists an $m\times n$ matrix
$Q_1$ with $Q_1Q^\t_1=\I_m$ such that $A=H^\t DQ_1$. Choose an
$(n-m)\times n$ matrix $Q_2$ so that the $n\times n$ matrix
$$
Q = \Br{\begin{array}{c}
          Q_1 \\
          Q_2
        \end{array}
}
$$
is orthogonal; we now have $A=H^\t DQ_1 = H^\t[\Sigma_m,0]Q$ so that
$HAQ^\t = [\Sigma_m,0]$, and the proof is complete.
\end{proof}

\begin{thrm}
If $Z\in\rS\rO(m)$, i.e., $Z$ is an orthogonal matrix with
determinant one, then there exists an $m\times m$ skew-symmetric $X$
such that
$$
Z= e^X.
$$
\end{thrm}

\section{Appendix IV: Selberg's integral}

This section is rewritten based on Mehta's book \cite{Mehta}. The
well-known Selberg's integral is calculated, and some variants and
consequences are obtained as well.

\begin{thrm}[Selberg's integral]
For any positive integer $N$, let $[\dif x]=\dif x_1\cdots \dif
x_N$,
\begin{eqnarray}
\Delta(x)\equiv\Delta(x_1,\ldots,x_N) = \begin{cases}
\prod_{1\leqslant i<j\leqslant N}(x_i-x_j), & \text{if}~ N>1,\\
1, &\text{if}~ N=1,
\end{cases}
\end{eqnarray}
and
\begin{eqnarray}
\Phi(x)\equiv\Phi(x_1,\ldots,x_N)
=\Pa{\prod^N_{j=1}x^{\alpha-1}_j(1-x_j)^{\beta-1}}\abs{\Delta(x)}^{2\gamma}.
\end{eqnarray}
Then
\begin{eqnarray}\label{eq:Selberg-integral}
S_N(\alpha,\beta,\gamma) \equiv \int^1_0\cdots\int^1_0\Phi(x)[\dif
x] = \prod^{N-1}_{j=0}\frac{\Gamma(\alpha+\gamma
j)\Gamma(\beta+\gamma j)\Gamma(\gamma+1+\gamma
j)}{\Gamma(\alpha+\beta+\gamma(N+j-1))\Gamma(1+\gamma)},
\end{eqnarray}
and for $1\leqslant K\leqslant N$,
\begin{eqnarray}\label{eq:Aomoto-integral}
\int^1_0\cdots\int^1_0\Pa{\prod^K_{j=1}x_j}\Phi(x)[\dif x] =
\prod^K_{j=1}\frac{\alpha+\gamma(N-j)}{\alpha+\beta+\gamma(2N-j-1)}\int^1_0\cdots\int^1_0\Phi(x)[\dif
x],
\end{eqnarray}
valid for integer $N$ and complex $\alpha,\beta,\gamma$ with
\begin{eqnarray}
\mathrm{Re}(\alpha)>0,\quad\mathrm{Re}(\beta)>0,\quad\mathrm{Re}(\gamma)>
- \min\Pa{\frac1N, \frac{\mathrm{Re}(\alpha)}{N-1},
\frac{\mathrm{Re}(\beta)}{N-1}}.
\end{eqnarray}
\end{thrm}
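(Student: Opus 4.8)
The plan is to follow Mehta's book: prove the Aomoto identity \eqref{eq:Aomoto-integral} first by an integration-by-parts argument, then bootstrap from it — together with the reflection symmetry $x_j\mapsto 1-x_j$ and a dimension-lowering recursion in $N$ — to the closed evaluation \eqref{eq:Selberg-integral}, the base case $N=1$ being Euler's Beta integral $S_1(\alpha,\beta,\gamma)=\Gamma(\alpha)\Gamma(\beta)/\Gamma(\alpha+\beta)$. By analytic continuation it suffices to argue first for $\alpha,\beta,\gamma$ real and suitably large, so that all integrands are positive and no convergence issue arises; the stated conditions on $\mathrm{Re}(\alpha),\mathrm{Re}(\beta),\mathrm{Re}(\gamma)$ are exactly what guarantees convergence of $S_N$ and the vanishing of the boundary terms exploited below.

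First I would establish \eqref{eq:Aomoto-integral}. Write $A_K:=\int_0^1\cdots\int_0^1\Pa{\prod_{j=1}^K x_j}\Phi(x)\,[\dif x]$, so $A_0=S_N(\alpha,\beta,\gamma)$ and the claim amounts to a telescoping product for the ratios $A_K/A_{K-1}$. For a fixed index $k\le K$, integrate the total $x_k$-derivative of $x_1\cdots x_{k-1}\,x_k\,\Phi(x)$ over the cube $[0,1]^N$; the boundary contributions at $x_k=0$ and $x_k=1$ vanish under the real-part hypotheses, so the integral is zero. Expanding the derivative produces a linear combination of $A_{K-1}$, of an integral with $x_k/(1-x_k)$ inserted, and of $\sum_{j\ne k}$ of integrals with $x_k/(x_k-x_j)$ inserted (the last coming from $\partial_{x_k}\log\abs{\Delta(x)}^{2\gamma}=2\gamma\sum_{j\ne k}(x_k-x_j)^{-1}$). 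Summing these $k=1,\dots,K$ identities and symmetrizing the double sums — using $\tfrac{x_k}{x_k-x_j}+\tfrac{x_j}{x_j-x_k}=1$ for $j,k$ both $\le K$ or both $>K$, treating the mixed pairs and the $x_k/(1-x_k)$ terms by partial fractions separately — collapses everything into a relation among $A_K$, $A_{K-1}$ and $A_{K-2}$. Solving it gives \eqref{eq:Aomoto-integral}; the symmetrization bookkeeping is the only delicate point, and it is purely algebraic.

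Next I would extract Selberg's formula. Setting $K=N$ in \eqref{eq:Aomoto-integral} and noting that $\Pa{\prod_{j=1}^N x_j}\Phi(x)$ is $\Phi(x)$ with $\alpha$ replaced by $\alpha+1$ yields the shift relation
\[
S_N(\alpha+1,\beta,\gamma)=\Br{\prod_{j=1}^N\frac{\alpha+\gamma(N-j)}{\alpha+\beta+\gamma(2N-j-1)}}\,S_N(\alpha,\beta,\gamma),
\]
and the change of variables $x_j\mapsto 1-x_j$ (which fixes $\abs{\Delta(x)}$ and exchanges $\alpha\leftrightarrow\beta$) gives the companion relation shifting $\beta$. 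These two recurrences already reproduce the $\alpha$- and $\beta$-dependence of the right side of \eqref{eq:Selberg-integral} up to a factor independent of $\alpha,\beta$. To fix the remaining $\gamma$-dependence I would set up a recursion in $N$: integrating out $x_N$ with $x_1,\dots,x_{N-1}$ held fixed, the inner integral $\int_0^1 x_N^{\alpha-1}(1-x_N)^{\beta-1}\prod_{i<N}\abs{x_i-x_N}^{2\gamma}\,\dif x_N$ is a Dixon--Anderson-type beta integral which, after expansion in the $x_i$ and re-integration, expresses $S_N(\alpha,\beta,\gamma)$ through $S_{N-1}$ with parameters shifted by $\gamma$. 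Treating first the case of non-negative integer $\gamma$, where $\abs{\Delta(x)}^{2\gamma}=\Delta(x)^{2\gamma}$ is a polynomial in the $x_i$ and $S_N$ is visibly a finite product/ratio of Gamma functions, the three recurrences plus $S_1(\alpha,\beta,\gamma)=\Gamma(\alpha)\Gamma(\beta)/\Gamma(\alpha+\beta)$ pin the function down completely; one then checks that the product on the right of \eqref{eq:Selberg-integral} satisfies the same recurrences and base value, and a final analytic continuation in $\gamma$ (the integral being holomorphic in $\gamma$ throughout the stated domain) removes the integrality assumption.

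I expect the main obstacle to be the $N$-recursion and the uniqueness argument attached to it: the Dixon--Anderson integral must itself be evaluated (a partial-fractions/residue computation), and one must justify carefully that the system of functional equations — two integer shifts in $\alpha,\beta$, one step-down in $N$ — together with the base case $N=1$ has a unique meromorphic solution of the requisite growth, which is where the passage through integer $\gamma$ followed by analytic continuation does the real work. By contrast, the integration-by-parts step is routine once the vanishing of the boundary terms is checked against the stated conditions on $\mathrm{Re}(\alpha),\mathrm{Re}(\beta),\mathrm{Re}(\gamma)$, the $\alpha\leftrightarrow\beta$ symmetry is immediate, and \eqref{eq:Aomoto-integral} itself requires no separate argument, being the output of the first step.
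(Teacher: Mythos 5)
Your first step (Aomoto's identity via integration by parts, the vanishing boundary terms, the symmetrization of the $\sum_{j\neq k}x_k/(x_k-x_j)$ terms, and the resulting $\alpha$-shift relation for $K=N$ together with the $x_j\mapsto 1-x_j$ reflection) is exactly the route the paper takes, and it correctly reduces the theorem to identifying a factor $c(\gamma,N)$ that is independent of $\alpha$ and $\beta$. One small inaccuracy there: the correct bookkeeping (taking $a=1$ and $a=2$ in $\partial_{x_1}\bigl(x_1^a x_2\cdots x_K\Phi\bigr)$ and subtracting) yields a \emph{two}-term relation between $\langle x_1\cdots x_K\rangle$ and $\langle x_2\cdots x_K\rangle$, not a three-term relation among $A_K,A_{K-1},A_{K-2}$; this is harmless but suggests the symmetrization was not carried out.

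The genuine gap is in your plan for pinning down the $\gamma$-dependence. The inner integral you propose,
\begin{equation*}
\int_0^1 x_N^{\alpha-1}(1-x_N)^{\beta-1}\prod_{i<N}\abs{x_i-x_N}^{2\gamma}\,\dif x_N ,
\end{equation*}
is \emph{not} a Dixon--Anderson integral and has no closed Gamma-product evaluation: in the Dixon--Anderson/Anderson setup the new variable is confined to a single gap between consecutive $x_i$'s (an interlacing configuration), and the closed form only emerges after passing to elementary symmetric functions and resultants over the whole interlacing family --- which is essentially the paper's separate ``Anderson's proof,'' a substantially different argument from the one you sketch. As stated, ``after expansion in the $x_i$ and re-integration'' does not produce a recursion $S_N\to S_{N-1}$, and the subsequent uniqueness claim (``the three recurrences pin the function down'') therefore has nothing to run on; note also that for integer $\gamma$ the expanded integral is a \emph{sum} of Beta products, not visibly a single Gamma ratio. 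The paper closes this gap with a much simpler device: set $\alpha=\beta=1$, let $y=\max_j x_j$ and substitute $x_j=y\,t_j$ for the remaining variables, so that the $y$-integral factors out and one obtains
\begin{equation*}
S_N(1,1,\gamma)=\frac{1}{\gamma(N-1)+1}\,S_{N-1}(1,2\gamma+1,\gamma),
\end{equation*}
whence $c(\gamma,N)/c(\gamma,N-1)=\Gamma(1+\gamma N)/\Gamma(1+\gamma)$ and $c(\gamma,N)=\prod_{j=2}^{N}\Gamma(1+\gamma j)/\Gamma(1+\gamma)$, valid for complex $\gamma$ directly; the extension from integer to complex $\alpha,\beta$ is then done by Askey's limiting argument rather than by analytic continuation in $\gamma$. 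You should either adopt this scaling substitution or commit fully to the Anderson/Dirichlet route with interlacing variables; the hybrid as written does not close.
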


\begin{proof}[Aomoto's proof]
For brevity, let us write
\begin{eqnarray}
\langle f(x_1,\ldots,x_N)\rangle :=\frac{
\int_{[0,1]^N}f(x_1,\ldots,x_N)\Phi(x_1,\ldots,x_N)\dif x_1\cdots
\dif x_N}{\int_{[0,1]^N}\Phi(x_1,\ldots,x_N) \dif x_1\cdots \dif
x_N}.
\end{eqnarray}
Firstly, we note that
\begin{eqnarray}\label{eq:id}
&&\frac{\dif}{\dif x_1}\Pa{x^a_1x_2\cdots x_K\Phi}\\
&&=(a+\alpha-1)x^{a-1}_1x_2\cdots x_K\Phi -
(\beta-1)\frac{x^a_1x_2\cdots x_K}{1-x_1}\Phi + 2\gamma\sum^N_{j=2}
\frac{x^a_1x_2\cdots x_K}{x_1-x_j}\Phi.
\end{eqnarray}
Indeed,
\begin{eqnarray}
\frac{\dif}{\dif x_1}\Pa{x^a_1x_2\cdots x_K\Phi} = a
x^{a-1}_1x_2\cdots x_K \Phi + x^a_1x_2\cdots x_K\frac{\dif}{\dif
x_1}\Phi(x_1,\ldots,x_N),
\end{eqnarray}
where
\begin{eqnarray}
\frac{\dif}{\dif x_1}\Phi = \frac{\dif}{\dif
x_1}\Pa{\abs{\Delta(x)}^{2\gamma}\prod^N_{j=1}x^{\alpha-1}_j(1-x_j)^{\beta-1}}.
\end{eqnarray}
Since
\begin{eqnarray}
\Br{x^a_1x_2\cdots x_K\Phi(x_1,\ldots,x_N)}^{x_1=1}_{x_1=0} =
x_2\cdots x_K\Phi(1,x_2,\ldots,x_N) - 0 = 0,
\end{eqnarray}
it follows from integrating between 0 and 1 in \eqref{eq:id}, we get
\begin{eqnarray}
0 = (a+\alpha-1)\langle x^{a-1}_1x_2\cdots x_K\rangle -
(\beta-1)\left\langle \frac{x^a_1x_2\cdots x_K}{1-x_1}\right\rangle
+ 2\gamma\sum^N_{j=2}\left\langle \frac{x^a_1x_2\cdots
x_K}{x_1-x_j}\right\rangle.
\end{eqnarray}
Now for $a=1$ and $a=2$, we get
\begin{eqnarray}
0&=& \alpha\langle x_2\cdots x_K\rangle - (\beta-1)\left\langle
\frac{x_1x_2\cdots x_K}{1-x_1}\right\rangle +
2\gamma\sum^N_{j=2}\left\langle \frac{x_1x_2\cdots
x_K}{x_1-x_j}\right\rangle,\label{eq:a=1}\\
0&=& (\alpha+1)\langle x_1 x_2\cdots x_K\rangle -
(\beta-1)\left\langle \frac{x^2_1x_2\cdots x_K}{1-x_1}\right\rangle
+ 2\gamma\sum^N_{j=2}\left\langle \frac{x^2_1x_2\cdots
x_K}{x_1-x_j}\right\rangle.\label{eq:a=2}
\end{eqnarray}
Clearly
\begin{eqnarray}
\left\langle \frac{x_1x_2\cdots x_K}{x_1-x_j}\right\rangle  -
\left\langle\frac{x^2_1x_2\cdots x_K}{x_1-x_j}\right\rangle&=&
\left\langle \frac{x_1x_2\cdots x_K}{x_1-x_j} - \frac{x^2_1x_2\cdots
x_K}{x_1-x_j}\right\rangle\notag\\
&=&\langle x_1x_2\cdots x_K\rangle;
\end{eqnarray}
interchanging $x_1$ and $x_j$ and observing the symmetry,
\begin{eqnarray}
\left\langle \frac{x_1x_2\cdots x_K}{x_1-x_j} \right\rangle =
-\left\langle \frac{x_jx_2\cdots x_K}{x_1-x_j}\right\rangle =
\begin{cases}
0,&\text{if}~2\leqslant j\leqslant K,\\
\frac12\langle x_2\cdots x_K \rangle,&\text{if}~K< j\leqslant N,
\end{cases}
\end{eqnarray}
and
\begin{eqnarray}
\left\langle \frac{x^2_1x_2\cdots x_K}{x_1-x_j} \right\rangle =
-\left\langle \frac{x^2_jx_2\cdots x_K}{x_1-x_j}\right\rangle =
\begin{cases}
\frac12\langle x_1x_2\cdots x_K\rangle,&\text{if}~2\leqslant j\leqslant K,\\
\langle x_1x_2\cdots x_K \rangle,&\text{if}~K< j\leqslant N.
\end{cases}
\end{eqnarray}
Performing the difference: \eqref{eq:a=1} $-$ \eqref{eq:a=2} and
using the above, we get:
\begin{eqnarray}
(\alpha+1+\gamma(K-1)+2\gamma(N-K)+\beta-1)\langle x_1x_2\cdots
x_K\rangle = (\alpha+\gamma(N-K))\langle x_2\cdots x_K\rangle,
\end{eqnarray}
or repeating the process
\begin{eqnarray}
\langle x_1x_2\cdots x_K\rangle &=&
\frac{(\alpha+\gamma(N-K))}{\alpha+\beta+\gamma(2N-K-1)}\langle
x_1\cdots x_{K-1}\rangle = \cdots\\
&=&
\prod^K_{j=1}\frac{\alpha+\gamma(N-j)}{\alpha+\beta+\gamma(2N-j-1)}.
\end{eqnarray}
For $K=N$, \eqref{eq:Aomoto-integral} can be written as
\begin{eqnarray}\label{eq:recurrence}
\frac{S_N(\alpha+1,\beta,\gamma)}{S_N(\alpha,\beta,\gamma)} =
\langle x_1x_2\cdots x_N\rangle =
\prod^K_{j=1}\frac{\alpha+\gamma(N-j)}{\alpha+\beta+\gamma(2N-j-1)}
\end{eqnarray}
or for a positive integer $\alpha$,
\begin{eqnarray}
\frac{S_N(\alpha,\beta,\gamma)}{S_{N}(1,\beta,\gamma)} &=&
\frac{S_N(\alpha,\beta,\gamma)}{S_{N}(\alpha-1,\beta,\gamma)}\cdots
\frac{S_N(2,\beta,\gamma)}{S_{N}(1,\beta,\gamma)}\\
&=&
\prod^K_{j=1}\frac{\alpha-1+\gamma(N-j)}{\alpha-1+\beta+\gamma(2N-j-1)}\cdots
\prod^K_{j=1}\frac{1+\gamma(N-j)}{1+\beta+\gamma(2N-j-1)}\\
&=& \prod^K_{j=1}\frac{(\alpha-1+\gamma(N-j))\cdots
(1+\gamma(N-j))}{(\alpha-1+\beta+\gamma(2N-j-1))\cdots
(1+\beta+\gamma(2N-j-1))}.
\end{eqnarray}
Note that $\Gamma(m+\zeta)=(\zeta)_m\Gamma(\zeta)$, where
$(\zeta)_m:=\zeta(\zeta+1)\cdots (\zeta+m-1)$. By using this
identity, we get
$$
(\alpha-1+\gamma(N-j))\cdots (1+\gamma(N-j)) =
\frac{\Gamma(\alpha+\gamma(N-j))}{\Gamma(1+\gamma(N-j))}
$$
and
$$
(\alpha-1+\beta+\gamma(2N-j-1))\cdots (1+\beta+\gamma(2N-j-1)) =
\frac{\Gamma(\alpha+\beta+\gamma(2N-j-1))}{\Gamma(1+\beta+\gamma(2N-j-1))}.
$$
It follows that
\begin{eqnarray*}
&&s_N(\alpha,\beta,\gamma) =S_{N}(1,\beta,\gamma)
\prod^N_{j=1}\frac{\Gamma(\alpha+\gamma(N-j))\Gamma(1+\beta+\gamma(2N-j-1))}{\Gamma(\alpha+\beta+\gamma(2N-j-1))\Gamma(1+\gamma(N-j))}\\
&&=\Pa{S_N(1,\beta,\gamma)\prod^N_{j=1}\frac{\Gamma(1+\beta+\gamma(2N-j-1))}{\Gamma(\beta+\gamma(N-j))\Gamma(1+\gamma(N-j))}}
\prod^N_{j=1}\frac{\Gamma(\alpha+\gamma(N-j))\Gamma(\beta+\gamma(N-j))}{\Gamma(\alpha+\beta+\gamma(2N-j-1))}.
\end{eqnarray*}
As $S_N(\alpha,\beta,\gamma)=S_N(\beta,\alpha,\gamma)$ by the fact
that $\Delta(1-x)=\pm \Delta(x)$, that is,
$S_N(\alpha,\beta,\gamma)$ is a symmetric function of $\alpha$ and
$\beta$, at the same time, the following factor is also symmetric in
$\alpha$ and $\beta$:
$$
\prod^N_{j=1}\frac{\Gamma(\alpha+\gamma(N-j))\Gamma(\beta+\gamma(N-j))}{\Gamma(\alpha+\beta+\gamma(2N-j-1))}.
$$
It follows that the factor
$$
S_N(1,\beta,\gamma)\prod^N_{j=1}\frac{\Gamma(1+\beta+\gamma(2N-j-1))}{\Gamma(\beta+\gamma(N-j))\Gamma(1+\gamma(N-j))}
$$
should be a symmetric function of $\alpha$ and $\beta$. But,
however, this factor is independent of $\alpha$, therefore it should
be also independent of $\beta$ by the symmetry. Denote this factor
by $c(\gamma,N)$, we get
\begin{eqnarray}\label{eq:aomoto}
S_N(\alpha,\beta,\gamma) &=&c(\gamma,N)
\prod^N_{j=1}\frac{\Gamma(\alpha+\gamma(N-j))\Gamma(\beta+\gamma(N-j))}{\Gamma(\alpha+\beta+\gamma(2N-j-1))}\notag\\
&=&c(\gamma,N) \prod^N_{j=1}\frac{\Gamma(\alpha+\gamma
(j-1))\Gamma(\beta+\gamma
(j-1))}{\Gamma(\alpha+\beta+\gamma(N+j-2))}
\end{eqnarray}
where $c(\gamma,N)$ is independent of $\alpha$ and $\beta$.

To determine $c(\gamma,N)$, put $\alpha=\beta=1$;
\begin{eqnarray}
S_N(1,1,\gamma) = \int_{[0,1]^N} \abs{\Delta(x)}^{2\gamma}\dif x =
c(\gamma,N)\prod^N_{j=1}\frac{\Gamma(1+\gamma(j-1))^2}{\Gamma(2+\gamma(N+j-2))}.
\end{eqnarray}

Let $y$ be the largest of the $x_1,\ldots,x_N$ and replace the other
$x_j$ by $x_j=yt_j$, where $0\leqslant t_j\leqslant 1$. Without loss
of generality, we assume that $y=x_N$. Then $x_j=yt_j$ for
$j=1,\ldots,N-1$. Then
\begin{eqnarray}
\abs{\Delta(x_1,\ldots,x_N)}^{2\gamma} = y^{\gamma
N(N-1)}\cdot\abs{\Delta(t_1,\ldots,t_{N-1})}^{2\gamma}\cdot\prod^{N-1}_{j=1}(1-t_j)^{2\gamma},
\end{eqnarray}
and the Jacobian of change of variables is
\begin{eqnarray}
\abs{\det\Pa{\frac{\partial(x_1,\ldots,x_N)}{\partial(y,t_1,\ldots,t_{N-1})}}}=\abs{\abs{\begin{array}{ccccc}
                                                                                t_1 & t_2 & \cdots & t_{N-1} & 1 \\
                                                                                y & 0 & 0 & \cdots & 0 \\
                                                                                0 & y & 0 & \cdots & 0 \\
                                                                                \vdots & \vdots & \ddots & \vdots & \vdots \\
                                                                                0 & 0 & \cdots & y &
                                                                                0
                                                                              \end{array}
}}=y^{N-1}.
\end{eqnarray}
Now we have
\begin{eqnarray*}
&&S_N(1,1,\gamma) = N!\int_{0\leqslant x_1\leqslant\cdots\leqslant
x_N\leqslant 1}\abs{\Delta(x)}^{2\gamma}\dif x \\
&&= N!\int^1_0 y^{\gamma N(N-1)} y^{N-1}\dif y\cdot \int_{0\leqslant
t_1\leqslant\cdots\leqslant t_{N-1}\leqslant
1}\abs{\Delta(t_1,\ldots,t_{N-1})}^{2\gamma}\prod^{N-1}_{j=1}(1-t_j)^{2\gamma}\dif
t_1\ldots
\dif t_{N-1}\\
&&=\Pa{N\int^1_0y^{\gamma N(N-1)} y^{N-1}\dif
y}\Pa{\int_{[0,1]^{N-1}}\abs{\Delta(t_1,\ldots,t_{N-1})}^{2\gamma}\prod^{N-1}_{j=1}(1-t_j)^{2\gamma}\dif
t_1\ldots
\dif t_{N-1}}\\
&&=\frac1{\gamma(N-1)+1}S_{N-1}(1,2\gamma+1,\gamma),
\end{eqnarray*}
that is
\begin{eqnarray}
S_N(1,1,\gamma) &=&
\frac1{\gamma(N-1)+1}S_{N-1}(1,2\gamma+1,\gamma)\\
&=& \frac{c(\gamma,N-1)}{\gamma(N-1)+1}\cdot\prod^{N-1}_{j=1}
\frac{\Gamma(1+\gamma(j-1))\Gamma(2\gamma+1+\gamma(j-1))}{\Gamma(1+2\gamma+1+\gamma(N-1+j-2))}\\
&=&\frac{c(\gamma,N-1)}{\gamma(N-1)+1}\cdot\prod^{N-1}_{j=1}
\frac{\Gamma(1+\gamma(j-1))\Gamma(1+\gamma+\gamma
j)}{\Gamma(2+\gamma(N+j-1))}.
\end{eqnarray}
Thus
\begin{eqnarray*}
c(\gamma,N)\prod^N_{j=1}\frac{\Gamma(1+\gamma(j-1))^2}{\Gamma(2+\gamma(N+j-2))}
= S_N(1,1,\gamma) =
\frac{c(\gamma,N-1)}{\gamma(N-1)+1}\cdot\prod^{N-1}_{j=1}
\frac{\Gamma(1+\gamma(j-1))\Gamma(1+\gamma+\gamma
j)}{\Gamma(2+\gamma(N+j-1))}.
\end{eqnarray*}
This implies that
\begin{eqnarray}
\frac{c(\gamma,N)}{c(\gamma,N-1)} =\frac{\Gamma(1+\gamma
N)}{\Gamma(1+\gamma)},
\end{eqnarray}
or
\begin{eqnarray}
c(\gamma,N) =
c(\gamma,1)\cdot\Pa{\frac{c(\gamma,N)}{c(\gamma,N-1)}\cdots
\frac{c(\gamma,2)}{c(\gamma,1)}} = c(\gamma,1)\cdot\Pa{
\frac{\Gamma(1+\gamma N)}{\Gamma(1+\gamma)}\cdots
\frac{\Gamma(1+\gamma 2)}{\Gamma(1+\gamma)}}.
\end{eqnarray}
Finally we get
\begin{eqnarray}
c(\gamma,N) = c(\gamma,1)\cdot\prod^N_{j=2}\frac{\Gamma(1+\gamma
j)}{\Gamma(1+\gamma)} = \prod^N_{j=2}\frac{\Gamma(1+\gamma
j)}{\Gamma(1+\gamma)},
\end{eqnarray}
where the fact that $c(\gamma,1)=1$ is trivial.
\end{proof}

\begin{remark}
Note that the conclusion is derived here for integers $\alpha,\beta$
and complex $\gamma$.

A slight change of reasoning due to Askey gives \eqref{eq:aomoto}
directly for complex $\alpha,\beta$ and $\gamma$ as follows.
\eqref{eq:recurrence} and the symmetry identity that
$S_N(\alpha,\beta,\gamma)=S_N(\beta,\alpha,\gamma)$ give the ratio
of $S_N(\alpha,\beta,\gamma)$ and $S_N(\alpha,\beta+m,\gamma)$ for
any integer $m$,
\begin{eqnarray}
&&\frac{S_N(\alpha,\beta+m,\gamma)}{S_N(\alpha,\beta,\gamma)}=
\frac{S_N(\alpha,\beta+m,\gamma)}{S_N(\alpha,\beta+m-1,\gamma)}\cdots\frac{S_N(\alpha,\beta+1,\gamma)}{S_N(\alpha,\beta,\gamma)}\\
&&=\prod^N_{j=1}\frac{(\beta+m-1)+\gamma(N-j)}{\alpha+(\beta+m-1)+\gamma(2N-j-1)}\cdots
\prod^N_{j=1}\frac{\beta+\gamma(N-j)}{\alpha+\beta+\gamma(2N-j-1)}\\
&&= \prod^N_{j=1}
\frac{(\beta+\gamma(N-j))_m}{(\alpha+\beta+\gamma(2N-j-1))_m},
\end{eqnarray}
where we have used the notation
$$
(a)_m = \frac{\Gamma(a+m)}{\Gamma(m)};\quad m\geqslant0;
$$
i.e. $(a)_0=1$ and $(a)_m:=a(a+1)\cdots (a+m-1)$ for $m\geqslant1$.
Now
\begin{eqnarray}
S_N(\alpha,\beta+m,\gamma) &=& \int_{[0,1]^N}
\abs{\Delta(x)}^{2\gamma}\prod^N_{j=1}x^{\alpha-1}_j(1-x_j)^{\beta+m-1}\dif x_j\\
&=& m^{-\alpha N-\gamma N(N-1)}
\int_{[0,m]^N}\abs{\Delta(x)}^{2\gamma}\prod^N_{j=1}x^{\alpha-1}_j\Pa{1-\frac{x_j}m}^{\beta+m-1}\dif
x_j.
\end{eqnarray}
Thus
\begin{eqnarray}\label{eq:depends-on-m}
&&S_N(\alpha,\beta,\gamma) =
S_N(\alpha,\beta+m,\gamma)\cdot\prod^N_{j=1}
\frac{(\alpha+\beta+\gamma(2N-j-1))_m}{(\beta+\gamma(N-j))_m}\\
&&= m^{-\alpha N-\gamma N(N-1)}
\int_{[0,m]^N}\abs{\Delta(x)}^{2\gamma}\prod^N_{j=1}x^{\alpha-1}_j\Pa{1-\frac{x_j}m}^{\beta+m-1}\dif
x_j
\prod^N_{j=1}\frac{(\alpha+\beta+\gamma(2N-j-1))_m}{(\beta+\gamma(N-j))_m}\\
&&=\prod^N_{j=1}\Pa{\frac{(\alpha+\beta+\gamma(2N-j-1))_m}{(\beta+\gamma(N-j))_m}m^{-\alpha
-\gamma(N-1)}}
\int_{[0,m]^N}\abs{\Delta(x)}^{2\gamma}\prod^N_{j=1}x^{\alpha-1}_j\Pa{1-\frac{x_j}m}^{\beta+m-1}\dif
x_j.
\end{eqnarray}
Denote $a_j=\alpha+\gamma(N-j), b_j=\beta+\gamma(N-j)$ and
$c_j=\alpha+\beta+\gamma(2N-j-1)$, we have
$$
\frac{(c)_m}{(b)_m}m^{b-c} = \frac{\Gamma(b)}{\Gamma(c)}
\frac{\Gamma(c+m)}{\Gamma(b+m)}m^{b-c} = \frac{\Gamma(b)}{\Gamma(c)}
\frac{\frac{\Gamma(c+m)}{\Gamma(m)m^c}}{\frac{\Gamma(b+m)}{\Gamma(m)m^b}}.
$$
By using the fact that
$$
\lim_{m\to\infty}\frac{\Gamma(m+c)}{\Gamma(m)m^c} = 1~~~(\forall
c\in\real),
$$
it follows that
\begin{eqnarray}
\lim_{m\to\infty}\frac{(c)_m}{(b)_m}m^{b-c} =
\frac{\Gamma(b)}{\Gamma(c)},
\end{eqnarray}
therefore
\begin{eqnarray}
\lim_{m\to\infty}\prod^N_{j=1}\frac{(c_j)_m}{(b_j)_m}m^{b_j-c_j} =
\prod^N_{j=1}\frac{\Gamma(b_j)}{\Gamma(c_j)}.
\end{eqnarray}
Taking $m\to\infty$ in \eqref{eq:depends-on-m} gives rise to the
following:
\begin{eqnarray}
S_N(\alpha,\beta,\gamma) =\Pa{
\prod^N_{j=1}\frac{\Gamma(b_j)}{\Gamma(c_j)}} \int^\infty_0\cdots
\int^\infty_0\abs{\Delta(x)}^{2\gamma}\prod^N_{j=1}x^{\alpha-1}_j\exp\Pa{-x_j}\dif
x_j.
\end{eqnarray}
Furthermore,
\begin{eqnarray}
S_N(\alpha,\beta,\gamma) =\Pa{
\prod^N_{j=1}\frac{\Gamma(a_j)\Gamma(b_j)}{\Gamma(c_j)}}
\frac{\int^\infty_0\cdots
\int^\infty_0\abs{\Delta(x)}^{2\gamma}\prod^N_{j=1}x^{\alpha-1}_j\exp\Pa{-x_j}\dif
x_j}{\prod^N_{j=1}\Gamma(a_j)}.
\end{eqnarray}
By the symmetry of $\alpha$ and $\beta$, it follows that the factor
\begin{eqnarray}
\frac{\int^\infty_0\cdots
\int^\infty_0\abs{\Delta(x)}^{2\gamma}\prod^N_{j=1}x^{\alpha-1}_j\exp\Pa{-x_j}\dif
x_j}{\prod^N_{j=1}\Gamma(a_j)}
\end{eqnarray}
is a symmetric function of $\alpha$ and $\beta$. Since it is
independent of $\beta$, it is also independent of $\alpha$ by the
symmetry of $\alpha$ and $\beta$. Thus
\begin{eqnarray}
\frac{\int^\infty_0\cdots
\int^\infty_0\abs{\Delta(x)}^{2\gamma}\prod^N_{j=1}x^{\alpha-1}_j\exp\Pa{-x_j}\dif
x_j}{\prod^N_{j=1}\Gamma(a_j)} = \frac{\int^\infty_0\cdots
\int^\infty_0\abs{\Delta(x)}^{2\gamma}\prod^N_{j=1}x^{\beta-1}_j\exp\Pa{-x_j}\dif
x_j}{\prod^N_{j=1}\Gamma(b_j)},
\end{eqnarray}
which is denoted by $c(\gamma,N)$. This indicates that
\begin{eqnarray}
&&\int^\infty_0\cdots
\int^\infty_0\abs{\Delta(x)}^{2\gamma}\prod^N_{j=1}x^{\alpha-1}_j\exp\Pa{-x_j}\dif x_j=c(\gamma,N)\prod^N_{j=1}\Gamma(a_j)\\
&&=\prod^N_{j=2}\frac{\Gamma(1+\gamma
j)}{\Gamma(1+\gamma)}\prod^N_{j=1}\Gamma(a_j)=\prod^N_{j=1}\frac{\Gamma(\alpha+\gamma(j-1))\Gamma(1+\gamma
j)}{\Gamma(1+\gamma)}.
\end{eqnarray}
\end{remark}

\begin{remark}
Now we show that
$$
\lim_{n\to\infty}
\frac{\Gamma(n+\alpha)}{\Gamma(n)n^\alpha}=1~(\forall
\alpha\in\real^+\cup\set{0}).
$$
Indeed, in order to prove this fact, we need a limit representation
of the gamma function given by Carl Friedrich Gauss via Euler's
representation of
$n!=\prod^\infty_{k=1}\frac{\Pa{1+\frac1k}^n}{1+\frac nk}$:
$$
\Gamma(z) = \lim_{n\to\infty} \frac{n!n^z}{z(z+1)\cdots (z+n)}.
$$
\begin{eqnarray}
\lim_{n\to\infty} \frac{\Gamma(n+\alpha)}{\Gamma(n)n^\alpha} &=&
\lim_{n\to\infty}
\frac{\Gamma(\alpha)\alpha(\alpha+1)\cdots(\alpha+n-1)}{(n-1)!n^\alpha}\\
&=&\Gamma(\alpha)\lim_{n\to\infty}
\frac{\alpha(\alpha+1)\cdots(\alpha+n)}{n!n^\alpha}\frac{n}{n+\alpha}\\
&=&\Gamma(\alpha)\lim_{n\to\infty}
\frac{\alpha(\alpha+1)\cdots(\alpha+n)}{n!n^\alpha}\lim_{n\to\infty}\frac{n}{n+\alpha}\\
&=&\Gamma(\alpha)\cdot\frac1{\Gamma(\alpha)}\cdot 1 = 1.
\end{eqnarray}
Another short and elementary proof of this fact can be derived from
a result related to inequalities for Gamma function ratios
\cite{Wendel}:
\begin{eqnarray}\label{eq:Wendel}
x(x+a)^{a-1}\leqslant \frac{\Gamma(x+a)}{\Gamma(x)}\leqslant
x^a~~(\forall a\in[0,1]).
\end{eqnarray}
Indeed,
\begin{eqnarray}
\frac{\Gamma(x+\alpha)}{\Gamma(x)}\sim x^\alpha
\end{eqnarray}
as $x\to\infty$ with $\alpha$ fixed.(This was the objective of
Wendel's article.) To show this, first suppose that
$\alpha\in[0,1]$. Then Eq.~\eqref{eq:Wendel} gives
$$
\Pa{1+\frac{\alpha}x}^{\alpha-1}\leqslant
\frac{\Gamma(x+\alpha)}{\Gamma(x)x^\alpha}\leqslant1,
$$
leasing to
$\lim_{x\to\infty}\frac{\Gamma(x+\alpha)}{\Gamma(x)x^\alpha}=1$
since
$\lim_{x\to\infty}\Pa{1+\frac{\alpha}x}^{\alpha-1}=\lim_{x\to\infty}1=1$.
For $\alpha>1$, the statement now follows from the fact that
$$
\frac{\Gamma(x+\alpha)}{\Gamma(x)}=(x+\alpha-1)\frac{\Gamma(x+\alpha-1)}{\Gamma(x)}.
$$
We are done.
\end{remark}

\begin{cor}[Laguerre's integral]\label{cor:Lag-int}
By letting $x_j=y_j/L$ and $\beta=L+1$ in Selberg's integral and
taking the limit $L\to \infty$, we obtain
\begin{eqnarray}
\int^\infty_0\cdots
\int^\infty_0\abs{\Delta(x)}^{2\gamma}\prod^N_{j=1}x^{\alpha-1}_j\exp\Pa{-x_j}\dif
x_j=\prod^N_{j=1}\frac{\Gamma(\alpha+\gamma(j-1))\Gamma(1+\gamma
j)}{\Gamma(1+\gamma)}.
\end{eqnarray}
\end{cor}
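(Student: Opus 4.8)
The plan is to take the closed form of Selberg's integral in \eqref{eq:Selberg-integral} and carry out the substitution $x_j = y_j/L$, $\beta = L+1$, keeping careful track of the powers of $L$. Under this change of variables $[\dif x] = L^{-N}[\dif y]$, $\prod_{j=1}^N x_j^{\alpha-1} = L^{-N(\alpha-1)}\prod_{j=1}^N y_j^{\alpha-1}$, $\prod_{j=1}^N (1-x_j)^{\beta-1} = \prod_{j=1}^N (1-y_j/L)^{L}$, and, since $\Delta$ is homogeneous of degree $\binom{N}{2}$, $\abs{\Delta(x)}^{2\gamma} = L^{-\gamma N(N-1)}\abs{\Delta(y)}^{2\gamma}$. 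Collecting the exponents of $L$ yields
$$
L^{N\alpha+\gamma N(N-1)}\,S_N(\alpha,L+1,\gamma) = \int_{[0,L]^N}\abs{\Delta(y)}^{2\gamma}\Pa{\prod_{j=1}^N y_j^{\alpha-1}}\Pa{\prod_{j=1}^N\Pa{1-\frac{y_j}{L}}^{L}}[\dif y],
$$
and the whole proof then consists of letting $L\to\infty$ on both sides.

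For the right-hand side I would view the integrand as a function on all of $[0,\infty)^N$, multiplied by the characteristic function of $[0,L]^N$. Using the elementary bound $(1-t/L)^L\le e^{-t}$ for $0\le t\le L$, together with the pointwise limit $(1-t/L)^L\to e^{-t}$, the integrand is dominated for every $L$ by $\abs{\Delta(y)}^{2\gamma}\prod_{j=1}^N y_j^{\alpha-1}e^{-y_j}$; this majorant is integrable over $[0,\infty)^N$ under the stated restrictions on $\alpha$ and $\gamma$ (estimate $\abs{\Delta(y)}^{2\gamma}$ by a finite sum of monomials in the $y_j$ and integrate against $\prod e^{-y_j}$). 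Dominated convergence then shows the right-hand side tends to $\int_0^\infty\cdots\int_0^\infty\abs{\Delta(y)}^{2\gamma}\prod_{j=1}^N y_j^{\alpha-1}e^{-y_j}[\dif y]$, which is the quantity we want to evaluate.

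For the left-hand side I would substitute $\beta = L+1$ into the product \eqref{eq:Selberg-integral}; the only factor depending on $L$ is $\prod_{j=0}^{N-1}\frac{\Gamma(L+1+\gamma j)}{\Gamma(\alpha+L+1+\gamma(N+j-1))}$. Each of these $N$ ratios is of the form $\Gamma(x)/\Gamma(x+a)$ with $x = L+1+\gamma j$ and $a = \alpha+\gamma(N-1)$, so by the asymptotic $\Gamma(x+a)/\Gamma(x)\sim x^a$ as $x\to\infty$ — established in the Remark immediately preceding this corollary — each ratio behaves like $L^{-\alpha-\gamma(N-1)}$ and the product behaves like $L^{-N\alpha-\gamma N(N-1)}$, which exactly cancels the prefactor $L^{N\alpha+\gamma N(N-1)}$. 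Therefore $L^{N\alpha+\gamma N(N-1)}S_N(\alpha,L+1,\gamma)\to\prod_{j=0}^{N-1}\frac{\Gamma(\alpha+\gamma j)\Gamma(\gamma+1+\gamma j)}{\Gamma(1+\gamma)}$, and after relabelling the product index this equals $\prod_{j=1}^N\frac{\Gamma(\alpha+\gamma(j-1))\Gamma(1+\gamma j)}{\Gamma(1+\gamma)}$, finishing the proof.

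The hard part will be the passage to the limit under the integral sign: one needs a single integrable majorant valid uniformly in $L$, and this hinges on the inequality $(1-t/L)^L\le e^{-t}$ on $[0,L]$ plus an integrability check for $\abs{\Delta(y)}^{2\gamma}\prod y_j^{\alpha-1}e^{-y_j}$ in the allowed parameter range; everything else is bookkeeping of powers of $L$ and a direct appeal to the Gamma-ratio asymptotic already proved above.
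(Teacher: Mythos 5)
Your proposal is correct and follows exactly the route the paper intends: the corollary's own statement prescribes the substitution $x_j=y_j/L$, $\beta=L+1$ and the limit $L\to\infty$, and the Remark immediately preceding it carries out the equivalent computation (with $\beta+m$ in place of $L+1$), using the same Gamma-ratio asymptotic $\Gamma(x+a)/\Gamma(x)\sim x^a$ to cancel the power $L^{N\alpha+\gamma N(N-1)}$. Your explicit dominated-convergence justification via $(1-t/L)^L\leqslant e^{-t}$ is a welcome piece of rigor that the paper omits, but it does not change the argument.
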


\begin{cor}[Hermite's integral]\label{cor:Her-int}
By letting $x_j=y_j/L$ and $\alpha=\beta=\lambda L^2+1$ in Selberg's
integral and taking the limit $L\to \infty$, we obtain
\begin{eqnarray}\label{eq:Hermite-integral}
\int^{+\infty}_{-\infty}\cdots
\int^{+\infty}_{-\infty}\abs{\Delta(x)}^{2\gamma}\prod^N_{j=1}\exp\Pa{-\lambda
x^2_j}\dif
x_j=(2\pi)^{N/2}(2\lambda)^{-N(\gamma(N-1)+1)/2}\prod^N_{j=1}\frac{\Gamma(1+\gamma
j)}{\Gamma(1+\gamma)}.
\end{eqnarray}
\end{cor}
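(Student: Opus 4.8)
The plan is to recover Hermite's integral \eqref{eq:Hermite-integral} as a confluent limit of Selberg's integral \eqref{eq:Selberg-integral}, following exactly the substitution indicated in the statement. I would start from \eqref{eq:Selberg-integral} with $\alpha=\beta=\lambda L^2+1$ and $\gamma$ fixed, and make the change of variables $x_j=\tfrac12+\tfrac{y_j}{2L}$ on $[0,1]^N$; this is a bijection onto $[-L,L]^N$ with constant Jacobian $(2L)^{-N}$. Since $x_j(1-x_j)=\tfrac14\Pa{1-\tfrac{y_j^2}{L^2}}$ and $\Delta(x)=(2L)^{-\binom N2}\Delta(y)$, one obtains
$$S_N(\lambda L^2+1,\lambda L^2+1,\gamma)=4^{-N\lambda L^2}(2L)^{-\gamma N(N-1)-N}\int_{[-L,L]^N}\abs{\Delta(y)}^{2\gamma}\prod_{j=1}^N\Pa{1-\frac{y_j^2}{L^2}}^{\lambda L^2}[\dif y].$$

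Next I would pass to the limit $L\to\infty$. The elementary inequality $\log(1-t)\le-t$ gives the pointwise bound $\Pa{1-\tfrac{y_j^2}{L^2}}^{\lambda L^2}\le e^{-\lambda y_j^2}$ for $\abs{y_j}<L$, so the integrand is dominated by $\abs{\Delta(y)}^{2\gamma}\prod_j e^{-\lambda y_j^2}$, which is integrable on $\real^N$ when $\re(\gamma)>-\tfrac12$ (the only delicate point is the diagonal singularity $\abs{y_i-y_j}^{2\re(\gamma)}$, while the Gaussian controls the behaviour at infinity). Dominated convergence then yields
$$\int_{\real^N}\abs{\Delta(y)}^{2\gamma}\prod_{j=1}^N e^{-\lambda y_j^2}[\dif y]=\lim_{L\to\infty}4^{N\lambda L^2}(2L)^{\gamma N(N-1)+N}\,S_N(\lambda L^2+1,\lambda L^2+1,\gamma),$$
so it only remains to evaluate this limit from the closed form \eqref{eq:Selberg-integral}.

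Put $M=\lambda L^2$. In \eqref{eq:Selberg-integral} the third Gamma factor contributes $\prod_{j=0}^{N-1}\frac{\Gamma(\gamma+1+\gamma j)}{\Gamma(1+\gamma)}=\prod_{j=1}^N\frac{\Gamma(1+\gamma j)}{\Gamma(1+\gamma)}$, which is independent of $M$ and already produces the product of Gamma's in the target formula. For the $M$-dependent piece $\prod_{j=0}^{N-1}\frac{\Gamma(M+1+\gamma j)^2}{\Gamma(2M+2+\gamma(N+j-1))}$ I would apply the Legendre duplication formula $\Gamma(2w)=\frac{2^{2w-1}}{\sqrt\pi}\Gamma(w)\Gamma(w+\tfrac12)$ with $w=M+1+\tfrac\gamma2(N+j-1)$, which rewrites each denominator as a product of two Gamma's with argument of order $M$. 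The ratio asymptotics $\Gamma(M+a)/\Gamma(M)\sim M^{a}$ as $M\to\infty$ — equivalently the limit $\Gamma(n+\alpha)/(\Gamma(n)n^\alpha)\to1$ recalled above, and in any case a consequence of Stirling's formula $\Gamma(z)\sim\sqrt{2\pi}\,z^{z-1/2}e^{-z}$ — then give, for each $j$, a factor asymptotic to $\sqrt\pi\,2^{-2M-1-\gamma(N+j-1)}\,M^{-1/2-\gamma(N-1-j)}$. Multiplying over $j=0,\dots,N-1$, then multiplying by $4^{N\lambda L^2}(2L)^{\gamma N(N-1)+N}$ and substituting $M=\lambda L^2$, the factors $2^{\pm 2NM}$ cancel and every power of $L$ cancels as well; the surviving constant is $\pi^{N/2}\,2^{-\gamma N(N-1)/2}\,\lambda^{-N(1+\gamma(N-1))/2}$, which equals $(2\pi)^{N/2}(2\lambda)^{-N(\gamma(N-1)+1)/2}$. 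Reinstating the Gamma product gives \eqref{eq:Hermite-integral}.

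The argument is conceptually routine once the dominated-convergence step is in place; the only place demanding care is the final computation — tracking the powers of $2$, of $L$, of $\lambda$ and of $M$ through the duplication-plus-ratio estimate so that the $L$-dependence cancels exactly and the correct numerical constant $\pi^{N/2}2^{-\gamma N(N-1)/2}\lambda^{-N(1+\gamma(N-1))/2}$ emerges. (One may equally well skip the duplication formula and feed the Gamma factors directly into Stirling's formula; the bookkeeping is identical, and the same remark gives Laguerre's integral \eqref{eq:Aomoto-integral}--Corollary as an even simpler special case of the same scaling limit.)
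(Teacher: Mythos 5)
Your proof is correct and follows exactly the route the paper indicates (a confluent limit of Selberg's integral combined with the Gamma-ratio asymptotics recalled in the preceding remark); I have checked the exponent bookkeeping and the surviving constant $\pi^{N/2}2^{-\gamma N(N-1)/2}\lambda^{-N(1+\gamma(N-1))/2}$ does equal $(2\pi)^{N/2}(2\lambda)^{-N(\gamma(N-1)+1)/2}$. Your one genuine refinement is the \emph{centered} substitution $x_j=\tfrac12+\tfrac{y_j}{2L}$, which is what the statement's $x_j=y_j/L$ must really mean, since the weight $\bigl(x(1-x)\bigr)^{\lambda L^2}$ concentrates at $x=\tfrac12$ and the uncentered change of variables would only produce an integral over $[0,\infty)^N$ rather than $\real^N$.
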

\begin{remark}
For an integer $\gamma$, the last equation can also be written as a
finite algebraic identity. Firstly, we note that
\begin{eqnarray}
&&\int^{+\infty}_{-\infty}
\exp\Pa{-a^2x^2-2\mathrm{i}ax\lambda}x^n\dif x =
\Pa{\frac{\mathrm{i}}{2a}\frac{\dif}{\dif\lambda}}^n
\int^{+\infty}_{-\infty} \exp\Pa{-a^2x^2-2\mathrm{i}ax\lambda}\dif x\\
&&=\Pa{\frac{\mathrm{i}}{2a}\frac{\dif}{\dif\lambda}}^n
\Pa{\exp(-\lambda^2)\int^{+\infty}_{-\infty}
\exp\Pa{-(ax+\mathrm{i}\lambda)^2}\dif x}\\
&&=\frac{\sqrt{\pi}}{a}\Pa{\frac{\mathrm{i}}{2a}\frac{\dif}{\dif\lambda}}^n
\exp(-\lambda^2),
\end{eqnarray}
letting $\lambda=0$ in the above reasoning, we get
\begin{eqnarray}
\int^{+\infty}_{-\infty} \exp\Pa{-a^2x^2}x^n\dif x
=\frac{\sqrt{\pi}}{a}\left.\Pa{\frac{\mathrm{i}}{2a}\frac{\dif}{\dif\lambda}}^n
\exp(-\lambda^2)\right|_{\lambda=0}
=\frac{\sqrt{\pi}}{a}\left.\Pa{\frac{\mathrm{i}}{2a}\frac{\dif}{\dif
x}}^n \exp(-x^2)\right|_{x=0} ,
\end{eqnarray}
we replace $a$ by $\sqrt{a}$ in the last equation, we get
\begin{eqnarray}
\int^{+\infty}_{-\infty} \exp\Pa{-ax^2}x^n\dif x
=\sqrt{\frac{\pi}{a}}\left.\Pa{\frac{\mathrm{i}}{2\sqrt{a}}\frac{\dif}{\dif
x}}^n \exp(-x^2)\right|_{x=0},
\end{eqnarray}
thus \eqref{eq:Hermite-integral} therefore takes the form:
\begin{eqnarray}
\Pa{\frac{\mathrm{i}}{2\sqrt{a}}}^{\gamma
N(N-1)}\left.\prod_{1\leqslant p<q\leqslant
N}\Pa{\frac{\partial}{\partial x_p} - \frac{\partial}{\partial
x_q}}^{2\gamma}
\exp\Pa{-\sum^N_{j=1}x^2_j}\right|_{(x_1,\ldots,x_N)=0} =
(2a)^{-\gamma N(N-1)/2}\prod^N_{j=1}\frac{(j\gamma)!}{\gamma!}.
\end{eqnarray}
Replacing the exponential by its power series expansion, one notes
that the term $(-\sum^N_{j=1}x^2_j)^\ell$ gives zero on
differentiation if $\ell<\gamma N(N-1)/2$, and leaves a homogeneous
polynomial of order $\ell-\gamma N(N-1)/2$ in the variables
$x_1,\ldots,x_N$, if $\ell>\gamma N(N-1)/2$. On setting
$x_j=0,j=1,\ldots,N$, one sees that therefore that there is only one
term, corresponding to $\ell=\gamma N(N-1)/2$, which gives a
non-zero contribution. So
\begin{eqnarray}
\prod_{1\leqslant p<q\leqslant N}\Pa{\frac{\partial}{\partial x_p} -
\frac{\partial}{\partial x_q}}^{2\gamma}
\Pa{\sum^N_{j=1}x^2_j}^\ell= 2^\ell
\ell!\prod^N_{j=1}\frac{(j\gamma)!}{\gamma!},
\end{eqnarray}
where $\ell = \gamma N(N-1)/2$. If $P(x):=P(x_1,\ldots,x_N)$ and
$Q(x):=Q(x_1,\ldots,x_N)$ are homogeneous polynomials in
$x:=(x_1,\ldots,x_N)$ of the same degree, then a little reflection
shows that $P(\partial/\partial x)Q(x)$ is a constant which is also
equal to $Q(\partial/\partial x)P(x)$. Thus one can interchange the
roles of $x_j$ and $\partial/\partial x_j$ to get
\begin{eqnarray}
\Pa{\sum^N_{k=1}\frac{\partial^2}{\partial x^2_k}}^\ell
\prod_{1\leqslant i<j\leqslant N}(x_i-x_j)^{2\gamma} =2^\ell
\ell!\prod^N_{j=1}\frac{(j\gamma)!}{\gamma!},
\end{eqnarray}
where $\ell = \gamma N(N-1)/2$.
\end{remark}

\begin{cor}
It holds that
\begin{eqnarray}
\int_{[0,2\pi]^N}
\abs{\Delta(e^{\mathrm{i}\theta_1},\ldots,e^{\mathrm{i}\theta_N})}^{2\gamma}\prod^N_{k=1}\frac{\dif\theta_k}{2\pi}&=&\int^{2\pi}_0\cdots
\int^{2\pi}_0 \prod_{1\leqslant i<j\leqslant N}
\abs{e^{\mathrm{i}\theta_i} - e^{\mathrm{i}\theta_j}}^{2\gamma}
\prod^N_{k=1}\frac{\dif\theta_k}{2\pi}\\
&=&\frac{(N\gamma)!}{(\gamma!)^N},
\end{eqnarray}
where $\gamma$ is non-negative integer.
\end{cor}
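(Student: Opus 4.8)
The plan is to turn the multiple integral into a constant-term identity and then settle that identity by the symmetric case of Dyson's conjecture. Put $z_k=e^{\sqrt{-1}\theta_k}$; since $\frac1{2\pi}\int_0^{2\pi}z^m\,\dif\theta=\delta_{m,0}$ for $m\in\integer$, the integral on the left is the coefficient of $z_1^0\cdots z_N^0$ in the integrand viewed as a Laurent polynomial in $z_1,\ldots,z_N$ (legitimate because $\gamma$ is a non-negative integer). First I would rewrite that integrand: on the unit circle $\overline{z_k}=z_k^{-1}$, so
\begin{eqnarray*}
\abs{\Delta(e^{\sqrt{-1}\theta_1},\ldots,e^{\sqrt{-1}\theta_N})}^{2\gamma}
&=&\prod_{i<j}\Pa{(z_i-z_j)(z_i^{-1}-z_j^{-1})}^\gamma
=\prod_{i<j}\Pa{-\frac{(z_i-z_j)^2}{z_iz_j}}^\gamma\\
&=&\prod_{1\le i\ne j\le N}\Pa{1-\frac{z_i}{z_j}}^\gamma,
\end{eqnarray*}
the last step being the routine collection of monomial and sign factors already carried out, in the case $\gamma=1$, in the metric-ball subsection. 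Hence the corollary is equivalent to
\begin{eqnarray*}
\op{CT}_{z_1,\ldots,z_N}\prod_{1\le i\ne j\le N}\Pa{1-\frac{z_i}{z_j}}^\gamma=\frac{(N\gamma)!}{(\gamma!)^N}.
\end{eqnarray*}

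The second step is this constant-term identity, which I would prove by Good's argument. Set $C(a_1,\ldots,a_n):=\op{CT}\prod_{i\ne j}(1-x_i/x_j)^{a_j}$ for non-negative integers $a_i$; the target is $C(\gamma,\ldots,\gamma)$ with $n=N$. Multiplying the Lagrange-interpolation (partial-fraction) identity $\sum_{j=1}^n\prod_{k\ne j}\frac{x_k}{x_k-x_j}=1$, equivalently $\sum_j\prod_{k\ne j}(1-x_j/x_k)^{-1}=1$, by $\prod_{i\ne j}(1-x_i/x_j)^{a_j}$ and then reading off constant terms yields the Pascal-type recursion
\begin{eqnarray*}
C(a_1,\ldots,a_n)=\sum_{j=1}^n C(a_1,\ldots,a_j-1,\ldots,a_n),
\end{eqnarray*}
with base cases $n=1$ (empty product, $C(a_1)=1$) and ``some $a_j=0$'' (the $x_j$-dependence then integrates out, dropping the problem to $n-1$ variables). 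The multinomial coefficient $\binom{a_1+\cdots+a_n}{a_1,\ldots,a_n}$ obeys the same recursion and the same base cases, so induction on $a_1+\cdots+a_n$ forces $C\equiv\binom{a_1+\cdots+a_n}{a_1,\ldots,a_n}$, and $a_i=\gamma$ gives $(N\gamma)!/(\gamma!)^N$. If one prefers to stay inside this appendix, the same identity can instead be extracted from Morris's circular analogue of Selberg's integral \eqref{eq:Selberg-integral} at parameters $a=b=0$, that evaluation being deducible from \eqref{eq:Selberg-integral} by the change of variables matching the weight $x^{\alpha-1}(1-x)^{\beta-1}$ on $[0,1]$ to a circular Beta weight, after which $(N\gamma)!$ emerges by telescoping $\prod_{j=0}^{N-1}\bigl((j+1)\gamma\bigr)!/(j\gamma)!$.

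The hard part will be justifying the recursion for $C$. The delicate point is that multiplying by $\prod_{k\ne j}(1-x_j/x_k)^{-1}$ lowers the exponents of several individual factors $(1-x_j/x_k)$ rather than cleanly decrementing the single parameter $a_j$; one must therefore argue that the constant term of the $j$-th summand equals $C(a_1,\ldots,a_j-1,\ldots,a_n)$ anyway, which is done using the invariance of $\op{CT}$ under the inversion $x_k\mapsto x_k^{-1}$ together with a relabelling of indices, and the observation that the surplus factors so produced contribute only their own constant term $1$. Once the recursion is in place the induction and the final specialisation $a_i=\gamma$ are routine, and the outcome is consistent with the $\gamma=1$ evaluation $\op{CT}\prod_{i<j}\abs{e^{\sqrt{-1}\theta_i}-e^{\sqrt{-1}\theta_j}}^2=N!$ already obtained in the metric-ball subsection.
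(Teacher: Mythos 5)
Your argument is correct, but it is not the paper's: the paper states this corollary without proof in the Selberg appendix (implicitly as a circular, Morris-type consequence of \eqref{eq:Selberg-integral}), and only the case $\gamma=1$ is worked out explicitly in the metric-ball subsection by expanding the two Vandermonde determinants over $S_N\times S_N$ and matching permutations. What you supply instead is a self-contained proof of the equal-parameter Dyson constant-term identity via Good's recursion, which is both more elementary (no contour deformation or limit from the Selberg weight $x^{\alpha-1}(1-x)^{\beta-1}$ is needed) and strictly more general, since it yields $\op{CT}\prod_{i\neq j}(1-x_i/x_j)^{a_i}=\binom{a_1+\cdots+a_N}{a_1,\ldots,a_N}$ for arbitrary non-negative integer parameters, of which the corollary is the case $a_i\equiv\gamma$. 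The reduction to a constant term and the identity $\prod_{i<j}\abs{z_i-z_j}^{2\gamma}=\prod_{i\neq j}(1-z_i/z_j)^{\gamma}$ on the unit circle are both right. One remark on your final paragraph: the "delicate point" you worry about is an artifact of indexing the exponent by the denominator variable. If you instead define $F(x;a)=\prod_{i\neq j}(1-x_i/x_j)^{a_i}$ (exponent attached to the numerator index), then all factors carrying $x_j$ in the numerator are exactly $\prod_{k\neq j}(1-x_j/x_k)^{a_j}$, so multiplying by the Lagrange term $\prod_{k\neq j}(1-x_j/x_k)^{-1}$ produces precisely $F(x;a-e_j)$ as an identity of rational functions, before any constant term is taken; the recursion $C(a)=\sum_j C(a-e_j)$ then needs no inversion-and-relabelling repair, and the induction on $a_1+\cdots+a_N$ against the multinomial recursion closes immediately. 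Your sanity check against the $\gamma=1$ evaluation $N!$ from the metric-ball subsection is consistent with both routes.
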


\begin{cor}
It holds that when there is no overlap in the two sets of factors,
the result is
\begin{eqnarray}
\cB(K_1,K_2) &=& \int^1_0\cdots \int^1_0
\prod^{K_1}_{i=1}x_i\prod^{K_1+K_2}_{j=K_1+1}(1-x_j)\Phi(x)\dif x\\
&=& \cI_N(\alpha,\beta,\gamma)
\frac{\prod^{K_1}_{i=1}(\alpha+\gamma(N-i))\prod^{K_2}_{j=1}(\beta+\gamma(N-j))}{\prod^{K_1+K_2}_{k=1}(\alpha+\beta+\gamma(2N-k-1))},
\end{eqnarray}
where $K_1,K_2\geqslant0, K_1+K_2\leqslant N$, and when there is
overlap
\begin{eqnarray}
\cC(K_1,K_2,K_3) &=& \int^1_0\cdots \int^1_0
\prod^{K_1}_{i=1}x_i\prod^{K_1+K_2-K_3}_{j=K_1+1-K_3}(1-x_j)\Phi(x)\dif x\\
&=&
\cB(K_1,K_2)\prod^{K_3}_{k=1}\frac{\alpha+\beta+\gamma(N-k-1)}{\alpha+\beta+1+\gamma(2N-k-1)},
\end{eqnarray}
where $K_1,K_2,K_3\geqslant0, K_1+K_2-K_3\leqslant N$.
\end{cor}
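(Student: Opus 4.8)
The plan is to obtain both formulas by Aomoto's integration-by-parts method — the device already used above to prove \eqref{eq:Aomoto-integral} — carried over to products that also carry $(1-x_j)$ factors. Write $\langle f\rangle=\int_{[0,1]^N}f\,\Phi\,[\dif x]\big/S_N(\alpha,\beta,\gamma)$ for Aomoto's normalized Selberg average, and read $\cI_N(\alpha,\beta,\gamma)$ as $S_N(\alpha,\beta,\gamma)=\int_{[0,1]^N}\Phi(x)[\dif x]$, so that $\cB(K_1,K_2)=\cI_N\,\big\langle\prod_{i=1}^{K_1}x_i\prod_{j=K_1+1}^{K_1+K_2}(1-x_j)\big\rangle$ and $\cC(K_1,K_2,K_3)=\cI_N\,\big\langle\prod_{i=1}^{K_1}x_i\prod_{j=K_1-K_3+1}^{K_1+K_2-K_3}(1-x_j)\big\rangle$; it suffices to evaluate these averages. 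Since $\Phi$ is symmetric in $x_1,\dots,x_N$, the average $\big\langle\prod_{i\in I}x_i\prod_{j\in J}(1-x_j)\big\rangle$ depends only on the triple $(p,q,r)=(|I|,|J|,|I\cap J|)$ — valid whenever $r\le\min(p,q)$ and $p+q-r\le N$; call it $\tau(p,q,r)$. Then the two claims together assert
\[ \tau(p,q,r)=\frac{\prod_{i=1}^{p}(\alpha+\gamma(N-i))\,\prod_{j=1}^{q}(\beta+\gamma(N-j))}{\prod_{k=1}^{p+q}(\alpha+\beta+\gamma(2N-k-1))}\cdot\prod_{k=1}^{r}\frac{\alpha+\beta+\gamma(N-k-1)}{\alpha+\beta+1+\gamma(2N-k-1)}. \]
Two base facts start the induction: $\tau(K,0,0)$ is exactly Aomoto's identity \eqref{eq:Aomoto-integral}, and the substitution $x_j\mapsto1-x_j$ (which fixes $|\Delta(x)|^{2\gamma}$ and swaps $\alpha\leftrightarrow\beta$ in $\Phi$) gives $\tau(0,K,0)=\prod_{j=1}^{K}\frac{\beta+\gamma(N-j)}{\alpha+\beta+\gamma(2N-j-1)}$.

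First I would prove $\cB$, i.e. the slice $r=0$, by induction on $K_1$. Fixing an index $1\in I$ (its only integrand factor being $x_1$), I integrate $\int_0^1\partial_{x_1}\big(x_1^{a}G_0\,\Phi\big)\dif x_1=0$, where $G_0$ collects the remaining factors; the boundary term vanishes because $\mathrm{Re}(\alpha)>0$ kills $x_1^{a}\Phi$ at $x_1=0$ while the surviving factor $(1-x_1)^{\beta-1}$ together with $x_1^{a}$ kills it at $x_1=1$ for $\mathrm{Re}(\beta)>0$. Using $\partial_{x_1}\Phi=\Phi\big(\tfrac{\alpha-1}{x_1}-\tfrac{\beta-1}{1-x_1}+2\gamma\sum_{\ell\ne1}\tfrac1{x_1-x_\ell}\big)$ and subtracting the $a=1$ and $a=2$ relations (this cancels the $\tfrac1{1-x_1}$ term, since $\tfrac{x_1-x_1^2}{1-x_1}=x_1$) gives $0=\alpha\,\tau(K_1-1,K_2,0)-(\alpha+\beta)\,\tau(K_1,K_2,0)+2\gamma\sum_{\ell\ne1}\big\langle\tfrac{x_1(1-x_1)G_0}{x_1-x_\ell}\big\rangle$. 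Each sum term is evaluated by the $x_1\leftrightarrow x_\ell$ antisymmetrization, in three cases according to whether $x_\ell$ carries an $x$-factor of $G_0$, a $(1-x)$-factor, or neither; collecting contributions yields the recurrence
\[ \big(\alpha+\beta+\gamma(2N-K_1-2K_2-1)\big)\tau(K_1,K_2,0)=\big(\alpha+\gamma(N-K_1-K_2)\big)\tau(K_1-1,K_2,0)+\gamma K_2\,\tau(K_1-1,K_2+1,0), \]
whose right side lies at level $K_1-1$ and is in range because $K_1+K_2\le N$. Together with the two base cases this pins down every $\tau(K_1,K_2,0)$, and one checks directly that the claimed product solves it: clearing denominators with $\Gamma(z+1)=z\Gamma(z)$ reduces the check to the identity $A(A+C-\gamma K_2)=(A-\gamma K_2)(A+C)+\gamma K_2 C$ with $A=\alpha+\gamma(N-K_1)$ and $C=\beta+\gamma(N-K_2-1)$.

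For $\cC$ I would induct on the overlap $r=K_3$, the slice $r=0$ being $\cB$. Now I fix an index $1\in I\cap J$, whose integrand factor is $x_1(1-x_1)$, and integrate $\int_0^1\partial_{x_1}\big(x_1^{a}(1-x_1)^{b}\,h\,\Phi\big)\dif x_1=0$ for small $a,b$ (say $(a,b)=(2,1)$), with $h$ the product of the remaining factors; the boundary term again vanishes thanks to $\mathrm{Re}(\alpha),\mathrm{Re}(\beta)>0$. Expanding the derivative and then using $x_1^2=x_1-x_1(1-x_1)$ to re-express the resulting quadratic averages in terms of $\tau$'s — for instance $\langle x_1^2 h\rangle=\tau(K_1,K_2-1,K_3-1)-\tau(K_1,K_2,K_3)$ — produces a relation in which $\tau(K_1,K_2,K_3)$ appears while every other $\tau$ has overlap $K_3-1$; handling the $\sum_\ell\langle\,\cdot/(x_1-x_\ell)\rangle$ terms by the same antisymmetrization (this time re-expanding several cubic monomials in $x_1,x_\ell$ again via $x_1^2=x_1-x_1(1-x_1)$) turns it into a recurrence from the $r=K_3-1$ layer. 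Iterating down to $r=0$ and comparing with the closed form, the step from overlap $k$ to overlap $k-1$ supplies exactly the factor $\tfrac{\alpha+\beta+\gamma(N-k-1)}{\alpha+\beta+1+\gamma(2N-k-1)}$, which is the content of the $\cC$ formula.

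The main obstacle is the combinatorial bookkeeping in the antisymmetrization, not any single estimate. Because swapping $x_1$ with a $(1-x_\ell)$-variable converts an $x$-factor into a $(1-x)$-factor and alters the overlap, one must check that the three-parameter family $\tau(p,q,r)$ is genuinely closed under all the recurrences produced, carry along the several neighbouring $\tau$'s (the index-shifted $\tau(K_1-1,K_2+1,0)$ in the $\cB$ recurrence, the various overlap-$(K_3-1)$ terms in the $\cC$ recurrence), and verify that, after substituting the conjectured products, the $\gamma$-dependent coefficients telescope; the $\cC$ case is the more delicate one, since the weight $x_1^2(1-x_1)$ forces repeated use of $x_1^2=x_1-x_1(1-x_1)$ before the averages collapse back to $\tau$'s. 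Once the recurrences are in hand, everything else is elementary Gamma-function algebra.
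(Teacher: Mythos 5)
The paper states this corollary without proof, so your proposal is supplying an argument rather than paralleling one; what you propose is the natural Aomoto-type derivation and, as far as I can check, it is correct. Your $\cB$ recurrence is exactly what the integration by parts yields: carrying out the three-case antisymmetrization (the $K_1-1$ indices with an $x$-factor each contribute $-\tfrac12\tau(K_1,K_2,0)$, the $K_2$ indices with a $(1-x)$-factor contribute $\tfrac12\tau(K_1-1,K_2+1,0)$, and each of the $N-K_1-K_2$ free indices contributes $\tfrac12\tau(K_1-1,K_2,0)-\tau(K_1,K_2,0)$) gives precisely
\[
\bigl(\alpha+\beta+\gamma(2N-K_1-2K_2-1)\bigr)\tau(K_1,K_2,0)=\bigl(\alpha+\gamma(N-K_1-K_2)\bigr)\tau(K_1-1,K_2,0)+\gamma K_2\,\tau(K_1-1,K_2+1,0),
\]
and your reduction of the verification to $A(A+C-\gamma K_2)=(A-\gamma K_2)(A+C)+\gamma K_2C$ is exact. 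For $\cC$, the weight $x_1^2(1-x_1)$ does close within the family: the non-sum terms give $(\alpha+\beta+1)\tau(K_1,K_2,K_3)-\beta\,\tau(K_1,K_2-1,K_3-1)$, the antisymmetrization produces only $\tau(K_1,K_2,K_3)$, $\tau(K_1,K_2-1,K_3-1)$ and (from free indices) $\tau(K_1+1,K_2-1,K_3-1)$, and the coefficient of the unknown assembles to exactly $\alpha+\beta+1+\gamma(2N-K_3-1)$ — the new denominator factor — while the residual identity in $F(p,q)$ reduces to $(D-m)b=(b-m)D+ma$ with $D=a+b$, which is trivially true. Two small caveats: the boundary term at $x_1=1$ vanishes only for $\re(\beta)>1$ (your "$\re(\beta)>0$" is not enough, since $(1-x_1)^{\beta-1}$ need not vanish there); one proves the identity for $\re(\beta)$ large and extends by analyticity, the same gloss the paper uses in its Aomoto proof. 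And the term $\tau(K_1+1,K_2-1,K_3-1)$ only arises when a free index exists, which is exactly when its parameters are admissible, so the family is genuinely closed. With those points made explicit, the argument is complete.
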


\begin{remark}
Still another integral of interest is the average value of the
product of traces of the matrix in the circular ensembles. For
example,
\begin{eqnarray}
S_N(p,\gamma):=\frac1{(2\pi)^N}\frac{(\gamma!)^N}{(N\gamma)!}\int^{2\pi}_0\cdots
\int^{2\pi}_0\abs{\sum^N_{k=1}
e^{\mathrm{i}\theta_k}}^{2p}\prod_{1\leqslant i<j\leqslant N}
\abs{e^{\mathrm{i}\theta_i} - e^{\mathrm{i}\theta_j}}^{2\gamma}
\prod^N_{k=1}\dif\theta_k
\end{eqnarray}
is known for $\gamma=1$ that $S_N(p,1)$ gives the number of
permutations of $(1,\ldots,k)$ in which the length of the longest
increasing subsequence is less than or equal to $N$. One has in
particular,
$$
S_N(p,1)=p!,\quad 0\leqslant p\leqslant N.
$$
It is desirable to know the integrals $S_N(p,\gamma)$ for a general
$\gamma$.
\end{remark}
The following short proof of Selberg's formula is from \cite{aar}.
\begin{proof}[Anderson's proof of Selberg's Integral]
Anderson's proof depends on Dirichlet's generalization of the beta
integral given in the following: For $\re(\alpha_j)>0$,
\begin{eqnarray}\label{eq:Dirichlet-integral}
\int\cdots\int_V p^{\alpha_0-1}_0p^{\alpha_1-1}_1\cdots
p^{\alpha_n-1}_n\dif p_0\cdots\dif p_{n-1} =
\frac{\prod^n_{j=1}\Gamma(\alpha_j)}{\Gamma\Pa{\sum^n_{j=1}\alpha_j}},
\end{eqnarray}
where $V$ is the set $p_j\geqslant0,\sum^n_{j=0}p_j=1$. The formula
is used after a change of variables. To see this, first consider
Selberg's integral, which may be written as
\begin{eqnarray}
S_n = n!A_n(\alpha,\beta,\gamma):= n!\int^1_0\int^{x_n}_0\cdots
\int^{x_2}_0\abs{\phi(0)}^{\alpha-1}\abs{\phi(1)}^{\beta-1}\abs{\Delta_\phi}^{\gamma}\dif
x_1\cdots\dif x_n,
\end{eqnarray}
where $0<x_1<x_2<\cdots<x_n<1$,
\begin{eqnarray}
\phi(t)=\prod^n_{j=1}(t-x_j) =
t^n-\phi_{n-1}t^{n-1}+\cdots+(-1)^n\phi_0
\end{eqnarray}
and $\Delta_\phi$ is the \emph{discriminant} of $\phi$, so that
$$
\abs{\Delta_\phi} = \abs{\prod^n_{j=1}\phi'(x_j)} =
\abs{\prod_{1\leqslant i<j\leqslant n}(x_i-x_j)}^2.
$$
We now change the variables from $x_1,\ldots,x_n$ to
$\phi_0,\ldots,\phi_{n-1}$, which are the elementary symmetric
functions of the $x_i$'s. In fact, we have:
\begin{eqnarray}
A_n(\alpha,\beta,\gamma) = \int
\abs{\phi(0)}^{\alpha-1}\abs{\phi(1)}^{\beta-1}\abs{\Delta_\phi}^{\gamma-\frac12}\dif
\phi_0\dif\phi_1\cdots\dif \phi_{n-1},
\end{eqnarray}
where the integration is over all points
$(\phi_0,\phi_1,\ldots,\phi_{n-1})$ in which the $\phi_j$ are
elementary symmetric functions of $x_1,\ldots,x_n$ with
$0<x_1<\cdots<x_n$. Indeed, it is sufficient to prove that the
Jacobian
$$
\det\Pa{\Br{\frac{\partial \phi_i}{\partial x_j}}} =
\sqrt{\abs{\Delta_\phi}}.
$$
Observe that two columns of the Jacobian are equal when $x_i=x_j$.
Thus $\prod_{i<j}(x_i-x_j)$ is a factor of the determinant.
Moreover, the Jacobian and $\prod_{i<j}(x_i-x_j)$ are homogeneous
and of the same degree. This proves the above Jacobian.

We make a similar change of variables in
Eq.~\eqref{eq:Dirichlet-integral}. To accomplish this, set
$$
\varphi(t)=\prod^n_{j=0}(t-\zeta_j)~~(0\leqslant
\zeta_0<\zeta_1<\cdots<\zeta_n<1)
$$
and let
\begin{eqnarray}
\cD = \Set{\prod^n_{j=1}(t-x_j): \zeta_{j-1}<x_j<\zeta_j;
j=1,\ldots,n}.
\end{eqnarray}
Next, we show that: For all
$\phi(t)=t^n-\phi_{n-1}t^{n-1}+\cdots+(-1)^n\phi_0\in\cD$, the
following map
$$
(\phi_0,\phi_1,\ldots,\phi_{n-1})\mapsto
\Pa{\frac{\phi(\zeta_0)}{\varphi'(\zeta_0)},\ldots,\frac{\phi(\zeta_n)}{\varphi'(\zeta_n)}}\equiv(p_0,p_1,\ldots,p_n)\in\real^{n+1}
$$
where $\varphi'(t)$ denotes the derivative of $\varphi(t)$, is a
bijection and $p_j>0$ with $\sum^n_{j=0}p_j=1$.

Observe that
\begin{eqnarray}
p_j=\frac{\phi(\zeta_j)}{\varphi'(\zeta_j)}=\frac{(\zeta_j-x_1)(\zeta_j-x_2)\cdots
(\zeta_j-x_n)}{(\zeta_j-\zeta_0)\cdots
(\zeta_j-\zeta_{j-1})(\zeta_j-\zeta_{j+1})\cdots
(\zeta_j-\zeta_n)}>0
\end{eqnarray}
since the numerator and denominator have exactly $n-j$ negative
factors. Now let $\varphi_j(t)=\frac{\varphi(t)}{t-\zeta_j}$. By
Lagrange's interpolation formula
\begin{eqnarray}\label{eq:phi}
\phi(t) = \sum^n_{j=0}p_j
\varphi_j(t)\equiv\sum^n_{j=0}\frac{\varphi_j(t)}{\varphi'(\zeta_j)}\phi(\zeta_j).
\end{eqnarray}
One can directly verify this by checking that both sides of the
equation are polynomials of degree $n$ and are equal at $n+1$ points
$t=\zeta_j,j=0,\ldots,n$. Equate the coefficients of $t^n$ on both
sides to get $\sum^n_{j=0}p_j=1$. Now for a given point
$(p_0,p_1,\ldots,p_n)$ with $\sum^n_{j=1}p_j=1$ and
$p_j>0,j=1,\ldots,n$, define $\phi(t)$ by Eq.~\eqref{eq:phi}. The
expressions
$$
\phi(\zeta_j)=p_j\varphi_j(\zeta_j) =
p_j(\zeta_j-\zeta_0)\cdots(\zeta_j-\zeta_{j-1})(\zeta_j-\zeta_{j+1})\cdots
(\zeta_j-\zeta_n)
$$
and
$$
\phi(\zeta_{j+1})=p_{j+1}\varphi_{j+1}(\zeta_{j+1}) =
p_{j+1}(\zeta_{j+1}-\zeta_0)\cdots(\zeta_{j+1}-\zeta_j)(\zeta_{j+1}-\zeta_{j+2})\cdots
(\zeta_{j+1}-\zeta_n)
$$
show that $\phi(\zeta_j)$ and $\phi(\zeta_{j+1})$ have different
signs and $\phi$ vanishes at some point $x_{j+1}$ between $\zeta_j$
and $\zeta_{j+1}$. Thus $\phi\in\cD$. This proves the bijection.

We can now restate Dirichlet's formula
Eq.~\eqref{eq:Dirichlet-integral} as:
\begin{eqnarray}\label{eq:varphi-int}
\int_{\phi(t)\in\cD}\prod^n_{j=0}\abs{\phi(\zeta_j)}^{\alpha_j-1}\dif\phi_0\cdots\dif\phi_{n-1}
=
\frac{\prod^n_{j=0}\abs{\varphi'(\zeta_j)}^{\alpha_j-\frac12}\Gamma(\alpha_j)}{\Gamma\Pa{\sum^n_{j=1}\alpha_j}}.
\end{eqnarray}
Indeed,
$$
p^{\alpha_j-1}_j =
\Pa{\frac{\phi(\zeta_j)}{\varphi'(\zeta_j)}}^{\alpha_j-1} =
\frac{\abs{\phi(\zeta_j)}^{\alpha_j-1}}{\abs{\varphi'(\zeta_j)}^{\alpha_j-1}},
$$
hence
\begin{eqnarray}
&&\int\cdots\int_V p^{\alpha_0-1}_0p^{\alpha_1-1}_1\cdots
p^{\alpha_n-1}_n\dif p_0\cdots\dif p_{n-1} \\
&&=
\frac1{\prod^n_{j=0}\abs{\varphi'(\zeta_j)}^{\alpha_j-1}}\int_{\phi(t)\in\cD}\prod^n_{j=0}\abs{\phi(\zeta_j)}^{\alpha_j-1}\dif
p_0\cdots\dif p_{n-1}.
\end{eqnarray}
We need to verify that the Jacobian
$$
\det\Pa{\frac{\partial(p_0,\ldots,p_{n-1})}{\partial(\phi_0,\ldots,\phi_{n-1})}}
= \prod^n_{j=0} \abs{\varphi'(\zeta_j)}^{-\frac12},
$$
that is, $\dif p_0\cdots\dif p_{n-1}=\prod^n_{j=0}
\abs{\varphi'(\zeta_j)}^{-\frac12}\dif\phi_0\cdots\dif\phi_{n-1}$ or
$\dif\phi_0\cdots\dif\phi_{n-1}=\prod^n_{j=0}\abs{\varphi'(\zeta_j)}^{\frac12}\dif
p_0\cdots\dif p_{n-1}$. Since
$$
p_j =
\frac1{\varphi'(\zeta_j)}\Pa{\zeta^n_j-\phi_{n-1}\zeta^{n-1}_j+\cdots+(-1)^n\phi_0},
$$
the Jacobian is
$$
\abs{\det\Pa{\frac{\partial(p_0,\ldots,p_{n-1})}{\partial(\phi_0,\ldots,\phi_{n-1})}}}
=
\abs{\frac{\det\Pa{\zeta^j_i}_{n-1}}{\prod^{n-1}_{j=0}\varphi'(\zeta_j)}}
=\abs{\frac{\det\Pa{\zeta^j_i}_n}{\prod^n_{j=0}\varphi'(\zeta_j)}}=\frac{\prod^n_{j=0}\abs{\varphi'(\zeta_j)}^{\frac12}}{\prod^n_{j=0}\abs{\varphi'(\zeta_j)}}.
$$
The numerator is a Vandermonde determinant and therefore the result
follows:
\begin{eqnarray*}
&&\int\cdots\int_V p^{\alpha_0-1}_0p^{\alpha_1-1}_1\cdots
p^{\alpha_n-1}_n\dif p_0\cdots\dif p_{n-1} \\
&&=
\frac1{\prod^n_{j=0}\abs{\varphi'(\zeta_j)}^{\alpha_j-\frac12}}\int_{\phi(t)\in\cD}\prod^n_{j=0}\abs{\phi(\zeta_j)}^{\alpha_j-1}\Pa{\prod^n_{j=0}\abs{\varphi'(\zeta_j)}^{\frac12}\dif
p_0\cdots\dif p_{n-1}}\\
&&=\frac1{\prod^n_{j=0}\abs{\varphi'(\zeta_j)}^{\alpha_j-\frac12}}\int_{\phi(t)\in\cD}\prod^n_{j=0}\abs{\phi(\zeta_j)}^{\alpha_j-1}\dif
\phi_0\cdots\dif \phi_{n-1}.
\end{eqnarray*}
The final step is to obtain the $(2n-1)$-dimensional integral. Let
$\phi(t)$ and $\Phi(t)$ be two polynomials such that
\begin{eqnarray}\label{eq:phi-Phi}
&&\phi(t) = \prod^{n-1}_{i=1}(t-x_i)~~\text{and}~~\Phi(t) =
\prod^n_{j=1}(t-y_j),\\
&&0<y_1<x_1<y_2<\cdots<x_{n-1}<y_n<1.
\end{eqnarray}
The resultant of $\phi$ and $\Phi$, denoted $R(\phi,\Phi)$, is given
by
\begin{eqnarray}
\abs{R(\phi,\Phi)} = \abs{\prod_{i\in[n-1];j\in[n]}(x_i-y_j)} =
\abs{\prod^n_{j=1}\phi(y_j)} = \abs{\prod^{n-1}_{i=1}\Phi(x_i)}.
\end{eqnarray}
The absolute value of the discriminant of $\phi$ can be written as
$\abs{R(\phi,\phi')}$. That is,
$$
\abs{\Delta_\phi} = \abs{\Delta(x)}^2 = \prod^{n-1}_{j=1}\phi'(x_j).
$$
The $(2n-1)$-dimensional integral is
\begin{eqnarray}\label{eq:Phi-integral}
&&\int_{(\phi,\Phi)}
\abs{\Phi(0)}^{\alpha-1}\abs{\Phi(1)}^{\beta-1}\abs{R(\phi,\Phi)}^{\gamma-1}\dif\phi_0
\cdots\dif \phi_{n-2}\dif\Phi_0\cdots\dif\Phi_{n-1}\notag\\
&&=\int_{(\phi,\Phi)}\abs{\Phi(0)}^{\alpha-1}\abs{\Phi(1)}^{\beta-1}\abs{\prod^n_{j=1}\phi(y_j)}^{\gamma-1}\dif\phi_0
\cdots\dif \phi_{n-2}\dif\Phi_0\cdots\dif\Phi_{n-1}.
\end{eqnarray}
Here the integration is over all $\phi$ and $\Phi$ defined by
Eq.~\eqref{eq:phi-Phi}. Then we show that Selberg's integral
$A_n\Pa{\alpha,\beta,\gamma}$ satisfies the recurrence relation:
\begin{eqnarray}\label{eq:recurrence}
A_n\Pa{\alpha,\beta,\gamma} =
\frac{\Gamma(\alpha)\Gamma(\beta)\Gamma(\gamma
n)}{\Gamma(\alpha+\beta+\gamma(n-1))}A_{n-1}\Pa{\alpha+\gamma,\beta+\gamma,\gamma}.
\end{eqnarray}
In fact, integrate the $(2n-1)$-dimensional integral
Eq.~\eqref{eq:Phi-integral} with respect to
$\dif\phi_0\cdots\dif\phi_{n-2}$ and use $\Phi(t)$ instead of
$\varphi(t)$ in Eq.~\eqref{eq:varphi-int} to get
\begin{eqnarray*}
&&\int_{(\phi,\Phi)}\abs{\Phi(0)}^{\alpha-1}\abs{\Phi(1)}^{\beta-1}\abs{\prod^n_{j=1}\phi(y_j)}^{\gamma-1}\dif\phi_0
\cdots\dif \phi_{n-2}\dif\Phi_0\cdots\dif\Phi_{n-1}\\
&&=\int_\Phi\abs{\Phi(0)}^{\alpha-1}\abs{\Phi(1)}^{\beta-1}\Pa{\int_\phi\prod^n_{j=1}\abs{\phi(y_j)}^{\gamma-1}\dif\phi_0
\cdots\dif \phi_{n-2}}\dif\Phi_0\cdots\dif\Phi_{n-1}\\
&&=\int_\Phi\abs{\Phi(0)}^{\alpha-1}\abs{\Phi(1)}^{\beta-1}\Pa{\frac{\prod^n_{j=1}\abs{\Phi'(y_j)}^{\gamma-\frac12}\Gamma(\gamma)}{\Gamma(\gamma
n)}}\dif\Phi_0\cdots\dif\Phi_{n-1}\\
&&=\frac{\Gamma(\gamma)^n}{\Gamma(\gamma n)}\int_\Phi
\abs{\Phi(0)}^{\alpha-1}\abs{\Phi(1)}^{\beta-1}\abs{\prod^n_{j=1}\Phi'(y_j)}^{\gamma-\frac12}\dif
\Phi_0\cdots\dif\Phi_{n-1} =\frac{\Gamma(\gamma)^n}{\Gamma(\gamma
n)}A_n(\alpha,\beta,\gamma).
\end{eqnarray*}
It remains to compute Eq.~\eqref{eq:Phi-integral} in another way,
set $\widetilde \phi(t)=t\prod^n_{j=1}(t-x_j)$, and
$$
\alpha_0=\alpha,\alpha_j=\gamma(j=1,\ldots,n-1),\alpha_n=\beta;~~~x_0=0,x_n=1
$$
so that Eq.~\eqref{eq:Phi-integral} is equal to
\begin{eqnarray}
&&\int_{(\phi,\Phi)}\abs{\Phi(0)}^{\alpha-1}\abs{\Phi(1)}^{\beta-1}\abs{\prod^{n-1}_{j=1}\Phi(x_j)}^{\gamma-1}\dif\Phi_0\cdots\dif\Phi_{n-1}\dif\phi_0
\cdots\dif \phi_{n-2}\notag\\
&&=\int_{(\phi,\Phi)}\prod^n_{j=0}\abs{\Phi(x_j)}^{\alpha_j-1}\dif\Phi_0\cdots\dif\Phi_{n-1}\dif\phi_0
\cdots\dif \phi_{n-2}.\label{eq:0-j-n}
\end{eqnarray}
Now integrate Eq.~\eqref{eq:0-j-n} with respect to $\dif
\Phi_0\cdots\dif\Phi_{n-1}$ and use $\widetilde \phi$ instead of
$\varphi(t)$ in Eq.~\eqref{eq:varphi-int} to obtain
\begin{eqnarray*}
\frac{\Gamma(\alpha)\Gamma(\beta)\Gamma(\gamma)^{n-1}}{\Gamma(\alpha+\beta+\gamma(n-1))}\int_\phi
\abs{\prod^{n-1}_{j=1}\widetilde
\phi'(x_j)}^{\gamma-\frac12}\abs{\widetilde
\phi'(0)}^{\alpha-\frac12}\abs{\widetilde
\phi'(1)}^{\beta-\frac12}\dif\phi_0\cdots\dif\phi_{n-2}.
\end{eqnarray*}
Since
\begin{eqnarray*}
\abs{\widetilde
\phi'(0)}&=&\abs{\prod^{n-1}_{j=1}x_j},\\
\abs{\widetilde \phi'(1)}&=&\abs{\prod^{n-1}_{j=1}(1-x_j)},\\
\prod^n_{j=1}\abs{\phi'(x_j)}
&=&\prod^{n-1}_{j=1}\abs{x_j}\prod^{n-1}_{j=1}\abs{1-x_j}\abs{\Delta_\phi},
\end{eqnarray*}
the last integral can be written as
\begin{eqnarray*}
&&\frac{\Gamma(\alpha)\Gamma(\beta)\Gamma(\gamma)^{n-1}}{\Gamma(\alpha+\beta+\gamma(n-1))}\int_\phi
\Pa{\prod^{n-1}_{j=1}x^{\alpha+\gamma-1}_j(1-x_j)^{\beta+\gamma-1}}\abs{\Delta_\phi}^{\gamma-\frac12}\dif\phi_0\cdots\dif\phi_{n-2}\\
&&=\frac{\Gamma(\alpha)\Gamma(\beta)\Gamma(\gamma)^{n-1}}{\Gamma(\alpha+\beta+\gamma(n-1))}A_{n-1}(\alpha,\beta,\gamma).
\end{eqnarray*}
Equate the two different evaluations of the $(2n-1)$-dimensional
integral to obtain the result. Finally, Selberg's formula is
obtained by iterating Eq.~\eqref{eq:recurrence} $(n-1)$ times.
\end{proof}

\subsection*{Acknowledgement}

Partial material of the present work is completed during a research
visit to Chern Institute of Mathematics, at Nankai University. The
author would like, in particular, to thank Seunghun Hong for his
remarks about the approach toward the volume of unitary group via
Macdonald's method for the volume of a compact Lie group. Both
Zhen-Peng Xu and Zhaoqi Wu are acknowledged for valuable comments
for the earlier version of the paper. LZ is grateful to the
financial support from National Natural Science Foundation of China
(No.11301124).



\end{document}